\newtheorem{thm}{Theorem}[section]
\newtheorem{lem}[thm]{Lemma}
\newtheorem{cor}[thm]{Corollary}
\newtheorem{prop}[thm]{Proposition}
\newtheorem*{thm*}{Theorem}
\newtheorem*{conj}{Conjecture}
\newtheorem*{cor*}{Corollary}
\theoremstyle{definition}
\newtheorem{defi}[thm]{Definition}
\theoremstyle{remark}
\newtheorem{rmk}[thm]{Remark}
\newtheorem*{Ack}{Acknowledgment}
\newcommand{\Hom}{\operatorname{Hom}}
\newcommand{\Map}{\operatorname{Map}}
\def\C{\mathbb C}
\def\N{\mathbb N}
\def\Z{\mathbb Z}
\def\Q{\mathbb Q}
\title{Homotopy transfer theorem and KZB connections}
\author{Claudio Sibilia}
\begin{document}

\maketitle
\begin{abstract}
 We show that the KZB connection on the punctured torus and on the configuration space of points of the punctured torus can be constructed via the homotopy transfer theorem.  
\end{abstract}




\section*{Introduction}
Let $M$ be a smooth complex manifold. We denote by $A_{DR}^{\bullet}(M)$ the differential graded algebra of complex differential forms on $M$. For any pronilpotent Lie algebra $\mathfrak{g}$, each 
\[
C\in A_{DR}^{1}(M)\widehat{\otimes}\mathfrak{g}
\] 
defines a connection $d-C$ on the trivial bundle $M\times\mathfrak{g}$, where the latter is considered to be equipped with the adjoint action. We consider $A_{DR}(M)\widehat{\otimes}\mathfrak{g}$ to be equipped with the differential (graded) Lie algebra structure
\[
d(v\otimes X):=dv\otimes X,\quad\left[w\otimes X, v\otimes Y \right]:=wv\otimes [X,Y].
\]
If $C$ is a Maurer-Cartan element, i.e. $dC+\frac{[C,C]}{2}=0$, then the resulting connection $d-C$ is flat.
The configuration space of points of a topological space is defined as 
\[
\operatorname{Conf}_{n}\left(X \right):=\left\lbrace \left(x_{1}, \dots, x_{n} \right)\in X^{n}\: :\:  x_{i}\neq x_{j}\text{ for }i\neq j \right\rbrace. 
\]
Let $A_{KZ,n}$ be the unital differential graded subalgebra of $A_{DR}\left( \operatorname{Conf}_{n}\left(\C-\left\lbrace0,1 \right\rbrace  \right)\right)  $ generated by $\omega_{i,-1}$, $\omega_{i,0}$, and $\omega_{i,j}$ where
\[
\omega_{ij}:=d \log(z_{i}-z_{j})=\frac{dz_{i}-dz_{j}}{z_{i}-z_{j}}
\] 
for $1\leq i\neq j\leq n$ such that $z_{-1}:=1$ and $z_{0}:=0$. In \cite{Arnold}, it was shown that the $A_{KZ,n}$ is a model for $A_{DR}\left( \operatorname{Conf}_{n}\left(\C-\left\lbrace0,1 \right\rbrace  \right)\right)  $.
We define the rational Lie algebra $\mathfrak{t}_{n}$ with generators $T_{i,j}$ for $-1\leq i\neq j\leq n$ with $j>0$ or $i>0$ such that
\begin{eqnarray}
T_{ij}=T_{ji}, \quad \left[ T_{ij},T_{ik}+T_{jk}\right]=0, \quad  \left[ T_{ij},T_{kl}\right]=0
\end{eqnarray}
for $i,j,k,l$ distinct. We call $\mathfrak{t}_{n}$ the \emph{Kohno-Drinfeld Lie algebra}. The KZ connection 
is the flat connection $d-\omega_{KZ,n}$ on $\operatorname{Conf}_{n}\left(\C-\left\lbrace 0,1\right\rbrace\right) \times \mathfrak{t}_{n} $ where
\[
\omega_{KZ,n}:=\sum_{-1\leq i\neq j\leq n, i>0\text{ or }j>0} \omega_{ij}T_{ij}.
\]
This connection can be constructed via Chen's theory of formal power series connections (see \cite{extensionChen}). The KZ connection is the \emph{universal} version of the Knizhnik–Zamolodchikov equation (see \cite{KZ}), i.e.  a differential equation used in quantum field theory. The KZ equation has an elliptic version called \emph{Knizhnik–Zamolodchikov-Bernard equation} (KZB equation) given in \cite{Be1}. The \emph{universal KZB connection} is constructed in \cite{Damien} (compare with \cite{Levrac}). It is a holomorphic connection on the configuration space of points of the punctured torus. Its fiber corresponds to the Malcev completion of the  fundamental group of the base space (i.e. the pure elliptic braid group).\\
Let $G$ be a group acting properly and discontinuously on a complex manifold $M$. The action groupoid of $(M,G)$ is a simplicial manifold. Getzler and Cheng (see \cite{Getz}) have shown that the space of differential forms on such a simplicial manifold carries a natural $C_{\infty}$-algebra structure. We denote this $C_{\infty}$-algebra by $\left( \operatorname{Tot}_{N}\left(A_{DR}(M_{\bullet}G)\right),m_{\bullet}\right)$. There is a strict natural $C_{\infty}$-map 
\[
r\: : \: \left( \operatorname{Tot}_{N}\left(A_{DR}(M_{\bullet}G)\right),m_{\bullet}\right)\to \left(A_{DR}(M) , d \wedge\right) 
\]
induced by the canonical map $M\to M_{\bullet}G$. We extend the $G$-action on $A_{DR}(M)\widehat{\otimes}\mathfrak{g}$ by assuming that $\mathfrak{g}$ is $G$-invariant. In \cite{Sibilia1}, we have shown the following facts. 
\begin{enumerate}
\item Let $C$ be a Maurer Cartan element which is gauge equivalent to a $G$-invariant Maurer-Cartan element $C'$. There exists a vector bundle $E$ on $M/G$ such that $d-C$ is a well-defined flat connection on $M/G$ (\cite[Theorem 4.9]{Sibilia1}). 
\item  In \cite{Sibilia1} we introduce the category of $1-C_{\infty}-$ algebras and $1-C_{\infty}-$morphism, these are a truncated version of $C_{\infty}$-algebras. In particular there is a forgetful functor $\mathcal{F}$ between $C_{\infty}$-algebras and $1-C_{\infty}$-algebras. We introduce the notion of $1$-isomorphism, $1$-quasi-isomorphism of $1-C_{\infty}$-algebras and $1$-minimal model. Let $(W, m_{\bullet}^{W})$ be a $1$-minimal $1-C_{\infty}$-algebra quasi-isomorphic to $\mathcal{F}\left( \operatorname{Tot}_{N}\left(A_{DR}(M_{\bullet}G)\right),m_{\bullet}\right)$. For any $1-C_{\infty}$-morphism $g_{\bullet}\: : \: (W, m_{\bullet}^{W})\to \mathcal{F}\left( \operatorname{Tot}_{N}\left(A_{DR}(M_{\bullet}G)\right),m_{\bullet}\right) $, the map $rg_{\bullet}$ gives an explicit Maurer-Cartan element in $A_{DR}(M)\widehat{\otimes}\mathfrak{u}$, where $\mathfrak{u}$ corresponds to the Malcev completion of the fundamental group of $M/G$ (\cite[Theorem 3.16]{Sibilia1}).
\end{enumerate}
Examples of morphisms $g_{\bullet}\: : \: (W, m_{\bullet}^{W})\to \mathcal{F}\left( \operatorname{Tot}_{N}\left(A_{DR}(M_{\bullet}G)\right),m_{\bullet}\right) $ can be constructed via the homotopy transfer theorem (see  \cite{Prelie},\cite{Getz}, \cite{Kadesh}, \cite{kontsoibel} and \cite{Markl}). The homotopy transfer theorem is a general version of the of Chen's formal power series connections (see \cite{Huebsch}). In particular the KZ connection can be constructed via the homotopy transfer theorem.
\begin{itemize}
	\item The punctured torus and the configuration space of points on the punctured torus can be written as $M/G$ for some appropriate $M$ and $G$. We use the homotopy transfer theorem to construct two $1$-morphism $g_{\bullet}$ and ${g'}_{\bullet}$. By point (2), the maps $rg_{\bullet}$ and ${{g'}_{\bullet}}$ correspond to two gauge equivalent Maurer-Cartan elements $C$ and $C'$. By point (1), we get a flat connection on $M/G$. In particular, the KZB connection on the punctured torus and on the configuration space of points on the punctured torus can be computed in this way (see Theorem \ref{main1} and Theorem \ref{final} respectively). In order to compute the connection we have to choose a $1$-model ($1$-extension) and a particular vector space decomposition. However, we show that different choices produce isomorphic connections (Corollary \ref{invariance}).
	\item  We construct a Lie algebra morphism $Q^{*}$ between the Kohno-Drinfeld Lie algebra and the Malcev Lie algebra of the fundamental group of the configuration space of points on the punctured torus. Furthermore we show that if $\tau\to i\infty$, then the KZB connection becomes equal to the ``KZ connection composed with $Q^{*}$'' (see Theorem \ref{thmfinale}). This is a generalization of the Lie algebra morphism defined by Hain in \cite[Section 12]{Hain}.
	\end{itemize}
For $g> 1$, the higher genus version of the KZB equation is constructed by Bernard in \cite{Be2}, where it was used the fact that higher genus Riemann surfaces can be written as a quotient $M/G$, for an appropriate $M$ and a Schottky group $G$. This suggest the following conjecture.
\begin{conj}
Let $g>1$. The universal version of the KZB equation constructed in \cite{Be2} is a holomorphic flat connection which can be constructed via the homotopy transfer theorem, (1) and (2).
\end{conj}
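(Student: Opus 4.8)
The plan is to mimic, for $g>1$, the genus-one strategy carried out in this paper, replacing the punctured torus by the quotient presentation of a genus-$g$ Riemann surface (and its configuration spaces) that Bernard uses in \cite{Be2}. First I would fix the presentation $\Sigma_{g}\cong \Omega/G$, where $G$ is a Schottky group acting properly discontinuously on its domain of discontinuity $\Omega\subset \mathbb{CP}^{1}$, and more generally write $\operatorname{Conf}_{n}(\Sigma_{g})$ as a quotient $M/G$ of a suitable configuration-type space $M$ by the (diagonal) $G$-action. One must check that this action is proper and discontinuous, so that the action groupoid $M_\bullet G$ is a simplicial manifold and the Getzler--Cheng $C_{\infty}$-algebra $\operatorname{Tot}_{N}(A_{DR}(M_\bullet G))$ together with the strict map $r$ to $(A_{DR}(M),d\wedge)$ is available exactly as in the introduction.

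Second, I would compute a $1$-minimal model $(W,m_\bullet^{W})$ of $\mathcal{F}(\operatorname{Tot}_{N}(A_{DR}(M_\bullet G)),m_\bullet)$ and construct a $1$-$C_{\infty}$-morphism $g_\bullet\colon (W,m_\bullet^{W})\to \mathcal{F}(\operatorname{Tot}_{N}(A_{DR}(M_\bullet G)),m_\bullet)$ via the homotopy transfer theorem, exactly as in point (2). The iterated products of the transferred structure package the higher Massey-type products of forms on $M$ and the $G$-equivariance data into a single tree-summation formula; by \cite[Theorem 3.16]{Sibilia1} the composite $rg_\bullet$ then yields an explicit Maurer--Cartan element $C$ in $A_{DR}(M)\widehat{\otimes}\mathfrak{u}$, where $\mathfrak{u}$ is the Malcev Lie algebra of $\pi_1(M/G)=\pi_1(\operatorname{Conf}_{n}(\Sigma_{g}))$. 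To descend the connection to the quotient I would show, as in the elliptic case, that $C$ is gauge equivalent to a $G$-invariant Maurer--Cartan element $C'$, and then invoke \cite[Theorem 4.9]{Sibilia1} to obtain a well-defined flat bundle on $M/G$; holomorphicity would be checked directly on the explicit generators, since each transferred term is built from holomorphic differentials on $\Omega$.

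Third, I would identify this $C$ with Bernard's higher-genus KZB connection. The decisive input is that $\mathfrak{u}$ is generated in the lowest weight by $H^{1}$ of the base, so the weight-one part of $C$ is forced to be the sum of the $2g$ classes in $H^{1}(\Sigma_{g})$ paired against the corresponding degree-one generators of $\mathfrak{u}$; the higher-weight corrections produced by the tree formula should then match Bernard's kernel term by term, in analogy with the way the Kronecker--Eisenstein series appears in the genus-one computation of this paper.

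I expect the main obstacle to be twofold. First, the combinatorics of the $1$-minimal model: unlike the torus, a genus-$g$ surface has $2g$ independent one-forms with a nontrivial cup product into $H^{2}$, so the quadratic (and higher) relations in $\mathfrak{u}$, hence the tree-level transfer formula, are genuinely more intricate, and producing a closed form rather than an abstract existence statement is the real work. Second, and more seriously, matching the output with Bernard's connection requires a higher-genus replacement for the Kronecker--Eisenstein propagator; constructing the correct $G$-equivariant kernel on $\Omega$ and proving that the homotopy-transfer sum reproduces it is where the Schottky-group analysis in \cite{Be2} must be brought to bear. A secondary subtlety is verifying that the Schottky action is sufficiently well-behaved -- proper discontinuity on $\Omega$ and freeness on the relevant configuration locus -- so that points (1) and (2) apply verbatim.
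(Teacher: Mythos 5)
The statement you are addressing is not proved in the paper at all: it is stated as an open conjecture, and the body of the paper only establishes the genus-one cases (the punctured torus and its configuration spaces). So there is no proof of the paper's to compare yours against, and the real question is whether your proposal closes the gap that makes this a conjecture. It does not. What you have written is a restatement of the strategy that the conjecture itself already prescribes --- write the surface as $M/G$ for a Schottky group $G$, form the Getzler--Cheng $C_{\infty}$-algebra on the action groupoid, transfer, and invoke points (1) and (2) --- together with an honest list of the places where the genus-one argument does not carry over. Listing those obstacles is not the same as overcoming them, and in the genus-one case overcoming them is precisely what occupies the bulk of the paper.

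Concretely, three things are missing and each is essential. First, the genus-one construction rests on an explicit family of functions $f^{(i)}$ satisfying the quasi-periodicity conditions \eqref{conditions}, obtained as Laurent coefficients of the Kronecker function $F(\xi,\eta,\tau)$; these give the holomorphic $1$-model $B$ with logarithmic singularities, its closedness/exactness properties (Lemma \ref{modelelliptic curve}), and the Hodge-type decomposition fed into the transfer. For a Schottky group --- a free, nonabelian group acting by M\"obius transformations rather than a lattice acting by translations --- you neither formulate the analogue of \eqref{conditions} (what automorphy should the coefficients of a propagator on $\Omega$ have?) nor produce candidate functions; without them there is no $1$-model to transfer from, and the whole machine has no input. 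Second, the identification of the transferred Maurer--Cartan element with Bernard's connection is done in genus one by term-by-term comparison using the Fay identity and the Fourier expansion of $F$; you defer the higher-genus comparison to "the Schottky-group analysis in \cite{Be2}" without extracting from \cite{Be2} even the shape of the kernel that the tree formula would have to reproduce. Third, your claim that the weight-one part of $C$ is "forced" to match because $\mathfrak{u}$ is generated by $H^{1}$ is too quick: in genus $g>1$ the cup product $H^{1}\otimes H^{1}\to H^{2}$ is nondegenerate and the single relation in $\pi_{1}(\Sigma_{g})$ produces a nontrivial quadratic part of the $1$-minimal model, so the transferred structure, the ideal of relations, and the resulting fiber Lie algebra all have to be computed, not asserted. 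As it stands your text is a reasonable research plan --- essentially the one the paper's author intends --- but it is a plan, not a proof, and the conjecture remains open after it.
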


\subsection*{Plan of the paper}
The paper is divided in 4 sections. The first one presents some standard results about the homotopy transfer theorem.
In the second section we prove the above statement in the case of the punctured torus. More precisely, we fix a  $\tau\in\mathbb{H}:=\left\lbrace z\in \C \: : \: \Im(z)>0\right\rbrace $ and let $\mathbb{Z}+\tau\mathbb{Z}$ be the lattice spanned by $1,\tau$. We define
\[
\mathcal{E}_{\tau}^{\times}=\left(\C- \left\lbrace \mathbb{Z}+\tau\mathbb{Z}\right\rbrace \right)/ \Z^{2}.
\]
A smooth model $A$ for $A_{DR}\left(\mathcal{E}_{\tau}^{\times}\right) $ is constructed in \cite{LevBrown}. We apply the homotopy transfer theorem on $A$ and we get a $C_{\infty}$-morphism ${g'}_{\bullet}$. $\mathcal{E}_{\tau}^{\times}$ is the geometric realization of the simplicial manifold $\left(\C- \left\lbrace \mathbb{Z}+\tau\mathbb{Z}\right\rbrace \right)_{\bullet} \Z^{2}$. In this subsection we construct a \emph{holomorphic} $1$-model $B$ for the $C_{\infty}$-algebra of differential forms on $\left(\C- \left\lbrace \mathbb{Z}+\tau\mathbb{Z}\right\rbrace \right)_{\bullet} \Z^{2}$ (see Corollary \ref{cormodel}).  We apply the homotopy transfer theorem on $B$ and we get a $1-C_{\infty}$-morphism ${g}_{\bullet}$. The main results of this section is the following: we prove that the KZB connection can be constructed by using (1) where the $C$ and $C'$ corresponds to $rg_{\bullet}$ and ${g'}_{\bullet}$ respectively, in particular  the KZB connection on the configuration space of points of the punctured torus can be computed via the homotopy transfer theorem on $B$ (see Theorem \ref{main2}). We compare the KZ and the KZB connection for $\tau\to i \infty$ and we show that the Lie algebra morphism $Q^*$ constructed by in \cite[Section 12]{Hain}) can be constructed via $C_{\infty}$-algebra methods (see Subsection \ref{ComparisonHain}). In Subsection \ref{Arationalconnection}, we present another holomorphic connection on the punctured torus (with singularities of order $2$) which is isomorphic to the KZB connection.\\
In the third section, we extend the above results for the configuration spaces of points of the punctured elliptic curve. Let $(\xi_{1}, \dots \xi_{n})$ be the coordinates on $\C^{n}$. We define $\mathcal{D}  \subset \C^{n}$ as 
\[
\mathcal{D}:=\left\lbrace (\xi_{1}, \dots \xi_{n})\: : \: \xi_{i}-\xi_{j}\in \mathbb{Z}+\tau\mathbb{Z} \text{ for some distinct }i,j=0, \dots n  \right\rbrace.
\] 
We define a $\Z^{2n}$-action on $\C^{n}$ via translation, i.e.
\[
\left( \left({l}_{1},m_{1} \right), \dots, \left({l}_{n},m_{n} \right) \right)(\xi_{1}, \dots \xi_{n}):= \left( \xi_{1}+l_{1}+m_{1}\tau, \dots ,\xi_{n}+l_{n}+m_{n}\tau\right).
\]
Notice that $\mathcal{D}$ is preserved by the action of $\Z^{2n}$. There is a canonical isomorphism
\[
\left( \C^{n}-\mathcal{D}\right) /\left( \Z^{2n}\right) \cong \operatorname{Conf}_{n}\left({\mathcal{E}}_{\tau}^{\times} \right)
\]
since the action is free and properly discontinuous. We construct a $C_{\infty}$-algebra ${B'}_{n}$ which is conjectured to be a $1$-model of 
\[
\operatorname{Tot}_{N}\left(A_{DR}\left( \C^{n}-\mathcal{D}\right)_{\bullet}\left( \Z^{2n}\right) \right)\otimes \Omega(1)
\]
where $\Omega(1)$ is the differential graded algebra of polynomial forms on the unit interval. In particular, the evaluation at $1$ of ${B'}_{n}$ corresponds to the smooth model $A_{n}\subset A_{DR}\left( \operatorname{Conf}_{n}\left(\mathcal{E}_{\tau}^{\times}\right)\right)$ constructed in \cite{LevBrown} and the evaluation at $0$ is a holomorphic $C_{\infty}$-algebra. We introduce the notion of $1$-extension which is a weaker notion of $1$-model and we construct a $1$-extension $B_{n}$ of $A_{n}$ by taking the quotient of ${{B'}}_{n}$ by some $C_{\infty}$-ideal. We compute the homotopy transfer theorem on $B_{n}$ and we get a $1-C_{\infty}$-morphism $H_{\bullet}\: : \: (W, m_{\bullet}^{W})\to\mathcal{F}\left( \operatorname{Tot}_{N}\left(A_{DR}\left( \C^{n}-\mathcal{D}\right)_{\bullet}\left( \Z^{2n}\right) \right)\otimes \Omega(1)\right) $. The evaluation of $H$ at $1$ can be lifted to a minimal model $C_{\infty}$-morphism ${g'}_{\bullet}\: : \: (W, m_{\bullet}^{W})\to A_n$ and the evaluation at $0$ gives a holomorphic $1-C_{\infty}$-morphism ${g}_{\bullet}\: : \: (W, m_{\bullet}^{W})\to \mathcal{F}\left(\operatorname{Tot}_{N}\left(A_{DR}\left( \C^{n}-\mathcal{D}\right)_{\bullet}\left( \Z^{2n}\right) \right)\right) $. The main results of this section is the following: we prove that the KZB connection on the configuration space of points of the punctured torus can be constructed by using (1) where the $C$ and $C'$ corresponds to $rg_{\bullet}$ and ${g'}_{\bullet}$, in particular the KZB connection on the punctured torus can be computed via the homotopy transfer theorem on $B_{n}$ ( see Theorem \ref{final}). In Subsection \ref{KZKZB}, we show that the KZB connection for $\tau\to i \infty$ corresponds to the KZ connection modulo a Lie algebra morphism $Q^{*}$ which is explicitly computed. For $n=1$, $Q^*$ coincides with the automorphism constructed by Hain in \cite[Section 12]{Hain}). 
\begin{Ack}
This article is based on the second part of my PhD project. I wish to thank my two advisors Damien Calaque and Giovanni Felder for their guidance and constant support and Benjamin Enriquez for his interesting questions about my PhD thesis. I thanks the SNF for providing an essential financial support through the ProDoc module PDFMP2 137153 ``Gaudin subalgebras, Moduli Spaces and Integrable Systems". I also thank the support of the ANR SAT and of the Institut Universitaire de France.
\end{Ack}

\subsection*{Notation}
Let $\Bbbk$ be a field of characteristic zero.
 For a graded vector space $V^{\bullet}$, $V^{i}$ is called the homogeneous component of $V$, and for $v\in V^{i}$ we define its degree via $|v|:=i$. For a vector space $W:=\oplus_{i\in I} W_{i}$ we denote by ${pro}_{W_{i}}\: : \: W\to W_{i}$ the projection. For a graded vector space $V^{\bullet}$ we denote by $V[n]$ the $n$-shifted graded vector space, where $\left( V[n]\right) ^{i}=V^{n+i}$. For example $\Bbbk[n]$ is a graded vector space concentrated in degree $-n$ (its $-n$ homogeneous component is equal to $\Bbbk$, the other homogeneous component are all equal to zero). A (homogeneous) morphism of graded vector spaces $f\: : \: V^{\bullet}\to W^{\bullet}$ of degree $|f|:=r$ is a linear map such that $f(V^{i})\subseteq V^{i+r}$. We denote by $s\: : \: V\to V[1]$, $s^{-1}\: : \: V[1]\to V$  the shifting morphisms that send $V^{n}$ to $V[1]^{n-1}=\Bbbk \otimes V^{n}=V^{n}$ resp. $V[1]^{n}=\Bbbk \otimes V^{n+1}=V^{n+1}$ to $V^{n+1}$. Those maps can be extended to a map $s^{n}\: : \: V\to V[n]$, (the identity map shifted by $n$).
Note that $s^{n}\in \Hom^{-n}\left(V, V[n] \right)$. A graded vector space is said to be of finite type if each homogeneous component is a finite vector space. A graded vector space $V^{\bullet}$ is said to be bounded below at $k$ if there is a $k$ such that $V^{l}=0$ for $l<k$. Analogously it is said to be bounded above at $k$ if there is a $k$ such that $V^{l}=0$ for $l>k$. For a non-negatively graded vector space we define the positively graded vector space
$W^{\bullet}_{+}$ as
\[
W_{+}^{0}:=0,\quad W_{+}^{i}:=W^{i}\text{ if }i\neq0
\]
Let $(V,d_{V})$ be a differential graded vector space, then $V^{\otimes n}$ is again a differential graded vector space with differential
\[
d_{V^{\otimes n}}(v_{1}\otimes\cdots\otimes  v_{n}):=\sum_{i=1}^{n}\pm v_{1}\otimes\cdots\otimes  d_{V}v_{i}\cdots\otimes  v_{n},
\]
where the signs follow from the Koszul signs rule.
Let $(V,d_{V})$, $(W,d_{W})$ be differential graded vector spaces, then $\Hom^{\bullet}_{gVect}\left(V,W \right)$ is a differential graded vector space with differential
\[
\partial f:= d_{W}f-(-1)^{|f|}fd_{v}.
\]

\section{The homotopy transfer theorem}
We use the same notation of \cite{Sibilia1}. We give a short introduction about the homotopy transfer theorem for $C_{\infty}$-algebras (see \cite{Getz}, \cite{Kadesh}, \cite{kontsoibel}, \cite{Markl} and \cite{Prelie}). We work only with non-negatively graded $C_{\infty}$-algebras.
\subsection{Simplicial manifolds}
Let $G$ be a group acting properly and discontinuously on $M$. The nerve gives a simplicial manifold $M_{\bullet}G$ is called the action groupoid. In particular, $A_{DR}\left(M_{\bullet}G \right) $ is a simplicial commutative differential graded algebra. Let $\operatorname{Tot}_{N}^{\bullet}\left(A_{DR}(M_{\bullet}G)\right)$ be the normalized total complex. An element $w$ of degree $n$ can be written as $w=\sum_{p+q=n}w^{p,q}$, where $w^{p,q}$ is a set map
\[
w^{p,q}\: : \: G^{p}\to A_{DR}^{q}(M)
\]
such that $w^{p,q}(g_{1}, \dots, g_{n})=0$ if $g_{i}=e$ for some $i$. The elements $w^{p,q}$ are called elements of bidegree $(p,q)$, in particular $\operatorname{Tot}_{N}^{0,q}\left(A_{DR}(M_{\bullet}G)\right)=A_{DR}^{q}(M)$. An element $w$ is said to be holomorphic if each $w^{p,q}$ takes holomorphic values for any $(p,q)$. Since the action is properly discontinuous, we can visualize $A_{DR}^{\bullet}(M/G)\subset A_{DR}^{\bullet}(M)$ as the differential graded algebra of $G$-invariant differential forms. Getzler and Cheng (see \cite{Getz}) have shown that the normalized total complex $\operatorname{Tot}_{N}\left(A_{DR}\left( M_{\bullet}G\right)\right) $ carries a natural unital $C_{\infty}$-structure $m_{\bullet}$
such that 
\[
m_{1}=d,\quad m_{2}=\wedge,\quad m_{n}=0\text{ for }n\geq 2
\]
on $A_{DR}^{\bullet}(M/G)\subset\operatorname{Tot}_{N}^{0,\bullet}\left(A_{DR}(M_{\bullet}G)\right)$ and
\[
\left( A_{DR}(M/G),d, \wedge\right) \hookrightarrow\left( \operatorname{Tot}_{N}\left(A_{DR}(M_{\bullet}G)\right),m_{\bullet}\right) 
\]
is a quasi-isomorphism of $C_{\infty}$-algebras. Their cohomology corresponds to the singular cohomology of $M/G$. We denote $m_{1}$ by $D$. In particular for an element $a$ of bidegree $(p,q)$ we have
\[
D(a)=\partial_{G}a+(-1)^{p}da,
\]
where $\partial_{G}$ is differential obtained by the alternating sum of the pullback of the cofaces maps of the action groupoid. There is a strict $C_{\infty}$-map 
\[
r\: : \: \left( \operatorname{Tot}_{N}\left(A_{DR}(M_{\bullet}G)\right),m_{\bullet}\right)\to \left(A_{DR}(M) , d, \wedge\right) 
\]
induced by the evaluation of the forms at $e\in G$. Let $g_{\bullet}\: :\: \left(W, m_{\bullet}^{W} \right)\to \mathcal{F}\left( \operatorname{Tot}_{N}\left(A_{DR}(M_{\bullet}G)\right),m_{\bullet}\right)$ be a morphism of $1-C_{\infty}$-algebras. In \cite{Sibilia1}, we show that $g_{\bullet}$ corresponds to a Maurer-Cartan element $\overline{C}$ in the $L_{\infty}$-algebra
\[
\mathcal{F}\left( \operatorname{Tot}_{N}\left(A_{DR}(M_{\bullet}G)\right),m_{\bullet}\right)\widehat{ \otimes}\mathfrak{u}
\]
where $\mathfrak{u}$ is the Malcev Lie algebra of the fundamental group of $M/G$. We call $\overline{C}$ the degree zero geometric connection . In \cite[Theorem 3.16 ]{Sibilia1}, we show that
$r$ induces a map $r_{*}=(r\otimes\mathrm{Id})$ such that $r_*\overline{C}$ is a Maurer-Cartan element in 
\[
A_{DR}(M)\widehat{ \otimes}\mathfrak{u}.
\]
\subsection{Transferring $A_{\infty}$ and $C_{\infty}$-structures}
We recall the homotopy \emph{homotopy transfer theorem} (see \cite{lodayVallette}). This theorem appears originally in \cite{kontsoibel} (see \cite{Markl}) for $A_{\infty}$-algebras and in \cite{Getz} for $C_{\infty}$-algebras.
\begin{thm}\label{markglobal}
Let $(V^{\bullet},d_{V})$, $(W^{\bullet}, d_{W})$ be two cochain differential graded vector spaces. Assume that there are two cochain maps
\begin{equation}\label{homretract}
 \begin{tikzcd}
 f\: : \: (V^{\bullet},d_{V})\arrow[r, shift right, ""]&(W^{\bullet}, d_{W})\: : \: g\arrow[l, shift right, ""] 
 \end{tikzcd}
 \end{equation}
 such that $gf$ is homotopic to $Id_{V}$ via a cochain homotopy $h$. Assume that $(V^{\bullet},d_{V})$ is equipped with a $P_{\infty}$-algebra structures $m_{\bullet}^{V}$, such that $m_{1}^{V}=d_{V}$. There exist
\begin{enumerate}
\item a $(1-)P_{\infty}$-algebra structures $m_{\bullet}^{W}$ on $W^{\bullet}$, such that $m_{1}^{V}=d_{W}$;
\item a $(1-)P_{\infty}$-morphism ${{g}}_{\bullet}\: : \:\left(W, m_{\bullet}^{W}\right)\to \left(V, m_{\bullet}^{V}\right)$, such that ${{g}}_{1}=g$.
\end{enumerate}
\end{thm}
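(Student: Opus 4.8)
The plan is to translate the statement into the language of cofree coalgebras and to deduce it from the \emph{homological perturbation lemma} (HPL). Recall that for $P_\infty=A_\infty$ a structure $m_\bullet^V$ with $m_1^V=d_V$ is the same as a degree $+1$, square-zero coderivation $\delta_V$ of the reduced tensor coalgebra $\bar T^c(V[1])=\bigoplus_{n\ge 1}(V[1])^{\otimes n}$ whose corestriction to cogenerators recovers $(m_n^V)_{n\ge 1}$; the $C_\infty$ condition is the requirement that $\delta_V$ preserve the cofree Lie coalgebra $L^c(V[1])\subset \bar T^c(V[1])$, equivalently that each $m_n^V$ annihilate all nontrivial shuffles. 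A $P_\infty$-morphism is precisely a morphism of the corresponding cofree coalgebras intertwining the codifferentials. Under this dictionary the theorem becomes the assertion that a codifferential can be transported along the retract (\ref{homretract}) together with a comparison coalgebra morphism, which is exactly the output of the HPL.

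First I would normalize $(f,g,h)$ to a \emph{special deformation retract}, imposing the standard side conditions $h^2=0$, $fh=0$, $hg=0$ (and, where needed in the applications, $fg=\mathrm{Id}_W$); this is routine and alters neither $f$ nor $g$. Next I apply the \emph{tensor trick}: the retract on $V$ induces, functorially and compatibly with the deconcatenation coproduct, a contraction of the underlying complexes
\[
F=\bigoplus_{n\ge 1}f^{\otimes n}\colon \bar T^c(V[1])\to \bar T^c(W[1]),\qquad G=\bigoplus_{n\ge 1}g^{\otimes n}\colon \bar T^c(W[1])\to \bar T^c(V[1]),
\]
together with the usual tensor homotopy $H$ built from $h$, $f$, $g$ (Koszul signs understood). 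I then write the $A_\infty$-codifferential as a perturbation $\delta_V=d_V^{\otimes}+t$ of the tensor differential $d_V^{\otimes}$, where $t$ is the coderivation generated by $(m_n^V)_{n\ge 2}$, and feed $(F,G,H)$ and $t$ into the coalgebra version of the HPL. This produces a perturbed codifferential and a perturbed coalgebra morphism
\[
\delta_W=d_W^{\otimes}+F\,t\,(\mathrm{Id}-H t)^{-1} G,\qquad G_\infty=(\mathrm{Id}-H t)^{-1} G,
\]
whose corestrictions to cogenerators define the transferred operations $m_\bullet^W$ and the morphism $g_\bullet\colon (W,m_\bullet^W)\to(V,m_\bullet^V)$. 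By construction the linear parts are unperturbed, so $m_1^W=d_W$ and $g_1=g$, and expanding the geometric series reproduces the sum-over-trees formulas of \cite{Getz},\cite{Markl},\cite{kontsoibel}. That $\delta_W$ is again a square-zero coderivation and $G_\infty$ a coalgebra morphism intertwining $\delta_W,\delta_V$ is guaranteed by the coalgebra-compatible form of the HPL, so this already settles the case $P_\infty=A_\infty$.

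To upgrade to $P_\infty=C_\infty$ I would run the entire argument inside the cofree Lie coalgebras $L^c(V[1])$, $L^c(W[1])$. The point is that $F$, $G$, and the tensor homotopy $H$ are assembled from the single-factor maps $f,g,h$ through the symmetric monoidal structure, hence respect the shuffle decomposition; consequently they restrict to the Lie cogenerators, the perturbation $t$ restricts as well (since $m_\bullet^V$ is $C_\infty$), and the perturbed data $\delta_W,\ G_\infty$ automatically vanish on shuffles. This compatibility of $H$ with the Lie-coalgebra structure is the one genuinely delicate point, and I expect it to be the main obstacle: it is what distinguishes the commutative transfer theorem from the formal $A_\infty$ statement, and it must be checked on the nose (equivalently, one verifies that the tree operations annihilate shuffle products). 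Convergence of the series $(\mathrm{Id}-Ht)^{-1}$ is not an issue: since we work with non-negatively graded algebras and $t$ raises the arity filtration (each internal vertex uses an $m_k^V$ with $k\ge 2$ and each internal edge an $h$ of degree $-1$), only finitely many trees contribute to each homogeneous component of fixed arity, so $\mathrm{Id}-Ht$ is invertible termwise. Finally, the $(1-)P_\infty$ variant follows by composing with the forgetful/truncation functor $\mathcal{F}$ of \cite{Sibilia1}: applying $\mathcal{F}$ to the transferred $P_\infty$-structure and to $g_\bullet$ records only the components in the relevant bidegree range, and the defining relations of a $1-P_\infty$-algebra and a $1-P_\infty$-morphism are inherited from those just established, with convergence now automatic as only finitely many components occur.
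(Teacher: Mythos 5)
Your route through the coalgebra perturbation lemma has one step that genuinely fails: the normalization at the very beginning. The homological perturbation lemma requires a \emph{contraction}, i.e.\ $fg=\mathrm{Id}_{W}$ on the nose (plus the side conditions), whereas Theorem \ref{markglobal} only assumes a cochain homotopy $d_{V}h+hd_{V}=gf-\mathrm{Id}_{V}$ and imposes \emph{no} hypothesis whatsoever on $fg$. Your claim that passing to a special deformation retract ``is routine and alters neither $f$ nor $g$'' is false in this generality: take $W=V\oplus U$ with $H^{\bullet}(U)\neq 0$, $f$ the inclusion, $g$ the projection, $h=0$. The hypotheses of the theorem hold (indeed $gf=\mathrm{Id}_{V}$), but no homotopy $fg\simeq \mathrm{Id}_{W}$ can exist, since it would force $H^{\bullet}(W)\cong H^{\bullet}(V)$; hence no contraction with these $f,g$ exists and the perturbation machinery cannot be set up at all. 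Meanwhile the conclusion of the theorem is still true in this example: the tree formulas of Definition \ref{explicitpkernel} degenerate (all internal edges carry $h=0$) to $m_{n}^{W}=f\circ m_{n}^{V}\circ g^{\otimes n}$, $g_{1}=g$, $g_{n}=0$ for $n>1$, and one checks the Stasheff and morphism relations directly using only $gf=\mathrm{Id}_{V}$. This is exactly why the paper's proof is by the explicit sum-over-trees formulas of \cite[Proposition 6]{Markl} (originally \cite{kontsoibel}), whose verification uses only the identity $d_{V}h+hd_{V}=gf-\mathrm{Id}_{V}$, and why the perturbation-lemma argument of \cite{KadeshHuebsch} is quoted in Remark \ref{rmmkvallette} only as a proof ``assuming the side conditions''. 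So your proposal proves a strictly weaker statement than the theorem.

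Two further remarks. First, even in the contraction case your treatment of $P_{\infty}=C_{\infty}$ is flagged rather than proved: you correctly identify the compatibility of the tensor-trick homotopy with the shuffle decomposition as the delicate point, but you leave it as something that ``must be checked on the nose''; this check is precisely the content of \cite{Getz} and cannot be omitted, since the naive homotopy $\sum \pm\,\mathrm{Id}^{\otimes i}\otimes h\otimes (gf)^{\otimes j}$ does not obviously annihilate nontrivial shuffles. Second, in mitigation: every application of Theorem \ref{markglobal} in this paper goes through a Hodge type decomposition as in Lemma \ref{generallemmadecomp}, where $fg=1_{W}$ and the side conditions do hold, so your argument (which is then literally the coalgebra perturbation lemma $2.1_{*}$ of \cite{KadeshHuebsch}) would suffice for the paper's purposes — but not for the theorem as stated.
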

\begin{rmk}\label{transfnonuniq}
The maps $m_{\bullet}$, $ {g}_{\bullet}$ are not unique in general. There may be more solutions. 
\end{rmk}

\begin{rmk}\label{computability}
Consider Theorem \ref{markglobal}. In \cite{Markl} it is proved the following. If $P_{\infty}=A_{\infty}$, there exist
\begin{enumerate}
\item a $A_{\infty}$-morphism ${{f}}_{\bullet}\: : \:\left(V, m_{\bullet}^{V}\right)\to \left(W, m_{\bullet}^{W}\right)$, such that ${{f}}_{1}=f$;
\item a $A_{\infty}$-homotopy ${{h}}_{\bullet}$ between ${{g}}{{f}}$ and $Id_{V}$, such that ${{h}}_{1}=h$, where $h_{\bullet}$ is a chain homotopy in the sense of \cite{Markl}.
\end{enumerate}
Moreover there is an explicit formula for these maps. Point (1) is extended in \cite{Prelie} for more general operads (by assuming that the diagram \eqref{homretract}  satisfies $d_{W}=0$, $fg=1_{W}$, $fh=0$, $h\circ g=0$ and $h^2=0$). 
\end{rmk}

\begin{defi}
An oriented planar rooted tree $T$ is a connected oriented planar graph that contains no loops, such that the orientation goes toward one marked external vertex (the root). Let $V(T)$ be the set of vertices, $E(T)$ the set of (oriented) edges.\\ 
Given an edge $e$, between two vertices $v_{1},v_{2}$, if the orientation goes from $v_{1}$ to $v_{2}$ we call $v_{1}$ the \emph{source} and $v_{2}$ the \emph{target} of $e$, respectively. A edge $e$ is \emph{internal} if its source is the target of another edge. A non internal edge is called \emph{leaf}. The root is the only one vertex which is not the source of any other edges. The \emph{arity} of a vertex is the number of incoming edges.
\end{defi}
We denote by $\mathcal{P}$ the set of finite oriented planar rooted trees where the arity of each internal vertex is $\geq 2$. We denote by $\mathcal{P}_{n}$ the trees in $\mathcal{P}$ with exactly $n$ leaves and with $\mathcal{P}^{2}_{l}$ the set oriented planar rooted trees where the arity of each internal vertex is less then or equal to $l$. We fix a diagram of the type \eqref{homretract}. Each tree can be decorated as follows: we associate to each internal edges the map $h$ and to each internal vertex of arity $k$ the map $m^{V}_{k}$. Figure \ref{decorated} is an example of decorated tree $T'$ where the root is the lower vertex.
\\
\begin{figure}[ht]
\centering
  \includegraphics[scale=.1]{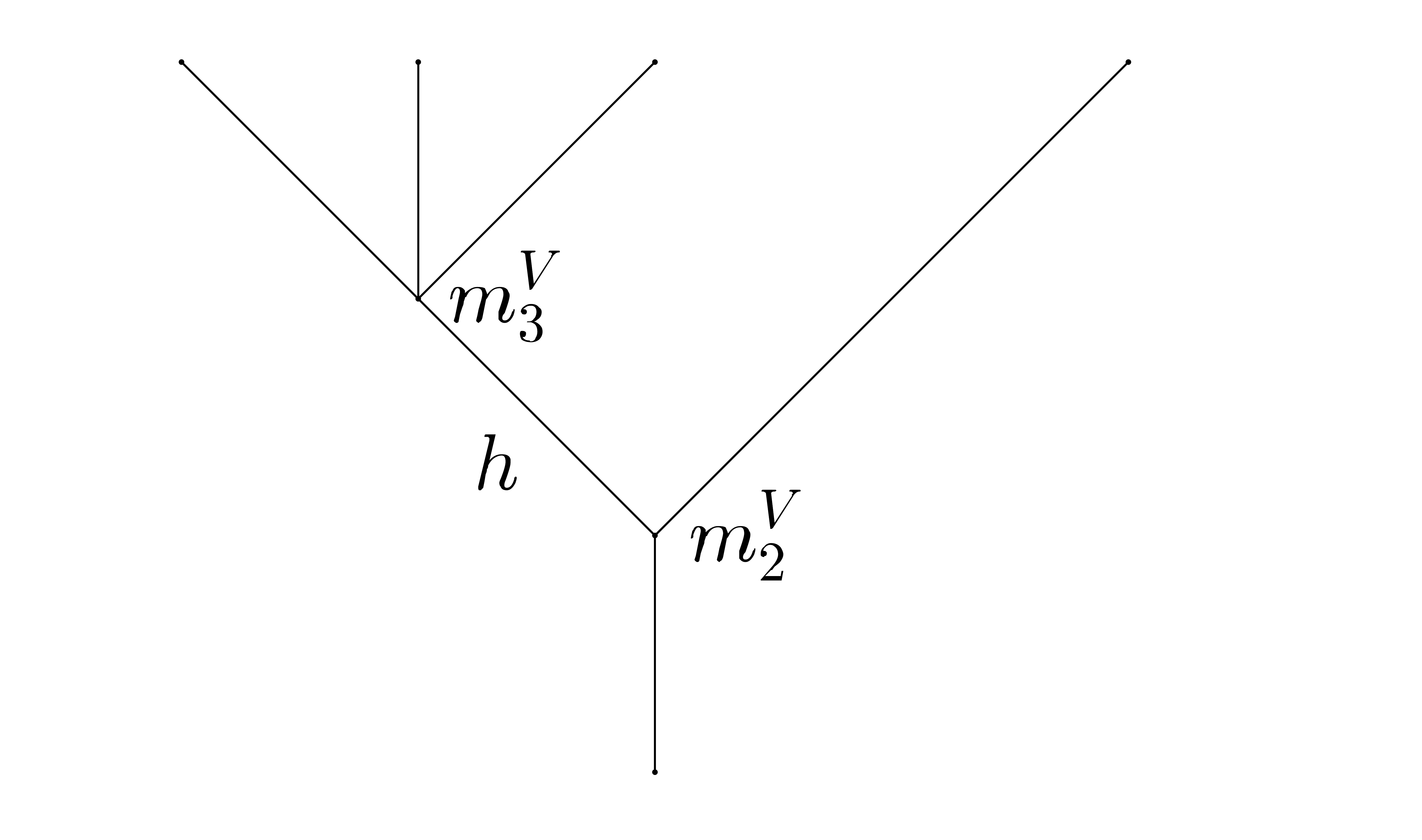}
	\caption{ $T'$ decorated}
	\label{decorated}
\end{figure}
\\
The three above gives a map $\mathcal{P}_{T'}\: : \: V^{\otimes 4}\to V$ via $\mathcal{P}_{T'}:=m_{2}^{V}\circ\left(m_{3}^{V}\otimes Id \right) $; hence to each tree $T\in \mathcal{P}_{n}$ we associate a linear map $\mathcal{P}_{T}\: : \: V^{\otimes n}\to V$ as above.
To each trees we can associated a values $\theta(T)$ as follows. For a vertex $v$ in $T$ with arity $k$, consider the edges $e_{1}, \dots, e_{k}$. For each $e_{i}$ let $n_{i}$ be the numbers of all the paths that connect the root of $T$ with a leaf, passing trough $e_{i}$. Define $\theta_{T}(v):=\theta(n_{1}, \dots, n_{k})$, and $\theta(T):=\sum_{v}\theta_{T}(v)$, where the sum is taken over the internal edges.

\begin{defi}\label{explicitpkernel}
For each $n\geq 2$, the \emph{p-kernels} are
\[
p_{n}:=\sum _{T\in \mathcal{P}_{n}}(-1)^{\theta(T)}P_{T}.
\]
\end{defi}
In \cite[Proposition 6]{Markl} it is proved that the maps $m_{\bullet}^{W}, g_{\bullet}$ defined via
\[
m_{n}^{W}:=f\circ p_{n}\circ g^{\otimes n}, \quad {{g}}_{n}:=h\circ p_{n}\circ g^{\otimes n} 
\]
give a proof of Theorem \ref{markglobal}. This was originally proved in \cite{kontsoibel} but without explicit signs.
\begin{lem}\label{generallemmadecomp}
Let $\left( V^{\bullet},d_{v}\right) $ be a cochain complex.
\begin{enumerate}
\item Assume that there is a decomposition
\begin{equation}\label{decomp}
V^{\bullet}=W\oplus d\mathcal{M} \oplus \mathcal{M},
\end{equation}
where $W=\oplus_{p\geq 0}{W}^{p}\cong H^{\bullet}\left(V\right)$ is a graded vector subspace of closed elements, and $\mathcal{M}$ is a graded vector subspace containing no exact elements except $0$. There exist maps $f,g,h$ and a diagram of the type \eqref{homretract} such that $f g=1_{W}$ and $d_{W}=0$. Moreover the maps satisfy the \emph{side conditions}: $fh=0$, $hg=0$ and $h^2=0$.
\item Let $\left( W^{\bullet},d_{W}\right) $ be a cochain complex. Assume that $d_{W}=0$ and that there is a diagram of the type \eqref{homretract} between $W$ and $V$ such that $f g=1_{W}$, $fh=0$, $hg=0$ and $h^2=0$. There is a Hodge type decomposition 
\begin{equation*}
V^{\bullet}=W\oplus d\mathcal{M} \oplus \mathcal{M}.
\end{equation*}
\item Let $\left( W^{\bullet},d_{W}\right) $ be a cochain complex. Assume that there is a diagram of the type \eqref{homretract} between $W$ and $V$ such that $f g=1_{W}$. There exists a cochain homotopy between $gf$ and $1_{V}$ such that the diagram satisfies the side conditions. 
\item There exist a cochain complexes $(W^{\bullet}, d_{w})$ and a diagram of the type \eqref{homretract} between $W$ and $V$ such that that $d_{W}=0$, $f g=1_{W}$, $f h=0$, $h g=0$ and $h^2=0$.
\end{enumerate}
\end{lem}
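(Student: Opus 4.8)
The plan is to treat the four statements in turn, noting that (1) and (2) are mutually inverse constructions, while (3) is a gauge-fixing argument and (4) is an existence statement deduced from (1). Throughout I use the homotopy relation in the form $dh+hd=1_V-gf$, which one may assume after replacing $h$ by $-h$ (a harmless sign convention coming from $\partial h=gf-1_V$).

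For (1) I would take $g\colon W\to V$ to be the inclusion and $f\colon V\to W$ the projection onto $W$ along $d\mathcal{M}\oplus\mathcal{M}$ afforded by \eqref{decomp}. Since every element of $W$ is closed, the induced differential $d_W$ vanishes, $f$ and $g$ are cochain maps, $fg=1_W$, and $gf$ is exactly the projector onto $W$. Because $\mathcal{M}$ contains no nonzero exact element, $d|_{\mathcal{M}}\colon\mathcal{M}\to d\mathcal{M}$ is an isomorphism, so I would define $h$ to be $0$ on $W\oplus\mathcal{M}$ and $(d|_{\mathcal{M}})^{-1}$ on $d\mathcal{M}$. A summand-by-summand check gives $dh+hd=1_V-gf$, and since $\operatorname{im}(h)=\mathcal{M}$ while $h$ annihilates $W$ and $\mathcal{M}$, the side conditions $fh=0$, $hg=0$, $h^2=0$ are immediate.

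For (2) I would run this in reverse. Setting $P:=gf$, the identity $fg=1_W$ gives $P^2=P$. The heart of the argument is that $gf$, $dh$, and $hd$ are three mutually orthogonal idempotents summing to $1_V$: idempotency and orthogonality follow formally from $d^2=0$, the chain-map identities $fd=d_Wf=0$ and $dg=gd_W=0$ (as $d_W=0$), and the side conditions; for instance $(dh)^2=d(hd)h=d(1_V-gf-dh)h=dh$, and $gf\cdot dh=g(fd)h=0$. I would then put $\mathcal{M}:=\operatorname{im}(h)$, observe $\operatorname{im}(h)=\operatorname{im}(hd)$ via the relation $hdh=h$ (itself a consequence of $hg=0$, $h^2=0$), and read off $V=\operatorname{im}(gf)\oplus\operatorname{im}(dh)\oplus\operatorname{im}(hd)=\operatorname{im}(g)\oplus d\mathcal{M}\oplus\mathcal{M}$. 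Finally $\operatorname{im}(g)$ is closed since $dg=0$ and isomorphic to $H^{\bullet}(V)$ because $f,g$ are homotopy inverse quasi-isomorphisms; and $\mathcal{M}$ has no nonzero exact element because $hd$ restricts to the identity on $\mathcal{M}$ yet kills every $dv$.

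For (3), starting from an arbitrary homotopy $h_0$ with $dh_0+h_0d=1_V-gf$ and $fg=1_W$, I would gauge-fix in two steps. First replace $h_0$ by $h_1:=(1_V-gf)h_0(1_V-gf)$; since $gf$ is a cochain map it commutes with $d$, so $dh_1+h_1d=(1_V-gf)^3=1_V-gf$ (as $1_V-gf$ is idempotent), while $fg=1_W$ forces $fh_1=0$ and $h_1g=0$. Then set $h:=h_1\,d\,h_1$. \emph{This last replacement is the main obstacle:} one must show it simultaneously preserves the homotopy relation and yields $h^2=0$. The key observation is that once $fh_1=h_1g=0$, the operators $a:=dh_1$ and $b:=h_1d$ are themselves orthogonal idempotents with $a+b=1_V-gf$ (one checks $a(1_V-gf)=a$ and $ba=h_1d^2h_1=0$, whence $a^2=a$ and $ab=0$). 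This forces $dh_1^2d=ab=0$, from which $dh+hd=a^2+b^2=1_V-gf$ and $h^2=h_1\,(dh_1^2d)\,h_1=0$ both follow, with $fh=0$ and $hg=0$ inherited from $h_1$.

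For (4) I would first produce a decomposition of the form \eqref{decomp} for an arbitrary cochain complex over the field $\Bbbk$: choose a complement $\mathcal{M}$ of $\ker d$ in $V$ and a complement $W$ of $\operatorname{im}d$ in $\ker d$. Then $d|_{\mathcal{M}}$ is injective and $d(V)=d(\mathcal{M})$, so $V=W\oplus d\mathcal{M}\oplus\mathcal{M}$ with $W\subseteq\ker d$ closed and $W\cong H^{\bullet}(V)$, while $\mathcal{M}\cap d\mathcal{M}=0$ shows $\mathcal{M}$ has no nonzero exact element; such splittings exist because we work over a field. Applying (1) to this decomposition yields the desired diagram of type \eqref{homretract} with $d_W=0$, $fg=1_W$, and all side conditions, completing the proof.
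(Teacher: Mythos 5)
Your proposal is correct and takes essentially the same approach as the paper: for (1) and (2) you use exactly the paper's construction ($g$ the inclusion, $f$ the projection along $d\mathcal{M}\oplus\mathcal{M}$, $h$ the inverse of $d|_{\mathcal{M}}$ on $d\mathcal{M}$ extended by zero, and conversely $W:=\operatorname{im}(g)$, $\mathcal{M}:=\operatorname{im}(h)$), merely supplying the orthogonal-idempotent verifications the paper leaves as "a short calculation." For (3) and (4) the paper only cites \cite{lambestaheff} and \cite[Lemma 9.4.7]{lodayVallette}, and your arguments --- the two-step gauge fixing $h_{1}:=(1_{V}-gf)h_{0}(1_{V}-gf)$, $h:=h_{1}dh_{1}$, and the choice of complements of $\ker d$ and $\operatorname{im}d$ over the field $\Bbbk$ followed by an application of (1) --- are precisely the standard proofs contained in those references, so no genuinely different route is involved.
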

\begin{proof}
\begin{enumerate}
\item Let $W$ be a graded vector isomorphic to the cohomology of $V^{\bullet}$. We consider $W^{\bullet}$ as a differential graded vector space with zero differential. The inclusion $g\: : \:\left( W^{\bullet}, 0\right)\hookrightarrow \left(V, d_{V} \right)$ is a quasi-isomorphism. Let $f\: : \: W\oplus d_{V}\mathcal{M} \oplus \mathcal{M}\to W$ be the projection on the first coordinate. Let $h\: : \:  W\oplus d_{V}\mathcal{M} \oplus \mathcal{M}\to W\oplus d_{V}\mathcal{M} \oplus \mathcal{M}$ be defined by
\[
h(a_{1}, d_{V}a_{2}, a_{3}):=h(0, 0, a_{2}).
\]
It is a degree $-1$ map between graded vector spaces. A short calculation show that $h$ defines a homotopy between $f\circ g$ and $1_{W}$ and fulfills the conditions.
\item This follows by setting $\mathcal{M}:=\operatorname{Im}(h)$, $W:=\operatorname{Im}(g)$.
\item See \cite{lambestaheff}.
\item See \cite[Lemma 9.4.7]{lodayVallette}.
\end{enumerate}
\end{proof}
\begin{defi}
Given a diagram of the type \eqref{homretract} that satisfies $d_{w}=0$, $f\circ g=1_{W}$, $f\circ h=0$, $h\circ g=0$ and $h\circ h=0$. We call the Hodge type decomposition obtained by Lemma \ref{generallemmadecomp} \emph{the Hodge type decomposition associated to the diagram}.
\end{defi}
\begin{rmk}\label{rmmkvallette}
A proof of Theorem \ref{markglobal} assuming the side conditions is contained in \cite{KadeshHuebsch} (see the coalgebra perturbation lemma $2.1_{*}$).
\end{rmk}
\begin{cor}
Let $\left(A,m^{A}_{\bullet}\right) $ be a non-negatively graded $(1-)P_{\infty}$-algebra. Fix a diagram
\begin{equation}\label{diaglast}
\begin{tikzcd}
f\: : \: (A^{\bullet},m^{A}_{1})\arrow[r, shift right, ""]&(W^{\bullet}, d_{W})\: : \: g\arrow[l, shift right, ""] 
\end{tikzcd}
\end{equation}
 and a homotopy $h$. Assume that $d_{W}=0$, $f g=1_{W}$, $f h=0$, $hg=0$ and $h^2=0$. Let 
 \begin{equation}\label{decomp2}
 A=W\oplus d\mathcal{M} \oplus \mathcal{M},
 \end{equation}
 be the associated Hodge type decomposition where $d=m_{1}^{A}$. Then there exists an unique pair $(g_{\bullet},m^{W}_{\bullet})$ that satisfies Theorem \ref{markglobal} such that $g_{n}\: : \: W^{n}\to \mathcal{M}$ for $n>1$.
\end{cor}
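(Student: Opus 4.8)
The plan is to get existence for free from the explicit $p$-kernel formulas and to treat uniqueness as the real content, proving it by induction on the arity $n$ that inverts the $(1-)P_\infty$-morphism structure equation using the side conditions. Throughout I use that in the associated Hodge type decomposition $\mathcal{M}=\operatorname{Im}(h)$ and $W\cong\operatorname{Im}(g)$ (Lemma \ref{generallemmadecomp}(2)), and that $f$ is the projection onto $W$ along $d\mathcal{M}\oplus\mathcal{M}$, so that $\mathcal{M}\oplus d\mathcal{M}\subseteq\ker f$.

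For existence I would simply invoke the formulas $m_n^W:=f\circ p_n\circ g^{\otimes n}$ and $g_n:=h\circ p_n\circ g^{\otimes n}$ of Definition \ref{explicitpkernel} and \cite[Proposition 6]{Markl}, which already solve Theorem \ref{markglobal}. Since $g_n=h\circ p_n\circ g^{\otimes n}$ has image in $\operatorname{Im}(h)=\mathcal{M}$, this solution automatically satisfies $g_n(W^{\otimes n})\subseteq\mathcal{M}$ for $n>1$. Thus a solution with the required property exists, and it remains to show it is the only one.

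For uniqueness I would argue by induction on $n$. The base case is forced, since $g_1=g$ and $m_1^W=d_W=0$ are prescribed by Theorem \ref{markglobal}. For the inductive step, assume $g_k$ and $m_k^W$ are already determined for all $k<n$, and write down the arity-$n$ component of the morphism equation for $g_\bullet\colon (W,m_\bullet^W)\to (A,m_\bullet^A)$, schematically
\[
\sum_{k_1+\cdots+k_r=n}\pm\,m_r^A\bigl(g_{k_1}\otimes\cdots\otimes g_{k_r}\bigr)=\sum_{p+q+s=n,\ q\ge1}\pm\,g_{p+1+s}\bigl(\mathrm{Id}^{\otimes p}\otimes m_q^W\otimes\mathrm{Id}^{\otimes s}\bigr).
\]
I would then isolate the two top terms, namely $m_1^A g_n=d\circ g_n$ (the $r=1$ term on the left) and $g_1 m_n^W=g\circ m_n^W$ (the $p=s=0,\ q=n$ term on the right). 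Every remaining left-hand term has $r\ge2$ and so involves only the $g_{<n}$; every remaining right-hand term has either $q=1$, which vanishes because $m_1^W=0$, or $2\le q\le n-1$, which involves only $g_{<n}$ and $m^W_{<n}$. Collecting these into a map $\Phi_n\colon W^{\otimes n}\to A$ that is already fixed by the inductive hypothesis, the equation reduces to $d\circ g_n-g\circ m_n^W=\Phi_n$.

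Finally I would solve this linear equation uniquely using the side conditions and the homotopy identity $\mathrm{Id}_A-gf=dh+hd$. Applying $f$ and using $fg=1_W$ together with $f\,d\,g_n=0$ — valid because $g_n(W^{\otimes n})\subseteq\mathcal{M}$ forces $d\,g_n\in d\mathcal{M}=\ker f\cap\operatorname{Im}(d)$ — yields $m_n^W=-f\circ\Phi_n$. Applying $h$ and using $hg=0$ gives $h\,d\,g_n=h\circ\Phi_n$; but $h\,d\,g_n=(\mathrm{Id}_A-gf-dh)\,g_n=g_n$, since $fg_n=0$ (as $\mathcal{M}\subseteq\ker f$) and $hg_n=0$ (as $g_n$ lands in $\operatorname{Im}(h)$ and $h^2=0$), so that $g_n=h\circ\Phi_n$. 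Both $m_n^W$ and $g_n$ are thereby forced, completing the induction; one also checks these recursively determined maps coincide with the $p$-kernel solution. The anticipated obstacle is bookkeeping rather than conceptual: one must verify that $\Phi_n$ contains no hidden occurrence of $g_n$ or $m_n^W$, which is exactly where $m_1^W=0$ enters, and keep track of the Koszul signs in the morphism equation. The passage from $A_\infty$ to the $C_\infty$, or truncated $1$-$C_\infty$, setting requires no change, since the additional shuffle constraints are preserved by the $p$-kernel solution and uniqueness in the larger $A_\infty$ hom-space implies it \emph{a fortiori}.
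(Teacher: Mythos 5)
Your proof is correct and follows essentially the same route as the paper: uniqueness by induction on arity, extracting from the arity-$n$ morphism identity the equation $d g_n - g\, m_n^W = \Phi_n$ with $\Phi_n$ fixed by the inductive hypothesis (the key point, which you rightly flag, being that $m_1^W=0$ keeps any hidden $g_n$ out of $\Phi_n$), and then separating the two unknowns via the decomposition $A = W \oplus d\mathcal{M} \oplus \mathcal{M}$. The paper phrases this last step as directness of the sum together with injectivity of $g_1$ and of $d$ on $\mathcal{M}$, whereas you apply $f$ and $h$ to solve explicitly ($m_n^W=-f\Phi_n$, $g_n=h\Phi_n$); these are the same mechanism, and your existence step (the transferred $g_n=h\circ p_n\circ g^{\otimes n}$ lands in $\operatorname{Im}(h)=\mathcal{M}$) is likewise the one the paper leaves implicit.
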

\begin{proof}
 Let $\left(g_{\bullet}, m_{\bullet}^{W}, \right)$, $\left(\tilde{g}_{\bullet}, \tilde{m}_{\bullet} \right)$ be two distinct solutions that satisfies the above conditions. We have $\tilde{m}_{1}=0=m_{1}^{W}$ and $\tilde{g}_{1}={g}_{1}$. Assume that $\tilde{m}_{i}=m_{i}^{W}$, $g_{i}=\tilde{g}_{i}$ for $i<n$. Since they are both $C_{\infty}$-morphisms, we have
\begin{align*}
g_{1}m_{n}^{W}-dg_{n}=\tilde{g}_{1}\tilde{m}_{n}-d\tilde{g}_{n}\in A
\end{align*}
Let $w_{1}, \dots, w_{n}\in W$, then by \eqref{decomp2} we have
\begin{align*}
g_{1}m_{n}^{W}\left(w_{1}, \dots, w_{n} \right)& = \tilde{g}_{1}\tilde{m}_{n}\left(w_{1}, \dots, w_{n} \right)\\
dg_{n}\left(w_{1}, \dots, w_{n} \right)& = d\tilde{g}_{n}\left(w_{1}, \dots, w_{n} \right)
\end{align*}
We conclude $m_{n}^{W}\left(w_{1}, \dots, w_{n} \right)=\tilde{m}_{n}\left(w_{1}, \dots, w_{n} \right)$ and $g_{n}\left(w_{1}, \dots, w_{n} \right)=\tilde{g}_{n}\left(w_{1}, \dots, w_{n} \right)$.
\end{proof}
\subsection{Minimal models and homotopies}\label{Morphism of homological pairs}
Let $\left(B, m_{\bullet}^{B} \right) $ be a $P_{\infty}$-algebra and let ${g'}_{\bullet}\: : \:\left(W', m_{\bullet}^{W'}\right) \to \left( B, m_{\bullet}^{B}\right) $ be a minimal model. It has an inverse up to homotopy ${f'}_{\bullet}$. Let $p_{\bullet}\: \: \:\left(A, m_{\bullet}^{A}\right)\to\left( B, m_{\bullet}^{B}\right)  $ be a $P_{\infty}$-algebra morphism and let ${g}_{\bullet}\: : \:\left(W, m_{\bullet}^{W}\right) \to \mathcal{F}\left( A, m_{\bullet}^{A}\right) $ be a $1$-minimal model for $1-P_{\infty}$-algebras. We get the following diagram for $1-P_{\infty}$-algebras
\[
\begin{tikzcd}
\mathcal{F}\left(A, m_{\bullet}^{A}\right)\arrow[r,  "\mathcal{F}\left( p_{\bullet}\right) "]
 &  \mathcal{F}\left(B, m_{\bullet}^{B} \right)\arrow[d, , "\mathcal{F}\left( {g'}_{\bullet}\right) "]\\(W ,m_{\bullet}^{W})\arrow[u, , "g_{\bullet}"]& ( W' ,m_{\bullet}^{W'})\arrow[u, shift right, "\mathcal{F}\left({f'}_{\bullet}\right) "].
\end{tikzcd}
\]
We set $q_{\bullet}:=\mathcal{F}\left({f'}_{\bullet}\right) \mathcal{F}\left( {p}_{\bullet}\right) {g}_{\bullet}$.
\begin{lem}\label{connected maurerCartan}
$\mathcal{F}\left(g' \right)_{\bullet}q_{\bullet}$ is $1$-homotopic to $\mathcal{F}(p)_{\bullet}g_{\bullet} $. If ${g'}_{\bullet}{f'}_{\bullet}=\mathrm{Id}$, then $\mathcal{F}\left(g' \right)_{\bullet}q_{\bullet}=\mathcal{F}(p)_{\bullet}g_{\bullet} $.
\end{lem}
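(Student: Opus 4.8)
The plan is to reduce the statement to two structural properties of the forgetful functor $\mathcal{F}$: that it is a functor (so it preserves composition and identities) and that it carries homotopies of $P_{\infty}$-morphisms to $1$-homotopies of $1-P_{\infty}$-morphisms. First I would simply unwind the definition of $q_\bullet$. Since $q_{\bullet}=\mathcal{F}({f'}_{\bullet})\mathcal{F}(p_{\bullet})g_{\bullet}$, postcomposing with $\mathcal{F}(g')_{\bullet}$ and using functoriality of $\mathcal{F}$ gives
\[
\mathcal{F}(g')_{\bullet}q_{\bullet}=\mathcal{F}(g')_{\bullet}\mathcal{F}({f'}_{\bullet})\mathcal{F}(p_{\bullet})g_{\bullet}=\mathcal{F}\bigl({g'}_{\bullet}{f'}_{\bullet}\bigr)\,\mathcal{F}(p)_{\bullet}g_{\bullet}.
\]
Thus the entire content of the lemma is to control the factor $\mathcal{F}({g'}_{\bullet}{f'}_{\bullet})$.

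Next I would invoke the defining property of the homotopy inverse. Because ${g'}_{\bullet}$ is a minimal model and ${f'}_{\bullet}$ is its inverse up to homotopy, the composite ${g'}_{\bullet}{f'}_{\bullet}$ is homotopic to $\mathrm{Id}_{B}$ as a $P_{\infty}$-endomorphism of $(B,m_{\bullet}^{B})$. Applying $\mathcal{F}$ and using that it sends homotopic morphisms to $1$-homotopic ones, I conclude that $\mathcal{F}({g'}_{\bullet}{f'}_{\bullet})$ is $1$-homotopic to $\mathcal{F}(\mathrm{Id}_{B})=\mathrm{Id}_{\mathcal{F}(B)}$. Finally, since the $1$-homotopy relation is stable under precomposition, composing both sides on the right with the fixed morphism $\mathcal{F}(p)_{\bullet}g_{\bullet}$ shows that $\mathcal{F}(g')_{\bullet}q_{\bullet}$ is $1$-homotopic to $\mathrm{Id}_{\mathcal{F}(B)}\circ\mathcal{F}(p)_{\bullet}g_{\bullet}=\mathcal{F}(p)_{\bullet}g_{\bullet}$, which is the first assertion.

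For the second assertion the identical computation applies, but with an equality in place of a homotopy. If ${g'}_{\bullet}{f'}_{\bullet}=\mathrm{Id}$ holds strictly, then by functoriality $\mathcal{F}({g'}_{\bullet}{f'}_{\bullet})=\mathcal{F}(\mathrm{Id}_{B})=\mathrm{Id}_{\mathcal{F}(B)}$, and the displayed identity collapses to $\mathcal{F}(g')_{\bullet}q_{\bullet}=\mathcal{F}(p)_{\bullet}g_{\bullet}$ on the nose, with no homotopy needed.

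The two structural facts about $\mathcal{F}$ are where I expect the real work to sit, and they are the main obstacle. I would need to check, against the truncated cylinder/path-object model used to define $1$-homotopy in \cite{Sibilia1}, that truncating a genuine $C_{\infty}$-homotopy between ${g'}_{\bullet}{f'}_{\bullet}$ and $\mathrm{Id}_{B}$ still yields a well-defined $1$-homotopy after applying $\mathcal{F}$, and that gluing such a $1$-homotopy with the fixed morphism $\mathcal{F}(p)_{\bullet}g_{\bullet}$ is compatible with the truncation. The potential subtlety is that the degree truncation defining $1-P_{\infty}$-morphisms might a priori interfere with the higher components of the homotopy data; the careful point is to verify that only the low-degree components relevant to $1$-homotopy survive and that they assemble correctly. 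Once these compatibility checks are in place, the argument above is purely formal.
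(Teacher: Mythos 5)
Your proposal is correct and is essentially the paper's own argument: the paper likewise writes $\mathcal{F}\left(g'\right)_{\bullet}q_{\bullet}=\mathcal{F}\left({g'}_{\bullet}{f'}_{\bullet}p_{\bullet}\right)g_{\bullet}$ by functoriality and then concludes from ${g'}_{\bullet}{f'}_{\bullet}$ being homotopic to the identity, exactly as you do. The compatibility facts you flag (that $\mathcal{F}$ sends homotopies to $1$-homotopies and that the $1$-homotopy relation is stable under composition) are indeed the implicit ingredients, which the paper leaves to \cite{Sibilia1}.
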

\begin{proof}
$\mathcal{F}\left(g' \right)_{\bullet}q_{\bullet}=\mathcal{F}\left({g'}_{\bullet}{f'}_{\bullet} {p}_{\bullet}\right)  {g}_{\bullet}$. Since ${g'}_{\bullet}{f'}_{\bullet}$ is homotopic to the identity, we have the statement.
\end{proof}

\section{The complex punctured elliptic curve and the KZB connection}\label{pt}
We construct a family of holomorphic $1$-models with logarithmic singularities for the $C_{\infty}$-algebra of smooth differential forms on punctured elliptic curves. The family is parametrized by the holomorphic structure $\tau$ in the complex upper half plane and is a holomorphic $C_{\infty}$-version of the smooth model given in \cite{LevBrown}. We apply the theory developed in \cite{Sibilia1} to build a flat connection $d-\alpha_\tau$ that corresponds to the universal KZB connection on the punctured elliptic curve (see \cite{Damien}, \cite{Hain} and \cite{Levrac}). A comparison between KZB and KZ connection is contained in \cite{Hain}, by sending $\tau$ to $i\infty$. We describe this comparison via the machinery of Subsection \ref{Morphism of homological pairs}. In the last subsection, we present another holomorphic connection gauge equivalent to the KZB connection on the punctured elliptic curve.

\subsection{A $1$-model for the torus}
Let $\tau$ be a fixed element of the upper half plane $\mathbb{H}:=\left\lbrace z\in \C \: : \: \Im(z)>0\right\rbrace $. Let $\mathbb{Z}+\tau\mathbb{Z}$ be the lattice spanned by $1,\tau$. Let $\xi$ be the coordinate on $\C$. We define a holomorphic action of $\mathbb{Z}^{2}$ on $\C$ by
\[
(m,n)(\xi):=\xi+m+\tau n,
\]
which is free and properly discontinuous. The lattice $\mathbb{Z}+\tau\mathbb{Z}$ is a normal crossing divisor of $\C$ and it is preserved by the action of $\Z^{2}$. The action groupoid $\C_{\bullet}\Z^{2}$ is a simplicial manifold equipped with a simplicial normal crossing divisor $\left( \mathbb{Z}+\tau\mathbb{Z}\right)_{\bullet}\Z^{2}$ (see \cite{Sibilia1}). By the de Rham Theorem for simplicial manifolds we have
\[
 H^{\bullet}\left(  \operatorname{Tot}_{N}\left(A_{DR}\left( \left( \C-\left\lbrace \mathbb{Z}+\tau\mathbb{Z}\right\rbrace\right)_{\bullet}\Z^{2} \right) \right),D\right) \cong H^{\bullet}\left({\mathcal{E}}^{\times},\C \right).
\]
Let $\gamma\: : \: \Z^{2}\to \C$ be the group homomorphism defined by $\gamma(m,n):=n2\pi i$. Then $d\xi$ and $\gamma$ are closed forms in $\operatorname{Tot}_{N}\left(A_{DR}\left( \left( \C-\left\lbrace \mathbb{Z}+\tau\mathbb{Z}\right\rbrace\right)_{\bullet}\Z^{2} \right)\right)$ of type $(0,1)$ and $(1,0)$  respectively. They generate the cohomology. We construct a $1$-model for $\operatorname{Tot}_{N}\left(A_{DR}\left( \left( \C-\left\lbrace \mathbb{Z}+\tau\mathbb{Z}\right\rbrace\right)_{\bullet}\Z^{2} \right)\right)$.\\
We fix a family of holomorphic functions $f^{(i)}\: : \: \C-\left\lbrace \mathbb{Z}+\tau\mathbb{Z}\right\rbrace \to \C$ indexed by $i\in \N$ such that
\begin{equation}\label{conditions}
f^{(0)}=1,\quad f^{(n)}(\xi+l)=f^{(n)}(\xi),\quad f^{(n)}(\xi+l\tau)= \sum_{j=0}^{n}\frac{f^{(n-j)}(\xi)(-2\pi il)^{j}}{j!}
\end{equation}
for all $l,n\in \Z$. 
\begin{lem}\label{linindfunct}
Let $f^{(i)}\: : \: \C-\left\lbrace \mathbb{Z}+\tau\mathbb{Z}\right\rbrace \to \C$ be a family of holomorphic functions indexed by $\N$ that satisfy $\eqref{conditions}$. Then they are linearly independent.
\end{lem}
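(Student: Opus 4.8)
The plan is to extract from the quasi-periodicity relation in \eqref{conditions} a unipotent operator that detects the index $n$ of $f^{(n)}$. Let $T$ be the pullback along the translation $\xi\mapsto\xi+\tau$, which is a well-defined linear operator on holomorphic functions on $\C-\{\mathbb{Z}+\tau\mathbb{Z}\}$ since the lattice is translation-invariant, and set $\mathcal{N}:=T-\mathrm{Id}$. Taking $l=1$ in the third relation of \eqref{conditions} and subtracting $f^{(n)}$ gives
\[
\mathcal{N}f^{(n)}=\sum_{j=1}^{n}\frac{(-2\pi i)^{j}}{j!}\,f^{(n-j)},
\]
so $\mathcal{N}$ strictly lowers the index: it sends $f^{(n)}$ into the span of $f^{(0)},\dots,f^{(n-1)}$, and $\mathcal{N}f^{(0)}=0$ because $f^{(0)}=1$. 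Conceptually this is the statement that, on the generating series $F(\xi,\alpha):=\sum_{n\ge 0}f^{(n)}(\xi)\alpha^{n}$ (the Kronecker function), translation acts by $TF=e^{-2\pi i\alpha}F$, so that $\mathcal{N}$ corresponds to multiplication by $e^{-2\pi i\alpha}-1$, which has a simple zero at $\alpha=0$.

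Next I would record two facts, each by induction on $n$: first, $\mathcal{N}^{\,n+1}f^{(n)}=0$; second, $\mathcal{N}^{\,n}f^{(n)}=(-2\pi i)^{n}f^{(0)}=(-2\pi i)^{n}$. The first is immediate from the displayed formula, since $\mathcal{N}$ drops the top index by at least one at every step. For the second, applying $\mathcal{N}^{\,n-1}$ to $\mathcal{N}f^{(n)}$ kills every summand except $j=1$ (by the first fact $\mathcal{N}^{\,n-1}f^{(n-j)}=0$ for $j\ge 2$), leaving $(-2\pi i)\,\mathcal{N}^{\,n-1}f^{(n-1)}=(-2\pi i)^{n}f^{(0)}$ by the inductive hypothesis. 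Equivalently, this is just the assertion that $(e^{-2\pi i\alpha}-1)^{n}$ has $\alpha$-order exactly $n$ with leading coefficient $(-2\pi i)^{n}$, which one reads off at once from the generating-function description.

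With these in hand the linear independence is a one-line argument. Suppose, for contradiction, that $\sum_{k=0}^{d}c_{k}f^{(k)}=0$ with $d$ the largest index for which $c_{d}\neq 0$. Apply $\mathcal{N}^{\,d}$: every term with $k<d$ vanishes by the first fact, while the $k=d$ term contributes $c_{d}(-2\pi i)^{d}$ by the second, so $c_{d}(-2\pi i)^{d}=0$ and hence $c_{d}=0$, a contradiction. Therefore all coefficients vanish and the $f^{(n)}$ are linearly independent.

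The computation is elementary, so the only real point requiring care is the bookkeeping in the second fact, namely verifying that after $n$ applications of $\mathcal{N}$ exactly one term, with coefficient precisely $(-2\pi i)^{n}$, survives on $f^{(0)}$. I expect the generating-function reformulation to be the cleanest way to remove any doubt here, since it reduces that claim to the order of vanishing of $(e^{-2\pi i\alpha}-1)^{n}$ at the origin; alternatively one can carry out the double induction above directly.
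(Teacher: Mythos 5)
Your proof is correct, but it takes a genuinely different route from the paper. The paper never introduces a translation operator: it views the $f^{(i)}$ as bidegree $(0,0)$ elements of the total complex $\operatorname{Tot}_{N}\left(A_{DR}\left( \left( \C-\left\lbrace \mathbb{Z}+\tau\mathbb{Z}\right\rbrace\right)_{\bullet}\Z^{2} \right)\right)$, applies the group-cochain differential $\partial_{G}$, and regards $\sum_{i}\lambda_{i}\partial_{G}f^{(i)}$ as a polynomial $p(x)$ in the lattice variable $x=n$ with coefficients in the span $V$ of the $f^{(i)}$; since $p(x)$ vanishes for all $x\in\N$, all its coefficients vanish, and the top coefficient is (up to a nonzero constant) $\lambda_{p}$, where $p$ is the maximal index. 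Your argument instead fixes $l=1$, forms $\mathcal{N}=T-\mathrm{Id}$ with $T$ the pullback by $\xi\mapsto\xi+\tau$, and computes $\mathcal{N}^{n+1}f^{(n)}=0$ and $\mathcal{N}^{n}f^{(n)}=(-2\pi i)^{n}$ exactly; applying $\mathcal{N}^{d}$ to a dependence relation isolates the top coefficient. The two are cousins (iterated finite differences in $l$ versus leading polynomial coefficients in $l$), but yours is more elementary and self-contained: it never invokes the simplicial total complex or the fact that a polynomial vanishing on $\N$ is zero, replacing both by an explicit operator computation. What the paper's formulation buys in exchange is reusability: the same polynomial-coefficient technique, upgraded to two variables $(x,y)$, is exactly what the proof of Lemma \ref{modelelliptic curve} uses to show $H^{2}(B)=0$ for degree-2 elements of bidegree $(2,1)$, where a single difference operator in one direction no longer suffices as stated.
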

\begin{proof}
Assume that there is a non trivial relation $\sum_{i\in I}\lambda_{i}f^{(i)}=0$. Let $p$ be the maximal integer of $I$ such that $\lambda_{p}\neq 0$. Notice that $f^{(i)}\: : \: \C-\left\lbrace \mathbb{Z}+\tau\mathbb{Z}\right\rbrace \to \C$ are elements of $\operatorname{Tot}_{N}\left(A_{DR}\left( \C-\left\lbrace \mathbb{Z}+\tau\mathbb{Z}\right)_{\bullet}\Z^{2}\right) \right) $ of bidegree $(0,0)$. Let $V$ be the complex vector space generated by $\left\lbrace f^{(i)}\right\rbrace_{i\in \N}$. Then each $f^{(i)}$ defines a map $\partial_{G}f^{(i)}\: : \: \Z^{2}\to V$ via $$((m,n))\mapsto \partial_{G}f{(m,n)}=\sum_{j=0}^{i}\frac{f^{(i-j)}(\xi)(-2\pi in)^{j}}{j!}.$$
Set $x:=n$, then $\sum_{i\in I}\lambda_{i}\partial_{G}f^{(i)}$ can be seen as a polynomial $p(x)$ in variable $x$ and coefficients in $V$ of the form
\[
p(x)=\sum h^{a}x^{a}
\]
where $h^{a}\in V$. In particular $h^{p}=\lambda_{p}f^{\left( 0\right)}(-2\pi i)^{p}=\lambda_{p}(-2\pi i)^{p}$.
Since $\sum_{i\in I}\lambda_{i}\partial_{G}f^{(i)}=0$ we get that $p(x)=0$ for any $x\in \N$. This implies that $h^{p}=0$ and then $\lambda_{p}=0$.
\end{proof}
For $i\in \N$, we set $\phi^{(i)}:=f^{(i)}d\xi$. Then $$\phi^{(i)}\in \operatorname{Tot}_{N}\left(A_{DR}\left( \left( \C-\left\lbrace \mathbb{Z}+\tau\mathbb{Z}\right\rbrace\right)_{\bullet}\Z^{2} \right)\right)$$ are linearly indipendent elements of bidegree $(0,1)$. Note that $d\phi^{(i)}=0$ since they are holomorphic $1$-forms.
We consider $\operatorname{Tot}_{N}\left(A_{DR}\left( \left( \C-\left\lbrace \mathbb{Z}+\tau\mathbb{Z}\right\rbrace\right)_{\bullet}\Z^{2} \right)\right)$ equipped with the unital $C_{\infty}$-structure ${m}_{\bullet}$ such that $m_{1}=D$ (see \cite[Section 2]{Sibilia1}). 
\begin{lem}\label{modelelliptic curve}
\begin{enumerate}
\item $D\left( -\phi^{(n)}\right) =\sum_{l=1}^{n}{m}_{l+1}(\gamma,\dots, \gamma, \phi^{(p-l)})$, for any $n$,
\item ${m}_{l+1}(\gamma^{\otimes i}, \phi^{(k)}, \gamma^{\otimes l+1-i})=(-1)^{l}\binom{l}{i}{m}_{l+1}(\phi^{(k)},\gamma,\dots, \gamma)$,
\item Let $B^{\bullet}\subset \operatorname{Tot}_{N}\left(A_{DR}\left( \left( \C-\left\lbrace \mathbb{Z}+\tau\mathbb{Z}\right\rbrace\right)_{\bullet}\Z^{2} \right)\right)$ be $C_{\infty}$-sub algebra generated by 
$
1,\gamma,\left\lbrace \phi^{(i)}\right\rbrace_{i\in \N}.
$
Then $$\left( {m}_{\bullet},B^{\bullet}\right) \subset \left({m}_{\bullet}, \operatorname{Tot}_{N}\left(A_{DR}\left( \left( \C-\left\lbrace \mathbb{Z}+\tau\mathbb{Z}\right\rbrace\right)_{\bullet}\Z^{2} \right)\right)\right)$$ is a $1$-model. 
\end{enumerate}
\end{lem}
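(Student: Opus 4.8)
The plan is to prove the three assertions in the order (2), (1), (3): part (2) is purely formal, part (1) is the main computation, and part (3) assembles them into the model statement. I would begin with (2), using only the defining (Harrison/shuffle) relations of a $C_{\infty}$-algebra, namely that its operations annihilate the non-trivial shuffle products of the suspension $sW$. Here every $\gamma$-entry is the \emph{same} element and, after suspension, has even degree $|\gamma|-1=0$, so no Koszul sign obstructs permuting the $\gamma$'s among themselves. Applying the shuffle relation that transports the single factor $\phi^{(k)}$ across the block of $\gamma$'s, all interleavings of the equal $\gamma$'s coincide and the shuffle sum collapses, yielding the multiplicity $\binom{l}{i}$, while the accumulated signs produce the factor $(-1)^{l}$. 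This ``coinciding-argument'' identity needs no knowledge of the explicit transfer formula, only the $C_{\infty}$-relations.

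For (1) I would compute the two sides separately and match. Since $\phi^{(n)}$ has bidegree $(0,1)$ and $f^{(n)}$ is holomorphic, $d\phi^{(n)}=0$, so $D(-\phi^{(n)})=-\partial_{G}\phi^{(n)}$ by the formula $D(a)=\partial_{G}a+(-1)^{p}da$. Evaluating $\partial_{G}$ at $(p,q)\in\Z^{2}$ amounts to pulling $\phi^{(n)}$ back along the translation $\xi\mapsto\xi+p+q\tau$, and the quasi-periodicity conditions $\eqref{conditions}$ give
\[
(p,q)^{*}\phi^{(n)}=\sum_{j=0}^{n}\frac{(-2\pi i q)^{j}}{j!}\,\phi^{(n-j)},
\]
so that $D(-\phi^{(n)})(p,q)=\sum_{j=1}^{n}\frac{(-2\pi i q)^{j}}{j!}\phi^{(n-j)}$, the $j=0$ term cancelling. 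On the right I would insert the explicit $p$-kernel of Definition \ref{explicitpkernel} into the Getzler--Cheng transfer and evaluate it on $l$ copies of the group cochain $\gamma$ (valued in locally constant functions, with $\gamma(p,q)=2\pi i q$) and one holomorphic form $\phi^{(k)}$. Because $\gamma$ is locally constant and the $\phi^{(i)}$ are holomorphic, the decorated-tree sum ought to reduce to a single iterated simplicial (Dupont) integral, contributing the volume factor $1/l!$ while the $l$ copies of $\gamma$ assemble the power $(-2\pi i q)^{l}$ (the sign being dictated by the $C_{\infty}$-conventions, consistently with (2)). Matching term by term with $k=n-l$ then gives (1).

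I expect this last product computation to be the main obstacle: one must show that, on these particular inputs, the alternating sum over trees in $p_{l+1}$ collapses to one simplex integral with exactly the claimed coefficient. Everything else in (1) (the left-hand side) is a direct application of $\eqref{conditions}$, and (2) organizes the right-hand side, so the analytic content is concentrated in identifying $m_{l+1}(\gamma^{\otimes l},\phi^{(k)})$.

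Finally, for (3) I would first note that $B$ is a sub-$C_{\infty}$-algebra: it is closed under $m_{\geq 2}$ by construction, being generated by $1,\gamma,\{\phi^{(i)}\}$, and closed under $m_{1}=D$ precisely by (1), which expresses $D\phi^{(n)}$ as a combination of products of the generators. To see that the inclusion is a $1$-model I would compute $H^{\bullet}(B)$ through degree $2$ and compare with $H^{\bullet}(\mathcal{E}_{\tau}^{\times})=(\C,\C^{2},0)$ via the de Rham isomorphism recalled above. Since $m_{\geq 2}$ raises total degree to at least $2$, the degree-$1$ part of $B$ is exactly $\langle\gamma\rangle\oplus\langle\phi^{(i)}\rangle_{i\in\N}$; using (1) together with the linear independence of the $f^{(i)}$ (Lemma \ref{linindfunct}), one sees that the only closed degree-$1$ classes are $[\gamma]$ and $[\phi^{(0)}]=[d\xi]$ (the highest-$q$ coefficient of $\sum b_i D\phi^{(i)}$ forces $b_i=0$ for $i\geq 1$), with no exact $1$-elements because the degree-$0$ part of $B$ is just the constants; hence $H^{1}(B)\cong\C^{2}$ maps isomorphically. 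In degree $2$ the $(0,2)$-component vanishes because $d\xi\wedge d\xi=0$, and I would verify, again via (1), that the surviving closed $(1,1)$- and $(2,0)$-classes are exact, so $H^{2}(B)=0$ injects into $H^{2}(\mathcal{E}_{\tau}^{\times})=0$. These are exactly the conditions for a $1$-quasi-isomorphism, whence $B$ is a $1$-model.
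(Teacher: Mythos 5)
Your plans for parts (2) and (3) are workable in outline, but there is a genuine gap exactly at the place you yourself call the main obstacle: the identification of $m_{l+1}(\gamma,\dots,\gamma,\phi^{(k)})$ in part (1), where your guess for the outcome would make the lemma false. The operations $m_{\bullet}$ here are the Getzler--Cheng $C_{\infty}$-operations on the normalized total complex; the paper evaluates them by invoking the explicit formulas of \cite[Theorem 2.7]{Sibilia1}, which arise from Dupont's contraction and carry \emph{Bernoulli-number} coefficients, namely $m_{l+1}(\gamma,\dots,\gamma,\phi^{(p-l)})(m,n)=\frac{B_{l}}{l!}\sum_{i\geq 1}^{p-l}\frac{\phi^{(p-l-i)}(-2\pi i n)^{i+l}}{i!}$ for $l>1$, with a separate formula for $l=1$. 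Your proposal predicts instead a collapse to ``a single iterated simplicial integral with volume factor $1/l!$''. Comparing coefficients of $\phi^{(p-j)}$ on the two sides of (1), coefficients of the form $c_{l}/l!$ must satisfy an identity of the shape $\sum_{l}\binom{j}{l}c_{l}=\mathrm{const}$ for every $j$; with $c_{l}=1$ the left-hand side is $2^{j}-O(j)$, so the identity already fails at small $j$, whereas with $c_{l}=B_{l}$ it reduces precisely to the generating-function identity $\left(1-\frac{t}{1-e^{t}}\right)\left(e^{t}-1\right)=e^{t}-1-t$, and checking this identity is the actual content of the paper's proof of (1). Note also that the p-kernels of Definition \ref{explicitpkernel} are the formulas for transferring a structure along a retract; they are not, as written, a formula for the Getzler--Cheng operations you need to evaluate, so even the starting object of your tree computation is not the right one. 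Your computation of the left-hand side $D(-\phi^{(n)})$ via \eqref{conditions} is fine.

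Two further remarks. For part (2), your shuffle-relation argument is legitimate and more elementary than the paper's appeal to the explicit formulas; but the collapse with $i$ of the $l$ identical $\gamma$'s to the left of $\phi^{(k)}$ produces the sign $(-1)^{i}$, not $(-1)^{l}$ (check $l=1$: $m_{2}(\gamma,\phi)=-m_{2}(\phi,\gamma)$; and $l=2$, $i=1$: $m_{3}(\gamma,\phi,\gamma)=-2\,m_{3}(\phi,\gamma,\gamma)$); the arity and sign in the statement of (2) as printed contain typos, so this should be stated correctly rather than matched to the printed sign. For part (3), your $H^{1}$ argument is the paper's (polynomial in the group variable plus Lemma \ref{linindfunct}), but the step ``verify that the surviving closed classes in degree $2$ are exact'' is not a routine check: the paper computes $Dm_{l}(\gamma,\dots,\gamma,\phi^{(p)})$ on $\Z^{2}\times\Z^{2}$ from the explicit formulas, organizes a putative closed element of $B^{2}$ as a polynomial in two group variables, and concludes that the only closed element of $B^{2}$ is the exact one $m_{2}(\gamma,\phi^{(0)})$. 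That step therefore relies again on the very ingredient missing from your part (1), and your proposal cannot be completed without it.
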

\begin{proof}
We use Theorem 2.7 in \cite{Sibilia1}. Fix a $(m,n)\in \Z^{2}.$
Let $l>1$. Then \begin{equation}
\label{firstexpr}{m}_{l+1}\left( \gamma, \dots,  \gamma, \phi^{(p-l)} \right){(m,n)}= \frac{B_{l}}{l!}\sum _{i\geq 1}^{p-l}\frac{\phi^{(p-l-i)}\left(-2 \pi i n \right)^{i+l} }{i!}.
\end{equation}
If $l=1$, \begin{equation}
\label{secondexpr}{m}_{2}\left( \gamma,  \gamma, \phi^{(p-1)} \right){(m,n)}= -\left(\left( 1- B_{1}\right) \sum _{i\geq 1}^{p-l}\frac{\phi^{(p-1-i)}\left(-2 \pi i n \right)^{i+1} }{i!} + \left(-2 \pi i n \right)\phi^{(p-1)} \right).
\end{equation}
More easily we have
 \begin{equation}\label{totalon}
D\left(- \phi^{(p)}\right){(m,n)}=-\sum _{i\geq 1}^{p}\frac{\phi^{(p-i)}\left(-2 \pi i n \right)^{i} }{i!}
\end{equation}
We prove that $\eqref{firstexpr}+\eqref{secondexpr}=\eqref{totalon}$ by comparing the coefficients. For $p=1,2$ the two expressions agree. For $n>2$ $\eqref{firstexpr}+\eqref{secondexpr}=\eqref{totalon}$ is equivalent to
\[
\frac{1-B_{1}}{(i-1)!}-\sum_{j=2}^{i-1}\frac{B_{j}}{j!}\frac{1}{(i-j)!}=\frac{1}{i!}.
\]
The first Bernoulli numbers ${B'}_{j}$ satisfy $\sum_{j=0}^{\infty}\frac{{B'}_{j}}{j!}=\frac{t}{e^t-1}$. Since $B_{j}={B'}_{j}$ for $j\neq 1$ and $B_{1}=\frac{1}{2}=-{B'}_{1}$, the condition above is equivalent to
\[
\left( 1-\frac{t}{1-e^t}\right) \left(e^{t}-1 \right)=e^{t}-1-t.
\]
This proves the first statement. 
The second statement follows by \cite[Theorem 2.7]{Sibilia1}. We prove point 3. Let $i\: : \: B\to \operatorname{Tot}_{N}\left(A_{DR}\left( \left( \C-\left\lbrace \mathbb{Z}+\tau\mathbb{Z}\right\rbrace\right)_{\bullet}\Z^{2} \right)\right)$ be the inclusion. We have to show that the induced map $[i]$ in cohomology is an isomorphism between elements of degree $1$ and an is injective for the elements of degree $2$. Assume that there is an element $\sum_{i\in I}\lambda_{i}\phi^{(i)}$ such that $\sum_{i\in I}\lambda_{i}D\phi^{(i)}=0$ and $0\notin I$. Let $p$ be the maximal integer of $I$ such that $\lambda_{p}\neq 0$. Let $V$ be the complex vector space generated by $\left\lbrace \phi^{(i)}\right\rbrace_{i\in \N}$. Then each $\phi^{(i)}$ defines a map $D\phi^{(i)}\: : \: \Z^{2}\to V$ via $$((m,n))\mapsto \left( Df\right) {(m,n)}=\sum_{j=0}^{i}\frac{\phi^{(i-j)}(\xi)(-2\pi in)^{j}}{j!}.$$
Set $x=n$, hence $\sum_{i\in I}\lambda_{i}D\phi^{(i)}$ can be written as a polynomial $p(x)$ with coefficients in $V$ of the forms
\[
p(x)=\sum h^{a}x^{a}
\]
where $h^{a}\in V$. In particular the monomial that multiplies $\phi^{\left( p-1\right)}$ is $x$, i.e we get
\[
h^{1}=\lambda_{p}\phi^{\left( p-1\right)}+h
\]
where $h$ lies in the subvector space of $V$ generated by $\phi^{(0)}, \dots, \phi^{(p-2)}$. Since $\sum_{i\in I}\lambda_{i}D\phi^{(i)}=0$ we get $p(x)=0$ for any $x\in \N$. Lemma \ref{linindfunct} implies that $h^{1}=0$ and hence $\lambda_{p}=0$. This shows that the vector space of closed forms in $B^{1}$ is generated by $\phi^{0}$ and $\gamma$. We show that $H^{2}(B)$ vanishes.
Suppose that there is a closed element
\[
f:=\sum_{i,j}^{n}\lambda_{i,j}{m}_{i}\left(\gamma, \dots, \gamma, \phi^{(j)} \right),  
\] 
 for $i>1$ and $j\geq 0$. By point 1 we can assume that $\lambda_{2,j}=0$. Since ${m}_{i}\left(\gamma, \dots, \gamma, \phi^{(0)} \right)=0$ for $i>2$ we conclude that $j>0$. Let $p$ be the maximal integer such that $\lambda_{i,p}\neq 0$ for some $i$ and let $l$ be the maximal integer such that $\lambda_{l,p}\neq 0$. The element $Dm_{l}\left(\gamma, \dots, \gamma, \phi^{p} \right) $ is of type $(2,1)$, then for $((m,n),(m',n'))\in \Z^{2}\times \Z^{2}$ we have
\begin{align*}
\left( Dm_{l}\left(\gamma, \dots, \gamma, \phi^{p} \right)\right) {((m,n),(m',n'))} &=\frac{B_{l-1}}{\left(l-1 \right) !} \sum _{i\geq 1}^{p}\frac{(m,n)\cdot\phi^{(p-i)}\left(-2 \pi i n' \right)^{i+l-1} }{i!}\\
&+ \frac{B_{l-1}}{\left(l-1 \right) !}\sum _{i\geq 1}^{p}\frac{\phi^{(p-i)}\left(-2 \pi i (n+n') \right)^{i+l-1} }{i!}\\
&+\frac{B_{l-1}}{\left(l-1 \right) !}\sum _{i\geq 1}^{p}\frac{\phi^{(p-i)}\left(-2 \pi i n \right)^{i+l-1} }{i!} 
\end{align*}
Then $Df$ defines a map $Df\: : \: \Z^{2}\times \Z^{2}\to V$ via $$((m,n),(m',n'))\mapsto \left( Df\right) {((m,n)(m',n'))}.$$ Now set $x=n$, $y=n'$, then $Df$ can be written as a polynomial $p(x,y)$ with coefficients in $V$ of form
\[
p(x,y)=\sum h^{a,b}x^{a}y^{b}.
\]
The monomials that multiply $\phi^{(p-1)}$ are $$\frac{\lambda_{l,p}B_{l-1}(-2\pi i)^{l}}{\left(l-1 \right) !}\left(x^{l}-(x+y)^{l}+y^{l}\right)+r(x,y) $$ where $r(x,y)$ is a polynomial containing monomials of total degree smaller than $l$. We conclude that 
\[
h^{l-1,1}=\frac{\lambda_{l,p}B_{l-1}(-2\pi i)^{l}}{\left(l-1 \right) !}\phi^{(p-1)}+h, \quad h^{1,l-1}=\frac{\lambda_{l,p}B_{l-1}(-2\pi i)^{l}}{\left(l-1 \right) !}\phi^{(p-1)}+h'
\]
where $h,h'\in W'$, where $W'$ is the vector space generated by $\phi^{(i)}$ for $1\leq i\leq p-2 $ . Since $p(x,y)$ vanishes on $\Z^{2}\times \Z^{2}$ we conclude that $h^{a,b}=0$ for any $a,b$. We get relations
\[
\frac{\lambda_{l,p}B_{l-1}(-2\pi i)^{l}}{\left(l-1 \right) !}\phi^{(p-1)}+h=0, \quad \frac{\lambda_{l,p}B_{l-1}(-2\pi i)^{l}}{\left(l-1 \right) !}\phi^{(p-1)}+h=0
\]
which implies $\lambda_{l,p}=0$. Hence the only closed form in $B^{2}$ is the exact form
\[
\lambda {m}_{2}\left(\gamma,  \phi^{(0)} \right)
\]
 and $H^{2}(B)=0$.
\end{proof}
\begin{cor}\label{decoB}
Let $\left( B, {m}_{\bullet}\right) $ be as above. There exists a Hodge decomposition
\begin{equation}\label{elliptic curvedec}
B=W^{\bullet}\oplus D\mathcal{M}\oplus \mathcal{M}
\end{equation}
such that
\begin{enumerate}
	\item  $W^{1}$ is generated by $\phi^{(0)}$ and $\gamma$, $\mathcal{M}^{1}$ is generated by $\phi^{(i)}$ for $i>0$, and $\left( D\mathcal{M}\right)^{0}=0$,
	\item  $W^{2}=0$, $\mathcal{M}^{2}$ is generated by ${m}_{l}\left(\phi^{(i)}, \gamma, \dots, \gamma\right) $ for $i>0, l>2$, and $\left( D\mathcal{M}\right)^{2}$ is generated by $D\left( \phi^{(i)}\right) $ for $i>1$.
\end{enumerate}
\end{cor}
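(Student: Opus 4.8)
The statement is a splitting result, so the plan is to exhibit, degree by degree, explicit complementary subspaces and then invoke Lemma \ref{generallemmadecomp}(1); its hypotheses are precisely that $W$ is a space of closed elements with $W\cong H^\bullet(B)$ and that $\mathcal M$ contains no nonzero exact element. The computation is finite in each degree because $B$ is generated in degree $1$ by $\gamma$ and the $\phi^{(i)}$ and is concentrated in degrees $0,1,2$: the products $\phi^{(i)}\wedge\phi^{(j)}$ vanish, $m_2(\gamma,\gamma)=0$ by graded commutativity, and by equations \eqref{firstexpr}--\eqref{secondexpr} every Cheng--Getzler product $m_{l+1}(\phi^{(i)},\gamma^{\otimes l})$ has total degree $2$ and bidegree $(1,1)$. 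In particular the proof of Lemma \ref{modelelliptic curve} already exhibits $B^2$ as the span of the products $m_i(\gamma^{\otimes i-1},\phi^{(j)})$.

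First I would fix bases. In degree $0$ set $W^0:=B^0=\Bbbk\cdot 1$. In degree $1$, Lemma \ref{linindfunct} together with the disjointness of bidegrees gives $B^1=\Bbbk\gamma\oplus\operatorname{span}\{\phi^{(i)}\}_{i\ge 0}$, and I set $W^1:=\operatorname{span}\{\gamma,\phi^{(0)}\}$, $\mathcal M^1:=\operatorname{span}\{\phi^{(i)}\}_{i\ge 1}$. In degree $2$, the symmetry relation Lemma \ref{modelelliptic curve}(2) lets me move every $\gamma$ to the right, and $m_l(\phi^{(0)},\gamma^{\otimes l-1})=0$ for $l\ge 3$; hence $B^2$ is spanned by the two families $\{m_2(\phi^{(i)},\gamma)\}_{i\ge 0}$ and $\{m_l(\phi^{(i)},\gamma^{\otimes l-1})\}_{l\ge 3,\,i\ge 1}$. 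I set $W^2:=0$, $\mathcal M^2:=\operatorname{span}\{m_l(\phi^{(i)},\gamma^{\otimes l-1})\}_{l\ge 3,\,i\ge 1}$, and $D\mathcal M:=D(\mathcal M^1)$, so that $(D\mathcal M)^0=(D\mathcal M)^1=0$ (there is nothing exact below degree $2$, since $B^0$ is closed) while $(D\mathcal M)^2=\operatorname{span}\{D\phi^{(i)}\}_{i\ge 1}$.

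That $W$ consists of closed elements isomorphic to $H^\bullet(B)$ is immediate from Lemma \ref{modelelliptic curve}: $1,\gamma,\phi^{(0)}$ are closed, $[\gamma]$ and $[\phi^{(0)}]$ generate $H^1$, and $H^2(B)=0$. The core point is then the degree-$2$ identity $B^2=(D\mathcal M)^2\oplus\mathcal M^2$, which I would obtain from the strict lower-triangularity of $D$ in the arity $l$. By Lemma \ref{modelelliptic curve}(1), $D(-\phi^{(n)})=m_2(\gamma,\phi^{(n-1)})+\sum_{l\ge 2}m_{l+1}(\gamma^{\otimes l},\phi^{(n-l)})$; for $n\ge 2$ the top term ($l=n$) vanishes and the whole tail lies in $\mathcal M^2$, so $D\phi^{(n)}\equiv \pm\, m_2(\phi^{(n-1)},\gamma)\pmod{\mathcal M^2}$. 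Consequently the composite $\mathcal M^1\xrightarrow{\,D\,}B^2\twoheadrightarrow B^2/\mathcal M^2$ carries $\phi^{(n)}$ to a nonzero multiple of the class of $m_2(\phi^{(n-1)},\gamma)$; since the $m_2(\phi^{(k)},\gamma)$ form a basis of $B^2/\mathcal M^2$, this composite is an isomorphism. This single fact gives at once $B^2=(D\mathcal M)^2\oplus\mathcal M^2$, the injectivity of $D$ on $\mathcal M^1$ (so that $\mathcal M$ contains no nonzero exact element), and hence $B^2=W^2\oplus(D\mathcal M)^2\oplus\mathcal M^2$. Applying Lemma \ref{generallemmadecomp}(1) to this decomposition produces the asserted Hodge-type decomposition.

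The step I expect to be the main obstacle is the linear-algebra input that both families in degree $2$ are linearly independent, equivalently that the $m_2(\phi^{(k)},\gamma)$ descend to a basis of $B^2/\mathcal M^2$. This is not formal: one must use the explicit evaluations \eqref{firstexpr} and \eqref{secondexpr} of these products on $\Z^2$, regard $\sum\lambda\, m_\bullet$ as a polynomial in $n$ with coefficients in $\operatorname{span}\{\phi^{(j)}\}$, and extract the coefficients to reduce independence to Lemma \ref{linindfunct}, exactly as in the $H^2$-computation of Lemma \ref{modelelliptic curve}. Once this bookkeeping is settled, the triangularity argument above finishes the proof.
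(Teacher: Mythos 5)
Your explicit work in degrees $0,1,2$ is essentially correct and is in fact more detailed than what the paper records: the triangularity of $D$ modulo $\mathcal{M}^2$, combined with the polynomial-evaluation argument reducing to Lemma \ref{linindfunct}, does give $B^2=D\mathcal{M}^1\oplus\mathcal{M}^2$ and the injectivity of $D$ on $\mathcal{M}^1$. The genuine gap is the premise that $B$ is concentrated in degrees $0,1,2$. Your justification only inspects the operations $m_l$ applied to the \emph{generators} $\gamma,\phi^{(i)}$; but a $C_\infty$-subalgebra is closed under all iterated operations, in particular under $m_1=D$ applied to the degree-$2$ elements already produced. The paper's own proof of $H^2(B)=0$ in Lemma \ref{modelelliptic curve} computes $D\,m_l(\gamma,\dots,\gamma,\phi^{(p)})$ as an explicit function on $\Z^2\times\Z^2$ (bidegree $(2,1)$) and the whole argument there rests on these elements being \emph{nonzero} (that nonvanishing is what forces $\lambda_{l,p}=0$). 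These are therefore nonzero elements of $B^3$, so $B^{\geq 3}\neq 0$; moreover, since being a $1$-model controls cohomology only in degrees $\leq 2$, nothing rules out $H^{\geq 3}(B)\neq 0$ either. Because the corollary asserts a decomposition of \emph{all} of $B$, and Lemma \ref{generallemmadecomp}(1) requires $W\oplus D\mathcal{M}\oplus\mathcal{M}$ to exhaust the complex in every degree with $W\cong H^{\bullet}(B)$, your proof as written never defines $W^k,\mathcal{M}^k$ for $k\geq 3$ and never verifies the splitting there, so it is incomplete.

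The repair is exactly the step the paper takes: in degrees $\geq 3$ extend by the general existence of Hodge-type decompositions of cochain complexes over a field (the proof of \cite[Lemma 9.4.7]{lodayVallette}), taking $W^{\geq 3}$ to be closed representatives of $H^{\geq 3}(B)$ and $\mathcal{M}^{\geq 3}$ a complement of $\ker D$. Your degree-$2$ choices are compatible with such an extension, because $\mathcal{M}^2\cap\ker D=\mathcal{M}^2\cap D\mathcal{M}^1=0$, so $D(B^2)=D(\mathcal{M}^2)$ can serve as the exact summand in degree $3$; stating this compatibility is all that is missing. One further slip worth correcting: the two degree-$2$ families are not jointly linearly independent as you assert, since $m_l\left(\phi^{(i)},\gamma,\dots,\gamma\right)=0$ whenever $l>2$ is even (the coefficient $B_{l-1}$ vanishes; cf.\ the proof of Lemma \ref{finalcomp}), so the second family contains zero vectors. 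What your argument actually needs, and does prove, is only the weaker statement that the classes of the $m_2\left(\phi^{(k)},\gamma\right)$, $k\geq 0$, form a basis of $B^2/\mathcal{M}^2$.
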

\begin{proof}
The desired decomposition in higher degree follows by the proof of \cite[Lemma 9.4.7]{lodayVallette}.
\end{proof}
Notice that the $1$-model $B$ is completely determined by the choiche of holomorphic functions $f^{(i)}\: : \: \C-\left\lbrace \mathbb{Z}+\tau\mathbb{Z}\right\rbrace \to \C$ that satisfy $\eqref{conditions}$.
The above facts are true for any $\tau\in \mathbb{H}$ as well. 
\begin{cor}\label{cormodel}
For any $\tau\in \mathbb{H}$ there exists a holomorphic $1$-model $B\subset \operatorname{Tot}_{N}\left(A_{DR}\left( \left( \C-\left\lbrace \mathbb{Z}+\tau\mathbb{Z}\right\rbrace\right)_{\bullet}\Z^{2} \right)\right)$ with logarithmic singularities and equipped with a Hodge type decomposition.
\end{cor}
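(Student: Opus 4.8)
The plan is to read Corollary \ref{cormodel} as the consolidation of the three preceding results, whose only genuinely new ingredient is the \emph{existence}, for an arbitrary $\tau\in\mathbb{H}$, of a family of holomorphic functions $f^{(i)}\colon\C-\{\Z+\tau\Z\}\to\C$ satisfying \eqref{conditions}. Once such a family is produced, Lemma \ref{linindfunct} supplies linear independence of the $f^{(i)}$ (hence of the $\phi^{(i)}=f^{(i)}d\xi$), Lemma \ref{modelelliptic curve} shows that the $C_{\infty}$-subalgebra $B$ generated by $1,\gamma,\{\phi^{(i)}\}$ is a $1$-model, and Corollary \ref{decoB} equips $B$ with the required Hodge type decomposition. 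As the sentence preceding the statement observes, the entire construction depends only on the choice of the $f^{(i)}$ and is otherwise insensitive to $\tau$, so the real content is to exhibit the functions and to justify the two adjectives ``holomorphic'' and ``with logarithmic singularities''.

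First I would construct the $f^{(i)}$ from the Kronecker--Eisenstein series used in \cite{LevBrown}. Writing $F(\xi,\alpha,\tau)=\frac{\theta'(0,\tau)\,\theta(\xi+\alpha,\tau)}{\theta(\xi,\tau)\,\theta(\alpha,\tau)}$ for the Kronecker function and expanding $F(\xi,\alpha,\tau)=\sum_{n\ge 0}f^{(n)}(\xi)\,\alpha^{n-1}$ in $\alpha$, I take the $f^{(n)}$ to be these Laurent coefficients. The normalisation $F(\xi,\alpha)=\alpha^{-1}+O(1)$ forces $f^{(0)}=1$. The quasi-periodicities $F(\xi+1,\alpha)=F(\xi,\alpha)$ and $F(\xi+\tau,\alpha)=e^{-2\pi i\alpha}F(\xi,\alpha)$, which follow from the theta transformation laws, translate coefficient by coefficient into the remaining relations of \eqref{conditions}: the $\Z$-periodicity is immediate, and comparing the coefficients of $\alpha^{n-1}$ in $F(\xi+\tau,\alpha)=\sum_{j\ge 0}\frac{(-2\pi i)^j}{j!}\alpha^j\sum_m f^{(m)}(\xi)\alpha^{m-1}$ yields $f^{(n)}(\xi+\tau)=\sum_{j=0}^n\frac{(-2\pi i)^j}{j!}f^{(n-j)}(\xi)$. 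The case of a general shift $l\tau$ follows by iterating the $\tau$-shift $l$ times, which replaces $e^{-2\pi i\alpha}$ by $e^{-2\pi il\alpha}$ and reproduces the factor $(-2\pi il)^j/j!$ exactly.

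Next I would settle the analytic claims. For fixed generic $\alpha$ the Kronecker function has, in $\xi$, only simple poles located at the lattice points, with residue $1$ independent of $\alpha$; hence $\alpha F(\xi,\alpha)=\alpha/\xi+(\text{holomorphic in }\xi)$ near a lattice point. Extracting coefficients gives $f^{(0)}=1$, a simple pole $\xi^{-1}$ for $f^{(1)}$, and holomorphic $f^{(n)}$ for $n\ge 2$. Thus each $\phi^{(i)}=f^{(i)}d\xi$ is a holomorphic $1$-form with at worst a simple pole, i.e.\ a logarithmic singularity, and the higher operations $m_l(\gamma,\dots,\gamma,\phi^{(k)})$ computed in Lemma \ref{modelelliptic curve} are polynomial expressions in the $\phi^{(i)}$ with Bernoulli coefficients, hence again holomorphic. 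This realises $B$ as a holomorphic $1$-model with logarithmic singularities.

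Assembling these pieces proves the corollary for every $\tau\in\mathbb{H}$. I expect the main obstacle to be the analytic bookkeeping around the Kronecker series: one must fix a theta normalisation for which the $\xi+\tau$ quasi-period is \emph{exactly} $e^{-2\pi i\alpha}$, so that the signs and factorials in \eqref{conditions} come out correctly, and one must verify that no $f^{(n)}$ with $n\ge 2$ acquires a pole, which is what guarantees that only logarithmic singularities occur. A more self-contained alternative would avoid theta functions and build the $f^{(i)}$ inductively, setting $f^{(0)}=1$ and solving the prescribed quasi-periodicity at each step as a Mittag--Leffler/Eisenstein-summation problem; but this trades a clean generating-function identity for an explicit convergence argument, which is why I would prefer the Kronecker series.
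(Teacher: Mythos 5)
Your proposal follows essentially the same route as the paper: the corollary is indeed just the consolidation of Lemma \ref{modelelliptic curve} and Corollary \ref{decoB}, with the existence of a family satisfying \eqref{conditions} supplied by the Laurent coefficients of the Kronecker function $F(\xi,\alpha,\tau)$, exactly as the paper does in Subsection \ref{ComparisonHain0}, where the quasi-periodicity \eqref{qper} yields \eqref{conditions} and the Fourier expansion \eqref{Fourier} gives the pole structure. One small caveat: the residue of $F$ at $\xi=m+n\tau$ is $e^{-2\pi i n\alpha}$, not $1$, so the coefficients $f^{(k)}$ with $k\ge 2$ do acquire simple poles at the lattice points with $n\neq 0$; this harms neither your argument nor the paper's parallel claim, since what the corollary needs is only that every $\phi^{(i)}=f^{(i)}d\xi$ is holomorphic on $\C-\left\lbrace \mathbb{Z}+\tau\mathbb{Z}\right\rbrace$ with at worst simple (hence logarithmic) poles along the lattice, which still holds.
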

We fix a $\tau$ and a family of holomorphic functions $f^{(i)}\: : \: \C-\left\lbrace \mathbb{Z}+\tau\mathbb{Z}\right\rbrace \to \C$ that satisfy $\eqref{conditions}$. We consider $\left( B, {m}_{\bullet}\right) $ equipped with the Hodge type decomposition above. We calculate the degree zero geometric connection of the action groupoid $\left( \mathbb{C}-\left\lbrace \mathbb{Z}+\tau\mathbb{Z}\right\rbrace \right) _{\bullet}\Z^{2}$ associated to the decomposition \eqref{elliptic curvedec}. We apply Lemma \ref{generallemmadecomp} to the decomposition \eqref{elliptic curvedec} and get chain maps $f,g$ and a chain homotopy $h$. Let $p_{n}=\sum _{T\in \mathcal{P}_{n}}(-1)^{\theta(T)}P_{T}$ be the p-kernel of Definition \ref{explicitpkernel}. There is a $C_{\infty}$-structure $m_{\bullet}^{W}$ on $W$ given by
\[
{m}_{n}:=f\circ p_{n}\circ g^{\otimes n},\quad n\geq 1
\]
and a $C_{\infty}$-quasi-isomorphism ${{g}}_{n}\: : \: W^{\otimes n}\to W$ is given by 
\[
{{g}}_{n}:=h\circ p_{n}\circ g^{\otimes n},\quad n\geq 1.
\]
Since $H^{2}(B)=0$, the $C_{\infty}$-structure $m_{n}^{W}$ vanishes on $W^{1}$. 
\begin{lem}\label{finalcomp}
Let $g_{\bullet}$ be the above quasi-isomorphism. Let $v_{i_{1}},\dots,v_{i_{n}}\in\left\lbrace \gamma, \phi^{(0)} \right\rbrace  $ for $l=1,\dots,n$.
\begin{enumerate}
\item  $g_{n}\left(v_{{1}},\dots ,v_{{n}} \right)=0$ if $\left( v_{{1}},\dots ,v_{{n}}\right) $ is not of the following form: there exists exactly one $j$ such that $v_{{j}}=\phi^{(0)}$ and $v_{s}=\gamma$, for $s\neq j$.
\item $g_{n}\left(  \phi^{(0)},\gamma,\dots ,\gamma \right)=\phi^{(n)}$.
\end{enumerate}
\end{lem}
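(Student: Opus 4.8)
The plan is to run an induction on the arity $n$, feeding on two ingredients already in place: the explicit maps attached to the Hodge decomposition of Corollary \ref{decoB}, and the $C_\infty$-morphism equations satisfied by $g_\bullet$. First I would record the shape of the maps produced by Lemma \ref{generallemmadecomp}: $g$ is the inclusion of $W$, $f$ the projection onto $W$, and $h$ kills $W\oplus\mathcal{M}$ while $h(Dm)=m$ for every $m\in\mathcal{M}$. Since $g$ is the inclusion and all arguments lie in $W$, the transfer formula $g_n=h\circ p_n\circ g^{\otimes n}$ of Definition \ref{explicitpkernel} and Theorem \ref{markglobal} gives $g_n(v_1,\dots,v_n)=h\bigl(p_n(v_1,\dots,v_n)\bigr)\in\operatorname{Im}h=\mathcal{M}$ for $n>1$. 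Because $\mathcal{M}$ contains no nonzero closed element (Corollary \ref{decoB}), a short check shows $hD=\mathrm{Id}$ on $\mathcal{M}$, whence the key identity $g_n(v_1,\dots,v_n)=h\bigl(Dg_n(v_1,\dots,v_n)\bigr)$. This reduces the whole statement to evaluating the single coboundary $Dg_n$.

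To compute $Dg_n$ I would invoke the $C_\infty$-morphism relations for $g_\bullet\colon(W,m^W_\bullet)\to(B,m_\bullet)$. Since $W^2=0$, every operation $m^W_s$ applied to degree-one arguments lands in $W^2=0$ and hence vanishes, so the entire $W$-side of the morphism equation drops out and only the $k=1$ term $Dg_n$ and the higher $m_k$-terms remain:
\[
Dg_n(v_1,\dots,v_n)=-\sum_{\substack{k\ge 2\\ i_1+\dots+i_k=n}}\pm\, m_k\bigl(g_{i_1}(\cdots)\otimes\cdots\otimes g_{i_k}(\cdots)\bigr),
\]
a recursion in which every $i_j<n$, so the inductive hypothesis applies to each factor. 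For the vanishing statement (1) there are two cases. If no argument equals $\phi^{(0)}$, then by induction every factor $g_{i_j}$ with $i_j>1$ vanishes, so only the fully broken term $m_n(\gamma,\dots,\gamma)$ survives; this is zero because $\gamma$ is odd, so $s\gamma$ is even and $(s\gamma)^{\otimes n}$ lies in the shuffle locus on which the $C_\infty$-operations vanish. Hence $Dg_n=0$ and $g_n=h(0)=0$. If two or more arguments equal $\phi^{(0)}$, then any factor whose block contains both of them vanishes by induction, while every surviving term carries at least two $\phi$-type entries; since the $\phi^{(i)}$ are holomorphic forms of type $(1,0)$ on a complex curve, such an $m_k$ involves a factor $d\xi\wedge d\xi=0$ and again vanishes, giving $Dg_n=0$ and $g_n=0$.

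For the evaluation (2), with exactly one argument equal to $\phi^{(0)}$, the only nonzero terms in the recursion are those in which one block carries the $\phi^{(0)}$ (contributing some $\phi^{(\ast)}$ by the inductive hypothesis) and all remaining blocks are $\gamma$-singletons. Reordering the $\gamma$'s past the $\phi$ by means of Lemma \ref{modelelliptic curve}(2) collects the resulting binomial coefficients and rewrites the sum in the normal form $\sum_{l}m_{l+1}(\gamma,\dots,\gamma,\phi^{(\ast)})$, which by Lemma \ref{modelelliptic curve}(1) is precisely $D(-\phi^{(n)})$. Thus $Dg_n=D(-\phi^{(n)})$ up to the global sign, and applying $h$ together with $h(D\phi^{(n)})=\phi^{(n)}$ yields $g_n(\phi^{(0)},\gamma,\dots,\gamma)=\phi^{(n)}$, the base case $n=1$ being just $g_1=g$. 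The main obstacle is purely bookkeeping: aligning the Koszul signs and the binomial factors generated by the tree-sum recursion with the signs appearing in Lemma \ref{modelelliptic curve}(1)--(2), and pinning down $m_n(\gamma^{\otimes n})=0$ from the explicit $C_\infty$-structure; once the signs are matched the induction closes at once.
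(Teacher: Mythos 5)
Your proof is correct in substance, but it takes a genuinely different route from the paper's. The paper evaluates the explicit tree-sum formula $g_n=h\circ p_n\circ g^{\otimes n}$ of Definition \ref{explicitpkernel} directly: every decorated tree is shown to contribute zero --- the vertex maps $m_l$ vanish on repeated $\gamma$'s ($C_\infty$ shuffle condition), on two or more $\phi$-type inputs (\cite[Theorem 2.7]{Sibilia1}, resp.\ dimensional reasons for $l=2$), and $h$ kills $m_l(\gamma,\dots,\phi^{(k)},\dots,\gamma)\in\mathcal{M}^2$ for $l>2$ --- except the single left-comb binary tree of Figure \ref{fig2}, whose value is computed by iterating $h(m_2(\,\cdot\,,\gamma))$. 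You never touch the trees: you use that $g_n$ lands in $\mathcal{M}$, that $hD=\mathrm{Id}$ on $\mathcal{M}$ (so $g_n=h(Dg_n)$), and that the $C_\infty$-morphism equation collapses to a recursion in lower $g_m$'s because $m^W_\bullet$ vanishes on $W^1$; induction plus Lemma \ref{modelelliptic curve}(1)--(2) then closes the argument. This is the mechanism of the uniqueness corollary following Lemma \ref{generallemmadecomp}, so your argument is independent of the particular Kontsevich--Soibelman/Markl formula and avoids the tree combinatorics, while the paper's argument buys explicitness (the one contributing tree, a pattern reused in Section 4.3). Two caveats: your ``$d\xi\wedge d\xi=0$'' vanishing for higher $m_k$ silently uses that the operations wedge the form parts of their inputs, which is precisely the content of \cite[Theorem 2.7]{Sibilia1} and should be cited; and you inherit the paper's indexing looseness --- your recursion makes it visible, since $Dg_n$ only involves arities $\le n$ while $D(-\phi^{(n)})$ in Lemma \ref{modelelliptic curve}(1) involves arity $n+1$, so a careful run of either argument gives $g_n(\phi^{(0)},\gamma,\dots,\gamma)=\pm\phi^{(n-1)}$, which is what the pairing with $\operatorname{Ad}^{p}_{X_{0}}(X_{1})$ in Proposition \ref{deg0geom} actually requires.
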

\begin{proof}
For each $n\geq 2$, the p-kernels are
\[
p_{n}:=\sum _{T\in \mathcal{P}_{n}}(-1)^{\theta(T)}P_{T}.
\]
Since ${m}_{\bullet}$ is $C_{\infty}$ we have ${m}_{l}(\gamma, \dots, \gamma)=0$ for $l>1$. Consider ${m}_{l}(w_{i_{1}}, \dots, w_{i_{l}})$ such that $w_{i_{l}}\in \left\lbrace \gamma, \phi^{(0)}, \phi^{(1)}, \dots \right\rbrace $. Assume that there exists exactly  $j_{1}, \dots j_{r}$ where $l>r>1$ such that $w_{j_{i}}=\phi^{(k)}$ for $i=1, \dots, r$ for some $k\geq 0$. By \cite[Theorem 2.7]{Sibilia1} we have ${m}_{l}(w_{i_{1}}, \dots, w_{i_{l}})=0$.  The same conclusion is true if $r=l>2$. For $l=2$ we get ${m}_{2}(w_{i_{1}}, w_{i_{2}})=w_{i_{1}}\wedge w_{i_{2}}$ which vanishes for dimensional reasons.
Let $T\in \mathcal{P}_{n}$ be an oriented planar tree and consider the induced map $P_{T}$. Lemma \ref{modelelliptic curve} and the decomposition \eqref{elliptic curvedec} imply
\begin{equation}\label{homotopyexpr}
h\left({m}_{2}(\gamma, \phi^{(k-1)}) \right)=h\left({m}_{1}\left( \phi^{(k)}\right) -\sum_{l=2}^{k}(-1)^{l+1}{m}_{l+1}(\gamma,\dots, \gamma, \phi^{(k-l)})\right)= \phi^{(k)}
\end{equation}
and
\[
h\left({m}_{l}(\gamma,\dots,\gamma, \phi^{(k)}, \gamma,\dots ,\gamma) \right)=0\quad l>2
\]
for any $k\geq 1$. This proves point 1. The only binary tree $T$ such that $h\circ P_{T}\circ g^{\otimes n}\left( \phi^{(0)},\gamma,\dots ,\gamma\right)\neq 0 $ is the one in Figure \ref{fig2}.
\begin{figure}[ht]
\centering
  \includegraphics[scale=.5]{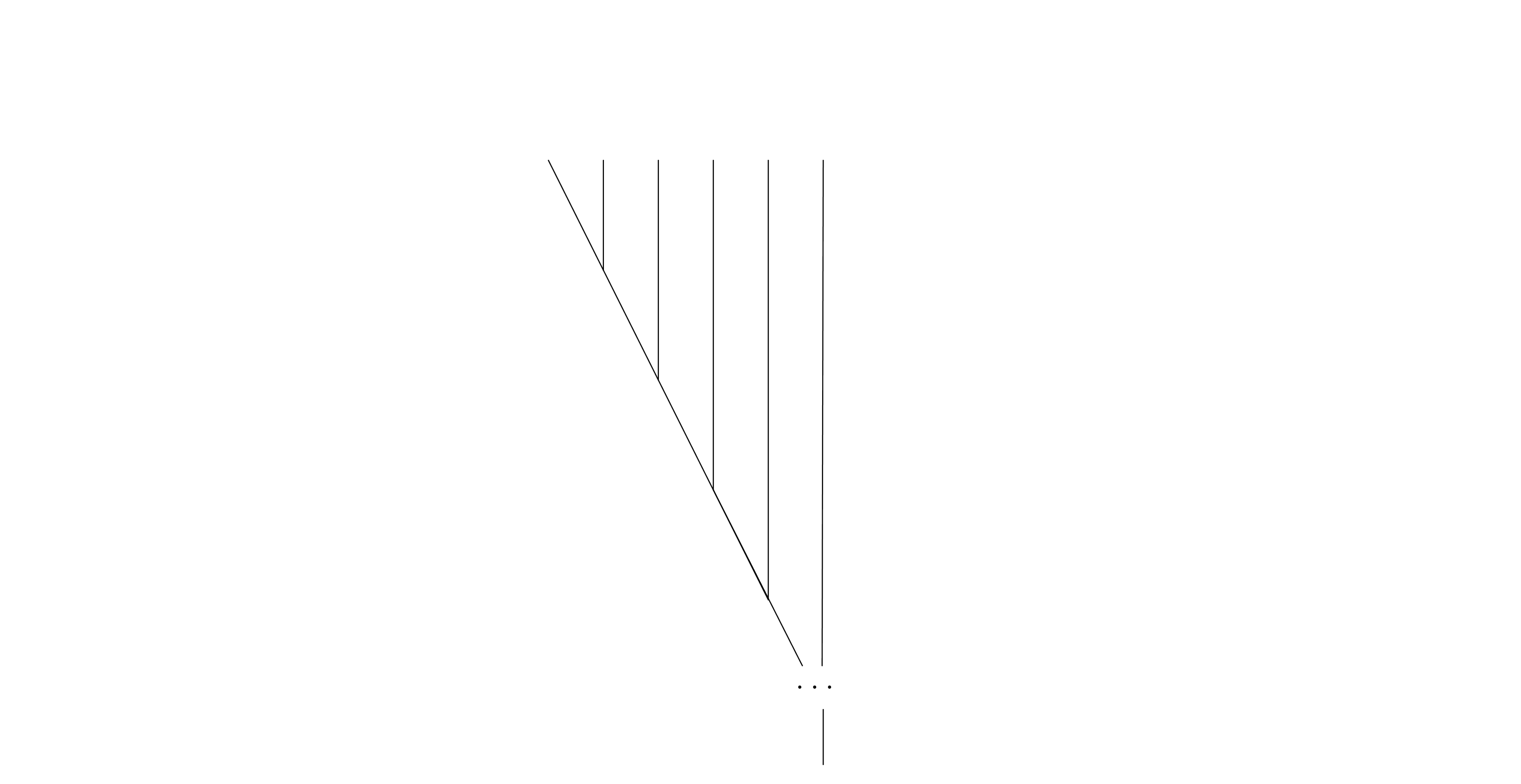}
  \caption{ Three with non-trivial induced map }
  \label{fig2}
\end{figure}
Here we have $m_{k}=0$ for $k>2$ even. A direct calculation shows $h\circ p_{T}\circ g^{\otimes n}\left( \phi^{(0)},\gamma,\dots ,\gamma \right)=\phi^{(n)}$.
\end{proof}

By \cite[Proposition 1.20]{Sibilia1}, $\mathcal{F}\left( g\right) _{\bullet}$ corresponds to a Maurer-Cartan element in $$\operatorname{Conv}_{1-C_{\infty}}\left(\mathcal{F}\left(W, m_{\bullet}^{W}\right) ,\mathcal{F}\left(B, m_{\bullet} \right) \right).$$  We consider $W$ equipped with the basis $-\gamma$, $-\phi^{(0)}$. We consider $\left( W^{1}[1]\right)^{*} $ as the vector space generated by the degree zero elements $X_{0},X_{1}$, where $X_{1}$, and resp. $X_{0}$ denote the dual of $-s\phi^{(0)}$ and of $-s\gamma$ in $W[1]$ respectively. Since $m_{\bullet}^{W}$ vanishes on elements of degree $1$, by \cite[Proposition 1.25]{Sibilia1} we get that $\operatorname{Conv}_{1-C_{\infty}}\left(\left(W, m_{\bullet}^{W}\right) ,\mathcal{F}\left(B, m_{\bullet} \right) \right)\cong B\widehat{\otimes} \widehat{\mathbb{L}}\left( X_{0}, X_{1} \right)$. Hence $\mathcal{F}\left( g\right) _{\bullet}$ can be written as a formal power series
\[
C\in B^{1}\widehat{\otimes} \widehat{\mathbb{L}}\left( X_{0}, X_{1} \right).
\]
\begin{prop}\label{deg0geom}
Consider a family of holomorphic functions $f^{(i)}\: : \: \C-\left\lbrace \mathbb{Z}+\tau\mathbb{Z}\right\rbrace \to \C$ indexed by $i\in \N$ that satisfy \eqref{conditions}. Let $\left(B, {m}_{\bullet} \right) $ be the $1$-model constructed in Lemma \ref{modelelliptic curve} and equipped with the decomposition $\left(W,\mathcal{M} \right) $ as in \eqref{elliptic curvedec}. We consider $W$ equipped with the basis $-\gamma$, $\phi^{(0)}$. 
\begin{enumerate}
	\item There exists a unique minimal model of $C_{\infty}$-algebras $g_{\bullet}\: : \: \left(W, m_{\bullet} \right) \to \left( B, m_{\bullet}\right) $ such that $g_{n}\: : \: W^{\otimes n}\to \mathcal{M}\subset B$ for $n>1$.
	\item The degree zero geometric connection corresponding to $\mathcal{F}(g_{\bullet})$ is given by 
	\[
	C =-\gamma X_{0}-\sum_{p\geq 0}\phi^{(p)}\operatorname{Ad}^{p}_{X_{0}}(X_{1}).
	\]
	\end{enumerate}
\end{prop}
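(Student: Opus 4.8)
The plan is to treat the two assertions separately, deducing (1) from the general uniqueness result for transferred minimal models and then computing the connection in (2) by combining the explicit description of the nonzero Taylor coefficients of $g_{\bullet}$ (Lemma \ref{finalcomp}) with the combinatorics of the free Lie algebra $\widehat{\mathbb{L}}(X_{0},X_{1})$.

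For (1), I would apply Theorem \ref{markglobal} to the $1$-model $\left(B,m_{\bullet}\right)$ together with the maps $f,g,h$ produced by Lemma \ref{generallemmadecomp}(1) from the Hodge type decomposition \eqref{elliptic curvedec} of Corollary \ref{decoB}; these satisfy the side conditions $d_{W}=0$, $fg=1_{W}$, $fh=0$, $hg=0$, $h^{2}=0$. Existence of a transferred pair $\left(g_{\bullet},m_{\bullet}^{W}\right)$ is then immediate, and since $W\cong H^{\bullet}(B)$ carries the zero differential the resulting $g_{\bullet}$ is a minimal model. Uniqueness of the pair with $g_{n}\colon W^{\otimes n}\to\mathcal{M}$ for $n>1$ is exactly the content of the Corollary proved after Lemma \ref{generallemmadecomp}, applied to $A=B$; this finishes (1).

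For (2), I would first invoke \cite[Proposition 1.20]{Sibilia1} to view $\mathcal{F}(g_{\bullet})$ as a Maurer--Cartan element of the convolution $L_{\infty}$-algebra $\Conv_{1-C_{\infty}}\left(\left(W,m_{\bullet}^{W}\right),\mathcal{F}\left(B,m_{\bullet}\right)\right)$, and then \cite[Proposition 1.25]{Sibilia1} to identify the latter with $B\,\widehat{\otimes}\,\widehat{\mathbb{L}}(X_{0},X_{1})$, since $m_{\bullet}^{W}$ vanishes on $W^{1}$. Under this identification $C$ is assembled from the components $g_{n}$ paired against the dual Lie words in $X_{0},X_{1}$, the inputs ranging over the basis $\left\{-\gamma,-\phi^{(0)}\right\}$ of $W^{1}$. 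By Lemma \ref{finalcomp}(1) the only surviving contributions are those in which exactly one input equals $-\phi^{(0)}$ and the remaining $n-1$ inputs equal $-\gamma$; hence every nonzero term of $C$ lies, for some $p\geq 0$, in the subspace of $\widehat{\mathbb{L}}(X_{0},X_{1})$ spanned by Lie monomials containing a single $X_{1}$ and exactly $p$ copies of $X_{0}$, together with the exceptional term $-\gamma X_{0}$ coming from $g_{1}(-\gamma)$.

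The key structural observation is that this subspace of $\widehat{\mathbb{L}}(X_{0},X_{1})$ is one-dimensional and spanned by $\operatorname{Ad}^{p}_{X_{0}}(X_{1})$: any iterated bracket that is linear in $X_{1}$ and has all other entries equal to $X_{0}$ reduces, by antisymmetry and the Jacobi identity, to a scalar multiple of $[X_{0},[X_{0},\dots,[X_{0},X_{1}]]]$ (equivalently, the multilinear component of degree $p+1$ of the free Lie algebra has dimension $p!$, and all of its right-normed basis elements collapse onto $\operatorname{Ad}^{p}_{X_{0}}(X_{1})$ once the $p$ copies of $X_{0}$ are identified). Consequently the degree-$(p+1)$ part of $C$ must equal $c_{p}\,\phi^{(p)}\,\operatorname{Ad}^{p}_{X_{0}}(X_{1})$ for some scalar $c_{p}$, and it remains only to pin down $c_{p}$. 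I would do this by extracting the coefficient of the tensor monomial $X_{0}^{\otimes p}\otimes X_{1}$, which equals $1$ in $\operatorname{Ad}^{p}_{X_{0}}(X_{1})$; on the side of $C$ this coefficient is $g_{p+1}\left((-\gamma)^{\otimes p},-\phi^{(0)}\right)=(-1)^{p+1}g_{p+1}\left(\gamma^{\otimes p},\phi^{(0)}\right)$, which by iterating \eqref{homotopyexpr} (the right comb built from $h\circ m_{2}(\gamma,-)$) equals $(-1)^{p+1}\phi^{(p)}$. Adjoining the $g_{1}$ contribution $-\gamma X_{0}-\phi^{(0)}X_{1}$ (whose second term is the $p=0$ summand) and carrying out the sign reconciliation yields $c_{p}=-1$ and hence
\[
C=-\gamma X_{0}-\sum_{p\geq 0}\phi^{(p)}\operatorname{Ad}^{p}_{X_{0}}(X_{1}).
\]
The main obstacle is precisely this last sign bookkeeping: the one-dimensionality argument fixes the Lie monomial up to scale, but obtaining the uniform sign $-1$ across all $p$ requires a consistent reconciliation of the convolution-algebra and shift conventions of \cite{Sibilia1}, the Koszul signs generated by reindexing the single $X_{1}$ past the $X_{0}$'s, and the choice of the negative basis $-\gamma,-\phi^{(0)}$.
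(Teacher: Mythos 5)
Your proposal is correct and takes essentially the same route as the paper: part (1) is the uniqueness corollary following Lemma \ref{generallemmadecomp} applied to the decomposition of Corollary \ref{decoB}, and part (2) combines \cite[Propositions 1.20 and 1.25]{Sibilia1} with Lemma \ref{finalcomp}, so that only Lie monomials of multidegree $(p,1)$ in $(X_{0},X_{1})$ survive and the answer is forced to be a multiple of $\operatorname{Ad}^{p}_{X_{0}}(X_{1})$. The paper fixes that multiple by exhibiting the left-normed brackets $\left[\left[X_{1},X_{0}\right],\dots,X_{0}\right]=(-1)^{p}\operatorname{Ad}^{p}_{X_{0}}(X_{1})$ as the dual basis of the relevant subspace $L$, whereas you extract the coefficient of the tensor word $X_{0}^{\otimes p}\otimes X_{1}$; these are the same computation in different clothing.
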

\begin{proof}
Let $L\subset \widehat{\mathbb{L}}\left( X_{0}, X_{1} \right)$ be the complete vector subspace spanned by $X_{0}$ and by all the Lie monomials $P$ such that its associated monomial (see \cite{Lothaire}, chapter 5) is
\begin{equation}\label{conditionmonomial}
\operatorname{Mon}(P)=X_{1}^{\otimes s}\otimes X_{0}\otimes X_{1}^{\otimes r}
\end{equation}for some $s,r\geq 0$. In particular $X_{0}$ and the set of Lie monomials of the form $\left[\left[X_{1},X_{0} \right],\dots,X_{0} \right] $ form a basis of $L$. Let $\operatorname{Ad}^{0}_{X_{0}}(X_{1}):=X_{1}$ and $\operatorname{Ad}^{p}_{X_{0}}(X_{1}):=\left[X_{0}, \operatorname{Ad}^{p-1}_{X_{0}}(X_{1})\right] $ for $p>0$. By Lemma \ref{modelelliptic curve} and Lemma \ref{finalcomp} we have
\begin{align*}
C& =-\gamma X_{0}-\sum_{p\geq 0}(-1)^{p}\phi^{(p)}\left[\left[X_{1},X_{0} \right],\dots,X_{0} \right]\\
& =-\gamma X_{0}-\sum_{p\geq 0}\phi^{(p)}\operatorname{Ad}^{p}_{X_{0}}(X_{1}).
\end{align*}
\end{proof}
\subsection{Flat connections on the punctured torus}
We denote by $\xi=s+\tau r$ the coordinates on $\C$. We define the smooth differential forms $\omega^{(i)}\in A^{1}_{DR}\left(\mathcal{E}_{\tau}^{\times}\right) $ via
\[
\sum_{k\geq 0}\omega^{(k)}\alpha^{k}:= \left(\exp\left(2 \pi i r \alpha \right) \right) \left( \sum_{k\geq 0}\phi^{(k)}\alpha^{k}\right) 
\]
where $\alpha$ is a formal variable. Let $\nu:=2 \pi i dr$. It is easy see that the forms above satisfiy the following relations:
\[
D\omega^{(k)}=d \omega^{(k)}=\nu\wedge \omega^{(k-1)}, \quad  \omega^{(0)}=d \xi
\]
Let $\bar{B}\subset \left( A_{DR}\left(\mathcal{E}_{\tau}^{\times}\right), D, \wedge\right)  $ be the differential graded algebra generated by $1$, $\nu$ and the $\omega^{(i)}$ for $i\geq 0$. Notice that $\bar{B}$ depends by the choice of functions $f^{(i)}\: : \: \C-\left\lbrace \mathbb{Z}+\tau\mathbb{Z}\right\rbrace \to \C$ indexed by $i\in \N$ that satisfy \eqref{conditions}.
\begin{prop}\label{Brownmodel}
	Let $\bar{B}\subset  A_{DR}\left(\mathcal{E}_{\tau}^{\times}\right)$ as above.
	\begin{enumerate}
		\item $\bar{B}$ is a model for $  A_{DR}\left(\mathcal{E}_{\tau}^{\times}\right) $.
		\item There exists a Hodge type decomposition
		\begin{equation}\label{Browndec}
		\bar{B}=\bar{W}\oplus D\bar{\mathcal{M}}\oplus \bar{\mathcal{M}}
		\end{equation}
		such that
		\begin{enumerate}
			\item  $\bar{W}^{1}$ is generated by $\omega^{(0)}$ and $\nu$, $\bar{\mathcal{M}}^{1}$ is generated by $\omega^{(i)}$ for $i>0$, and $\left( D\bar{\mathcal{M}}\right)^{0}=0$,
			\item  $\bar{W}^{2}=0$, $\bar{\mathcal{M}}^{1}$ is generated by $\omega^{(i)}\nu $ for $i>0,$, and $\left( D\bar{\mathcal{M}}\right)^{1}$ is generated by $D\left( \omega^{(i)}\right) $ for $i>1$.
		\end{enumerate}
	\end{enumerate}
\end{prop}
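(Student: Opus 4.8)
The plan is to treat $\bar{B}$ as an explicit strict commutative differential graded algebra and to read its cohomology directly off the structure equations $D\nu=0$, $D\omega^{(0)}=0$ and $D\omega^{(k)}=\nu\wedge\omega^{(k-1)}$ for $k\geq 1$. This runs parallel to the proof of Lemma \ref{modelelliptic curve}, but with the decisive simplification that on $\bar{B}$ the $C_{\infty}$-structure reduces to $(D,\wedge)$, so no higher products $m_{l}$ ($l\geq 3$) intervene.

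First I would fix bases. Since $\mathcal{E}_{\tau}^{\times}$ is a real surface, $\bar{B}$ is concentrated in degrees $0,1,2$, with $\bar{B}^{0}=\C\cdot 1$, $\bar{B}^{1}=\langle \nu,\ \omega^{(i)}\ (i\geq 0)\rangle$ and $\bar{B}^{2}=\langle \omega^{(i)}\wedge\nu\ (i\geq 0)\rangle$; the last description uses $\nu\wedge\nu=0$ and $\omega^{(i)}\wedge\omega^{(j)}=0$, both being functional multiples of $d\xi$. The one genuinely computational input is the linear independence of the listed generators. Writing $\omega^{(i)}=c_{i}\,d\xi$ and using the generating series $\sum_{i}c_{i}\alpha^{i}=\exp(2\pi i r\alpha)\sum_{i}f^{(i)}\alpha^{i}$, a finite relation $\sum_{i=0}^{p}\lambda_{i}\omega^{(i)}=0$ becomes $\sum_{k}f^{(k)}g_{k}(r)=0$ with $g_{k}$ polynomials in $r$ and $g_{p}\equiv\lambda_{p}$; restricting to a fixed horizontal line $\{\Im\xi=\text{const}\}$ and using that a holomorphic function on $\C-\{\Z+\tau\Z\}$ vanishing along a line vanishes identically, Lemma \ref{linindfunct} forces each $g_{k}\equiv 0$, hence $\lambda_{p}=0$. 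The independence of $\nu$ from the $\omega^{(i)}$ and of the $\omega^{(i)}\wedge\nu$ is then a pointwise check via $\omega^{(i)}\wedge\nu=2\pi i\,c_{i}\,ds\wedge dr$.

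Next I would compute $H^{\bullet}(\bar{B})$. In degree $1$, $D(a\nu+\sum_{k}c_{k}\omega^{(k)})=\nu\wedge\sum_{k\geq 1}c_{k}\omega^{(k-1)}$ forces $c_{k}=0$ for $k\geq 1$, so the closed $1$-forms are exactly $\langle\nu,\omega^{(0)}\rangle$; as there are no exact $1$-forms, $H^{1}(\bar{B})=\C^{2}$. In degree $2$ the same equations show $D\colon\bar{B}^{1}\to\bar{B}^{2}$, $\omega^{(k)}\mapsto\nu\wedge\omega^{(k-1)}$, is surjective, so $H^{2}(\bar{B})=0$, while $H^{0}(\bar{B})=\C$. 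For part (1) it remains to check that the inclusion $\bar{B}\hookrightarrow A_{DR}(\mathcal{E}_{\tau}^{\times})$ is a quasi-isomorphism: degrees $0$ and $2$ are immediate, and in degree $1$ I would compute the periods of $\nu=2\pi i\,dr$ and $\omega^{(0)}=d\xi=ds+\tau\,dr$ over the $s$- and $\tau$-cycles, obtaining the period matrix $\left(\begin{smallmatrix}0 & 2\pi i\\ 1 & \tau\end{smallmatrix}\right)$ of determinant $-2\pi i\neq 0$; thus $[\nu],[\omega^{(0)}]$ span $H^{1}(\mathcal{E}_{\tau}^{\times})$. (Alternatively one identifies $\bar{B}$ with the model of \cite{LevBrown}.)

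For part (2) I would exhibit the decomposition directly by setting $\bar{W}^{0}:=\C$, $\bar{W}^{1}:=\langle\nu,\omega^{(0)}\rangle$, $\bar{W}^{2}:=0$ and $\bar{\mathcal{M}}^{1}:=\langle\omega^{(i)}:i\geq 1\rangle$, then verifying degreewise the three defining properties of Lemma \ref{generallemmadecomp}(1): $\bar{W}$ is closed and represents $H^{\bullet}(\bar{B})$; $\bar{\mathcal{M}}^{1}$ contains no exact $1$-form (there are none); and the surjectivity computation above refines to an isomorphism $D\colon\bar{\mathcal{M}}^{1}\xrightarrow{\ \sim\ }\bar{B}^{2}$, so that $(D\bar{\mathcal{M}})^{2}=\bar{B}^{2}$ is spanned by the $D\omega^{(i)}=\nu\wedge\omega^{(i-1)}$ and $\bar{\mathcal{M}}^{2}=0$. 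I expect the only real obstacle to be the linear-independence bookkeeping of the second step; once the generators are known to be independent and the single equation $D\omega^{(k)}=\nu\wedge\omega^{(k-1)}$ is in hand, the cohomology, the quasi-isomorphism and the decomposition all follow formally, the whole argument being a strictly-commutative---and hence strictly easier---shadow of Lemma \ref{modelelliptic curve} and Corollary \ref{decoB}.
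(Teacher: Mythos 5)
Your proof is correct, but it takes a genuinely different route from the paper's. The paper's entire proof is a citation: it invokes \cite{LevBrown} both for point (1) and for the linear independence of $\nu,\omega^{(0)},\omega^{(1)},\dots$ and of $\omega^{(0)}\nu,\omega^{(1)}\nu,\dots$, and then declares that point (2) follows. You instead rebuild everything inside the paper: linear independence is deduced from Lemma \ref{linindfunct} by restricting the relation $\sum_{k}f^{(k)}g_{k}(r)=0$ to horizontal lines and applying the identity theorem for holomorphic functions on the connected open set $\C-\left\lbrace\Z+\tau\Z\right\rbrace$; the cohomology of $\bar{B}$ is then read off from the single structure equation $D\omega^{(k)}=\nu\wedge\omega^{(k-1)}$; and the model property is checked against $H^{\bullet}\left(\mathcal{E}_{\tau}^{\times}\right)$ via the nondegenerate period matrix of $\nu$ and $\omega^{(0)}$. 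What your version buys is not merely self-containedness: $\bar{B}$, and the proposition itself, are stated for an \emph{arbitrary} family $f^{(i)}$ satisfying \eqref{conditions}, whereas \cite{LevBrown} treats the particular functions $g^{(i)}$ coming from the Kronecker function; your argument covers the stated generality, while the citation literally does not. It also turns the advertised analogy with Lemma \ref{modelelliptic curve} and Corollary \ref{decoB} into an actual argument. What the paper's route buys is only brevity.

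One remark on part (2): your decomposition has $\bar{\mathcal{M}}^{2}=0$ and $\left(D\bar{\mathcal{M}}\right)^{2}=\bar{B}^{2}$, spanned by the $D\omega^{(i)}=\nu\wedge\omega^{(i-1)}$ for $i\geq 1$. This does not match the letter of the statement, which (after correcting the superscripts $\bar{\mathcal{M}}^{1}$, $\left(D\bar{\mathcal{M}}\right)^{1}$ to degree $2$) would place $\omega^{(i)}\nu$, $i>0$, in $\bar{\mathcal{M}}^{2}$ and only $D\left(\omega^{(i)}\right)$, $i>1$, in $\left(D\bar{\mathcal{M}}\right)^{2}$. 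But the statement as printed cannot hold: with $\bar{\mathcal{M}}^{1}$ spanned by the $\omega^{(i)}$, $i>0$, one already has $D\bar{\mathcal{M}}^{1}$ equal to the span of all $\nu\wedge\omega^{(j)}$, $j\geq 0$, which is all of $\bar{B}^{2}$, so directness of the sum forces $\bar{\mathcal{M}}^{2}=0$; moreover the two printed generating sets coincide up to sign and jointly miss $\omega^{(0)}\nu$. Your reading is the only internally consistent one, so you have silently (and correctly) repaired a typo in the statement rather than deviated from it.
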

\begin{proof}
	In \cite{LevBrown}, it is proved point 1 and the fact that $\omega^{(0)}$, $\omega^{(1)}$, \dots, $\nu$ are linearly independent and the fact that $\omega^{(0)}\nu $, $\omega^{(1)}\nu $, \dots are linearly independent. This proves the second point.
\end{proof}
We consider $\left( \bar{W^{1}}[1]\right)^{*} $ as the vector space generated by the degree zero elements $\bar{X}_{0},\bar{X}_{1}$, where $\bar{X}_{1}$, and resp. $\bar{X}_{0}$ denote the dual of $-s\omega^{(0)}$ and resp. of $s\nu$.
By using the same calculation of the previous section, we have the following.
\begin{prop}\label{Brown}
	Consider a family of holomorphic functions $f^{(i)}\: : \: \C-\left\lbrace \mathbb{Z}+\tau\mathbb{Z}\right\rbrace \to \C$ indexed by $i\in \N$ that satisfy \eqref{conditions} and let $\bar{B}$ equipped with the decomposition \eqref{Browndec}. The degree zero geometric connection is given by
	\[
	\bar{C} =\nu \bar{X}_{0}-\sum_{p\geq 0}\omega^{(p)}\operatorname{Ad}^{p}_{\bar{X}_{0}}\left( \bar{X}_{1}\right) 
	\]
	In particular, $d-\bar{C}_{0}$ is a smooth flat connection on the trivial bundle $\mathcal{E}_{\tau}^{\times}\times \widehat{\mathbb{L}}\left( \bar{X}_{0}, \bar{X}_{1} \right)$.
\end{prop}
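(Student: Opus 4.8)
The plan is to repeat, \emph{mutatis mutandis}, the computation carried out for the holomorphic $1$-model $B$ in Lemma \ref{finalcomp} and Proposition \ref{deg0geom}, but now on the smooth sub-differential-graded-algebra $\bar{B}\subset A_{DR}(\mathcal{E}_{\tau}^{\times})$, replacing $\phi^{(i)}$ by $\omega^{(i)}$ and $\gamma$ by $\nu$. The essential structural simplification is that $\bar{B}$ is an honest commutative differential graded algebra, so its $C_{\infty}$-structure is $m_{1}=D=d$, $m_{2}=\wedge$ and $m_{n}=0$ for $n\geq 3$; consequently only the linear binary trees contribute to the p-kernels $p_{n}$ of Definition \ref{explicitpkernel}, which collapses the tree combinatorics of the general transfer.

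First I would apply Lemma \ref{generallemmadecomp} to the Hodge type decomposition \eqref{Browndec} to produce a homotopy retract $(f,g,h)$ satisfying the side conditions, and then invoke the homotopy transfer theorem (Theorem \ref{markglobal}) to obtain the transferred structure $m_{\bullet}^{\bar{W}}$ together with the quasi-isomorphism $g_{\bullet}=h\circ p_{\bullet}\circ g^{\otimes\bullet}$. Since $\bar{W}^{2}=0$ by Proposition \ref{Brownmodel}, i.e.\ $H^{2}(\bar{B})=0$, the transferred structure $m_{\bullet}^{\bar{W}}$ vanishes on $\bar{W}^{1}$; hence, exactly as in the previous subsection and by \cite[Proposition 1.25]{Sibilia1}, the convolution algebra collapses to $\bar{B}\widehat{\otimes}\widehat{\mathbb{L}}(\bar{X}_{0},\bar{X}_{1})$ and $\mathcal{F}(g_{\bullet})$ corresponds to an element $\bar{C}\in \bar{B}^{1}\widehat{\otimes}\widehat{\mathbb{L}}(\bar{X}_{0},\bar{X}_{1})$.

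The computational heart is the smooth analogue of \eqref{homotopyexpr}. Using the structure relation $D\omega^{(k)}=\nu\wedge\omega^{(k-1)}$, one computes $h\left(m_{2}(\nu,\omega^{(k-1)})\right)=-h\left(D\omega^{(k)}\right)=-\omega^{(k)}$ while $h$ annihilates every other degree-two monomial, by the side conditions. From this I would deduce the analogue of Lemma \ref{finalcomp}: $g_{n}(v_{1},\dots,v_{n})=0$ unless exactly one input equals $\omega^{(0)}$ and the rest equal $\nu$, and $g_{n}(\omega^{(0)},\nu,\dots,\nu)=\omega^{(n)}$, the only contributing tree being the iterated binary tree of Figure \ref{fig2}. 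Assembling $\bar{C}$ against the dual basis $\bar{X}_{1}=(-s\omega^{(0)})^{*}$, $\bar{X}_{0}=(s\nu)^{*}$ then yields $\bar{C}=\nu\bar{X}_{0}-\sum_{p\geq 0}\omega^{(p)}\operatorname{Ad}^{p}_{\bar{X}_{0}}(\bar{X}_{1})$, where the positive sign on the leading term (as opposed to $-\gamma X_{0}$ in Proposition \ref{deg0geom}) is forced by the duality convention $\bar{X}_{0}=(s\nu)^{*}$ rather than $(-s\nu)^{*}$.

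Finally, flatness is formal: the degree zero geometric connection is a Maurer--Cartan element by \cite[Theorem 3.16]{Sibilia1}, and since the Maurer--Cartan equation $D\bar{C}+\tfrac{1}{2}[\bar{C},\bar{C}]=0$ is precisely the flatness of $d-\bar{C}$, and all $\omega^{(p)}$ and $\nu$ are smooth forms on $\mathcal{E}_{\tau}^{\times}$, we obtain a smooth flat connection on the trivial bundle $\mathcal{E}_{\tau}^{\times}\times\widehat{\mathbb{L}}(\bar{X}_{0},\bar{X}_{1})$. The main obstacle I anticipate is the sign and coefficient bookkeeping inside the p-kernel: one must check that the side conditions ($fh=0$, $hg=0$, $h^{2}=0$) kill every tree except the single iterated one, and that the surviving coefficients reproduce the $\operatorname{Ad}$-series exactly, while keeping track of the duality sign that distinguishes $\bar{C}$ from the connection $C$ of Proposition \ref{deg0geom}.
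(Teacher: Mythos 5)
Your proposal is correct and follows essentially the same route as the paper, whose entire proof of this proposition is the remark that it follows by the same calculation as the previous section --- i.e.\ homotopy transfer along the decomposition \eqref{Browndec}, with the simplification (which you rightly exploit) that $\bar{B}$ is an honest commutative differential graded algebra, so only iterated binary trees contribute to the p-kernels. One sign slip worth fixing: since $D\omega^{(k)}=\nu\wedge\omega^{(k-1)}$ exactly, one has $h\left(m_{2}(\nu,\omega^{(k-1)})\right)=+\omega^{(k)}$ (it is $h\left(m_{2}(\omega^{(k-1)},\nu)\right)$ that equals $-\omega^{(k)}$), so your intermediate identity as written is inconsistent with your own conclusion $g_{n}(\omega^{(0)},\nu,\dots,\nu)=\omega^{(n)}$; this is precisely the bookkeeping you flag at the end, and it affects only the sign conventions (absorbed by the choice $\bar{X}_{0}=(s\nu)^{*}$), not the structure of the argument.
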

We define the factor of automorphy $F\: : \: \Z^2\times \C-\left\lbrace \mathbb{Z}+\tau\mathbb{Z}\right\rbrace\to $ via ${F}_{(n,m)}(\xi)=\exp(-2 \pi im X_{0})$. We define the bundle $E_{F}=\left( \C-\left\lbrace \mathbb{Z}+\tau\mathbb{Z}\right\rbrace\right)\times \widehat{\mathbb{L}}\left( X_{0}, X_{1} \right)/ \Z^2$ on the punctured torus where the $\Z^{2}$-action is induced by $F$, i.e.
\[
(n,m)(\xi,v)=\left( \xi+n+m\tau, \exp(-2 \pi im X_{0})\cdot v\right).
\]
where $X_{0}^{p}\cdot v:=\operatorname{Ad}^{p}_{X_{0}}(v)$ for $v\in \widehat{\mathbb{L}}\left( X_{0}, X_{1} \right)$. The sections of $E_{F}$ satisfy: 
\begin{equation}\label{bundlep}
s(\xi + l)=s(\xi), \quad s(\xi + l\tau)=\exp(-2 \pi il X_{0})\cdot s(\xi)
\end{equation}
 The conditions \eqref{conditions} implies that
\begin{eqnarray}
\sum_{p\geq 0}\phi^{(p)}(\xi+l)\operatorname{Ad}^{p}_{X_{0}}(X_{1})& =& \sum_{p\geq 0}\phi^{(p)}(\xi)\operatorname{Ad}^{p}_{X_{0}}(X_{1}),\\
\sum_{p\geq 0}\phi^{(p)}(\xi+l\tau)\operatorname{Ad}^{p}_{X_{0}}(X_{1})&=&\exp(-2 \pi i X_{0})\cdot \sum_{p\geq 0}\phi^{(p)}(\xi)\operatorname{Ad}^{p}_{X_{0}}(X_{1})
\end{eqnarray}
for $l\in \Z$.

\begin{thm}\label{main1}
Consider a family of holomorphic functions $f^{(i)}\: : \: \C-\left\lbrace \mathbb{Z}+\tau\mathbb{Z}\right\rbrace \to \C$ indexed by $i\in \N$ that satisfy \eqref{conditions}. Let $C$ be as in Proposition \ref{deg0geom} and let $\bar{C}$, $F$ and $E_{F}$ be as above. Let $$\left( {m}_{\bullet},{B'}^{\bullet}\right) \subset \left({m}_{\bullet}, \operatorname{Tot}_{N}\left(A_{DR}\left( \left( \C-\left\lbrace \mathbb{Z}+\tau\mathbb{Z}\right\rbrace\right)_{\bullet}\Z^{2} \right)\right)\right)$$
		be a $1$-model equipped with a Hodge space decomposition. Let ${g'}_{\bullet}\: : \: (W', m_{\bullet}^{W'})\to \left( {m}_{\bullet},{B'}^{\bullet}\right)$ be the minimal model obtained via the homotopy transfer theorem and let $C'$ be the degree zero geometric connection corresponding to $\mathcal{F}\left( {{g'}_{\bullet}}\right) $.
	\begin{enumerate}
		\item $d-r^{*}C$ is a well-defined holomorphic connection on the holomorphic bundle $E_F$,
		\item $\left( d-r^{*}C, E_{F}\right) $ is smoothly isomorphic to $(d-r^{*}\overline{C},\mathcal{E}_{\tau}^{\times}\times \widehat{\mathbb{L}}\left( \bar{X}_{0}, \bar{X}_{1} \right) )$,
		\item  There exists a  smooth factor of automorphy $F'$ and a bundle isomorphism $T\: : \: E_{F}\to E_{F'}$ such that $\left( d-r^{*}C, E_{F}\right) $ and $\left( d-r^{*}C', E_{F'}\right) $ are isomorphic.
		\item If $B'$ is holomorphic, then $F'$ and $T'$ are holomorphic.
		\item If $B'$ is holomorphic with logarithmic singularities, then $F'$ is holomorphic on the torus and $T'$ is an holomorphic isomorphism between bundles on the torus.
			\end{enumerate}
		\end{thm}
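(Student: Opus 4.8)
The plan is to treat the five claims in two groups: the first two are explicit computations with the connection forms $C$ and $\bar C$, while the last three are instances of the gauge-theoretic machinery of \cite{Sibilia1} funnelled through Lemma \ref{connected maurerCartan}.

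For (1), I would first compute $r^{*}C$. Since $r$ is the identity on bidegree $(0,q)$ and kills every element of bidegree $(p,q)$ with $p>0$ (these vanish on tuples containing $e$), the term $-\gamma X_{0}$ of type $(1,0)$ drops out and $r^{*}C=-\sum_{p\geq 0}\phi^{(p)}\operatorname{Ad}^{p}_{X_{0}}(X_{1})$, where now $\phi^{(p)}=f^{(p)}d\xi$ is a holomorphic $1$-form on $\C-\left\lbrace\mathbb{Z}+\tau\mathbb{Z}\right\rbrace$. This is a holomorphic connection $1$-form on the trivial bundle over $\C-\left\lbrace\mathbb{Z}+\tau\mathbb{Z}\right\rbrace$. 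To see that $d-r^{*}C$ descends to $E_{F}$ it suffices to check compatibility with the factor of automorphy $F_{(n,m)}=\exp(-2\pi i m X_{0})$: since $F$ is independent of $\xi$, descent amounts to the equivariance $r^{*}C(\gamma\cdot\xi)=\operatorname{Ad}_{F_{\gamma}}(r^{*}C(\xi))$, which is exactly the content of the two transformation identities recorded before the theorem (invariance under $\xi\mapsto\xi+l$, and $\exp(-2\pi i l X_{0})$-equivariance under $\xi\mapsto\xi+l\tau$). Holomorphicity of the descended connection is then immediate from that of the $\phi^{(p)}$.

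For (2), the decisive input is the defining relation $\sum_{k\geq 0}\omega^{(k)}\alpha^{k}=\exp(2\pi i r\alpha)\sum_{k\geq 0}\phi^{(k)}\alpha^{k}$. I would introduce the smooth (non-holomorphic) gauge $g=\exp(2\pi i r\,X_{0})$ acting on the fibre $\widehat{\mathbb{L}}(X_{0},X_{1})$ by $\operatorname{Ad}$. Because $r\mapsto r+m$ under $\xi\mapsto\xi+m\tau$, the gauge $g$ absorbs $F$ and hence furnishes a smooth trivialization $E_{F}\cong\mathcal{E}_{\tau}^{\times}\times\widehat{\mathbb{L}}(X_{0},X_{1})$. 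Transporting $d-r^{*}C$ through $g$ produces the connection form $\operatorname{Ad}_{g}(r^{*}C)+(dg)g^{-1}$; the first summand equals $-\sum_{k}\omega^{(k)}\operatorname{Ad}^{k}_{X_{0}}(X_{1})$ by the relation above (reading $\alpha$ as the operator $\operatorname{Ad}_{X_{0}}$ applied to $X_{1}$), while $(dg)g^{-1}=\nu X_{0}$ since $\nu=2\pi i\,dr$. Under the identification $X_{i}\leftrightarrow\bar X_{i}$ this is precisely $\bar C=\nu\bar X_{0}-\sum_{p}\omega^{(p)}\operatorname{Ad}^{p}_{\bar X_{0}}(\bar X_{1})$, giving the claimed smooth isomorphism.

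For (3)--(5), I would use that $C$ and $C'$ are both degree zero geometric connections attached to minimal models of the one $C_{\infty}$-algebra $\operatorname{Tot}_{N}\left(A_{DR}\left((\C-\left\lbrace\mathbb{Z}+\tau\mathbb{Z}\right\rbrace)_{\bullet}\Z^{2}\right)\right)$ --- the former through $B$, the latter through $B'$. Feeding the comparison morphism $p_{\bullet}$ between the two models into Lemma \ref{connected maurerCartan} shows that the associated $1$-morphisms are $1$-homotopic, so $r^{*}C$ and $r^{*}C'$ are gauge equivalent Maurer--Cartan elements in $A_{DR}(M)\widehat{\otimes}\mathfrak{u}$. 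Applying \cite[Theorem 4.9]{Sibilia1} (point (1) of the introduction) to this gauge equivalence produces the smooth factor of automorphy $F'$ and the bundle isomorphism $T\colon E_{F}\to E_{F'}$ intertwining $d-r^{*}C$ and $d-r^{*}C'$, which is (3). For (4) and (5) I would track the analytic type of the connecting gauge: the homotopy transfer formulas and the comparison morphism can be chosen inside the holomorphic subalgebra when $B'$ is holomorphic, so the degree-zero gauge element is a holomorphic function and $F',T$ are holomorphic on $\C-\left\lbrace\mathbb{Z}+\tau\mathbb{Z}\right\rbrace$; when $B'$ has at worst logarithmic singularities, the same formulas together with the vanishing of the relevant residues force the gauge to extend holomorphically across the lattice, so that $F'$ descends to the torus and $T$ becomes a holomorphic bundle isomorphism there. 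The main obstacle I anticipate is exactly this last regularity bookkeeping: one must argue that the gauge equivalence produced abstractly by Lemma \ref{connected maurerCartan} is realized by an element whose analytic type matches that of $B'$, and that in the logarithmic case the potential poles at $\mathbb{Z}+\tau\mathbb{Z}$ genuinely cancel, so that $F'$ and $T$ are defined on the compact torus and not merely on its puncture.
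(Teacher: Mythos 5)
Your proposal is correct and follows essentially the same route as the paper: the heart of both arguments is the gauge element $\exp(2\pi i r X_{0})$ (the paper's $h(\xi)=2\pi r X_{0}$), which carries $r^{*}C$ to $K^{*}\bar{C}$ and whose induced factor of automorphy $e^{-h((n,m)\xi)}e^{h(\xi)}=\exp(-2\pi i m X_{0})$ recovers $F$, after which points (3)--(5) are delegated to the machinery of \cite{Sibilia1}. The only differences are presentational: you verify descent in (1) directly from the quasi-periodicity identities where the paper reads it off from Theorem 4.9 of \cite{Sibilia1}, and the regularity bookkeeping you flag as the main obstacle in (4)--(5) is precisely the content of the cited Theorem 4.10 of \cite{Sibilia1}, which the paper invokes as a black box.
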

\begin{proof}
We define a Lie algebra isomorphism $K^{*}$ by $\bar{X}_{i}\mapsto X_{i}$ for $i=1,0$. There is a gauge equivalence between 
\[
k^{*}\overline{C}=\nu {X}_{0}-\sum_{p\geq 0}\omega^{(p)}\operatorname{Ad}^{p}_{{X}_{0}}\left( {X}_{1}\right)
\]
and $r_{*}{C}_{0}$ given by $h(\xi)=2\pi rX_{0}$. The factor of automorphy induced by $h$ (see \cite[Theorem 4.9]{Sibilia1}) is precisely ${F}_{(n,m)}(\xi)=e^{-h((n,m)\xi)}e^{h(\xi)}=\exp(-2 \pi im X_{0})$. All the other statements follows from \cite[Theorem 4.9, Theorem 4.10]{Sibilia1}. 
\end{proof}
\begin{rmk}
	Consider a family of holomorphic functions $f^{(i)}\: : \: \C-\left\lbrace \mathbb{Z}+\tau\mathbb{Z}\right\rbrace \to \C$ indexed by $i\in \N$ that satisfy \eqref{conditions}.
	Note that all the statements above are true even by assuming that $f^{0}(z)$ is a $\Z^{2}$-invariant holomorphic function (an elliptic function with pole in $0$). In this case we can construct a new model in the same spirit as above. We construct $\bar{\phi}^{(i)}$ by 
	\[
	\bar{\phi}^{(i)}:= {m}_{2}({\phi}^{(i)},f^{0}(z)),\quad i\geq 0
	\]
	This gives a family of $1$-forms $\bar{\phi}^{(i)}$ indexed by $i\in \N$ that satisfies the relation \eqref{totalon}. Hence $\left( \bar{\phi}^{(i)},\gamma\right)$ generates a new holomorphic $1$-model.
\end{rmk}
\subsection{Kronecker function and KZ connection}\label{ComparisonHain0}
We show that the KZB connection can be constructed as degree zero geometric connection. 
Let $\theta\left(\xi,\tau\right)$ be the ``two thirds of the Jacobi triple formula'':
\begin{equation}\label{thetadueterz}
\theta\left(\xi,\tau\right)=q^{1/12}\left(z^{1/2}-z^{-1/2}\right)\prod_{j=1}^\infty\left(1-q^{j}z\right)\prod_{j=1}^\infty\left(1-q^{j}z^{-1}\right)
\end{equation}
where $\xi\in \mathbb{C}$ and $\tau\in \mathbb{H}$, $z:=\exp(2\pi i \xi)$, $q:=\exp(2 \pi i \tau)$.
\begin{prop}Set $\theta\left(0,\tau\right)':=\frac{\partial}{\partial \xi}\theta\left(0,\tau\right)$. The Kronecker function\footnote{There is a conflict of notation with respect to \cite{Don}. Our function $F$ is the one used in \cite{LevBrown}, \cite{Levrac},\cite{Damien} and \cite{Hain}. Let $F^{Z}$ be the function in \cite{Don}, then
		$
		F(\xi,\eta,\tau)=2 \pi iF^{Z}(2 \pi i\xi,2 \pi i\eta,\tau).
		$} is defined as
	$$F\left(\xi,\eta,\tau \right):=\frac{\theta\left(0,\tau\right)'\theta\left(\xi+\eta,\tau\right)}{\theta\left(\xi,\tau\right)\theta\left(\eta,\tau\right)}.$$
	
\begin{enumerate}
\item[i)] $F$ is a meromorphic function with simple pole at $(\xi, \eta, \tau)$ where $\xi\in \mathbb{Z}+\tau\mathbb{Z}$ and $\eta\in\mathbb{Z}+\tau\mathbb{Z}$,
\item[ii)] It satisfies the quasi-periodicity
\begin{equation}\label{qper}
F\left(\xi+1,\eta,\tau \right)=F\left(\xi,\eta,\tau \right),\quad F\left(\xi+\tau,\eta,\tau \right)=\exp{\left( -2\pi i \eta\right) }F\left(\xi,\eta,\tau \right).
\end{equation}
\end{enumerate}	
\end{prop}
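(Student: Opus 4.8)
The plan is to reduce every assertion to the analytic behaviour of the Jacobi theta factor $\theta(\xi,\tau)$ of \eqref{thetadueterz} and then read off the statements for $F$ from its definition as a ratio of theta values. First I would record the elementary properties of $\theta$ as a function of $\xi$, with $z=\exp(2\pi i\xi)$ and $q=\exp(2\pi i\tau)$. For $\Im(\tau)>0$ the two infinite products converge locally uniformly and define an entire function, so $\theta(\cdot,\tau)$ is holomorphic. Its zeros are simple and located exactly on $\mathbb{Z}+\tau\mathbb{Z}$: the prefactor $z^{1/2}-z^{-1/2}$ vanishes simply precisely when $z=1$, i.e. $\xi\in\mathbb{Z}$, while the factors $1-q^jz$ and $1-q^jz^{-1}$ contribute the simple zeros at $\xi\in\mp j\tau+\mathbb{Z}$ for $j\geq 1$; together these exhaust $\mathbb{Z}+\tau\mathbb{Z}$, and no two of the zero loci coincide.

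Next I would establish the quasi-periodicity of $\theta$. Under $\xi\mapsto\xi+1$ one has $z\mapsto z$ while $z^{1/2}\mapsto -z^{1/2}$, so only the prefactor changes sign and $\theta(\xi+1,\tau)=-\theta(\xi,\tau)$. Under $\xi\mapsto\xi+\tau$ one has $z\mapsto qz$; substituting and telescoping the shifted products $\prod_j(1-q^{j+1}z)$ and $\prod_j(1-q^{j-1}z^{-1})$ against the originals produces the extra factors $(1-qz)^{-1}$ and $(1-z^{-1})$, which after simplification yield $\theta(\xi+\tau,\tau)=-\exp(-\pi i\tau-2\pi i\xi)\,\theta(\xi,\tau)$. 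This telescoping, together with the bookkeeping of the half-integer powers $q^{1/2},z^{1/2}$ and the resulting signs, is the one genuinely computational step and the place where sign errors are easiest to make; it is the main obstacle, though a purely mechanical one.

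With these two facts both assertions are immediate. For (i), $F$ is a quotient of entire functions of $\xi$ and $\eta$, hence meromorphic; its poles lie on the zero divisor of the denominator $\theta(\xi,\tau)\theta(\eta,\tau)$, namely $\{\xi\in\mathbb{Z}+\tau\mathbb{Z}\}\cup\{\eta\in\mathbb{Z}+\tau\mathbb{Z}\}$, and these zeros are simple. The numerator $\theta(\xi+\eta,\tau)$ does not vanish along a generic point of either divisor — that would force both $\xi$ and $\xi+\eta$, hence also $\eta$, into the lattice — so no cancellation occurs and the poles are simple, as claimed.

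For (ii), I would substitute the two relations for $\theta$ into the definition of $F$ and watch the factors combine. Under $\xi\mapsto\xi+1$ the sign $-1$ appears once in the numerator (via $\theta(\xi+1+\eta,\tau)$) and once in the denominator (via $\theta(\xi+1,\tau)$) and cancels, giving $F(\xi+1,\eta,\tau)=F(\xi,\eta,\tau)$. Under $\xi\mapsto\xi+\tau$ the signs again cancel between numerator and denominator, while the exponential prefactors contribute $\exp(-\pi i\tau-2\pi i(\xi+\eta))$ upstairs and $\exp(-\pi i\tau-2\pi i\xi)$ downstairs, whose ratio is $\exp(-2\pi i\eta)$; this yields $F(\xi+\tau,\eta,\tau)=\exp(-2\pi i\eta)\,F(\xi,\eta,\tau)$, which is \eqref{qper}.
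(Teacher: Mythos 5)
Your proof is correct, but it takes a different route from the paper: the paper disposes of this proposition by citing Zagier (\cite{Don}, Theorem 3), modulo the change of normalization recorded in the footnote, $F(\xi,\eta,\tau)=2\pi i\,F^{Z}(2\pi i\xi,2\pi i\eta,\tau)$, whereas you prove everything directly from the product formula \eqref{thetadueterz}. Your two ingredients check out: the zeros of $\theta(\cdot,\tau)$ are simple and exhaust $\mathbb{Z}+\tau\mathbb{Z}$ (prefactor for $\mathbb{Z}$, the factors $1-q^{j}z$ and $1-q^{j}z^{-1}$ for $\mp j\tau+\mathbb{Z}$), and the functional equations $\theta(\xi+1,\tau)=-\theta(\xi,\tau)$ and $\theta(\xi+\tau,\tau)=-e^{-\pi i\tau-2\pi i\xi}\theta(\xi,\tau)$ follow from the sign flip of $z^{1/2}$ and from the telescoping identity $q^{1/2}z^{1/2}-q^{-1/2}z^{-1/2}=-q^{-1/2}z^{-1/2}(1-qz)$, which is exactly the cancellation your computation needs; substituting into the quotient defining $F$ then gives \eqref{qper}, and the no-cancellation argument for simplicity of the poles (at a generic point of $\{\xi\in\mathbb{Z}+\tau\mathbb{Z}\}$ the numerator $\theta(\xi+\eta,\tau)$ cannot also vanish, since that would force $\eta$ into the lattice) is sound. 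What your approach buys is a self-contained verification that never leaves the paper's own normalization of $\theta$ and $F$, avoiding the bookkeeping of Zagier's rescaled variables; what the citation buys is brevity and access to the further properties of $F$ (Fourier expansion, Fay identity) that the paper also imports from \cite{Don}, which your product-formula method would not reproduce as cheaply.
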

\begin{proof}
See \cite{Don}, Theorem 3. 
\end{proof}
In \cite{Don}, there is a Fourier expansion for $F$
\begin{equation}\label{Fourier}
F\left(\xi,\eta,\tau \right)=\pi i \left( \coth(\pi i \xi)+\coth(\pi i \eta)\right) +4 \pi \sum _{n=1}^{\infty}\left(\sum_{d|n}\sin\left(2\pi \left(\frac{n}{d}\xi+d\eta \right)  \right)  \right) q^{n}
\end{equation}
We fix a $\tau\in \mathbb{H}$ and consider $F$ restricted at $\tau$. We consider $\eta$ as a formal variable and we define the function ${g}^{(i)}$, $i\geq 0$ as the coefficients of
\[
\eta F\left(\xi,\eta,\tau \right)=\sum_{i\geq 0} g^{(i)}\eta^{i}.
\]
In particular the functions $g^{(i)}$ are meromorphic.
Formula \eqref{Fourier} give a way to describe the functions $g^{(i)}$ explicitly. First notice that
\[
\frac{1}{2}\coth\left(\frac{t}{2} \right)=\frac{1}{2}+\frac{1}{e^{t}-1} =\sum_{m=0}^{\infty}\frac{B_{2m}}{(2m)!}t^{2m-1}.
\]
Hence 
\begin{align*}
\pi i \eta\left( \coth(\pi i \xi)+\coth(\pi i \eta)\right) & = \pi i \eta +\frac{2 \pi i \eta}{e^{2 \pi i \xi}-1}+\sum_{m=0}^{\infty}\frac{B_{2m}}{(2m)!}(2 \pi i \eta)^{2m},
\end{align*}
by the de Moivre formula
\begin{align*}
&  4\pi \sum _{n=1}^{\infty}\left(\sum_{d|n}\sin\left(2\pi \left(\frac{n}{d}\xi+d\eta \right)  \right)  \right) q^{n} \\
& =-2\pi i\sum _{n=1}^{\infty}\sum_{d|n}\left( e^{2 \pi i\frac {n}{d}\xi}\left(\sum_{l=0}^{\infty}\frac{\left( 2 \pi i d \eta\right)^{l}}{l!}\right) -e^{-2 \pi i\frac {n}{d}\xi}\left(\sum_{l=0}^{\infty}\frac{\left( -2 \pi i d \eta\right)^{l}}{l!}\right) \right) q^{n}.
\end{align*}
Hence
\begin{align*}
g^{(0)}(\xi)& =1,\\
g^{(1)}(\xi)& = \pi i  +\frac{2 \pi i }{e^{2 \pi i \xi}-1}- 2 \pi i \sum_{n=1}^{\infty}\sum_{n|d}\left( e^{2 \pi i\frac {n}{d}\xi}-e^{-2 \pi i\frac {n}{d}\xi}\right)q^{n},\\
g^{(l)}(\xi)& =-\frac{\left( 2 \pi i \right)^{l}}{l!} \sum_{n=1}^{\infty}\left( \sum_{n|d}ld^{l}\left( e^{2 \pi i\frac {n}{d}\xi}+(-1)^{l}e^{-2 \pi i\frac {n}{d}\xi}\right)\right) q^{n} +\frac{\left( 2 \pi i \right)^{l}B_{l}}{l!}, \text{ for }l>1.
\end{align*}
In particular, for any $\tau$, $g^{(0)}=1$ and $g^{(1)}$ is a meromorphic function with simple poles at $\xi \in \Z+\tau \Z$ and $g^{(i)}$ for $i>1$ is holomorphic (see \cite{Don}). The quasi-periodicity of $F$ implies that the functions $\left\lbrace g^{(i)}\right\rbrace_{i\in \N}$ satisfies \eqref{conditions}. 
\begin{defi}[\cite{Damien},\cite{Levrac},\cite{Hain}]
Let $\tau$ be as above. The KZB connection on the punctured torus is a flat connection 
\[
d-\omega^{\tau}_{KZB,1}
\]
on the bundle $E_{F}$ as defined in the previous section, where
\[
\omega^{\tau}_{KZB,1}=-\sum_{p\geq 0}g^{(p)}(\xi)d\xi\operatorname{Ad}^{p}_{X_{0}}(X_{1})
\]
\end{defi}
\begin{thm}\label{main2}
We fix a $\tau\in\mathbb{H}$ and we consider the functions $\left\lbrace g^{(i)} \right\rbrace_{i\in\N} $. Let $C$ be as Proposition \ref{deg0geom}. Then $(d-r^*C, E_{F})$ is the KZB connection $(d-\omega^{\tau}_{KZB,1}, E_{F})$.
\end{thm}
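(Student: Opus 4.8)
The plan is to reduce the statement to a termwise comparison of connection forms, once the Kronecker coefficients $g^{(i)}$ are recognized as an admissible input for the construction of Proposition \ref{deg0geom}. The only genuine content is the verification that the family $\{g^{(i)}\}_{i\in\N}$ satisfies \eqref{conditions}; everything afterwards is bookkeeping of how the strict $C_{\infty}$-map $r$ acts on the two bidegrees occurring in $C$.

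First I would verify that the family defined by $\eta F(\xi,\eta,\tau)=\sum_{i\geq 0}g^{(i)}\eta^{i}$ satisfies \eqref{conditions}. The normalization $g^{(0)}=1$ is the coefficient of $\eta^{0}$ in $\eta F$, i.e.\ the residue of $F$ at the simple pole $\eta=0$, which equals $1$ (this is already recorded in the explicit Fourier computation preceding the theorem). The ordinary periodicity $g^{(n)}(\xi+l)=g^{(n)}(\xi)$ is read off from the first relation of \eqref{qper} by comparing coefficients of $\eta^{n}$. For the $\tau$-translation I would expand
\[
\sum_{i\geq 0}g^{(i)}(\xi+l\tau)\,\eta^{i}=\eta F(\xi+l\tau,\eta,\tau)=\exp(-2\pi i l\eta)\,\eta F(\xi,\eta,\tau)=\Big(\sum_{j\geq 0}\frac{(-2\pi i l)^{j}}{j!}\eta^{j}\Big)\sum_{k\geq 0}g^{(k)}(\xi)\,\eta^{k},
\]
and compare the coefficient of $\eta^{n}$ on both sides to recover exactly the third relation of \eqref{conditions}. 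Since each $g^{(i)}$ is holomorphic on $\C-\{\Z+\tau\Z\}$ (with $g^{(1)}$ acquiring a simple, hence logarithmic, pole along the lattice), the $g^{(i)}$ constitute a legitimate choice of the functions $f^{(i)}$ entering Lemma \ref{modelelliptic curve}, Corollary \ref{cormodel} and Proposition \ref{deg0geom}.

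With this choice $\phi^{(p)}=g^{(p)}d\xi$, so Proposition \ref{deg0geom} yields the degree zero geometric connection $C=-\gamma X_{0}-\sum_{p\geq 0}g^{(p)}(\xi)\,d\xi\,\operatorname{Ad}^{p}_{X_{0}}(X_{1})$. Next I would apply $r$ via $r^{*}=(r\otimes\mathrm{Id})$. The form $\gamma$ has bidegree $(1,0)$; since $r$ is evaluation at $e\in\Z^{2}$ and every element of the normalized complex vanishes when one of its group arguments is $e$, we get $r(\gamma)=0$, so $r$ projects onto the bidegree $(0,\bullet)$ part. Each $\phi^{(p)}=g^{(p)}d\xi$ has bidegree $(0,1)$ and is therefore fixed by $r$. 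Hence
\[
r^{*}C=-\sum_{p\geq 0}g^{(p)}(\xi)\,d\xi\,\operatorname{Ad}^{p}_{X_{0}}(X_{1})=\omega^{\tau}_{KZB,1},
\]
which is precisely the KZB connection form.

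Finally I would confirm that both connections live on the same bundle. The factor of automorphy $F_{(n,m)}(\xi)=\exp(-2\pi i m X_{0})$ defining $E_{F}$ is exactly the one used in the definition of the KZB connection, and the quasi-periodicity relations for $\sum_{p\geq 0}\phi^{(p)}\operatorname{Ad}^{p}_{X_{0}}(X_{1})$ recorded before Theorem \ref{main1}---which make $\omega^{\tau}_{KZB,1}$ descend to $E_{F}$---follow from \eqref{conditions} for the $g^{(i)}$. By Theorem \ref{main1}(1), $d-r^{*}C$ is a well-defined holomorphic (flat) connection on $E_{F}$, and since its connection form coincides termwise with $\omega^{\tau}_{KZB,1}$, the two flat bundles $(d-r^{*}C,E_{F})$ and $(d-\omega^{\tau}_{KZB,1},E_{F})$ are literally identical. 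I expect the main (and essentially only) obstacle to be the first step---identifying the Kronecker coefficients as a solution of \eqref{conditions} from the residue and quasi-periodicity of $F$---after which the identification of the connections is immediate.
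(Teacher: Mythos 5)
Your proposal is correct and follows essentially the same route as the paper: the paper records (just before the theorem) that the quasi-periodicity \eqref{qper} of the Kronecker function forces the $g^{(i)}$ to satisfy \eqref{conditions}, then in the proof notes that $g^{(i)}(\xi)d\xi$ is holomorphic for $i\neq 1$ and has a logarithmic pole for $i=1$, and invokes Proposition \ref{deg0geom} and Theorem \ref{main1}. Your write-up simply makes explicit the two steps the paper leaves implicit — the coefficient-wise verification of \eqref{conditions} and the computation $r^{*}C=-\sum_{p\geq 0}g^{(p)}(\xi)\,d\xi\,\operatorname{Ad}^{p}_{X_{0}}(X_{1})=\omega^{\tau}_{KZB,1}$ via $r(\gamma)=0$ — so there is no substantive difference.
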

\begin{proof} Let $\left(B, {m}_{\bullet} \right) $ be the $1$-model constructed in Lemma \ref{modelelliptic curve} and equipped with the decomposition \eqref{elliptic curvedec}. The forms $g^{(i)}(\xi)d\xi$ are holomorphic for $i\neq 1$ and $g^{(1)}(\xi)d\xi$ is a form with a logarithmic singularity. The proof follows by Theorem \ref{main1}.
\end{proof}

\subsection{A comparison with the KZ connection via $C_{\infty}$-algebras}\label{ComparisonHain}
We study some results of Section 12 of \cite{Hain}, where a link between universal KZ and KZB connection is given by considering the restriction of the connection to the first order Tate curve. We give an interpretation of such a result in terms of $C_{\infty}$-algebras over $\Q(2\pi i)$. As noticed by Hain in \cite{Hain}, $\lim_{\tau\to i\infty }\omega_{KZB,1}^{\tau}$ is equal to $\omega_{KZ,1}$ modulo a certain endomorphism $Q^{*}$ of complete Lie algebra. We use the argument of Subsection  \eqref{Morphism of homological pairs} to show that $Q^{*}$ is induced by a strict $C_{\infty}$-morphism $p_{\bullet}$. We denote by $z$ the coordinate on $\C^{*}$. We define the action of $\Z$ on $\C^{*}$ via
\begin{equation}\label{Zaction}
n \cdot z:= q^{n}z.
\end{equation}
There is a morphism $h_{\bullet}\: : \: \C_{\bullet}\Z^{2}\to \C^{*}_{\bullet}\Z$ of action groupoids
\[
h_{0}(\xi)=e^{2 \pi i\xi}, \quad h_{1}(\xi, (m,n) )=\left( e^{2 \pi i\xi}, n\right) 
\]
which induces an isomorphism on the quotient. We have $ \left\lbrace q^{\Z} \right\rbrace \subset \C^{*}$ and the maps above give a morphism $\left( \C_{\bullet}-\left\lbrace \Z^{2}+\tau\Z^{2}\right\rbrace\right)_{\bullet}\Z^{2} \to \left( \C^{*}-\left\lbrace q^{\Z} \right\rbrace\right)  _{\bullet}\Z$ between action groupoids that induce an isomorphism on the punctured elliptic curve. The quasi-periodicity of $F$ allows us to rewrite the $g^{(i)}$ as functions on $\left( \C^{*}-\left\lbrace q^{\Z} \right\rbrace\right)$. Then 
\begin{align}
\nonumber g^{(0)}(z)& =1,\\
\nonumber g^{(1)}(z)& = \pi i  +\frac{2 \pi i }{z-1}-\left( 2 \pi i \right)\sum_{n=1}^{\infty}\sum_{n|d}d\left( z^\frac{n}{d}-z^\frac {-n}{d}\right)q^{n},\\
\label{functionformal} g^{(l)}(z)& =-\frac{\left( 2 \pi i \right)^{l}}{l!}\left(  \sum_{n=1}^{\infty}\left( \sum_{n|d}ld^{l}\left( z^\frac{n}{d}+(-1)^{l}z^\frac {-n}{d}\right)\right) q^{n}\right)  +\frac{\left( 2 \pi i \right)^{l}B_{l}}{l!}, \text{ for }l>1.
\end{align}
\begin{lem}
The functions $g^{(l)}$ for $l\geq 0$ can be written as power series on $q$ where the coefficients are rational functions on $\C$ with poles on $0,1$ of the form $\frac{p_{1}}{p_{2}}$, where $p_{i}$ are polynomials over the field $\Q(2 \pi i)$ for $i=1,2$. 
\end{lem}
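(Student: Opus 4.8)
The plan is to read the claim directly off the explicit $q$-expansions in \eqref{functionformal}, since the coefficient of each power $q^n$ is visibly a finite expression in $z$. The only substantive verifications are that the scalar coefficients lie in the field $\Q(2\pi i)$ and that the $z$-dependence of each $q^n$-coefficient is that of a Laurent polynomial, i.e.\ a rational function whose poles are confined to $\{0,1\}$.

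First I would dispose of $g^{(0)}=1$, which is a constant polynomial. Next, for $l>1$ I would separate the $q^0$-term $\frac{(2\pi i)^l B_l}{l!}$ from the series $-\frac{(2\pi i)^l}{l!}\sum_{n\ge 1}\bigl(\sum_{d\mid n} l d^l(z^{n/d}+(-1)^l z^{-n/d})\bigr)q^n$. The constant term lies in $\Q(2\pi i)$ because the Bernoulli number $B_l$ is rational and $\frac{(2\pi i)^l}{l!}$ is a rational multiple of a power of $2\pi i$. For fixed $n\ge 1$ the inner sum runs over the finitely many divisors $d$ of $n$, so the coefficient of $q^n$ is a finite $\Q(2\pi i)$-linear combination of the monomials $z^{n/d}$ and $z^{-n/d}$; here $n/d$ is a positive integer precisely because $d\mid n$, so these are genuine Laurent monomials rather than fractional powers. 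Collecting them over the common denominator $z^n$ exhibits each $q^n$-coefficient as $p_1/p_2$ with $p_2=z^n$ and $p_1\in\Q(2\pi i)[z]$, a rational function whose only pole is at $z=0$.

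The case $l=1$ is identical apart from the $q^0$-coefficient, which reads $\pi i+\frac{2\pi i}{z-1}$. Here $\pi i=\tfrac12(2\pi i)\in\Q(2\pi i)$, and $\frac{2\pi i}{z-1}$ is already of the required form with $p_1=2\pi i$ and $p_2=z-1\in\Q(2\pi i)[z]$; this is the only place a pole at $z=1$ appears. All remaining coefficients of $g^{(1)}$ are, as above, Laurent polynomials with a single pole at $z=0$.

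There is no deep obstacle here: the statement is a bookkeeping consequence of \eqref{functionformal}. The one point that must be checked with care is the finiteness of the divisor sums, as it is what guarantees that each $q^n$-coefficient is a Laurent polynomial (a ratio of polynomials over $\Q(2\pi i)$) rather than an infinite series in $z$, and it rests on reading the summation as $d\mid n$ so that $n/d\in\Z_{>0}$. Once this is fixed, assembling the conclusion—rational coefficients over $\Q(2\pi i)$ with poles confined to $\{0,1\}$—is immediate.
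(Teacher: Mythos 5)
Your proof is correct and takes essentially the same route as the paper, which states this lemma without any written proof, treating it as an immediate bookkeeping consequence of the explicit expansions \eqref{functionformal}. Your one point of care—reading the divisor sum as $d\mid n$ (so that $z^{n/d}$ is a genuine Laurent monomial and each $q^{n}$-coefficient is a finite sum), despite the paper's notation $\sum_{n|d}$—is exactly the right reading, consistent with the Fourier expansion \eqref{Fourier} from which those formulas are derived.
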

Given a subfield $\Q\subset \Bbbk\subset \C$, let $D\subset \C^{n}$ be a normal crossing divisor. We denote by $\operatorname{Rat}^{0}_{\Bbbk}(\C^{n},D)$ the algebra of rational functions $\frac{p_{1}}{p_{2}}$ with poles along $D$ such that $p_{1}, p_{2}$ are polynomials over the field $\Bbbk$. We denote by $\operatorname{Rat}^{\bullet}_{\Bbbk}(\C^{n},D)$ the differential graded $\Bbbk$-subalgebra of differential forms generated by forms of type $f dx_{I}$, with $f\in\operatorname{Rat}^{0}_{\Bbbk}(\C^{n},D)$. In particular $\operatorname{Rat}^{\bullet}_{\Bbbk}(\C^{n},D)\otimes \C\subset A_{DR}^{*}(\C^{n}-D)$.\\

We consider the differential graded  $\Q(2\pi i)$-algebra $\operatorname{Rat}^{\bullet}_{\Q(2 \pi i)}(\C,\left\lbrace 0,1 \right\rbrace )$ and we consider $q$ as a formal variable of degree zero. We denote by $\underline{g^{(i)}}\in \operatorname{Rat}^{0}_{\Q(2 \pi i)}(\C,\left\lbrace 0,1 \right\rbrace )((q))$, $i\geq 0$, the formal power series written as in \eqref{functionformal}. We have a differential graded algebra of formal Laurent series
\[
\left(d, \wedge, \operatorname{Rat}^{\bullet}_{\Q(2 \pi i)}(\C,\left\lbrace 0,1 \right\rbrace )\left( \left(  q \right) \right)   \right). 
\] 
We extend the action of $\Z$ defined in \eqref{Zaction}. For a module $A$ over a ring $\Bbbk$ and a group $G$, we denote by $\Map \left(G,A \right) $ the $\Bbbk$-module of maps from $G$ to $A$. An \emph{action of $G$ on $A$} is a morphism of module $\rho^{c}\: : \: A\to \Map \left(G,A \right) $ such that $\rho^{c}(a)(e)=a$ for any $a$ where $e$ is the identity, and $\rho^{c}(a)(gh)=\rho^{c}\left( \rho^{c}(a)(g)\right)(h)$, for any $g,h\in G$ and any $a$. The tensor product of two  actions of  $G$ on modules $A$ and $B$ is naturally an action of $G$ on $A\otimes B$. We call the \emph{trivial action} the action given by $\rho^{c}(a)(g)=a$ for any $g\in G$. 
We define $$\rho^{c}\: : \:\left(d, \wedge, \operatorname{Rat}^{\bullet}_{\Q(2 \pi i)}(\C,\left\lbrace 0,1 \right\rbrace )\left( \left(  q \right) \right)   \right)\to\Map\left(\Z, \operatorname{Rat}^{\bullet}_{\Q(2 \pi i)}(\C,\left\lbrace 0,1 \right\rbrace )\left( \left(  q \right) \right) \right) $$ via
\[
\rho^{c}\left( q\right)(n) :=q, \quad \rho^{c}\left(dz\right)(n):=q^{n}dz, \quad \rho^{c}\left(\frac{1}{z}\right)(n):= \frac{q^{-n}}{z} 
\]
and 
\[
 \rho^{c}\left(\frac{1}{z-1}\right) (n):=\begin{cases}
 \sum_{l=0}^{\infty}\left( q^{n}z\right)^{l} & n> 0\\
 \sum_{l=0}^{\infty}\frac{q^{-n}}{z}\left( \frac{1}{z^{l}}\right)^{l}& n<0.
 \end{cases}
\]
Let $d^{1}$ be the trivial action and set $d^{0}=\rho^{c}$. Hence $\left( d^{0},d^{1}\right) $ are the two cofaces map of a $1$-truncated cosimplicial unital differential graded algebra. The conerve gives rise to a cosimplicial unital differential commutative graded algebra $A^{\bullet, \bullet}$, where
\[
A^{l, m}:=\Map\left(\Z^{l},  \operatorname{Rat}^{m}_{\Q(2 \pi i)}(\C,\left\lbrace 0,1 \right\rbrace )\left( \left(  q \right) \right)\right).
\]
The differential graded $\Q(2\pi i)$-module $\operatorname{Tot}_{N}\left( A^{\bullet, \bullet}\right) $ carries a unital $C_{\infty}$-structure ${m}_{\bullet}$. We set $D:=m_{1}$. Let $\underline{\gamma}\in A^{1, 0}$ be the group homomorphism $\gamma\: : \:\left(  \Z,+\right) \to \left( \C,+\right) $ defined by $\underline{\gamma}(n):=2\pi i n$.
  We define 
\[
\underline{\phi}^{(i)}:=\underline{g^{(i)}}\frac{dz}{z}
\]
for any $i\geq 0$. Hence $\underline{\phi}^{(0)}$, $\underline{\gamma}$ are again closed elements in $\operatorname{Tot}_{N}\left( A^{\bullet, \bullet}\right) $.
Let $\underline{B}\subset \operatorname{Tot}_{N}\left( A^{\bullet, \bullet}\right)$ be the sub $C_{\infty}$-algebra generated by 
\[
1,\underline{\gamma},\left\lbrace \underline{\phi}^{(i)}\right\rbrace_{i\in \N}.
\]
Since all the calculations done for $B$ in the previous section are independent from the choice of $\tau\in \mathbb{H}$, we get that mutatis mutandis some of the results of Lemma \ref{modelelliptic curve} work in formal power series context. In particular there is a strict $C_{\infty}$-map
\[
f^{B}\: : \: \overline{B}\otimes \C\to B.
\]
\begin{prop} \label{modelelliptic curve2formal} Let $\left({m}_{\bullet}, \underline{B} \right) \subset \left({m}_{\bullet},\operatorname{Tot}_{N}\left( A^{\bullet, \bullet}\right) \right) $ be as above.
	\begin{enumerate}
		\item $D\left( -\underline{\phi}^{(n)}\right) =\sum_{l=1}^{n}{m}_{l+1}(\underline{\gamma},\dots, \underline{\gamma}, \underline{\phi}^{(p-l)})$, for any $n$.
		\item  $\left( {m}_{\bullet},\underline{B} \right) \subset \left({m}_{\bullet},\operatorname{Tot}_{N}\left( A^{\bullet, \bullet}\right) \right) $ is a rational sub $C_{\infty}$-algebra over $\Q(2\pi i)$
		where
		\[
		H^{1}(\underline{B}, D)= \left( \Q(2\pi i)\right) ^{2},\text{ and }	H^{2}(\underline{B}, D)=0.
		\]
		\item We have 
		\[
		\underline{B}^{0,1}\subset \operatorname{Rat}^{1}_{\Q(2 \pi i)}(\C,\left\lbrace 0,1 \right\rbrace )\left[ \left[  q \right] \right], \quad  	\underline{B}^{1,1}\subset \Map\left( \ \Z,\operatorname{Rat}^{1}_{\Q(2 \pi i)}(\C,\left\lbrace 0,1 \right\rbrace )\left[ \left[  q \right] \right]\right) 
		\]
		\item There exists a $\Q(2\pi i)$-vector space decomposition
		\begin{equation}\label{elliptic curvedecform}
			\underline{B}=W^{\bullet}\oplus D\mathcal{M}\oplus \mathcal{M}
		\end{equation}
		which is a Hodge type decomposition where
		\begin{enumerate}
			\item $W^{1}$ is generated by $-\underline{\phi}^{(0)}$ and $-\underline{\gamma}$, $\mathcal{M}^{1}$ is generated by $\underline{\phi}^{(i)}$ for $i>0$, and $\left( D\mathcal{M}\right)^{1}=0$,
			\item and $W^{2}=0$, $\mathcal{M}^{2}$ is generated by ${m}_{l}\left( \underline{\gamma}, \dots, \underline{\gamma},\underline{\phi}^{(i)}\right) $ for $i>0, l>2$, and $\left( D\mathcal{M}\right)^{1}$ is generated by $D\left( \underline{\phi}^{(i)}\right) $ for $i>1$.
		\end{enumerate}
	We denote by $\underline{g}_{\bullet}\: : \:\left(  W, m_{\bullet}^{W} \right) \to \underline{B}$ be the $1$-minimal model obtained via the homotopy transfer theorem.
		\item  We define $\left( W^{1}[1]\right)^{*} $ as the vector space generated by ${Y_{0}},{Y_{1}}$, where ${Y_{1}}$, and ${Y_{0}}$ resp. denote the dual of $-s\underline{\phi}^{(0)}$ and of $-s\underline{\gamma}$ respectively. The Maurer-Cartan in $\overline{B}\widehat{\otimes}\widehat{\mathbb{L}}\left(Y_{0}, Y_{1} \right)$ that corresponds to $\mathcal{F}(\underline{g}_{\bullet})$ is
		\[
	C_{Ell}^{\tau}=-\underline{\gamma}{Y_{0}}-\sum_{p\geq 0}\underline{\phi}^{(p)}\operatorname{Ad}^{p}_{Y_{0}}(Y_{1})
		\]	
	\end{enumerate}
\end{prop}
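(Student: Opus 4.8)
The plan is to transcribe, \emph{mutatis mutandis}, the proofs of Lemma \ref{modelelliptic curve}, Corollary \ref{decoB}, Lemma \ref{finalcomp} and Proposition \ref{deg0geom} into the formal power series setting over $\Q(2\pi i)((q))$. The only input used in those arguments was the functional equation \eqref{conditions} satisfied by the $f^{(i)}$, together with the linear independence of Lemma \ref{linindfunct}; since the $\underline{g}^{(i)}$ satisfy \eqref{conditions} (this is the quasi-periodicity of $F$ recorded before \eqref{functionformal}), the same computations apply. For point (1) I would simply rerun the Bernoulli-number identity of Lemma \ref{modelelliptic curve}(1): that computation compares the coefficients produced by $\partial_{G}$ against $\sum_{l} m_{l+1}(\underline{\gamma},\dots,\underline{\gamma},\underline{\phi}^{(p-l)})$ and relies only on \eqref{conditions} and \cite[Theorem 2.7]{Sibilia1}, so it carries over verbatim.

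For point (2), rationality over $\Q(2\pi i)$ is exactly the content of the preceding Lemma. The cohomology computation reduces to the formal analog of Lemma \ref{linindfunct}: I would show that no nontrivial combination of the $\underline{\phi}^{(i)}$ with $i>0$ is $D$-closed, by viewing $\partial_{G}\bigl(\sum_{i}\lambda_{i}\underline{\phi}^{(i)}\bigr)(n)$ as a polynomial in $n$ with coefficients in $\operatorname{Rat}^{1}_{\Q(2\pi i)}(\C,\{0,1\})((q))$ and extracting the top coefficient, which is a nonzero multiple of $\underline{\phi}^{(0)}=\frac{dz}{z}$; this forces the highest $\lambda_{p}$ to vanish, and induction yields that the classes of $\underline{\phi}^{(0)}$ and $\underline{\gamma}$ span $H^{1}$, giving $H^{1}(\underline{B},D)=(\Q(2\pi i))^{2}$, while the identical two-variable polynomial argument in $(n,n')$ gives $H^{2}(\underline{B},D)=0$. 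Point (3) is the assertion that $\underline{B}$ involves only non-negative powers of $q$: the generators $\underline{\phi}^{(i)}$ lie in $\operatorname{Rat}^{1}_{\Q(2\pi i)}(\C,\{0,1\})[[q]]$ by \eqref{functionformal}, and \eqref{conditions} guarantees that $\partial_{G}\underline{\phi}^{(i)}(n)=\sum_{j}\frac{\underline{\phi}^{(i-j)}(-2\pi i n)^{j}}{j!}$ is a $\Q(2\pi i)$-polynomial combination of the generators, so the \emph{a priori} negative powers of $q$ introduced by $\rho^{c}$ on $\frac{1}{z}$ and $\frac{1}{z-1}$ cancel.

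Points (4) and (5) are then formal consequences. Given the cohomology of point (2), the Hodge type decomposition \eqref{elliptic curvedecform} is produced exactly as in Corollary \ref{decoB} via Lemma \ref{generallemmadecomp} and \cite[Lemma 9.4.7]{lodayVallette}, with the stated generators. For point (5), I would rerun Lemma \ref{finalcomp}: the transferred maps $\underline{g}_{n}=h\circ p_{n}\circ g^{\otimes n}$ depend only on the relations $m_{\bullet}$ from point (1) and on the homotopy $h$ attached to \eqref{elliptic curvedecform}, both identical to the smooth case, so $\underline{g}_{n}(\underline{\phi}^{(0)},\underline{\gamma},\dots,\underline{\gamma})=\underline{\phi}^{(n)}$ and all other $\underline{g}_{n}$ vanish; feeding this into the convolution algebra as in Proposition \ref{deg0geom} yields the claimed $C_{Ell}^{\tau}$.

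The main obstacle is point (2): the linear independence and cohomology vanishing were originally statements about honest holomorphic functions evaluated at $n\in\N$, and one must check that the polynomial-coefficient extraction still detects $\underline{\phi}^{(0)}$ once everything is read as a formal Laurent series over $\Q(2\pi i)((q))$. The delicate part is its interaction with point (3): the extraction argument isolates $\underline{\phi}^{(0)}$ only after one is sure that the $q$-expansions genuinely lie in $\operatorname{Rat}^{1}_{\Q(2\pi i)}(\C,\{0,1\})[[q]]$, which is precisely where \eqref{conditions} does the essential work of cancelling the negative $q$-powers coming from $\rho^{c}$.
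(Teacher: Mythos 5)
Your proposal is correct and takes essentially the same route as the paper: the paper's own proof consists of observing that the formal series $\underline{g}^{(i)}$ of \eqref{functionformal} satisfy the functional equations \eqref{conditions}, so that the proof of Lemma \ref{modelelliptic curve} (and with it the arguments behind Corollary \ref{decoB}, Lemma \ref{finalcomp} and Proposition \ref{deg0geom}) carries over verbatim, proving (1), (2) and (4), while (3) and (5) follow from \eqref{functionformal} and \eqref{Fourier}. The extra care you take about the interaction between the polynomial-coefficient extraction and the $\left[\left[q\right]\right]$-integrality in point (3) is a detail the paper leaves implicit, but it is the same argument.
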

 \begin{proof}
 Since the functions defined in \eqref{functionformal} satisfy \eqref{conditions}, the proof of Lemma \ref{modelelliptic curve} carries over the situation above. This proves 1, 2 and 4. The statement 3 and 5 follow from \eqref{functionformal} and \eqref{Fourier} respectively.
 \end{proof}
 	Let $\underline{F}$ be denote the Kronecker function considered as a formal power series in $q$. We denote by
 	\[
 	r^{*}C_{Ell}^{\tau}:=-\sum_{p\geq 0}\underline{\phi}^{(p)}\operatorname{Ad}^{p}_{Y_{0}}(Y_{1})=-\operatorname{ad}_{Y_{0}}\underline{F}\left(\xi,\operatorname{ad}_{Y_{0}}  \right)(Y_{1})d\xi 
 	\]
 	In particular $\left( f^{B}\right)_{*}r^{*}C_{Ell}^{\tau}=\omega_{KZB,1}^{\tau}$. We consider $\operatorname{Rat}^{\bullet}_{\Q(2 \pi i)}(\C,\left\lbrace 0,1 \right\rbrace )\left( \left(  q \right) \right)$ as the trivial cosimplicial module. Notice that there is a morphism of cosimplicial differential graded modules 
 \[
 i\: : \:A^{\bullet, \bullet}\to  \operatorname{Rat}^{\bullet}_{\Q(2 \pi i)}(\C,\left\lbrace 0,1 \right\rbrace )\left( \left(  q \right) \right)   
 \]
 given as follows: for $f\in A^{l,m}$, $i(f):=f(0, \dots,0)\in \operatorname{Rat}^{m}_{\Q(2 \pi i)}(\C,\left\lbrace 0,1 \right\rbrace )\left( \left(  q \right) \right)  $. It induces a strict morphism of $C_{\infty}$-algebras
 \[
 i_{\bullet}\: : \: \operatorname{Tot}_{N}\left( A^{\bullet, \bullet}\right) \to  \operatorname{Rat}^{\bullet}_{\Q(2 \pi i)}(\C,\left\lbrace 0,1 \right\rbrace )\left( \left(  q \right) \right)    
 \]
  In particular, $i(B)$ is the commutative differential graded sub algebra of $\operatorname{Rat}^{\bullet}_{\Q(2 \pi i)}(\C,\left\lbrace 0,1 \right\rbrace )\left[ \left[   q \right] \right]  $ generated by
 \[
 1,\left\lbrace \underline{\phi}^{(i)}\right\rbrace_{i\in \N}.
 \]
 Note that $i(B^{n})=0$ for $n\geq 2$. Let $I_{q}$ be the completion of the augmentation ideal of $\operatorname{Rat}^{\bullet}_{\Q(2 \pi i)}(\C,\left\lbrace 0,1 \right\rbrace )\left[ \left[   q \right] \right] $, let $\pi'\: : \: \operatorname{Rat}^{\bullet}_{\Q(2 \pi i)}(\C,\left\lbrace 0,1 \right\rbrace )\left[ \left[   q \right] \right] \to \operatorname{Rat}^{\bullet}_{\Q(2 \pi i)}(\C,\left\lbrace 0,1 \right\rbrace )$ be the quotient with respect to $I_{q}$.
 \begin{lem}
 The map
   $$p':=\pi'\circ i\: : \: \underline{B}\to \operatorname{Rat}^{\bullet}_{\Q(2 \pi i)}(\C,\left\lbrace 0,1 \right\rbrace )$$
  is a strict $C_{\infty}$-algebra morphism such that $p'(\underline{\gamma})=0$ and
  \[
  p'(\underline{\phi^{(0)}})=\frac{dz}{2 \pi i z},\quad  p'(\underline{\phi^{(1)}})=\frac{dz}{2z}+\frac{dz}{z(z-1)}, \quad p'(\underline{\phi^{(l)}})=\frac{\left( 2 \pi i\right)^{l-1} B_{l}}{l!}\frac{dz}{z}\text{ for }l>1.
  \]
 \end{lem}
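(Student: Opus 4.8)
The plan is to realize $p'$ as a composite of two maps that are each manifestly strict $C_{\infty}$-morphisms, and then to read off its values on the algebra generators $\underline{\gamma}$ and the $\underline{\phi}^{(i)}$ from the explicit $q$-expansions \eqref{functionformal}. First I would record that $i_{\bullet}$ is strict by construction, and that by point (3) of Proposition \ref{modelelliptic curve2formal} together with the vanishing $i(\underline{B}^{n})=0$ for $n\geq 2$, the image $i(\underline{B})$ lands in the power-series algebra $\operatorname{Rat}^{\bullet}_{\Q(2\pi i)}(\C,\{0,1\})[[q]]$ rather than merely in Laurent series; this is precisely what makes the quotient $\pi'$ applicable to $i(\underline{B})$. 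Since $\pi'$ is a morphism of commutative differential graded algebras and the target $\operatorname{Rat}^{\bullet}_{\Q(2\pi i)}(\C,\{0,1\})$ carries the trivial $C_{\infty}$-structure with $m_{1}=d$, $m_{2}=\wedge$ and $m_{\geq 3}=0$, it is a strict $C_{\infty}$-morphism. A composite of strict morphisms is strict, whence $p'=\pi'\circ i$ is strict.

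For the values, a strict morphism is a morphism of unital graded algebras determined by its linear component, so it suffices to evaluate $p'$ on the generators. The map $i$ is evaluation of a cochain of $\operatorname{Tot}_{N}(A^{\bullet,\bullet})$ at the identity $0\in\Z$; by the normalization condition every element of positive simplicial degree is killed, so in particular $i(\underline{\gamma})=\underline{\gamma}(0)=0$ and hence $p'(\underline{\gamma})=0$. On the bidegree $(0,1)$ part $i$ acts as the identity, so writing the one-forms in the multiplicative coordinate $z=e^{2\pi i\xi}$ via $d\xi=\tfrac{dz}{2\pi i z}$ we get $i(\underline{\phi}^{(i)})=\underline{g}^{(i)}\,\tfrac{dz}{2\pi i z}$, and applying $\pi'$ amounts to discarding all positive powers of $q$:
\[
p'(\underline{\phi}^{(i)})=\bigl(\underline{g}^{(i)}\big|_{q=0}\bigr)\,\frac{dz}{2\pi i z}.
\]

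It then remains to extract the constant-in-$q$ term of each $\underline{g}^{(i)}$ directly from \eqref{functionformal}: for $i=0$ one has $\underline{g}^{(0)}=1$; for $i=1$ the Fourier sum contributes nothing at $q=0$, leaving $\underline{g}^{(1)}|_{q=0}=\pi i+\tfrac{2\pi i}{z-1}$; and for $l>1$ the entire $q$-series is divisible by $q$, so $\underline{g}^{(l)}|_{q=0}=\tfrac{(2\pi i)^{l}B_{l}}{l!}$. Multiplying each by $\tfrac{dz}{2\pi i z}$ and simplifying—for $l=1$, $\bigl(\pi i+\tfrac{2\pi i}{z-1}\bigr)\tfrac{dz}{2\pi i z}=\tfrac{dz}{2z}+\tfrac{dz}{z(z-1)}$—yields the three stated formulas. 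The argument presents no serious obstacle; the only points requiring care are (i) verifying that $i(\underline{B})$ has no poles in $q$ so that $\pi'$ is legitimately defined on it, which is exactly where point (3) of Proposition \ref{modelelliptic curve2formal} enters, and (ii) keeping track of the normalization $d\xi=\tfrac{dz}{2\pi i z}$ in the passage to $z=e^{2\pi i\xi}$, which accounts for the factors of $2\pi i$ appearing in the final expressions.
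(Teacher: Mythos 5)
Your proof is correct and is exactly the argument the paper leaves implicit (the lemma is stated there without proof): compose the strict morphisms $i$ and $\pi'$ — after checking via point (3) of Proposition \ref{modelelliptic curve2formal} that $i(\underline{B})$ lies in $\operatorname{Rat}^{\bullet}_{\Q(2\pi i)}(\C,\{0,1\})[[q]]$ so that $\pi'$ applies — use the normalization of the total complex to kill $\underline{\gamma}$, and set $q=0$ in \eqref{functionformal}. Your attention to the normalization $d\xi=\frac{dz}{2\pi i z}$ is warranted: read literally, the paper's definition $\underline{\phi}^{(i)}:=\underline{g^{(i)}}\frac{dz}{z}$ would make the stated constants off by a factor of $2\pi i$, and your interpretation is the one under which the lemma as stated holds.
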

 We consider the complex variety $\C-\left\lbrace0,1 \right\rbrace  $. The cohomology is generated by the holomorphic forms
 \[
 1,\frac{dz}{z}, \quad \frac{dz}{z-1}.
 \]
 Let $A$ be the unital differential graded algebra over $\Q(2\pi i)$ generated by $1,\frac{dz}{z}, \frac{dz}{z-1}$. Notice that $A$ coincide with its cohomology. In particular, there is a canonical Hodge decomposition for $A$ given by $\left(A,0 \right) $, where $A^{1}$ is considered to equipped with the basis $\frac{dz}{z}, \frac{dz}{z-1}$. Then the $C_{\infty}$-morphism $g_{\bullet}^{KZ}\: : \: \left( W',m_{\bullet}^{W'}\right) \to A$ constructed via the homotopy transfer theorem is strict and corresponds to the identity map $\frac{dz}{z}\mapsto \frac{dz}{z},\frac{dz}{z-1} \mapsto\frac{dz}{z-1}$. We denote  by $C_{KZ}= \frac{dz}{z}Z_{0}+\frac{dz}{z-1}Z_{1}$ the Maurer-Cartan element in $A\widehat{\otimes}\widehat{\mathbb{L} }\left(Z_{0},Z_{1} \right)$ which corresponds (see \cite{Sibilia1}) to $\mathcal{F}\left(g_{\bullet}^{KZ} \right) $ where $Z_{0}:=\left( s\left( \frac{dz}{z}\right) \right) ^{*}$ and $Z_{1}:=\left(  s\left( \frac{dz}{z-1}\right) \right) ^{*}$. We denote the inverse of $g_{\bullet}^{KZ}$ by $f_{\bullet}^{KZ}$. Let $A_{KZ,1}$ be the complex differential graded algebra generated by $1,\frac{dz}{z}, \frac{dz}{z-1}$, the KZ connection is a flat connection on $\left( \C-\left\lbrace 0,1 \right\rbrace\right) \times \widehat{\mathbb{L} }\left(Z_{0},Z_{1} \right)$ given by $d-\omega_{KZ,1}$, where
\[
\omega_{KZ,1}=\frac{dz}{z}Z_{0}+\frac{dz}{z-1}Z_{1}
\]
 where the fiber is considered to be equipped with the adjoint action. In particular there is a differential graded algebra map
 \[
 f^{A}\: : \:A\otimes \C\to A_{KZ,1}
 \]
 such that $\left( f^{A}\right)_{*}C_{KZ}=\omega_{KZ,1}$. We apply the argument of Subsection \ref{Morphism of homological pairs}. We have a diagram of $1-C_{\infty}$-algebras
\[
\begin{tikzcd}
\mathcal{F}\left( \underline{B}\right) \arrow[r,  "p'"]
 &  \mathcal{F}\left(A\right) \arrow[d, shift right, "\mathcal{F}\left(f^{KZ}_{\bullet}\right) "]\\\mathcal{F}(W^{\bullet},m_{\bullet}^{W})\arrow[u, shift right, "\mathcal{F}\left( g_{\bullet}\right) "]& \mathcal{F}({W}'^{\bullet},m_{\bullet}^{W'})
\end{tikzcd}
\]
 We define $q_{\bullet}:=\mathcal{F}\left( f^{KZ}_{\bullet}\right) \circ\mathcal{F}\left( p'\right) \circ \mathcal{F}\left(g_{\bullet}\right) $, where $\mathcal{F}\left( g_{\bullet}\right) $ is the $1-C_{\infty}$-morphism that corresponds to the Maurer-Cartan element $C_{Ell}^{\tau}$ of Proposition \ref{modelelliptic curve2formal}. Notice that $g_{\bullet}^{KZ}$ has an inverse, then by Lemma \ref{connected maurerCartan}, we have that
 $q_{\bullet}\mathcal{F}\left( g^{KZ}_{\bullet}\right)=\mathcal{F}\left(f^{KZ}_{\bullet}\right) \circ \mathcal{F}\left(p'\right) \circ \mathcal{F}\left(g_{\bullet}\right) $. This equations can be written in terms of Maurer-Cartan elements in $\overline{B}\widehat{\otimes}\widehat{\mathbb{L}}\left(Y_{0}, Y_{1} \right)$ as 
 \[
 q^{*}C_{KZ,1}=\left(\mathrm{Id}\widehat{\otimes}Q^{*} \right)C_{KZ,1}=\left(p'\widehat{\otimes}\mathrm{Id} \right)C^{\tau}_{Ell}={p'}_{*}C^{\tau}_{Ell}.
 \]
 where $Q^{*}$ is a complete morphism of $\Q(2\pi i)$-Lie algebras 
 \[
 Q^{*}\: : \: \widehat{\mathbb{L}}\left(W'[1]\right)\to   \widehat{\mathbb{L}}\left(W^{1}[1]\right)
 \]
 which is an inclusion since both of the Lie algebras are free. Since
 \begin{align*}
 (p')^{*}\left(C_{Ell} \right)&=-\frac{dz}{2 \pi i z}Y_{1}+ \left( \frac{dz}{2z}+\frac{dz}{z(z-1)}\right) \left[ Y_{0}, Y_{1}\right]-\sum_{l>1}^{\infty}\frac{\left( 2 \pi i\right)^{l-1} B_{l}}{l!}\frac{dz}{z}\operatorname{Ad}^{l}_{Y_{0}}\left( Y_{1}\right) \\
 & = -\frac{dz}{ z}\frac{Y_{1}}{2 \pi i}- \left( +\frac{dz}{2z}-\frac{dz}{z-1}\right) \left[ 2\pi iY_{0}, \frac{Y_{1}}{2 \pi i}\right]+\sum_{l>1}^{\infty}\frac{ B_{l}}{l!}\frac{dz}{z}\operatorname{Ad}^{l}_{2 \pi iY_{0}}\left( \frac{Y_{1}}{2 \pi i}\right) \\
 &=-\frac{dz}{z-1}\left[ 2\pi iY_{0}, \frac{Y_{1}}{2 \pi i}\right]-\sum_{l=0}^{\infty}\frac{ {B'}_{l}}{l!}\frac{dz}{z}\operatorname{Ad}^{l}_{2 \pi iY_{0}}\left( \frac{Y_{1}}{2 \pi i}\right)\\
 & = q_{*}(C_{KZ})=\left( Id{\otimes}Q^{*}\right)C_{KZ},
 \end{align*}
 $Q^{*}$ is given by 
  \[
  Z_{0}\mapsto -\sum_{l=0}^{\infty}\frac{ {B'}_{l}}{l!}\frac{dz}{z}\operatorname{Ad}^{l}_{2 \pi iY_{0}}\left( \frac{Y_{1}}{2 \pi i}\right), \quad Z_{1}\mapsto-\left[ 2\pi iY_{0}, \frac{Y_{1}}{2 \pi i}\right] 
  \]
which is the map found by Hain in \cite{Hain} (see Section 18). The map $p'$ induces a Lie algebra morphism $Q^{*}\: : \: \widehat{\mathbb{L}}\left(W'[1]\right)\to   \widehat{\mathbb{L}}\left(W^{1}[1]\right)$ which induces a differential graded Lie algebra map
\[
q_{*}=\left( Id{\otimes}Q^{*}\right) \: : \: \left(  A'\right) \otimes\widehat{\mathbb{L}}\left(W'[1]\right)\to A'\widehat{\otimes}\widehat{\mathbb{L}}\left(W^{1}[1]\right)
\]
for any differential graded algebra $A'$.
\begin{prop}\label{thmfinale0}
 Let $\tau\in \mathbb{H}$ we denote the universal KZB-connection with $C_{Ell,\tau}$. We have $$\lim_{\tau\to i\infty }\omega_{KZB,1}^{\tau}=\left(Id{\otimes}Q^{*} \right)\omega_{KZ,1}$$
\end{prop}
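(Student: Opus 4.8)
The plan is to show that the analytic degeneration $\tau\to i\infty$ of the KZB form is computed by the strict $C_{\infty}$-morphism $p'=\pi'\circ i$, and then to feed this into the identity $(p')_{*}C_{Ell}^{\tau}=(\mathrm{Id}\otimes Q^{*})C_{KZ}$ already established in the displayed computation preceding the statement. Since $q=\exp(2\pi i\tau)$, sending $\Im(\tau)\to\infty$ is the same as sending $q\to 0$; by \eqref{functionformal} each $g^{(p)}$ is a power series in $q$ whose coefficients are fixed rational functions of $z=e^{2\pi i\xi}$, so for $\xi$ away from the lattice the limit exists and extracts the $q^{0}$-term: $\lim_{q\to0}g^{(0)}=1$, $\lim_{q\to0}g^{(1)}=\pi i+\frac{2\pi i}{z-1}$, and $\lim_{q\to0}g^{(l)}=\frac{(2\pi i)^{l}B_{l}}{l!}$ for $l>1$. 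These are precisely the coefficient functions occurring in the values $p'(\underline{\phi}^{(p)})$ recorded in the lemma just before Proposition \ref{thmfinale0}, because $\pi'$ is by definition the specialization $q=0$ (quotient by the augmentation ideal $I_{q}$), while $i$ evaluates the cosimplicial forms at the $0$-tuple, so $\underline{\gamma}\mapsto 0$ whereas the bidegree $(0,\bullet)$ forms $\underline{\phi}^{(p)}$ are left untouched.

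Concretely, I would first establish that $\lim_{\tau\to i\infty}\omega_{KZB,1}^{\tau}=(f^{A})_{*}(p')_{*}C_{Ell}^{\tau}$. Writing $\omega_{KZB,1}^{\tau}=-\sum_{p}g^{(p)}(\xi)\,d\xi\,\operatorname{Ad}^{p}_{X_{0}}(X_{1})$, each homogeneous component in the $\operatorname{Ad}_{Y_{0}}$-degree is a single term $-g^{(p)}(\xi)\,d\xi$, which converges as $q\to0$, so the limit may be taken termwise in the completed free Lie algebra. Under the identification $X_{i}\leftrightarrow Y_{i}$ and the change of variable $d\xi=\frac{dz}{2\pi i\,z}$, the limiting coefficients are exactly $p'(\underline{\phi}^{(p)})$, and $p'(\underline{\gamma})=0$ removes the $X_{0}$-term. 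Realizing the abstract generators $\frac{dz}{z},\frac{dz}{z-1}$ of $A$ as honest forms through $f^{A}$ identifies the result with $(f^{A})_{*}(p')_{*}C_{Ell}^{\tau}$, where $C_{Ell}^{\tau}$ is the Maurer--Cartan element of Proposition \ref{modelelliptic curve2formal}.

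Then I would invoke the computation carried out just before the statement, namely $(p')_{*}C_{Ell}^{\tau}=(\mathrm{Id}\otimes Q^{*})C_{KZ}$ in $\operatorname{Rat}^{\bullet}_{\Q(2\pi i)}(\C,\{0,1\})\widehat{\otimes}\widehat{\mathbb{L}}(Y_{0},Y_{1})$. Since $\mathrm{Id}\otimes Q^{*}$ is the identity on the form factor and $f^{A}$ acts only on that factor, the two commute, giving $(f^{A})_{*}(\mathrm{Id}\otimes Q^{*})C_{KZ}=(\mathrm{Id}\otimes Q^{*})(f^{A})_{*}C_{KZ}=(\mathrm{Id}\otimes Q^{*})\omega_{KZ,1}$, using $(f^{A})_{*}C_{KZ}=\omega_{KZ,1}$. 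Chaining these equalities with the previous paragraph yields $\lim_{\tau\to i\infty}\omega_{KZB,1}^{\tau}=(\mathrm{Id}\otimes Q^{*})\omega_{KZ,1}$, as desired.

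The main obstacle is justifying the first equality rigorously: that the analytic limit commutes with the homotopy-transfer and realization maps, i.e. that $\pi'$ genuinely models the degeneration of $\mathcal{E}_{\tau}^{\times}$ onto the first-order Tate curve $\C-\{0,1\}$, and then reconciling the two Bernoulli conventions ($B_{l}=B'_{l}$ for $l\neq1$, together with $B_{1}=\tfrac12=-B'_{1}$) with the factor $2\pi i$ relating $d\xi$ and $\frac{dz}{z}$. As a sanity check I would run the direct substitution: using the partial fraction $\frac{1}{z(z-1)}=\frac{1}{z-1}-\frac{1}{z}$, the $\frac{dz}{z}$-coefficient of the limit collapses to $-\sum_{l\geq0}\frac{(2\pi i)^{l-1}B'_{l}}{l!}\operatorname{Ad}^{l}_{Y_{0}}(Y_{1})=Q^{*}(Z_{0})$ and the $\frac{dz}{z-1}$-coefficient to $-\operatorname{Ad}_{Y_{0}}(Y_{1})=Q^{*}(Z_{1})$, recovering $(\mathrm{Id}\otimes Q^{*})\omega_{KZ,1}$ term by term and confirming the conceptual argument.
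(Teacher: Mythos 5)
Your proposal is correct and takes essentially the same route as the paper's proof: both identify the limit $\tau\to i\infty$ with the algebraic specialization $q=0$ implemented by the strict morphism $p'=\pi'\circ i$, invoke the identity ${p'}_{*}C_{Ell}^{\tau}=q_{*}C_{KZ}=\left(\mathrm{Id}\otimes Q^{*}\right)C_{KZ}$ computed just before the statement, and pass through the realization maps ($f^{B}$, $f^{A}$) to land on $q_{*}\omega_{KZ,1}$. Your write-up merely supplies the justifications (termwise $q\to 0$ convergence of the coefficients $g^{(l)}$ and the Bernoulli-convention bookkeeping) that the paper's terse five-line chain of equalities leaves implicit.
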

\begin{proof}
\begin{align*}
\lim_{\tau\to i\infty }\omega_{KZB,1}^{\tau}& = \lim_{\tau\to i\infty }\left( f^{B}\right)^{*}C_{Ell}^{\tau}\\
& =\left( f^{B}\right)^{*} {p'}_{*}C_{Ell}^{\tau}\\
& =\left( f^{B}\right)^{*} q_{*}C_{KZ}\\
& =\left( f^{A}\right)^{*} q_{*}C_{KZ}\\
& = q_{*}\omega_{KZ,1}.
\end{align*}
\end{proof}
\subsection{A rational Maurer-Cartan element}\label{Arationalconnection}
Let $\tau$ be a fixed element of $\mathbb{H}:=\left\lbrace z\in \C \: : \: \Im(z)>0\right\rbrace $. We have considered the punctured elliptic $\mathcal{E}_{\tau}^{\times}$ curve as the quotient $\left(\C- \left\lbrace \mathbb{Z}+\tau\mathbb{Z}\right\rbrace \right)/\Z^{2}$. Equivalently it can be written as the solution set $\mathcal{E}_{\tau,\mathrm{alg}}^{\times}$ of
\[
y^2=x(x-1)(x-\lambda)
\]
on $\mathbb{P}^{2}$ minus the point at infinity. We set
\[
e_{1}:=\wp\left( \frac{1}{2},\tau\right) ,\: e_{2}:=\wp\left( \frac{\tau}{2},\tau\right) ,\text{ and }e_{3}:=\wp\left( \frac{1+\tau}{2},\tau\right) 
\]
The isomorphism between $\mathcal{E}_{\tau,\mathrm{alg}}^{\times}$ and $\mathcal{E}_{\tau}^{\times}$ is given by
\[
x=\frac{\wp(\xi,\tau)-e_{1}}{e_{2}-e_{1}},\quad y=\frac{\wp'(\xi,\tau)}{(e_{2}-e_{1})^{\frac{1}{2}}},\quad\lambda=j(\tau),
\]
where $j$ is the elliptic $j$-function. The two regular algebraic differential forms
\[
\frac{dx}{y},\quad \frac{xdx}{y}
\]
are well-defined and they generate the cohomology of $\mathcal{E}_{\tau,\mathrm{alg}}^{\times}$. Their pullback gives two holomorphic $1$-forms
\[
2(e_{2}-e_{1})^{\frac{1}{2}}d\xi,\quad\frac{ 2\wp(\xi,\tau)-2e_{1}}{(e_{2}-e_{1})^{\frac{1}{2}}}d\xi 
\]
 We denote by $A=\C\oplus\C d\xi\oplus \C\wp(\xi,\tau)d\xi$ the unital differential graded sub algebra generated by $d\xi$ and $\wp(\xi,\tau)d\xi$. We equipped this differential graded algebra with the obvious Hodge decomposition $\left(W',0 \right) $, where $W'=A$. We consider ${W'}^{1}$ equipped with the basis $2(e_{2}-e_{1})^{\frac{1}{2}}d\xi,\frac{ 2\wp(\xi,\tau)-2e_{1}}{(e_{2}-e_{1})^{\frac{1}{2}}}d\xi$. Let $\mathbb{L}({W'}^{*}_{+}[1])$ be the free Lie algebra with degree zero generator $T_{1}$ and $T_{2}$. In particular the identity map $W'\to A$ gives a $1$-minimal model ${g'}_{\bullet}\: :\: \left(W', m_{\bullet}^{W'} \right) \to A$ (here $m_{\bullet}^{W'}=0$). The degree zero geometric connection associated to $\mathcal{F}(g')$ is given by
\[
C_{\wp}^{\tau}:=2(e_{2}-e_{1})^{\frac{1}{2}}d\xi T_{0} + \frac{ 2\wp(\xi,\tau)-2e_{1}}{(e_{2}-e_{1})^{\frac{1}{2}}}d\xi T_{1}\in A \otimes\mathbb{L}({W'}^{*}_{+}[1])
\]
 We denote the Weierstrass zeta function by $\zeta\left(\xi, \tau \right)$.  It satisfies
\begin{eqnarray*}
\frac{\partial}{\partial \xi}\zeta\left(\xi, \tau \right)& =& -\wp(\xi,\tau),\\
\zeta\left(\xi+1, \tau \right)& =& \zeta\left(\xi, \tau \right)+2\eta_{1},\quad \eta_{1}=\zeta\left(\frac{1}{2},\tau \right), \\
\zeta\left(\xi+\tau, \tau \right)& =& \zeta\left(\xi, \tau \right)+2\eta_{2},\quad \eta_{2}=\zeta\left(\frac{\tau}{2},\tau \right), \\
\end{eqnarray*}
 We have
\begin{align*}
D\left( \zeta\left(\xi, \tau \right)-2\eta_{1} \xi\right) & =\wp \left(\xi, \tau \right)d\xi -2  \eta_{1}ds+\gamma\frac{ \left( 2\eta_{1}\tau -2\eta_{2}\right)}{2\pi i}\\
& = \wp \left(\xi, \tau \right)d\xi -2  \eta_{1}ds+\gamma
\end{align*}
since $2\eta_{1}\tau -2\eta_{2}=2\pi i$. Let $\left(B, {m}_{\bullet} \right) $ be the $1$-model constructed in Lemma \ref{modelelliptic curve} and equipped with the decomposition $\left(W, \mathcal{M} \right) $ as in \eqref{elliptic curvedec}. We consider $W$ equipped with the basis $-\gamma$ and $-d\xi$. Since 
\[
A,B\subset \operatorname{Tot}^{\bullet}_{N}\left(A_{DR}\left( \C- \left\lbrace \mathbb{Z}+\tau\mathbb{Z}\right\rbrace \right)_{\bullet}\Z^{2}\right),
\]
we have
\begin{eqnarray*}
[d\xi]& =& \frac{1}{2(e_{2}-e_{1})^{\frac{1}{2}}}\left[2(e_{2}-e_{1})^{\frac{1}{2}}d\xi   \right] \\
\left[ \gamma\right] & =& \frac{-(e_{2}-e_{1})^{\frac{1}{2}}}{2}\left[\frac{ 2\wp(\xi,\tau)-2e_{1}}{(e_{2}-e_{1})^{\frac{1}{2}}} \right]+\frac{2 \eta_{1}-e_{1}}{2(e_{2}-e_{1})^{\frac{1}{2}}}\left[ 2(e_{2}-e_{1})^{\frac{1}{2}}\right]   
\end{eqnarray*}
in $H^{1}\left(\operatorname{Tot}^{\bullet}_{N}\left(A_{DR}\left( \C- \left\lbrace \mathbb{Z}+\tau\mathbb{Z}\right\rbrace \right)_{\bullet}\Z^{2}\right),D \right) $. Consider the Lie algebra morphism\\ $K_{1}^{*}\: : \: \widehat{\mathbb{L}}\left( ({W'}^{1}_{+}[1])^{*}\right) \to \widehat{\mathbb{L}}\left( ({W}_{+}^{1}[1])^{*}\right) $ given by
\[
T_{0}\mapsto -\frac{1}{2(e_{2}-e_{1})^{\frac{1}{2}}}Y_{0}+\frac{2 \eta_{1}-e_{1}}{2(e_{2}-e_{1})^{\frac{1}{2}}}Y_{1}, \quad T_{1}\mapsto \frac{(e_{2}-e_{1})^{\frac{1}{2}}}{2}Y_{1}.
\]
Notice that $(\mathrm{Id}\widehat{\otimes}(K_{1})^{*})C_{\wp}$ corresponds to $$d\xi Y_{0}+\left( 2 \eta_{1}-\wp \left(\xi, \tau \right) d\xi\right)Y_{1}\in A_{DR}^{1}\left( \mathcal{E}_{\tau}^{\times}\right)\widehat{ \otimes} \widehat{\mathbb{L}}\left( ({W}_{+}^{1}[1])^{*}\right).$$
\begin{prop} There exists an isomorphism of free Lie algebra $K^{*}\: : \: \widehat{\mathbb{L}}\left( ({W'}^{1}_{+}[1])^{*}\right) \to \widehat{\mathbb{L}}\left( ({W}_{+}^{1}[1])^{*}\right) $ such that $K^{*}=K_{1}^{*}+\sum_{i=2}^{\infty}K_{i}^{*}$ and
 $\left(d-(\mathrm{Id}\widehat{\otimes}K^{*}) C_{\wp}^{\tau},\mathcal{E}_{\tau}^{\times}\times \widehat{\mathbb{L}}\left(Y_{0},Y_{1} \right)  \right)$ is isomorphic to $\left(d-\omega_{KZB,1}^{\tau}, E_{F} \right) $ via a holomorphic gauge.
\end{prop}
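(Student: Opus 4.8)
The plan is to read this statement as an instance of Theorem \ref{main1} applied to the rational model, and then to make the Lie algebra isomorphism and the gauge explicit. First I would check that $A=\C\oplus\C\,d\xi\oplus\C\,\wp(\xi,\tau)\,d\xi$ is a holomorphic $1$-model of $\operatorname{Tot}_{N}\left(A_{DR}\left(\left(\C-\{\Z+\tau\Z\}\right)_{\bullet}\Z^{2}\right)\right)$ equipped with the trivial Hodge decomposition $(W',0)$ with $W'=A$: the forms $d\xi$ and $\wp\,d\xi$ are $\Z^{2}$-invariant, closed, and $A^{2}=0$, so $A$ is a sub-$C_{\infty}$-algebra with vanishing higher products; the displayed relations expressing $[d\xi]$ and $[\gamma]$ in terms of the $A$-classes show that the inclusion induces an isomorphism on $H^{\le 1}$ and is injective on $H^{2}$. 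Consequently both the identity minimal model ${g'}_{\bullet}\colon(W',0)\to A$ and the $1$-minimal model $g_{\bullet}\colon(W,m_{\bullet}^{W})\to B$ of Lemma \ref{modelelliptic curve} are minimal models of the same simplicial differential graded algebra, and their fibres $\widehat{\mathbb{L}}(T_{0},T_{1})$ and $\widehat{\mathbb{L}}(Y_{0},Y_{1})$ are two presentations of the Malcev Lie algebra $\mathfrak u$ of $\pi_{1}(\mathcal{E}_{\tau}^{\times})$, free of rank two.

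Having set this up, I would produce $K^{*}$ exactly as $Q^{*}$ was produced in Subsection \ref{ComparisonHain}, through the homotopy comparison of Subsection \ref{Morphism of homological pairs} and Lemma \ref{connected maurerCartan}: the two minimal models are isomorphic as $1$-$C_{\infty}$-algebras, and dualizing the isomorphism gives a complete Lie algebra map $K^{*}\colon\widehat{\mathbb{L}}(T_{0},T_{1})\to\widehat{\mathbb{L}}(Y_{0},Y_{1})$. Its linear part is read off from the cohomology comparison and is the explicit $K_{1}^{*}$ written above; in the bases $(T_{0},T_{1})$ and $(Y_{0},Y_{1})$ its matrix is triangular with determinant $-\tfrac14\neq0$, so the leading term is invertible and therefore $K^{*}=K_{1}^{*}+\sum_{i\ge2}K_{i}^{*}$ is automatically an isomorphism of complete free Lie algebras, the higher $K_{i}^{*}$ being the higher components of the $C_{\infty}$-isomorphism. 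By construction $(\mathrm{Id}\,\widehat{\otimes}\,K^{*})C_{\wp}^{\tau}$ and the geometric connection of $B$ are gauge equivalent, and applying $r_{*}$ turns the latter into $\omega_{KZB,1}^{\tau}$ (Theorem \ref{main2}).

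It remains to make the gauge explicit and to check it is holomorphic. The key input is the primitive $\zeta(\xi,\tau)-2\eta_{1}\xi$, for which $D(\zeta-2\eta_{1}\xi)=\wp\,d\xi-2\eta_{1}\,ds+\gamma$ by the Legendre relation $2\eta_{1}\tau-2\eta_{2}=2\pi i$. Since $\zeta-2\eta_{1}\xi$ has lattice-periods $(0,-2\pi i)$, the transformation $\exp\bigl((\zeta-2\eta_{1}\xi)\,\mathrm{ad}_{X_{0}}\bigr)$ matches the factor of automorphy $F_{(n,m)}=\exp(-2\pi i m X_{0})$ of $E_{F}$ (here $X_{0}=Y_{0}$), so it is a genuine holomorphic bundle isomorphism between the trivial bundle and $E_{F}$ over $\mathcal{E}_{\tau}^{\times}$; the same function, used as a primitive for the order-two pole $\wp\,d\xi=-d\zeta$, removes the Weierstrass terms and leaves the logarithmic Kronecker form. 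The point that this gauge, corrected to all orders, stays holomorphic with poles only on the lattice is exactly the content of Theorem \ref{main1}(4) for the holomorphic model $A$: since $\wp$ has a pole of order two and not a logarithmic one, we land in case (4) and not (5), so $F'$ is holomorphic on $\mathcal{E}_{\tau}^{\times}$ but not on the compact torus.

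The hard part will be this last all-orders holomorphicity together with the verification that, after the gauge, the transformed connection is literally $(\mathrm{Id}\,\widehat{\otimes}\,K^{*})C_{\wp}^{\tau}$ and not merely cohomologous to it. Already at the linear level the $\operatorname{Ad}^{0}$-coefficients look different: $\omega_{KZB,1}^{\tau}$ has constant coefficient $-d\xi$ while $(\mathrm{Id}\,\widehat{\otimes}\,K_{1}^{*})C_{\wp}^{\tau}$ still carries $\wp\,d\xi$, so the gauge must act simultaneously in the $X_{0}$-direction to fix the automorphy and through $\zeta$ as a $\wp$-primitive in the $X_{1}$-direction to clear the order-two pole; reconciling the two is governed by the classical identities relating the Kronecker function $F$ to $\wp$ and $\zeta$ (for instance $F(\xi,\eta)F(\xi,-\eta)=\wp(\xi)-\wp(\eta)$ and $\partial_{\xi}\log F=\zeta(\xi+\eta)-\zeta(\xi)-\zeta(\eta)$). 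I expect the cleanest write-up to bypass the explicit all-orders computation by invoking Theorem \ref{main1}(4) together with \cite[Theorem 4.9, Theorem 4.10]{Sibilia1} for existence and holomorphicity, recording only the leading gauge $\exp\bigl((\zeta-2\eta_{1}\xi)\,\mathrm{ad}_{X_{0}}\bigr)$ and the determination of $K_{1}^{*}$ as the concrete content.
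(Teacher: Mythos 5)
Your proposal rests on the same ultimate foundation as the paper but reaches it by a partly different route, and there is one point where it falls short of the stated claim. The paper's proof is purely citational: \cite[Theorem 4.10]{Sibilia1} gives that $\left(d-C_{\wp}^{\tau},\mathcal{E}_{\tau}^{\times}\times\widehat{\mathbb{L}}\left(T_{0},T_{1}\right)\right)$ and $\left(d-\omega_{KZB,1}^{\tau},E_{F}\right)$ are isomorphic, the isomorphism being a fiber isomorphism induced by some Lie algebra isomorphism $L^{*}$ composed with a gauge; then, because the fiber is a free Lie algebra, \cite[Proposition 1.35]{Sibilia1} identifies $L^{*}$ with a $K^{*}$ of the prescribed shape $K_{1}^{*}+\sum_{i\geq2}K_{i}^{*}$ and simultaneously yields that the gauge may be taken holomorphic. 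You instead manufacture $K^{*}$ by the $Q^{*}$-style machinery of Subsection \ref{Morphism of homological pairs} and Lemma \ref{connected maurerCartan}, applied to the two $1$-models $A$ and $B$ of the same simplicial algebra, and you obtain invertibility from the invertibility of the linear part $K_{1}^{*}$ (your determinant $-\tfrac{1}{4}$ computation is correct, and for complete free Lie algebras an invertible linear part does imply isomorphism). This algebraic half of your argument is sound and is more self-contained than the paper's; likewise your preliminary verification that $A$ is a holomorphic $1$-model with trivial Hodge decomposition is exactly the paper's implicit setup.

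The gap is in the final step. The comparison machinery only produces a \emph{smooth} gauge equivalence between $(\mathrm{Id}\widehat{\otimes}K^{*})C_{\wp}^{\tau}$ and the KZB form, and Theorem \ref{main1}(4), which you invoke to upgrade to holomorphicity, produces a holomorphic bundle isomorphism whose fiber component is a priori some \emph{other} Lie algebra isomorphism: nothing in Theorem \ref{main1}(4) says that this fiber component agrees with your $K^{*}$, i.e.\ that your particular $K^{*}$ is the one compatible with a holomorphic (rather than merely smooth) gauge, nor does it by itself deliver the factorized form ``fiber change by $K_{1}^{*}+\text{higher}$ followed by a holomorphic gauge onto the trivial bundle''. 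Closing exactly this is the role of \cite[Proposition 1.35]{Sibilia1} in the paper's proof, so a complete write-up needs that proposition (or an equivalent normalization/uniqueness statement for the fiber component), not just Theorem \ref{main1}(4). A smaller caveat: your candidate gauge $\exp\bigl((\zeta-2\eta_{1}\xi)\operatorname{ad}_{X_{0}}\bigr)$ can only be the leading term, since a gauge lying purely in the $\operatorname{ad}_{X_{0}}$-direction produces coefficients $(\zeta-2\eta_{1}\xi)^{p}/p!$ on $\operatorname{Ad}^{p}_{X_{0}}(X_{1})$, which are not the Eisenstein-type coefficients $g^{(p)}$ of $\omega_{KZB,1}^{\tau}$; you acknowledge this, and the paper indeed never writes the gauge explicitly.
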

\begin{proof}
By \cite[Theorem 4.10]{Sibilia1}, the $\left(d- C_{\wp}^{\tau},\mathcal{E}_{\tau}^{\times}\times \widehat{\mathbb{L}}\left(T_{0},T_{1} \right)  \right)$ and $\left(d-\omega_{KZ,1}^{\tau}, E_{F} \right) $ are isomorphic, where the isomorphism is the composition of a fiber isomorhism induced by a Lie algebra isomorphism $(L^{*})$ and a gauge isomorphism. Since the fiber Lie algebra is a free Lie algebra, we can us \cite[Proposition 1.35]{Sibilia1} to conclude that $L^{*}=K^{*}$ and that $\left(d-(\mathrm{Id}\widehat{\otimes}K^{*}) C_{\wp}^{\tau},\mathcal{E}_{\tau}^{\times}\times \widehat{\mathbb{L}}\left(Y_{0},Y_{1} \right)  \right)$ is isomorphic to $\left(d-\omega_{KZ,1}^{\tau}, E_{F} \right) $ via a holomorphic gauge. The rest follows from \cite[Proposition 1.35]{Sibilia1}
\end{proof}
\begin{rmk}
As pointed out in \cite[Remark 4.11]{Sibilia1}, the morphism $K^{*}$ may be computed explicitly via the formula given in \cite{Prelie}.
\end{rmk}

\section{A connection on the configuration spaces of the punctured torus}\label{cpt}
In \cite{LevBrown} it is constructed a differential graded algebra $A_{n}$ which is a model for the differential graded algebra of complex smooth differential forms on $\operatorname{Conf}_{n}\left(\mathcal{E}_{\tau}^{\times} \right)$, moreover such a model can be constructed for any $\tau\in \mathbb{H}$ and it gives a family of differential graded algebras parametrized by $\tau\in \mathbb{H}$. We fix a $\tau\in \mathbb{H}$. In this section we construct a special $C_{\infty}$-algebra $B_{n}$ called \emph{$1$-extension for $A_{n}$} equipped with a Hodge type decomposition. We use the homotopy transfer theorem and we construct a $1$-minimal model for $B_{n}$. Such a map gives a family of Maurer-Cartan elements (parametrized by $s\in [0,1]$) in the differential graded Lie algebra $A_{DR}\left( \C^{n}-\mathcal{D}\right)\widehat{\otimes}\mathfrak{u}$, where $\mathfrak{u}$ is the Malcev Lie algebra of the pure elliptic braid group (see \cite{Damien}). In particular, for each $s\in [0,1]$ we have a smooth flat connection on $\operatorname{Conf}_{n}\left(\mathcal{E}_{\tau}^{\times} \right)$, and for $s=0$ this connection is the KZB connection on $\operatorname{Conf}_{n}\left(\mathcal{E}_{\tau}^{\times} \right)$ (see \cite{Damien}). In the last subsection we investigate the relation between the KZB connection and the KZ connection. In particular we construct a Lie algebra morphism between the fibers of the two connections and we give an $n$-dimensional version of Proposition \ref{thmfinale0}.

\subsection{$1$-Extensions}\label{1extension}
Let $G$ be a discrete group acting properly and discontinuously on a complex manifold $M$. Consider the action groupoid $M_{\bullet}G$ and assume that the cohomology of $M/G$ is connected and of finite type. By abuse of notation, we denote again by $m_{\bullet}$ the $C_{\infty}$-structure on $ \operatorname{Tot}_{N}\left(A_{DR}(M_{\bullet}G)\right)\otimes \Omega(1)$.
Let $J\subset \left( \operatorname{Tot}_{N}\left(A_{DR}(M_{\bullet}G)\right)\otimes \Omega(1), {m}_{\bullet}\right) $ be a $C_{\infty}$-ideal, i.e for any $k>1$ we have
\[
m_{k}\left(b_{1}, \dots, b_{k} \right)\in J 
\]
if some $b_{i}\in J$. Then $\left( \operatorname{Tot}_{N}\left(A_{DR}(M_{\bullet}G)\otimes \Omega(1)\right)/J, {m}_{\bullet}\right) $ is a $C_{\infty}$-algebra as well and the projection
\[\operatorname{Tot}_{N}\left(A_{DR}(M_{\bullet}G) \otimes \Omega(1)\right)\to \operatorname{Tot}_{N}\left(A_{DR}(M_{\bullet}G)\otimes \Omega(1)\right)/J
\]
is a strict $C_{\infty}$-morphism. Let $r\: : \: \operatorname{Tot}_{N}\left(A_{DR}(M_{\bullet}G)\right)\to A_{DR}(M)$ be the $C_{\infty}$ strict morphism induced by $M_{\bullet}\left\lbrace e\right\rbrace \to M_{\bullet}G$. Assume that $\left( r\otimes Id\right) (J)=0$, then we have a well-defined strict $C_{\infty}$-map
\[
 r\otimes Id\: : \:\operatorname{Tot}_{N}\left(A_{DR}(M_{\bullet}G)\right)\otimes \Omega(1)/J\to A_{DR}(M)\otimes \Omega(1).
\]
For $0\leq s \leq 1$ we denote by 
$$ev^{s}\: : \: \operatorname{Tot}_{N}A_{DR}\left(\left( \C^{n}-\mathcal{D}\right)_{\bullet}\left( \Z^{2n}\right) \right)\otimes \Omega(1)\to \operatorname{Tot}_{N}A_{DR}\left(\left( \C^{n}-\mathcal{D}\right)_{\bullet}\left( \Z^{2n}\right) \right)$$
 the evaluation map; it is a strict morphism of $C_{\infty}$-algebras.
\begin{defi}
Let $J$ be a $C_{\infty}$-ideal as above such that $\left( r\otimes Id\right) \left(J\right)=0=\left( Id\otimes ev_{1}\right)J$. Let $A$ be a model for $ \operatorname{Tot}_{N}\left(A_{DR}(M_{\bullet}G)\right)$. A $C_{\infty}$-algebra $\left(B, m_{\bullet}^{B}\right)$ is \emph{a $1$-extension for $A$} if there exist strict $C_{\infty}$-morphisms $H_{\bullet}\: : \:B\to \operatorname{Tot}_{N}\left(A_{DR}(M_{\bullet}G)\right)\otimes \Omega(1)/J$ and $f_{\bullet}\: : \: B\to A$ such that
\begin{enumerate}
\item  $f_{\bullet}$ induces an isomorphism on the 0-th and 1-th cohomology group and it is injective on the 2-th one, and
\item the diagram
\[
\begin{tikzcd}
& A_{DR}(M)\\
B\arrow{r}{f_{\bullet}}\arrow{rd}{H_{\bullet}}& A\arrow{u}{r}  &  \arrow[l,," Id\otimes ev_{1}"' ]\arrow[ul,," Id\otimes ev_{1}"' ]A_{DR}(M)\otimes \Omega(1)\\
& \left( \operatorname{Tot}_{N}\left(A_{DR}(M_{\bullet}G)\right)\otimes \Omega(1)\right) /J\arrow[ur,," r\otimes Id"' ]\arrow[u,," Id\otimes  ev_{1}"' ].
\end{tikzcd}
\]commutes.
\end{enumerate}
A \emph{compatible Hodge type decomposition for $B$} is a Hodge type decomposition $\left(W,\mathcal{M} \right) $ such that $\left( f(W), f(\mathcal{M})\right) $ is a Hodge type decomposition for $A$.
\end{defi}
Let ${g}_{\bullet}\: : \: \left(W, m_{\bullet}^{W} \right)\to \mathcal{F}\left( B, m_{\bullet}^{B}\right) $ be a $1$-minimal model. By \cite[Corollary 1.25 and Proposition 3.19]{Sibilia1} it corresponds to a Maurer-Cartan element in the $L_{\infty}$-algebra $B\widehat{ \otimes}\mathfrak{u}$, where $\mathfrak{u}$ is the Malcev Lie algebra of $\pi_{1}(M/G)$. We have a commuting diagram of Maurer-Cartan preserving maps
\[
\begin{tikzcd}
& A_{DR}(M)\widehat{\otimes}\mathfrak{u}\\
B\widehat{ \otimes}\mathfrak{u}\arrow{r}{f_{*}}\arrow{rd}{H_{*}}& A\arrow{u}{r_{*}}\widehat{ \otimes}\mathfrak{u}  &  \arrow[l,," \left( Id\otimes ev_{1}\right)_{*}"' ]\arrow[ul,," \left( Id\otimes ev_{1}\right)_{*}"' ]A_{DR}(M)\otimes \Omega(1)\\
& \left( \left( \operatorname{Tot}_{N}\left(A_{DR}(M_{\bullet}G)\right)\otimes \Omega(1)\right)/J\right) \widehat{ \otimes}\mathfrak{u} \arrow[ur,,"\left(r\otimes Id \right)_{*}"' ]\arrow[u,," \left( Id\otimes  ev_{1}\right)_{*}"'].
\end{tikzcd}
\]
In particular $\mathcal{F}\left( \left( r\otimes \mathrm{Id }\right) H_{\bullet}\right) g_{\bullet}\: : \: \left(W, m_{\bullet}^{W} \right)\to \mathcal{F}\left( A_{DR}(M)\otimes \Omega(1),d, \wedge\right)$ defines an homotopy between $ \mathcal{F}\left(rf_{\bullet}\right) g_{\bullet}$ and $\left(\mathrm{Id}\otimes ev_{0} \right)\mathcal{F}\left( \left( r\otimes \mathrm{Id }\right) H_{\bullet}\right)g_{\bullet}$. 
Let $C$ be the Maurer-Cartan elements associated to ${g}_{\bullet}$ (see \cite{Sibilia1}).
\begin{defi}\label{abuse}
We call $H_{*}C$ the \emph{good degree zero geometric connection associated to ${g}_{\bullet}$}.
\end{defi}
For $0\leq s\leq 1$, we set $$C(s):=\left( \mathrm{Id}\otimes \left( ev_{s}H\right) \right)_{*}\left(C \right) \in \left( \operatorname{Tot}^{1}_{N}A_{DR}\left(M_{\bullet}G \right)/J\right) \widehat{\otimes} \mathfrak{u}$$
In particular ${C}(0)=\left(\mathrm{Id}\otimes ev_{0} \right)_{*}r_{*}H_{*}C$ is homotopy equivalent to $C\left( 1\right) =r_{*}f_{*}C$ as a Maurer-Cartan element (see \cite{Sibilia1}).
\begin{thm}\label{thmbundleholonomy}
Assume that $A\subset A_{DR}(M/G)$ and that $\mathcal{F}\left(rf_{\bullet}\right)g_{\bullet} \: : \: \left(W, m_{\bullet}^{W} \right)\to \mathcal{F}\left( A, d, \wedge\right) $ can be lifted to a minimal model for $\left( A, d, \wedge\right)$. For $0\leq s\leq 1$, there exists factor of automorphy
\[
F_{s}\: : \: M\times \mathfrak{u}\to M\times \mathfrak{u}
\] such that $\left(d-{C}(s), E_{{F}_{s}}\right)$ is a well-defined smooth flat connection whose monodromy representation corresponds to the Malcev completion of the fundamental group of $M/G$. \begin{enumerate}
\item If $C(s)$ has holomorphic coefficients, $\left(d-C(s), E_{{F}_{s}}\right)$ is a flat connection on a holomorphic bundle
\item Assume that $M_{\bullet}G=(N-\mathcal{D})_{\bullet}G$, where $\left( N,\mathcal{D}\right) $ is a complex manifold with a normal crossing divisors and $G$ be a group acting holomorphically on $N$ and that preserves $\mathcal{D}$. If $C(s)$ has holomorphic coefficients with logarithmic singularities along $\mathcal{D}$ then Then $F_{s}\: : \:N\times \mathfrak{u}\to N\times \mathfrak{u}$ is holomorphic and $d- C(s)$ is a well-defined holomorphic flat connection with logarithmic singularities on $M/G$ on the holomorphic bundle $E_{F_{s}}$.
\end{enumerate}
For any $s,s'\in [0,1]$ there exists a bundle isomorphism $T$ that preserves the connections $T\: : \: \left(d-C(s), E_{{F}_{s}}\right)\to \left(d-C(s'), E_{{F}_{s'}}\right)$ which is holomorphic (on $N/G$) if $C(s)$ and $C(s')$ are holomorphic flat connections (with logarithmic singularities).
\end{thm}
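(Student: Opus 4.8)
The plan is to reduce the entire statement to the two descent theorems \cite[Theorem 4.9, Theorem 4.10]{Sibilia1} by exhibiting every member of the family as gauge equivalent to a single $G$-invariant Maurer--Cartan element, namely $C(1)=r_{*}f_{*}C$. Once this is done, existence of the connection and of the factors of automorphy $F_{s}$, together with the holomorphic refinements, is an application of \cite[Theorem 4.9]{Sibilia1}, while the comparison isomorphisms $T$ between distinct levels come from \cite[Theorem 4.10]{Sibilia1}.

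First I would treat the endpoint $s=1$. Because $f_{\bullet}\colon B\to A$ is strict and $A\subset A_{DR}(M/G)\subset A_{DR}(M)$, the element $r_{*}f_{*}C$ lies in $A_{DR}(M/G)\widehat{\otimes}\mathfrak{u}$ and is therefore a $G$-invariant Maurer--Cartan element; its associated factor of automorphy $F_{1}$ may be taken trivial since the form already descends. The hypothesis that $\mathcal{F}(rf_{\bullet})g_{\bullet}$ lifts to a minimal model for $(A,d,\wedge)$ is what guarantees that $C(1)$ is the connection form attached to a genuine (untruncated) minimal model, so that by \cite[Theorem 3.16]{Sibilia1} the fiber $\mathfrak{u}$ is the Malcev Lie algebra of $\pi_{1}(M/G)$ and the monodromy of $d-C(1)$ over $M/G$ realizes the Malcev completion. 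This is the feature that must be transported to general $s$.

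Next I would propagate gauge equivalence along the interval. The element $r_{*}H_{*}C$ is Maurer--Cartan in $\bigl(A_{DR}(M)\otimes\Omega(1)\bigr)\widehat{\otimes}\mathfrak{u}$, and an $\Omega(1)$-family of Maurer--Cartan elements in a pronilpotent $L_{\infty}$-algebra is exactly a gauge homotopy: with $C(s)=(\mathrm{Id}\otimes ev_{s})_{*}r_{*}H_{*}C$, integrating the $dt$-component of $r_{*}H_{*}C$ over subintervals produces gauge transformations showing that any two $C(s),C(s')$ are gauge equivalent, and in particular each $C(s)$ is gauge equivalent to the $G$-invariant $C(1)$. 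This refines the endpoint homotopy $C(0)\simeq C(1)$ recorded before Definition \ref{abuse} and is compatible with Lemma \ref{connected maurerCartan}. With this in hand \cite[Theorem 4.9]{Sibilia1} applies for each fixed $s$: gauge equivalence to a $G$-invariant element yields a factor of automorphy $F_{s}\colon M\times\mathfrak{u}\to M\times\mathfrak{u}$ and a well-defined smooth flat connection $(d-C(s),E_{F_{s}})$ on $M/G$, whose monodromy agrees with that of $d-C(1)$ because gauge equivalence preserves monodromy; this is the Malcev completion. Statements (1) and (2) are the holomorphic and logarithmic versions of the same theorem: when $C(s)$ is holomorphic (respectively holomorphic with logarithmic singularities along $\mathcal{D}$), the gauge relating $C(s)$ to $C(1)$ can be chosen (log-)holomorphic, so $F_{s}$ and the bundle $E_{F_{s}}$ inherit the corresponding regularity.

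Finally, for the connection-preserving isomorphism $T$ between two levels $s,s'$ I would invoke \cite[Theorem 4.10]{Sibilia1}: since $C(s)$ and $C(s')$ are gauge equivalent (both being gauge equivalent to $C(1)$), that theorem produces a bundle isomorphism $T\colon (d-C(s),E_{F_{s}})\to(d-C(s'),E_{F_{s'}})$ intertwining the connections, holomorphic on $N/G$ whenever both are holomorphic flat connections (with logarithmic singularities). The main obstacle I anticipate is the bookkeeping in the gauge-homotopy step: one must check that $r_{*}H_{*}C$ remains Maurer--Cartan after passing to the quotient by the $C_{\infty}$-ideal $J$ and restricting via $r$ (here $(r\otimes\mathrm{Id})(J)=0$ is essential), and, more delicately, that the integrated gauge transformation has coefficients of the \emph{same} regularity class as $C(s)$ --- holomorphic, or holomorphic with logarithmic poles along $\mathcal{D}$. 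That regularity control is precisely what the hypotheses of \cite[Theorem 4.9, Theorem 4.10]{Sibilia1} supply, so the real content of the argument is verifying those hypotheses at each $s$ rather than the formal descent itself.
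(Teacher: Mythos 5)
Your proposal is correct and follows essentially the same route as the paper's own proof: the lifting hypothesis makes $C(1)=r_{*}f_{*}C$ the connection of a genuine minimal model (the paper phrases this as the existence of a good homological pair $(C',\delta')$ with $\pi(C')=C(1)$), the family $r_{*}H_{*}C$ is read as a homotopy of Maurer--Cartan elements inducing gauge equivalences among all the $C(s)$, and everything is then delegated to \cite[Theorem 4.9, Theorem 4.10]{Sibilia1}. Your write-up simply makes explicit the bookkeeping (trivial factor of automorphy at $s=1$, integration of the $dt$-component, regularity of the gauge) that the paper's two-line proof leaves implicit.
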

\begin{proof}
Since $\mathcal{F}\left(rf_{\bullet}\right)g_{\bullet}  \: : \: \left(W, m_{\bullet}^{W} \right)\to \mathcal{F}\left( A, d, \wedge\right) $ can be lifted to a minimal model for $\left( A, d, \wedge\right)$, there is a good homological pair $(C', \delta')$ associated to it (see \cite{Sibilia1}) such that $\pi(C')=C\left( 1\right) =r_{*}f_{*}C$ in $A_{DR}\left(M_{\bullet}\right) \widehat{\otimes} \mathfrak{u}$. On the other hand $r_*H_{*}C$ is an homotopy equivalence between Maurer-Cartan elements, and it induces (see \cite{Sibilia1}) a gauge equivalence between $C(0)$ and $C(s')$ for any $s$. The theorem follows from \cite[Theorem 4.9 and Theorem 4.10]{Sibilia1}.
\end{proof}
Since $C(1)$ is the ordinary Chen's flat connection, by  \cite[Theorem 4.10]{Sibilia1} we get the following.
\begin{cor}\label{invariance}
Let $(B', A', {f'}_{\bullet}, {H'}_{\bullet})$  be a $1$-extension for a model $A'\subset A_{DR}(M/G)$. Let ${g'}_{\bullet}$ be a $1$-model for $\mathcal{F}(B')$. Assume that $\mathcal{F}\left(r{f'}_{\bullet}\right){g'}_{\bullet} \: : \: \left(W, m_{\bullet}^{W} \right)\to \mathcal{F}\left( A', d, \wedge\right) $ can be lifted to a minimal model for $\left( A', d, \wedge\right)$. Then the resulting connections are isomorphic with the ones obtained by Theorem \ref{thmbundleholonomy}.
\end{cor}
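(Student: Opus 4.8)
The plan is to reduce the comparison of the two connection families to their common endpoint at $s=1$, where each reduces to the ordinary Chen flat connection, and then to invoke the homotopy invariance of Chen's construction recorded in \cite[Theorem 4.10]{Sibilia1}. First I would apply Theorem \ref{thmbundleholonomy} to the $1$-extension $(B', A', f'_{\bullet}, H'_{\bullet})$ verbatim, exactly as was done for $(B, A, f_{\bullet}, H_{\bullet})$. This produces a family $(d - C'(s'), E_{F'_{s'}})$ of flat connections for $0 \le s' \le 1$ together with bundle isomorphisms identifying them for all $s'$; the same theorem already supplies such isomorphisms within the first family. Consequently it suffices to exhibit a single isomorphism between one member of each family, and I would choose the endpoints $s = s' = 1$.

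Next I would identify these endpoint connections with Chen's connection. By construction $C(1) = r_{*} f_{*} C$ and $C'(1) = r_{*} f'_{*} C'$, where $C$ and $C'$ are the Maurer--Cartan elements in $B \widehat{\otimes} \mathfrak{u}$ and $B' \widehat{\otimes} \mathfrak{u}$ attached to the $1$-minimal models $g_{\bullet}$ and $g'_{\bullet}$, and $\mathfrak{u}$ is the Malcev Lie algebra of $\pi_{1}(M/G)$. The defining diagram of a $1$-extension forces $\mathcal{F}(r f_{\bullet}) g_{\bullet}$ and $\mathcal{F}(r f'_{\bullet}) g'_{\bullet}$ to take values in $\mathcal{F}\left(A_{DR}(M/G)\right)$, and by hypothesis each lifts to a minimal model of the respective submodel $A, A' \subset A_{DR}(M/G)$. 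Hence $C(1)$ and $C'(1)$ are precisely the Maurer--Cartan elements of Chen's power series connection associated to the minimal models of $A$ and $A'$ --- this is the observation recorded immediately before the statement.

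Finally, since $A$ and $A'$ are both models for $A_{DR}(M/G)$, their cohomologies agree and their minimal models are isomorphic, so I would invoke \cite[Theorem 4.10]{Sibilia1}: any two such Chen connections differ by a fiber isomorphism induced by a Lie algebra isomorphism of $\mathfrak{u}$ composed with a gauge transformation. Applying this to $C(1)$ and $C'(1)$ yields a bundle isomorphism intertwining $(d - C(1), E_{F_{1}})$ and $(d - C'(1), E_{F'_{1}})$; composing with the $s$- and $s'$-family isomorphisms from Theorem \ref{thmbundleholonomy} gives, for all $s, s'$, an isomorphism between $(d - C(s), E_{F_{s}})$ and $(d - C'(s'), E_{F'_{s'}})$, which is the assertion. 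The hard part will be the identification in the second paragraph: verifying that the lift of $\mathcal{F}(r f_{\bullet}) g_{\bullet}$ to a minimal model of $A$ genuinely reproduces Chen's classical power series connection up to gauge, which requires matching the homotopy-theoretic data of Subsection \ref{Morphism of homological pairs} with the homological-pair machinery of \cite{Sibilia1} and checking compatibility of the relevant homotopies; once this identification is in place, the invariance statement of \cite[Theorem 4.10]{Sibilia1} applies essentially directly.
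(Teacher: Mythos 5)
Your proposal is correct and takes essentially the same route as the paper: the paper's one-line proof likewise rests on the observation that $C(1)$ is the ordinary Chen flat connection (established in the proof of Theorem \ref{thmbundleholonomy} via the good homological pair) and then invokes the invariance statement of \cite[Theorem 4.10]{Sibilia1}, exactly as you do. You merely make explicit what the paper leaves implicit, namely the reduction to the endpoints $s=s'=1$ through the within-family isomorphisms supplied by Theorem \ref{thmbundleholonomy}.
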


\subsection{Differential forms on $\operatorname{Conf}_{n}\left(\mathcal{E}_{\tau}^{\times} \right)$}\label{sectionformulas}
Let $\tau$ be a fixed element of the upper complex plane $\mathbb{H}$. Let $\mathbb{Z}+\tau\mathbb{Z}$ be the lattice spanned by $1,\tau$. Let $(\xi_{1}, \dots \xi_{n})$ be the coordinates on $\C^{n}$. We define $\xi_{0}:=0$ and for $i=1, \dots, n$ we define $r_{i},s_{i}$ via $\xi_{i}=s_{i}+\tau r_{i}$. We define $\mathcal{D}  \subset \C^{n}$ as 
\[
\mathcal{D}:=\left\lbrace (\xi_{1}, \dots \xi_{n})\: : \: \xi_{i}-\xi_{j}\in \mathbb{Z}+\tau\mathbb{Z} \text{ for some distinct }i,j=0, \dots n  \right\rbrace 
\] 
We define a $\Z^{2n}$-action on $\C^{n}$ via translation, i.e.
\[
\left( \left({l}_{1},m_{1} \right), \dots, \left({l}_{n},m_{n} \right) \right)(\xi_{1}, \dots \xi_{n}):= \left( \xi_{1}+l_{1}+m_{1}\tau, \dots ,\xi_{n}+l_{n}+m_{n}\tau\right) 
\]
Notice that $\mathcal{D}$ is preserved by the action of $\Z^{2n}$. There is a canonical isomorphism
\[
\left( \C^{n}-\mathcal{D}\right) /\left( \Z^{2n}\right) \cong \operatorname{Conf}_{n}\left({\mathcal{E}}_{\tau}^{\times} \right)
\]
since the action is free and properly discontinuous. We denote the action groupoid by $$\left( \C^{n}-\mathcal{D}\right)_{\bullet}\left( \Z^{2n}\right),$$ it is a simplicial manifold equipped with a simplicial normal crossing divisor (see  \cite[Section 3]{Sibilia1}). Its de Rham complex $A_{DR}^{\bullet}(\left( \C^{n}-\mathcal{D}\right)_{\bullet}\left( \Z^{2n}\right))$ is a cosimplicial commutative (non-negatively graded) differential graded algebra. By the simplicial de Rham theorem (see \cite{Dupont2}) and the discussion above we have
\[
H^{\bullet}\left( \operatorname{Tot}_{N}\left(A_{DR}\left( \C^{n}-\mathcal{D}\right)_{\bullet}\left( \Z^{2n}\right) \right) \right) \cong H^{\bullet}\left(\left( \C^{n}-\mathcal{D}\right)/ \left( \Z^{2n}\right), \C \right) \cong H^{\bullet}\left(\operatorname{Conf}_{n}\left({\mathcal{E}}_{\tau}^{\times} \right),\C \right).
\]
By \cite{Getz} the normalized total complex of a cosimplicial commutative algebras carries a natural $C_{\infty}$-structure ${m}_{\bullet}$ (see \cite[Section 2]{Sibilia1} for explicit formulas).  
\begin{prop}
Let $F(\xi,\eta,\tau)$ be the Kronecker function (see Section \ref{ComparisonHain0}), where $\eta$ is a formal variable, $F(\xi,\eta,\tau)$ satisfies the \emph{Fay}'s identity, i.e.
\begin{align}
& F\left(\xi_{1},\eta_{1},\tau \right)F\left(\xi_{2},\eta_{2},\tau \right)=F\left(\xi_{1},\eta_{1}+\eta_{2},\tau \right)F\left(\xi_{1}-\xi_{2},\eta_{2},\tau \right)\nonumber\\
& +F\left(\xi_{2},\eta_{1}+\eta_{2},\tau \right)F\left(\xi_{1}-\xi_{2},\eta_{1},\tau \right).\label{fay} 
\end{align}
\end{prop}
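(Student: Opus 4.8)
The plan is to establish \eqref{fay} by the classical Liouville argument for theta quotients. I fix $\xi_2,\eta_1,\eta_2$ and the once-and-for-all chosen modulus $\tau$, suppress $\tau$ from the notation, and regard the difference of the two sides as a meromorphic function of the single variable $\xi_1$,
\[
\Phi(\xi_1):=F(\xi_1,\eta_1)F(\xi_2,\eta_2)-F(\xi_1,\eta_1+\eta_2)F(\xi_2-\xi_1,\eta_2)-F(\xi_2,\eta_1+\eta_2)F(\xi_1-\xi_2,\eta_1).
\]
The goal is to show $\Phi\equiv 0$, and the three ingredients are the quasi-periodicity \eqref{qper}, the simple-pole/residue structure of $F$ recorded in part~(i) of the Proposition defining $F$, and the vanishing of holomorphic sections of a nontrivial flat degree-$0$ line bundle on the torus $\C/(\Z+\tau\Z)$.

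First I would check the transformation law under the lattice. By \eqref{qper}, every factor of every product is invariant under $\xi_1\mapsto\xi_1+1$, so $\Phi(\xi_1+1)=\Phi(\xi_1)$. Under $\xi_1\mapsto\xi_1+\tau$ the point is that all three products acquire the \emph{same} multiplier $\exp(-2\pi i\eta_1)$: directly for the first and third products, and for the middle product by combining the factor $\exp(-2\pi i(\eta_1+\eta_2))$ from $F(\xi_1,\eta_1+\eta_2)$ with the factor $\exp(2\pi i\eta_2)$ from $F(\xi_2-\xi_1,\eta_2)$, using \eqref{qper} in the form $F(\xi-\tau,\eta)=\exp(2\pi i\eta)F(\xi,\eta)$. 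Hence $\Phi(\xi_1+\tau)=\exp(-2\pi i\eta_1)\Phi(\xi_1)$, so $\Phi$ descends to a holomorphic section of the degree-$0$ flat line bundle $L_{\eta_1}$ on $\C/(\Z+\tau\Z)$ with $\tau$-holonomy $\exp(-2\pi i\eta_1)$, provided $\Phi$ has no poles, which I verify next.

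Next I would localize the poles. As a function of $\xi_1$ each factor has at worst simple poles along the lattice, so modulo $\Z+\tau\Z$ the only candidates are $\xi_1=0$ and $\xi_1=\xi_2$. From the Fourier expansion \eqref{Fourier} one reads $F(\xi,\eta)=\xi^{-1}+O(1)$ near $\xi=0$, with residue $1$ independent of $\eta$, since $\pi i\coth(\pi i\xi)\sim\xi^{-1}$. A short residue computation then shows cancellation at both points: at $\xi_1=0$ the first two products contribute $F(\xi_2,\eta_2)$ and $-F(\xi_2,\eta_2)$, while at $\xi_1=\xi_2$ the last two products contribute $F(\xi_2,\eta_1+\eta_2)$ and $-F(\xi_2,\eta_1+\eta_2)$. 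Therefore $\Phi$ is holomorphic on all of $\C$.

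Finally I would conclude. For $\eta_1\notin\Z+\tau\Z$ the bundle $L_{\eta_1}$ is a nontrivial degree-$0$ line bundle on the elliptic curve and hence has no nonzero holomorphic section; equivalently, a pole-free function with the above two multipliers would, if nonzero, be nowhere vanishing, forcing $\exp(-2\pi i\eta_1)$ to be gauge-trivial, i.e.\ $\eta_1\in\Z+\tau\Z$, a contradiction. Thus $\Phi\equiv0$ for generic $\eta_1$, and since both sides of \eqref{fay} are meromorphic in $\eta_1,\eta_2$ (and are used as formal power series in the $\eta_i$ with meromorphic coefficients), the identity propagates to all values, i.e.\ coefficientwise; an alternative route is to clear the $\theta$-denominators and invoke the Weierstrass three-term theta relation. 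The main obstacle — really the only delicate point — is the multiplier bookkeeping in the second step: one must confirm that the three products are sections of \emph{one and the same} line bundle (this is exactly what forces the arguments $\xi_2-\xi_1$ versus $\xi_1-\xi_2$ and the combinations $\eta_1+\eta_2$), since otherwise $\Phi$ would not lie in a single $H^0$ and the vanishing argument would not apply. The residue cancellations and the appeal to $H^0(L_{\eta_1})=0$ are then routine.
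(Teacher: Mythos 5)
Your proof is correct, but it is not the paper's proof: the paper disposes of this proposition in one line ("it follows from standard properties of theta series"), i.e.\ by clearing the theta denominators and reducing \eqref{fay} to the Weierstrass three-term relation for $\theta$ --- exactly the route you mention only as an alternative in your last sentence. Your Liouville-type argument (check that all three products transform with the same automorphy factor $(1,\exp(-2\pi i\eta_1))$ in $\xi_1$, check that the residues at the only candidate poles $\xi_1\equiv 0$ and $\xi_1\equiv\xi_2$ cancel, then use that a nontrivial degree-zero line bundle on $\C/(\Z+\tau\Z)$ has no nonzero holomorphic sections) is a genuinely different and more self-contained proof: it uses only the quasi-periodicity \eqref{qper} and the simple-pole structure of $F$ already recorded in the paper, rather than an external theta identity, at the price of the genericity-in-$\eta_1$ step and the final meromorphic/coefficientwise continuation in $\eta_1,\eta_2$, which you handle adequately.

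One point deserves to be made explicit: the identity you actually prove has $F(\xi_2-\xi_1,\eta_2)$ in the first term on the right-hand side, whereas \eqref{fay} as printed has $F(\xi_1-\xi_2,\eta_2)$. Your version is the correct one (it is also the form in the Brown--Levin reference): with the printed arguments, the two simple poles at $\xi_1=\xi_2$ on the right-hand side both have residue $+F(\xi_2,\eta_1+\eta_2)$, so they add instead of cancelling while the left-hand side is regular there, and the middle product would transform with multiplier $\exp\left(-2\pi i(\eta_1+2\eta_2)\right)$ under $\xi_1\mapsto\xi_1+\tau$ rather than $\exp(-2\pi i\eta_1)$. So your "multiplier bookkeeping" does more than keep the argument honest: it detects, and silently corrects, a typo in the statement; this is not a gap in your proof but should be stated as a correction rather than left implicit.
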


\begin{proof}
It follows from standard properties of theta series.
\end{proof}
We define the $1$-forms $\phi_{i,j}^{(k)}$ for $k\geq 0$ $i,j=0,1,\dots,n$ as follows. Let $\alpha$ be a formal variable, then
\[
\sum_{k\geq 0}\phi_{i,j}^{(k)}\alpha^{k}:=\alpha F(\xi_{i}-\xi_{j}, \alpha, \tau)d(\xi_{i}-\xi_{j}).
\]
Thanks to the Fourier expansion \ref{Fourier}
\begin{align*}
\phi^{(0)}_{i,j}& =d \xi_{i}-d \xi_{j},\\
\phi^{(1)}_{i,j}& = \left( \pi i  +\frac{2 \pi i }{e^{2 \pi i \left(  \xi_{i}-\xi_{j}\right)}-1}-\left( 2 \pi i \right)^{2}\sum_{n=1}^{\infty}\sum_{n|d}d\left( e^{2 \pi i\frac {n}{d}\left(  \xi_{i}-\xi_{j}\right)}-e^{-2 \pi i\frac {n}{d}\left(  \xi_{i}-\xi_{j}\right)}\right)q^{n}\right)d\left(  \xi_{i}-\xi_{j}\right)  ,\\
\phi^{(l)}_{i,j}& =-\left( \frac{\left( 2 \pi i \right)^{l+1}}{l!} \sum_{n=1}^{\infty}\left( \sum_{n|d}d^{l}\left( e^{2 \pi i\frac {n}{d}\left(  \xi_{i}-\xi_{j}\right) }+(-1)^{l}e^{-2 \pi i\frac {n}{d}\left(  \xi_{i}-\xi_{j}\right) }\right)\right) q^{n} +\frac{B_{l}}{2 \pi i}\right) d\left(  \xi_{i}-\xi_{j}\right) , \text{ for }l>1.
\end{align*}
i.e. $\phi^{(l)}_{i,j}$ are holomorphic $1$-forms for $l\neq 1$ and for $l=1$ the are meromorphic with a pole of order $1$ along the hyperplane $\xi_{i}=\xi_{j}$. 
Let $\Omega(1)$ be the differential graded algebra of polynomials forms on the $1$ dimensional simplex with coordinate $0<u<1$. We define a parametrized $1$-form $\Omega_{u}\left(\xi, \alpha \right) :=\exp(2\pi iur\alpha )F(\xi, \alpha, \tau)d \xi$ on $\C-\left\lbrace \Z+\Z\tau\right\rbrace $. For $0\leq i\leq j\leq n$, we define the  $w(u)_{i,j}^{(k)}\in A_{DR}^{1}\left( \C^{n}-\mathcal{D}\right)\otimes \Omega^{0}(1)$ as 
\begin{equation}\label{generalforms}
\Omega_{u}\left(\xi_{i}-\xi_{j}, \alpha \right):=\sum_{k\geq 0}w(u)^{(k)}_{i,j}\alpha^{k-1}.
\end{equation}
Thanks to the discussion above we get that the $w(u)_{i,j}^{(k)}$ are smooth $u$-valued forms on $\C^{n}$ with logarithmic singularities along $\mathcal{D}$. Notice that 
\begin{enumerate}
\item $w(0)^{(k)}_{i,j}=\phi^{(k)}_{i,j}$ for any $i,j$ distinct, and 
\item $w(1)^{(k)}_{i,j}\in A_{DR}\left(\operatorname{Conf}_{n}\left(\mathcal{E}_{\tau}^{\times} \right) \right)\subset A_{DR}\left(\C^{n}-\mathcal{D} \right) $.
\end{enumerate}
 The quasi-periodicity of $F$ implies that the pullback of the $\Z^{2n}$-action is 
\begin{equation}\label{rel4}
w(u)^{(k)}_{i,j}\left( \xi+{l}_{1}+m_{1}\tau , \dots, \xi_{n}+{l}_{n}+\tau m_{n} \right)= \sum_{p=0}^{k}{w(u)^{(k-p)}_{i,j}}\frac{2\pi i(u-1) \left( l_{i}-l_{j}\right)^{p}}{p!}.
\end{equation}
 The Fay identity gives the following quadratic relations between the $w(u)_{i,j}^{(k)}$
\begin{align}
& \Omega_{u}\left(\xi_{i}-\xi_{l}, \alpha \right)\wedge \Omega_{u}\left(\xi_{j}-\xi_{l}, \beta \right)+\Omega_{u}\left(\xi_{i}-\xi_{i}, \beta \right)\wedge \Omega_{u}\left(\xi_{i}-\xi_{l},\alpha+\beta \right)\nonumber\\
& +\Omega_{u}\left(\xi_{j}-\xi_{l},\alpha+\beta \right)\wedge \Omega_{u}\left(\xi_{i}-\xi_{j}, \alpha \right)=0.\label{rel1}
\end{align}
For distinct indices we have
\begin{eqnarray*}
	w(u)_{i,l}^{a-1}w(u)_{j,l}^{b}& +& w(u)_{i,j}^{(a)}w(u)_{j,l}^{(b-1)}
	+\sum _{m = 0}^{b}\binom{a+b-1-m}{a-1}w(u)_{j,i}^{(m)}w(u)_{i,l}^{(a+b-1-m)}\\&+&\sum_{k=0}^{a}\binom{a+b-1-k}{b-1}w(u)_{j,l}^{(a+b-1-k)}w(u)_{i,j}^{(k)}=0
\end{eqnarray*}
The function $F(\xi, \alpha, \tau)$ satisfies $F(\xi, \alpha, \tau)=-F(-\xi, -\alpha, \tau)$. This implies
\begin{equation}\label{rel2}
w(u)^{(k)}_{i,j}+(-1)^{k}w(u)^{(k)}_{j,i}=0.
\end{equation}
On the other hand we have 
\begin{equation}\label{rel3}
w(u)^{(0)}_{i,j}=w(u)^{(0)}_{i,0}-w(u)^{(0)}_{j,0}.
\end{equation}
 For $i=0,\dots n$ we define the group homomorphism $$\gamma(u)_{i}\: : \: \Z^{2n}\to \left( A_{DR}^0\left( \C^{n}-\mathcal{D} \right)\otimes \Omega^{0}(1),+ \right) $$ via $\gamma(u)_{0}:=0$ and for $j\neq 0$ via $\gamma(u)_{j}\left( \left({l}_{1},m_{1} \right), \dots, \left({l}_{n},m_{n} \right) \right):=2 \pi i m_{j}\otimes (1-u)$. We define $\gamma(u)_{i,j}:=\gamma(u)_{i}-\gamma(u)_{j}$.  For $i=0,\dots ,n$ we define $\nu(u)_{i}\in A_{DR}^{1}\left( \C^{n}-\mathcal{D} \right)\otimes \Omega^{0}(1)$ via $\nu(u)_{0}:=0$ and $\nu(u)_{i}:= 2 \pi i dr_{i}\otimes u $ for $i\neq 0$. We set $\nu(u)_{i,j}:= \nu(u)_{i}-\nu(u)_{j}$. Finally,  for $i=0,\dots ,n$ we define $\beta(u)_{i}\in A_{DR}^{0}\left( \C^{n}-\mathcal{D} \right)\otimes \Omega^{1}(1)$ via $\beta(u)_{0}:=0$ and $\beta(u)_{i}:= 2 \pi i r_{i}\otimes du $ for $i\neq 0$. We set $\beta(u)_{i,j}:= \beta(u)_{i}-\beta(u)_{j}$. We denote by $d$ the differential of $A_{DR}\left( \C^{n}-\mathcal{D} \right)\otimes \Omega(1)$. 
\begin{lem}\label{lemmarel1}
We have the following relations: $d\left( \nu(u)_{i,j}+\beta(u)_{i,j}\right)=dw(u)_{i,j}^{(0)}= 0$ for $0\leq i \leq j \leq n$. For $k>0$ we have
\[
d w(u)_{i,j}^{(k)}=\left( \nu(u)_{i,j}+\beta(u)_{i,j}\right) w(u)_{i,j}^{(k-1)}.
\]
\end{lem}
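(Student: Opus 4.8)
The plan is to work with the generating series $\Omega_u(\xi_i-\xi_j,\alpha)$ directly, rather than with the individual coefficient forms, which makes all the combinatorial bookkeeping transparent and avoids manipulating the Bernoulli-type expansions of the $w(u)^{(k)}_{i,j}$.

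First I would record the two elementary facts I need about the ingredients. The form $F(\xi_i-\xi_j,\alpha,\tau)\,d(\xi_i-\xi_j)$ depends on the variables only through the single difference $\xi_i-\xi_j$ and is, on $\C^n-\mathcal{D}$, a holomorphic $1$-form (logarithmic in the $\alpha^0$-coefficient $\phi^{(1)}_{i,j}$, holomorphic otherwise); writing it as $f(\xi_i-\xi_j)\,d(\xi_i-\xi_j)$ one gets its exterior derivative $f'(\xi_i-\xi_j)\,d(\xi_i-\xi_j)\wedge d(\xi_i-\xi_j)=0$, so it is closed, and since $F$ does not involve the interval coordinate $u$ this closedness persists for the total differential on $A_{DR}(\C^n-\mathcal{D})\otimes\Omega(1)$. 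Second, I would observe that
\[
\nu(u)_{i,j}+\beta(u)_{i,j}=2\pi i\,u\,(dr_i-dr_j)+2\pi i\,(r_i-r_j)\,du=d\bigl(2\pi i\,u\,(r_i-r_j)\bigr),
\]
so this $1$-form is exact, hence closed; this already yields $d(\nu(u)_{i,j}+\beta(u)_{i,j})=0$.

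Next I would differentiate the defining relation $\Omega_u(\xi_i-\xi_j,\alpha)=\exp\!\bigl(2\pi i\,u\,(r_i-r_j)\,\alpha\bigr)\,F(\xi_i-\xi_j,\alpha,\tau)\,d(\xi_i-\xi_j)$. The scalar factor has degree zero and $F\,d(\xi_i-\xi_j)$ is closed, so the Leibniz rule (with no Koszul signs, the coefficient being a function) together with $d\bigl(\exp(2\pi i\,u\,(r_i-r_j)\,\alpha)\bigr)=2\pi i\,\alpha\,\bigl(u\,(dr_i-dr_j)+(r_i-r_j)\,du\bigr)\exp(\cdots)$ and the identification of the previous paragraph gives
\[
d\,\Omega_u(\xi_i-\xi_j,\alpha)=\alpha\,\bigl(\nu(u)_{i,j}+\beta(u)_{i,j}\bigr)\,\Omega_u(\xi_i-\xi_j,\alpha).
\]
Expanding both sides in powers of $\alpha$ via \eqref{generalforms}, the left-hand side is $\sum_{k\geq 0}d\,w(u)^{(k)}_{i,j}\,\alpha^{k-1}$ and the right-hand side is $\sum_{k\geq 0}\bigl(\nu(u)_{i,j}+\beta(u)_{i,j}\bigr)\,w(u)^{(k)}_{i,j}\,\alpha^{k}$. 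Comparing the coefficient of $\alpha^{k-1}$ yields $d\,w(u)^{(k)}_{i,j}=\bigl(\nu(u)_{i,j}+\beta(u)_{i,j}\bigr)\,w(u)^{(k-1)}_{i,j}$ for $k>0$, while the coefficient of $\alpha^{-1}$ (absent on the right) gives $d\,w(u)^{(0)}_{i,j}=0$; the latter also follows at once from $w(u)^{(0)}_{i,j}=\phi^{(0)}_{i,j}=d\xi_i-d\xi_j$.

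I do not expect a serious obstacle here: the entire content lies in setting up the generating-function identity correctly. The one point requiring care is the $\alpha$-shift built into the normalization \eqref{generalforms} (the coefficient of $\alpha^{k-1}$, not of $\alpha^k$), which is precisely what produces the index shift $k\mapsto k-1$ in the recursion and the vanishing in degree $k=0$. The only mildly analytic subtlety is justifying closedness of $F\,d(\xi_i-\xi_j)$ through the logarithmic locus, but since $\mathcal{D}$ has been removed this is a genuine meromorphic $1$-form in a single holomorphic variable on $\C^n-\mathcal{D}$, and the computation above applies verbatim.
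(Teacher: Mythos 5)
Your proposal is correct and is essentially the paper's own proof: the paper's one-line argument is precisely the generating-series identity $d\,\Omega_{u}\left(\xi_{i}-\xi_{j},\alpha\right)=\left(\nu(u)_{i,j}\alpha+\beta(u)_{i,j}\alpha\right)\Omega_{u}\left(\xi_{i}-\xi_{j},\alpha\right)$, which you derive and then expand in powers of $\alpha$. Your added details (closedness of $F\,d(\xi_{i}-\xi_{j})$ as a holomorphic form in the single variable $\xi_{i}-\xi_{j}$, exactness of $\nu(u)_{i,j}+\beta(u)_{i,j}$, and the index shift coming from the $\alpha^{k-1}$ normalization) are exactly what the paper leaves implicit.
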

\begin{proof}
It follows by $d\Omega_{u}\left(\xi_{i}-\xi_{j}, \alpha \right)=\left( \nu(u)_{i,j}\alpha+\beta(u)_{i,j}\alpha \right)\Omega_{u}\left(\xi_{i}-\xi_{j}, \alpha \right) $.
\end{proof}
The action of $\Z^{2n}$ on $\C^{n}-\mathcal{D}$ induces an action on $A_{DR}\left( \C^{n}-\mathcal{D} \right)$ We consider the differential graded algebra $A_{DR}\left( \C^{n}-\mathcal{D} \right)$.  We put the trivial $\Z^{2n}$-action on $\Omega(1)$. We denote the resulting action by $\rho^{c}\: : \: A_{DR}\left( \C^{n}-\mathcal{D} \right)\otimes \Omega(1)\to \Map\left(\Z^{2n}, A_{DR}\left( \C^{n}-\mathcal{D} \right)\otimes \Omega(1)\right)$. 
\medskip
\begin{lem}\label{lemmarel2}
Let $g=\left( \left({l}_{1},m_{1} \right), \dots, \left({l}_{n},m_{n} \right) \right),g'=\left( \left({l'}_{1},{m'}_{1} \right), \dots, \left({l'}_{n},{m'}_{n} \right) \right)\in \Z^{2n}$. For $0\leq i\leq j\leq n$ we have
\[
\rho^{c}(\gamma(u)_{i,j}(g'))(g)=\gamma(u)_{i,j}(g'), \quad \rho^{c}\left( \nu(u)_{i,j}\right)(g)=\nu(u)_{i,j}, 
\]
$\rho^{c}\left( \beta(u)_{i,j}\right)(g)=-d\gamma(u)_{i,j}(g)+\beta(u)_{i,j}$ and 
\[
\rho^{c}\left( w(u)_{i,j}^{(k)}\right)(g)=\sum_{p=0}^{k}{ w(u)_{i,j}^{(k-p)}}\frac{\left( -\gamma(u)_{i,j}(g)\right) ^{p}}{p!}.
\]
Moreover the action preserves the differential and the wedge product, i.e. $\rho^{c}(da)(g)=d\left( \rho^{c}(a)(g)\right) $, $\rho^{c}(ab)(g)=\left( \rho^{c}(a)(g)\right) \left( \rho^{c}(b)(g)\right) $ for any $a,b\in A_{DR}\left( \C^{n}-\mathcal{D} \right)\otimes \Omega(1)$ and $g\in \Z^{2n}$.
\end{lem}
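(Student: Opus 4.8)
The plan is to recognize that $\rho^c(-)(g)$ is, by construction, nothing but the pullback along the translation diffeomorphism $g\colon \C^n-\mathcal{D}\to \C^n-\mathcal{D}$, tensored with the trivial action on $\Omega(1)$. Since each $g$ acts as a diffeomorphism, its pullback $g^{*}$ is a morphism of differential graded algebras: it commutes with $d$ and is multiplicative for the wedge product. Tensoring with the identity on $\Omega(1)$ preserves both properties, so the final assertion of the lemma ($\rho^c(da)(g)=d(\rho^c(a)(g))$ and $\rho^c(ab)(g)=\rho^c(a)(g)\,\rho^c(b)(g)$) holds automatically. The remaining task is therefore purely to evaluate $\rho^c(-)(g)$ on the four families of generating forms, which I would do by direct substitution from the definitions.

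For the three ``linear'' families the computation is immediate. The value $\gamma(u)_{i,j}(g')$ is a scalar multiple of $1\otimes(1-u)$, i.e.\ a form constant in the $\C^n$-directions, and translation fixes constants, so $\rho^c(\gamma(u)_{i,j}(g'))(g)=\gamma(u)_{i,j}(g')$. For $\nu(u)_i=2\pi i\,dr_i\otimes u$ I would use that $g$ sends $r_i\mapsto r_i+m_i$ while $dr_i$ is invariant, whence $\nu(u)_{i,j}$ is invariant. For $\beta(u)_i=2\pi i\,r_i\otimes du$ the same substitution gives $\rho^c(\beta(u)_i)(g)=\beta(u)_i+2\pi i m_i\otimes du$; comparing with $d\gamma(u)_i(g)=d\left(2\pi i m_i\otimes(1-u)\right)=-2\pi i m_i\otimes du$ shows the correction term is exactly $-d\gamma(u)_i(g)$, and subtracting the index-$j$ identity yields the stated formula for $\beta(u)_{i,j}$.

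The only substantive computation is the one for the forms $w(u)^{(k)}_{i,j}$, which I would carry out at the level of the generating series $\Omega_u(\xi_i-\xi_j,\alpha)=\exp\left(2\pi i u(r_i-r_j)\alpha\right)F(\xi_i-\xi_j,\alpha,\tau)\,d(\xi_i-\xi_j)$. Under $g$ one has $\xi_i-\xi_j\mapsto \xi_i-\xi_j+(l_i-l_j)+(m_i-m_j)\tau$ and $r_i-r_j\mapsto r_i-r_j+(m_i-m_j)$. Applying the quasi-periodicity of $F$ from \eqref{qper} (integer shifts act trivially, a $\tau$-shift multiplies by $\exp(-2\pi i\alpha)$, so the $(m_i-m_j)$-fold shift produces $\exp(-2\pi i(m_i-m_j)\alpha)$), together with the transformation $\exp(2\pi i u(m_i-m_j)\alpha)$ of the explicit prefactor, I obtain
\[
\rho^c\left(\Omega_u(\xi_i-\xi_j,\alpha)\right)(g)=\exp\left(2\pi i (u-1)(m_i-m_j)\alpha\right)\,\Omega_u(\xi_i-\xi_j,\alpha).
\]
Expanding the scalar exponential as a power series in $\alpha$, multiplying by $\sum_k w(u)^{(k)}_{i,j}\alpha^{k-1}$, and reading off the coefficient of $\alpha^{k-1}$ produces the Cauchy-product formula in the statement, once one identifies $2\pi i(u-1)(m_i-m_j)=-\gamma(u)_{i,j}(g)$; this reproduces relation \eqref{rel4}.

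I expect no genuine obstacle: the lemma is a bookkeeping exercise once $\rho^c$ is recognized as a pullback. The one point demanding care is the $w$-case, where the exponent $(u-1)$ must be assembled correctly from two sources—the factor $u$ coming from the explicit $\exp(2\pi i u r\alpha)$ prefactor and the $-1$ coming from the quasi-periodicity of $F$—and the Cauchy product must be matched term by term against the powers of $\gamma(u)_{i,j}(g)$.
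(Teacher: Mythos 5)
Your proof is correct and follows essentially the same route as the paper's (much terser) proof: the identities for $\gamma(u)_{i,j}$, $\nu(u)_{i,j}$, $\beta(u)_{i,j}$ and the compatibility with $d$ and the product are immediate from the definition of $\rho^{c}$ as a pullback action tensored with the trivial action on $\Omega(1)$, while the formula for $w(u)^{(k)}_{i,j}$ is exactly the quasi-periodicity (the ``shifting property'') of the Kronecker function $F$ from \eqref{qper}, already recorded as relation \eqref{rel4}. The only difference is that you spell out the generating-series Cauchy-product bookkeeping and the identification $2\pi i(u-1)(m_i-m_j)=-\gamma(u)_{i,j}(g)$ that the paper leaves implicit.
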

\begin{proof}
The first two and the last identities are immediate. The third one follows from the shifting property of $F$.
\end{proof}
\subsection{The $1$-extension $B_{n}$}
We define the $C_{\infty}$-algebras $A_{n},B_{n}, {A'}_{n}$ and ${B'}_{n}$. They are rational $C_{\infty}$-algebras, but we consider them as $C_{\infty}$-algebras over $\C$.
In \cite{LevBrown}, a model $A_{n}$ for the configuration space of points of the punctured elliptic curve is constructed. We extend the ideas of \cite{LevBrown} for $C_{\infty}$-algebras.
\begin{defi}\label{defAnBr}
Let ${A}_{n}$ be the complex unital commutative graded algebra generated by the degree 1 symbols $w(1)_{i,j}^{(k)}$, $\nu(1)_{i}$, for $k\geq 0$ $i,j=0,1,\dots,n$
modulo the relations \eqref{rel1}, \eqref{rel2}, \eqref{rel3}. We define a differential $d$ via
$dw(1)_{i,j}^{(0)}=d\nu(1)_{i}=0$ and
\[
dw(1)_{i,j}^{(k)}=\left( \nu(1)_{i}-\nu(1)_{j}\right)w(1)_{i,j}^{(k-1)}
\]
for $k>0$.
\end{defi}
Notice that the elements $w(1)_{i,j}^{(k)}$, $\nu(1)_{i}$, for $k\geq 0$ $i,j=0,1,\dots,n$ denote elements in $A_{DR}\left( \C^{n}-\mathcal{D}\right)$ as well (see previous section).\\ There is an obvious map $\psi^{1}\: : \: A_{n}\to A_{DR}\left(\operatorname{Conf}_{n}\left(\mathcal{E}_{\tau}^{\times} \right) \right)\subset A_{DR}\left(\C^{n}-\mathcal{D} \right)$ defined by
\[
\psi^{1}\left(w(1)_{i,j}^{(k)} \right):=w(1)_{i,j}^{(k)},\quad \psi^{1}\left(\nu(1)_{i} \right):=\nu(1)_{i}
\]
for any $i,j,k$.
\begin{thm}[\cite{LevBrown}]\label{model}
The map $\psi^{1}\: : \: A_{n}\to A_{DR}\left(\operatorname{Conf}_{n}\left(\mathcal{E}_{\tau}^{\times} \right) \right)$ is an inclusion and a quasi-isomorphism.
\end{thm}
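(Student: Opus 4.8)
The plan is to establish the two claims --- injectivity of $\psi^{1}$ and the isomorphism on cohomology --- by rather different means. For injectivity I would first reduce every monomial in the generators to a normal form. Using the symmetry \eqref{rel2} I may restrict to factors $w(1)_{i,j}^{(k)}$ with $i<j$; using \eqref{rel3} I may eliminate each degree-zero form $w(1)_{i,j}^{(0)}$ with $i,j\neq 0$ in favour of $w(1)_{i,0}^{(0)}$ and $w(1)_{j,0}^{(0)}$; and using the quadratic Fay relations \eqref{rel1} I may rewrite any product of two forms sharing an index so that, after finitely many steps, the indices occur in a fixed admissible order. This yields a spanning set of admissible monomials. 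To see that their images are linearly independent in $A_{DR}\left(\operatorname{Conf}_{n}\left(\mathcal{E}_{\tau}^{\times}\right)\right)$, I would use the pole structure of the Kronecker forms recorded in Subsection \ref{sectionformulas}: only $w(1)_{i,j}^{(1)}$ has a pole (of order one) along the diagonal $\xi_{i}=\xi_{j}$, so distinct admissible monomials are distinguished by their orders of pole along the various diagonals, and a hypothetical relation, localized successively along each diagonal, forces all coefficients to vanish.

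The cohomology isomorphism I would prove by induction on $n$, using the Fadell--Neuwirth fibration
\[
p_{n}\: : \:\operatorname{Conf}_{n}\left(\mathcal{E}_{\tau}^{\times}\right)\to\operatorname{Conf}_{n-1}\left(\mathcal{E}_{\tau}^{\times}\right),\qquad(\xi_{1},\dots,\xi_{n})\mapsto(\xi_{1},\dots,\xi_{n-1}),
\]
whose fibre over $(\xi_{1},\dots,\xi_{n-1})$ is the elliptic curve punctured at the $n$ points $0,\xi_{1},\dots,\xi_{n-1}$. The base case $n=1$ is Proposition \ref{Brownmodel}, since $A_{1}=\bar{B}$. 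For the inductive step I would exhibit $A_{n}$ as a free module over the subalgebra $A_{n-1}$ generated by the symbols with indices $<n$, the module basis being the admissible monomials in the last-index generators $w(1)_{n,j}^{(k)}$ and $\nu(1)_{n}$; this is the normal-form analysis above, carried out relative to $A_{n-1}$. Restricting these module generators to a fibre, I would check that they span $H^{\bullet}$ of the $n$-punctured elliptic curve: the forms $w(1)_{n,j}^{(0)}$ all collapse to $d\xi_{n}$, the forms $w(1)_{n,j}^{(1)}$ supply the residue classes around the $n$ punctures (subject to one linear relation forced by \eqref{rel1} and \eqref{rel3}), and $\nu(1)_{n}$ supplies the remaining class, matching the expected dimension $n+1$ of $H^{1}$.

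With freeness over $A_{n-1}$ and the fibrewise basis in hand, the Leray--Hirsch theorem identifies $H^{\bullet}\left(\operatorname{Conf}_{n}\left(\mathcal{E}_{\tau}^{\times}\right)\right)$ as a free $H^{\bullet}\left(\operatorname{Conf}_{n-1}\left(\mathcal{E}_{\tau}^{\times}\right)\right)$-module on the fibre classes, so that $\psi^{1}$ is a quasi-isomorphism as soon as $\psi^{1}\: : \:A_{n-1}\to A_{DR}\left(\operatorname{Conf}_{n-1}\left(\mathcal{E}_{\tau}^{\times}\right)\right)$ is, which is the inductive hypothesis; equivalently, the Leray spectral sequence of $p_{n}$ degenerates at $E_{2}$, which is automatic here because the fibre cohomology is concentrated in degrees $0$ and $1$. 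I expect the main obstacle to be the module-freeness step: one must verify that the twisted differential of Definition \ref{defAnBr} matches, under this decomposition, the differential computing the cohomology of the fibration, and that the relations \eqref{rel1} with one index equal to $n$ create no further collapse among the last-index monomials. This compatibility between Fay's identity and the geometry of adjoining a point is the combinatorial heart of the argument, and is exactly what \cite{LevBrown} carries out.
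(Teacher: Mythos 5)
You should know at the outset that the paper does not prove Theorem \ref{model}: it is quoted from \cite{LevBrown}, and the related in-paper statements (Lemma \ref{linindfunct}, Propositions \ref{Brownmodel} and \ref{injbrown}) likewise delegate their key points to Section 4 of that reference. Your overall outline --- normal forms plus a linear-independence statement for injectivity, induction along the Fadell--Neuwirth fibration for the cohomology --- is indeed the shape of the cited argument, but your central step, Leray--Hirsch, is not merely unjustified here: it is false. Leray--Hirsch requires closed forms on the total space restricting to a basis of $H^{1}$ of the fibre, and your candidates $w(1)^{(1)}_{n,j}$ are not closed, since by Definition \ref{defAnBr} one has $dw(1)^{(1)}_{n,j}=\bigl(\nu(1)_{n}-\nu(1)_{j}\bigr)w(1)^{(0)}_{n,j}\neq 0$. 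No other choice can repair this: the monodromy of the Fadell--Neuwirth fibration on $H^{1}$ of the fibre is a non-trivial unipotent (point-pushing) action --- pushing the puncture $\xi_{j}$ around a cycle of the torus changes the dual cycle class by a loop class around $\xi_{j}$ --- and if Leray--Hirsch held, the restriction $H^{1}\left(\operatorname{Conf}_{n}\left(\mathcal{E}_{\tau}^{\times}\right)\right)\to H^{1}(\text{fibre})$ would be surjective, forcing that action to be trivial. The Betti numbers already refuse to cooperate at $n=2$: your conclusion gives $b_{1}=b_{1}\left(\mathcal{E}_{\tau}^{\times}\right)+b_{1}\left(\mathcal{E}_{\tau}-\left\lbrace 0,\xi_{1}\right\rbrace \right)=2+3=5$, whereas $b_{1}\left(\operatorname{Conf}_{2}\left(\mathcal{E}_{\tau}^{\times}\right)\right)=4$ (by Theorem \ref{vspacedect} the degree-one cohomology is spanned by the classes of $w(1)^{(0)}_{1,0}$, $w(1)^{(0)}_{2,0}$, $\nu(1)_{1}$, $\nu(1)_{2}$; equivalently $\overline{\mathfrak{t}}_{1,3}$ has four generators). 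Your fallback --- that degeneration at $E_{2}$ is automatic when the fibre cohomology sits in degrees $0,1$ --- is false in general (circle bundles with non-zero Euler class) and beside the point anyway: even when the Leray spectral sequence does degenerate, its $E_{2}$-page is $H^{p}\left(\text{base};\mathcal{H}^{q}\right)$ with a \emph{non-trivial} local system $\mathcal{H}^{1}$, not $H^{p}(\text{base})\otimes H^{q}(\text{fibre})$. The twisted differential $dw^{(k)}_{i,j}=\nu_{i,j}w^{(k-1)}_{i,j}$ is precisely the algebraic incarnation of that unipotent local system; computing fibre cohomology with these coefficients and feeding it into the spectral-sequence induction is not the ``compatibility check'' you defer to your last sentence, it is the proof, and it is what \cite{LevBrown} actually carries out.

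The injectivity half has a smaller but still genuine gap. Pole orders along the diagonals cannot separate the generators $w(1)^{(k)}_{i,j}$ with $k\geq 2$: unpacking \eqref{generalforms}, the singular part of $w(1)^{(k)}_{i,j}$ near $\xi_{i}=\xi_{j}$ is $\frac{\left(2\pi i\,r_{ij}\right)^{k-1}}{(k-1)!}\,\frac{d\left(\xi_{i}-\xi_{j}\right)}{\xi_{i}-\xi_{j}}$ with $r_{ij}=\Im\left(\xi_{i}-\xi_{j}\right)/\Im(\tau)$, and since $\left|r_{ij}\right|\leq\left|\xi_{i}-\xi_{j}\right|/\Im(\tau)$ this is bounded for $k\geq 2$. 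Hence all these forms have trivial polar behaviour along every diagonal, and localization says nothing about relations among them (your normal-form reduction via \eqref{rel1}, \eqref{rel2}, \eqref{rel3} is fine; it is the separation step that fails). The mechanism that actually works --- the one behind Lemma \ref{linindfunct}, Proposition \ref{Brownmodel} and \cite[Lemma 15, Corollary 16]{LevBrown}, cited in Proposition \ref{injbrown} --- is that $w(1)^{(k)}_{i,j}$ is a polynomial of exact degree $k$ in the non-meromorphic function $r_{ij}$ whose top coefficient is the fixed form $\frac{(2\pi i)^{k}}{k!}\,d\left(\xi_{i}-\xi_{j}\right)$; equivalently, one exploits the quasi-periodicity \eqref{rel4} and eliminates the top $r$-degree terms of a putative relation first.
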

We define a parametrized version of $A_{n}$.
\begin{defi}\label{defAn}
Let ${A'}_{n}$ be the complex unital commutative graded algebra generates by
\begin{enumerate}
	\item the degree 0 symbol $\tilde{\gamma}(u)$  and,
	\item the degree 1 symbols $w(u)_{i,j}^{(k)}$, $\nu(u)_{i}$, $\beta(u)_{i}$ for $k\geq 0$ $i,j=0,1,\dots,n$
\end{enumerate}
 modulo the relations \eqref{rel1}, \eqref{rel2}, \eqref{rel3} and such that
\[
\beta(u)_{i}\beta(u)_{j}=0, \quad  \nu(u)_{0}=\beta(u)_{0}=0
\]
We denote  $\beta(u)_{i,j}:= \beta(u)_{i}-\beta(u)_{j}$ and $\nu(u)_{i,j}:= \nu(u)_{i}-\nu(u)_{j}$. We define a differential $d$ via the relations of Lemma \ref{lemmarel1}. This makes ${A'}_{n}$ a differential graded commutative algebra.
\end{defi}
 The notation of the generators is justified by the following fact.
\begin{prop}\label{injbrown}
The differential graded algebra map $\Psi\: : \: {A'}_{n}\to A_{DR}\left(\C^{n}-\mathcal{D}\right)\otimes \Omega(1)$ sending $w(u)_{i,j}^{(k)}$,$\nu(u)_{i,j}$ and $\beta(u)_{i,j}$ to the 1-forms represented by these symbols and $\tilde{\gamma}(u)$ to $2 \pi i \otimes (1-u)$ is injective.
\end{prop}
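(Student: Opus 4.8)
The plan is to peel $A'_n$ apart along the two ``new'' directions $\tilde\gamma(u)$ and $\beta(u)_i$ and to reduce everything to the injectivity of $\psi^1$ at $u=1$ (Theorem \ref{model}). First I record the coarse shape of $A'_n$. Since $\beta(u)_i\beta(u)_j=0$ and $\beta(u)_0=0$, every element is uniquely $a+\sum_{i=1}^n b_i\,\beta(u)_i$ with $a,b_i$ in the subalgebra $S$ generated by $\tilde\gamma(u)$, the $w(u)_{i,j}^{(k)}$ and the $\nu(u)_i$. Because $\tilde\gamma(u)$ occurs in no defining relation it is a free polynomial generator in degree $0$, so $S\cong\C[\tilde\gamma(u)]\otimes R'_n$, where $R'_n$ is presented by the $w(u)_{i,j}^{(k)},\nu(u)_i$ modulo \eqref{rel1},\eqref{rel2},\eqref{rel3}; these are verbatim the defining relations of $A_n$ (Definition \ref{defAnBr}), whence a graded-algebra isomorphism $R'_n\cong A_n$, $w(u)\mapsto w(1)$, $\nu(u)\mapsto\nu(1)$. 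Now $\Psi$ sends $\tilde\gamma(u),w(u),\nu(u)$ into $A_{DR}(\C^n-\mathcal D)\otimes\C[u]$ and $\beta(u)_i$ to $2\pi i\,r_i\otimes du$, so it respects the splitting
\[
A_{DR}(\C^n-\mathcal D)\otimes\Omega(1)=\big(A_{DR}(\C^n-\mathcal D)\otimes\C[u]\big)\ \oplus\ \big(A_{DR}(\C^n-\mathcal D)\otimes\C[u]\,du\big).
\]
It therefore suffices to prove $\Psi$ injective on $S$ and on $\bigoplus_i S\,\beta(u)_i$ separately.

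For the $du$-free part I would factor $\Psi|_S$ as $\tilde\Psi\circ\iota$. Here $\iota\colon\C[\tilde\gamma(u)]\otimes A_n\to A_n\otimes\C[u]$ sends $\tilde\gamma(u)^a\otimes m$ to $(2\pi i)^a(1-u)^a\otimes m$ and is injective because the powers $(1-u)^a$ form a $\C$-basis of $\C[u]$; and $\tilde\Psi\colon A_n\otimes\C[u]\to A_{DR}(\C^n-\mathcal D)\otimes\C[u]$ is the $\C[u]$-linear extension of the $u$-family $m\mapsto\Psi_u(m)$, with $\Psi_1(m)=\psi^1(m)$. Thus the statement reduces to the $\C[u]$-linear independence of $\{\Psi_u(m)\}_{m\in\mathcal B}$, for a monomial basis $\mathcal B$ of $A_n$ (the one produced in \cite{LevBrown}). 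I would establish this by filtering by degree in $u$: from $w(u)_{i,j}^{(k)}=\sum_{p=0}^k\frac{(2\pi i\,u(r_i-r_j))^p}{p!}\phi^{(k-p)}_{i,j}$ and $\nu(u)_i=2\pi i\,u\,dr_i$ one reads off that the top $u$-coefficient of $\Psi_u(m)$ is, up to a nonzero scalar, a fixed wedge monomial in the $\phi^{(0)}_{i,j}=d\xi_i-d\xi_j$ and the $dr_i$ weighted by the functions $(r_i-r_j)^{k}$; a $\C[u]$-relation would force these leading symbols to be dependent, and passing to $u=1$ and invoking Theorem \ref{model} rules this out.

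For the part $\bigoplus_i S\,\beta(u)_i$, vanishing of $\Psi(\sum_i b_i\beta(u)_i)=\big(2\pi i\sum_i r_i\,\Psi(b_i)\big)\otimes du$ is equivalent to $\sum_i r_i\,\Psi(b_i)=0$ in $A_{DR}(\C^n-\mathcal D)\otimes\C[u]$. Here I would exploit that the $\Z^{2n}$-translations in the $\tau$-direction move the coordinates by $r_a\mapsto r_a+m_a$ while acting on the $w(u)_{i,j}^{(k)}$ only through the unipotent quasi-periodicity factor of \eqref{rel4} (equivalently \eqref{qper}), which strictly lowers the weight $\sum k$. Filtering by this weight and comparing $\sum_i r_i\Psi(b_i)$ with its translate under the generator shifting $r_i$ isolates, on the associated graded, the single term $\Psi(b_i)$; descending induction on the weight then gives $\Psi(b_i)=0$, and injectivity of $\Psi|_S$ forces $b_i=0$. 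I expect the genuine obstacle to sit in the two filtration arguments: the same form $\phi^{(k-p)}_{i,j}$ occurs in many of the $w(u)_{i,j}^{(k')}$, so the naive coefficients overlap and one must really use the $u$-degree (resp. the $\tau$-quasi-periodicity) to separate them, with Theorem \ref{model} supplying the only non-formal input.
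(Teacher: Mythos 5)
Your two structural reductions are correct, and they constitute a genuinely different route from the paper, whose entire proof of Proposition \ref{injbrown} is to observe that Lemma 15 and Corollary 16 of \cite{LevBrown} apply verbatim in the presence of the extra degree-zero generator $\tilde\gamma(u)$ and the parameter $u$. The splitting $A'_n=S\oplus\bigoplus_i S\,\beta(u)_i$ with $S\cong\C[\tilde\gamma(u)]\otimes A_n$ is right, and so is the reduction to the $du$-free and $du$-components. However, your leading-$u$-coefficient argument for the $du$-free part does not stand as written: the leading symbols of distinct basis monomials need not be linearly independent (for instance the symbols of $w(u)^{(1)}_{1,2}w(u)^{(1)}_{2,0}$ and of $w(u)^{(1)}_{1,0}w(u)^{(1)}_{2,0}$ are both multiples of $d\xi_1\wedge d\xi_2$, weighted by different polynomials in the $r_i$), and Theorem \ref{model} is a statement about the forms $\Psi_1(m)$, not about these symbols, so it cannot be invoked to exclude a dependence among them. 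This half is easily repaired, and without any filtration: given $\sum_m p_m(u)\Psi_u(m)=0$, evaluate at $u=1$ and apply Theorem \ref{model} to get $p_m(1)=0$ for all $m$; then $p_m=(u-1)\tilde p_m$, cancel the factor $(u-1)$ (the $\C[u]$-module $A_{DR}(\C^n-\mathcal{D})\otimes\C[u]$ is free, hence torsion-free), and induct on $\max_m\deg p_m$.

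The genuine gap is in the $\beta$-part, exactly where input beyond Theorem \ref{model} is required. Pulling back $\sum_i r_i\Psi(b_i)=0$ along the translation with $\tau$-components $m=(m_1,\dots,m_n)$ and writing $\rho^c(b_i)(g)=\sum_\alpha b_{i,\alpha}m^\alpha$ (a polynomial in $m$, by \eqref{rel4}), the vanishing of the $m^\beta$-coefficient reads
\[
\sum_i r_i\,\Psi(b_{i,\beta})+\sum_{i\,:\,\beta_i\geq 1}\Psi(b_{i,\beta-e_i})=0,
\]
where $e_i$ is the $i$-th unit multi-index. In top degree this yields only $\sum_i b_{i,\beta-e_i}=0$; it never isolates an individual $\Psi(b_i)$. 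Concretely, for $n=2$ and polynomials of degree at most one in $m$, the system reduces to $b_{1,(1,0)}=b_{2,(0,1)}=0$, $b_{1,(0,1)}=-b_{2,(1,0)}=:-c$, $\Psi(b_1)=-r_2\Psi(c)$, $\Psi(b_2)=r_1\Psi(c)$, and these satisfy $r_1\Psi(b_1)+r_2\Psi(b_2)=0$ identically: the translation identities are consistent for an arbitrary such $c$, so your descending induction does not close. To finish one must know that $\Psi(b)=r_j\Psi(c)$ with $b,c\in S$ forces $c=0$, i.e. that the images of basis monomials are linearly independent over the polynomial ring $\C[r_1,\dots,r_n]$, not merely over $\C$. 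That stronger independence is precisely Lemma 15 of \cite{LevBrown} (for $n=1$ it plays the role of Lemma \ref{linindfunct}); it does not follow formally from the $\C$-linear statement of Theorem \ref{model} via translations, and importing it is exactly what the paper's proof does. So your claim that Theorem \ref{model} supplies the only non-formal input is not correct: you must either cite the Brown--Levin independence, as the paper does, or reprove it (e.g.\ by polarization in the coordinates $(\xi,r)$, showing that holomorphic coefficients times distinct monomials in the $r_i$ are independent), which amounts to redoing their argument rather than bypassing it.
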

\begin{proof}
We use some results of Section 4. of \cite{LevBrown}. We consider the obvious map of commutative differential graded algebra $\Psi\: : \: {A'}_{n}\to A_{DR}\left(\C^{n}-\mathcal{D}\right)\otimes \Omega(1)$. In particular \cite[Lemma 15]{LevBrown} works as well and the same argument of \cite[Corollary 16]{LevBrown} implies that $\Psi$ is injective. 
\end{proof}
Let ${A'}_{n}$ as above. For $i=0,\dots n$ we define the group homomorphism $\gamma(u)_{i}\: : \: \Z^{2n}\to  A_n$ via $\gamma(u)_{0}:=0$ and for $j\neq 0$ via $\gamma(u)_{j}\left( \left({l}_{1},m_{1} \right), \dots, \left({l}_{n},m_{n} \right) \right):= m_{j}\tilde{\gamma}(u)$. We define $\gamma(u)_{i,j}:=\gamma(u)_{i}-\gamma(u)_{j}$.
\begin{lem}\label{Anstruct}
	 We define a map $\rho^{c}\: : \: {A'}_{n}\to  \Map\left(\Z^{2n},  {A'}_{n}\right) $ via the relations of Lemma \ref{lemmarel2}, i.e
\[
\rho^{c}(\tilde{\gamma}(u))(g):=\tilde{\gamma}(u), \quad \rho^{c}\left( \nu(u)_{i,j}\right)(g):=\nu(u)_{i,j}, \quad \rho^{c}\left( \beta(u)_{i,j}\right)(g):=-d\left( \gamma(u)_{i,j}(g)\right) +\beta(u)_{i,j}
\]
and 
\[
\rho^{c}\left( w(u)_{i,j}^{(k)}\right)(g):=\sum_{p=0}^{k}{ w(u)_{i,j}^{(k-p)}}\frac{\left( -\gamma(u)_{i,j}(g)\right) ^{p}}{p!}.
\]
	\begin{enumerate}
		\item $\rho^{c}$ is a $(\Z^{2n})$-action.
		\item We have $d\rho^{c}(a)(g)=\rho^{c}(da)(g)$ and $\rho^{c}(ab)(g)=\rho^{c}(a)(g)\rho^{c}(b)(g)$ for any $a,b\in {A'}_{n}$.
		\item $\Psi$ respects the $\Z^{2n}$-action.
	\end{enumerate}
\end{lem}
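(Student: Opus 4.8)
The plan is to reduce all three claims to the corresponding geometric statements already established in Lemma \ref{lemmarel2}, using that the structure map $\Psi\colon {A'}_{n}\to A_{DR}\left(\C^{n}-\mathcal{D}\right)\otimes\Omega(1)$ of Proposition \ref{injbrown} is injective. Write $\rho^{c}_{\mathrm{geom}}$ for the geometric action on $A_{DR}\left(\C^{n}-\mathcal{D}\right)\otimes\Omega(1)$ of Lemma \ref{lemmarel2}. The key observation is that the defining formulas for $\rho^{c}$ on the generators of ${A'}_{n}$ are, under $\Psi$, identical to the formulas describing $\rho^{c}_{\mathrm{geom}}$; so, once everything is shown to be well defined, each identity to be verified holds after applying the injective map $\Psi$, hence holds in ${A'}_{n}$.

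First I would fix $g\in\Z^{2n}$ and extend the given assignments on generators to an algebra homomorphism $\hat{\rho}(g)$ from the free graded-commutative algebra $F$ on those generators into ${A'}_{n}$. Writing $\pi\colon F\to{A'}_{n}$ for the quotient and $\Phi:=\Psi\circ\pi$, both $\Psi\circ\hat{\rho}(g)$ and $\rho^{c}_{\mathrm{geom}}(-)(g)\circ\Phi$ are algebra homomorphisms $F\to A_{DR}\left(\C^{n}-\mathcal{D}\right)\otimes\Omega(1)$ (the latter because Lemma \ref{lemmarel2} shows $\rho^{c}_{\mathrm{geom}}$ is multiplicative), and they agree on generators by construction, hence everywhere. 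For any relation $R$ in the defining ideal of ${A'}_{n}$ one then has $\Psi\bigl(\hat{\rho}(g)(R)\bigr)=\rho^{c}_{\mathrm{geom}}(\Phi(R))(g)=\rho^{c}_{\mathrm{geom}}(0)(g)=0$, since $\Phi(R)=\Psi(\pi(R))=0$. Injectivity of $\Psi$ forces $\hat{\rho}(g)(R)=0$, so $\hat{\rho}(g)$ descends to an algebra endomorphism $\rho^{c}(-)(g)$ of ${A'}_{n}$. This simultaneously proves that $\rho^{c}(-)(g)$ is well defined, that it is multiplicative (the second identity in (2)), and, by the equality just displayed, that $\Psi\circ\rho^{c}(-)(g)=\rho^{c}_{\mathrm{geom}}(-)(g)\circ\Psi$, which is exactly claim (3).

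With compatibility (3) in hand, the remaining identities are routine transfers. For the differential in (2) I would compute $\Psi\bigl(d\,\rho^{c}(a)(g)\bigr)=d\,\Psi(\rho^{c}(a)(g))=d\,\rho^{c}_{\mathrm{geom}}(\Psi a)(g)=\rho^{c}_{\mathrm{geom}}(d\Psi a)(g)=\rho^{c}_{\mathrm{geom}}(\Psi(da))(g)=\Psi\bigl(\rho^{c}(da)(g)\bigr)$, using that $\Psi$ is a cochain map (Proposition \ref{injbrown}) and that $\rho^{c}_{\mathrm{geom}}$ commutes with $d$ (Lemma \ref{lemmarel2}); injectivity then gives $d\,\rho^{c}(a)(g)=\rho^{c}(da)(g)$. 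For the action axioms in (1), the identity $\rho^{c}(a)(e)=a$ holds on generators because $\gamma(u)_{i,j}(e)=0$ collapses every sum to its $p=0$ term, and it extends by multiplicativity; the cocycle condition $\rho^{c}(a)(gh)=\rho^{c}\bigl(\rho^{c}(a)(g)\bigr)(h)$ is verified by applying $\Psi$, using (3) twice together with the fact that $\rho^{c}_{\mathrm{geom}}$ is a genuine action, and concluding by injectivity.

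I expect no serious obstacle beyond bookkeeping, since the substantive work is already isolated in Proposition \ref{injbrown} (injectivity of $\Psi$) and Lemma \ref{lemmarel2} (the geometric action). The one point requiring care is the descent step: one must check that the formulas respect the quadratic Fay relations \eqref{rel1}, the symmetry relation \eqref{rel2}, the linear relation \eqref{rel3}, and the relations $\beta(u)_{i}\beta(u)_{j}=0$. Rather than verify these quadratic identities by hand—where the $\gamma$-shifted sums interact with the Fay relation in a nontrivial way—the argument above outsources them entirely to the known fact that the relations hold in the de Rham complex and are preserved by $\rho^{c}_{\mathrm{geom}}$, which is precisely why the injectivity of $\Psi$ is the load-bearing input.
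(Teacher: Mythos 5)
Your proof is correct, and it takes a genuinely different route from the paper's. The paper disposes of Lemma \ref{Anstruct} with the single sentence ``The proof is a direct verification'', i.e.\ it checks by hand, inside ${A'}_{n}$, that the formulas on generators are compatible with the defining relations \eqref{rel1}, \eqref{rel2}, \eqref{rel3} and $\beta(u)_{i}\beta(u)_{j}=0$, and then verifies the action axioms, the compatibility with $d$ and with the product, and the equivariance of $\Psi$. You instead transfer everything from the geometric side: your map $\hat{\rho}(g)$ on the free graded-commutative algebra intertwines, under $\Psi$, with the geometric action of Lemma \ref{lemmarel2} (two algebra morphisms out of a free algebra agreeing on generators are equal), and the injectivity of $\Psi$ (Proposition \ref{injbrown}) then forces $\hat{\rho}(g)$ to kill the defining ideal, to commute with $d$, to satisfy the unit and cocycle axioms, and to make $\Psi$ equivariant, all in one stroke; there is no circularity, since Lemma \ref{lemmarel2} and Proposition \ref{injbrown} both precede Lemma \ref{Anstruct} and do not use it. The trade-off is clear: your argument eliminates the only delicate computation (that the $\gamma$-twisted formula for $\rho^{c}\left(w(u)_{i,j}^{(k)}\right)(g)$ is consistent with the Fay relation \eqref{rel1}), and it exhibits claim (3) as part of the construction rather than as a separate check; in exchange, the lemma now rests on the injectivity of $\Psi$, which is the heaviest input of the subsection (imported from \cite{LevBrown}), whereas a direct verification keeps ${A'}_{n}$ and its $\Z^{2n}$-action meaningful as a self-contained algebraic object, independent of its de Rham realization --- the point of view that resurfaces for the formal analogues in Subsection \ref{KZKZB}, where no such injective comparison map is invoked. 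One caveat applies equally to both proofs: like the statement itself, they presuppose that the differential of Definition \ref{defAn} is defined on the degree-zero generator $\tilde{\gamma}(u)$, so that $-d\left(\gamma(u)_{i,j}(g)\right)+\beta(u)_{i,j}$ is a well-defined element of ${A'}_{n}$, and that $\Psi$ is a chain map on it; your argument invokes this exactly where a direct verification would.
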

	\begin{proof}
		The proof is a direct verification.
	\end{proof}
 The (co)nerve of the action defines a cosimplicial commutative differential graded algebra, we denote it by $A^{\bullet,\bullet}$. Concretely $A^{p,\bullet}$ is the differential graded algebra $\Map\left(\left( \Z^{2n}\right)^{p},  {A'}_{n}\right)$. The conormalization $N(A)^{\bullet, \bullet}$ is a bidifferential bigraded module where the second differential $\partial_{\Z^{2n}}$ is induced by the action. 
 By \cite{Getz} the differential graded module $\operatorname{Tot}_{N}(A)$ carries a natural $C_{\infty}$-structure ${m}_{\bullet}$ (see \cite[Section 2]{Sibilia1} for explicit formulas). 
	\begin{defi}\label{generalalgebra}
 We denote by $\left( {B'}_{n},{m}_{\bullet}\right) $ the rational $C_{\infty}$-subalgebra of $\operatorname{Tot}_{N}(A)$ generated by
	\[
w(u)_{i,j}^{(k)},\quad \alpha(u)_{i,j}:=\gamma(u)_{i,j}-\nu(u)_{i,j}-\beta(u)_{i,j}, 
	\]
for $k\geq 0,\,i,j=0,1,\dots,n$.	
\end{defi}
\begin{prop}\label{propquot}
Consider the differential graded algebra $A_{n}$. There is a strict $C_{\infty}$-morphism $p^{1}\: :\: {B'}_{n}\to A_{n}$ defined via
\begin{equation}
\label{morphism}
p^{1}\left(w(u)_{i,j}^{(k)} \right):=w_{i,j}^{(k)},\quad p^{1}\left(\alpha(u)_{i,j}\right):=-\nu_{i,j}
\end{equation}
for any $0\leq i\leq j\leq n$ and $k\geq 0$.
\end{prop}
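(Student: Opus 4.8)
The plan is to obtain $p^{1}$ as a composite of maps that are already at our disposal, rather than to verify the $C_{\infty}$-relations $p^{1}m_{n}^{{B'}_{n}}=m_{n}^{A_{n}}(p^{1})^{\otimes n}$ by hand. Recall that the evaluation
\[
ev^{1}\: : \: \operatorname{Tot}_{N}A_{DR}\left(\left( \C^{n}-\mathcal{D}\right)_{\bullet}\left( \Z^{2n}\right) \right)\otimes \Omega(1)\to \operatorname{Tot}_{N}A_{DR}\left(\left( \C^{n}-\mathcal{D}\right)_{\bullet}\left( \Z^{2n}\right) \right)
\]
is a strict $C_{\infty}$-morphism, and that by Theorem \ref{model} the map $\psi^{1}\: : \: A_{n}\to A_{DR}\left(\operatorname{Conf}_{n}\left(\mathcal{E}_{\tau}^{\times} \right) \right)$ is an injective morphism of differential graded algebras onto the subalgebra generated by the forms $w(1)_{i,j}^{(k)}$ and $\nu(1)_{i}$. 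I would \emph{define} $p^{1}:=(\psi^{1})^{-1}\circ ev^{1}|_{{B'}_{n}}$ and then check that it is well defined, strict, and takes the asserted values.

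First I would evaluate $ev^{1}$ on the generators of ${B'}_{n}$, using the degenerations at $u=1$. Since $\tilde{\gamma}(u)=2\pi i\otimes(1-u)$ we get $ev^{1}(\gamma(u)_{i,j})=0$; since $\beta(u)_{i}=2\pi i\, r_{i}\otimes du$ is killed by the restriction to the point $u=1$ we get $ev^{1}(\beta(u)_{i,j})=0$; and from $\nu(u)_{i}=2\pi i\, dr_{i}\otimes u$ we get $ev^{1}(\nu(u)_{i,j})=\nu(1)_{i,j}$. Consequently
\[
ev^{1}\left(w(u)_{i,j}^{(k)}\right)=w(1)_{i,j}^{(k)},\qquad ev^{1}\left(\alpha(u)_{i,j}\right)=ev^{1}\left(\gamma(u)_{i,j}-\nu(u)_{i,j}-\beta(u)_{i,j}\right)=-\nu(1)_{i,j}.
\]
By fact (2) following \eqref{generalforms} (and relation \eqref{rel4} at $u=1$) these images are $\Z^{2n}$-invariant forms, so they lie in $A_{DR}\left(\operatorname{Conf}_{n}\left(\mathcal{E}_{\tau}^{\times} \right) \right)=A_{DR}(M/G)\subset \operatorname{Tot}_{N}^{0,\bullet}$, and they coincide with $\psi^{1}(w_{i,j}^{(k)})$ and $-\psi^{1}(\nu_{i,j})$.

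The decisive point is that on the $C_{\infty}$-subalgebra $A_{DR}(M/G)$ the Getzler--Cheng structure collapses to the strict differential graded algebra structure $m_{1}=d$, $m_{2}=\wedge$, $m_{n}=0$ for $n\geq 3$. Because $ev^{1}$ is strict and sends the generators of ${B'}_{n}$ into $A_{DR}(M/G)$, the entire image $ev^{1}({B'}_{n})$ is contained in $A_{DR}(M/G)$ and equals the differential graded subalgebra generated under $d$ and $\wedge$ by the $w(1)_{i,j}^{(k)}$ and the $\nu(1)_{i,j}$; using $\nu(1)_{i}=\nu(1)_{i,0}$, together with the match of the relations \eqref{rel1}, \eqref{rel2}, \eqref{rel3} and of the differential (Lemma \ref{lemmarel1} at $u=1$ versus Definition \ref{defAnBr}), this subalgebra is exactly $\psi^{1}(A_{n})$. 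Since $\psi^{1}$ is injective, $(\psi^{1})^{-1}\: : \: \psi^{1}(A_{n})\to A_{n}$ is a well-defined isomorphism of differential graded algebras, hence a strict $C_{\infty}$-isomorphism. Composing, $p^{1}=(\psi^{1})^{-1}\circ ev^{1}|_{{B'}_{n}}$ is strict and has the values \eqref{morphism}.

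I expect the only genuinely delicate step to be the containment $ev^{1}({B'}_{n})\subseteq A_{DR}(M/G)$: this is what simultaneously guarantees that $p^{1}$ needs no higher components (strictness) and that the higher products $m_{n}$, $n\geq 3$, are sent to $0$. The containment is not a statement about individual generators but about the whole subalgebra, and it rests on the fact that $A_{DR}(M/G)$ is a $C_{\infty}$-subalgebra of $\operatorname{Tot}_{N}$ on which the transferred higher operations vanish; once this is granted, everything else reduces to the bookkeeping of the $u=1$ degenerations recorded above. As an alternative, one could instead verify $p^{1}m_{n}^{{B'}_{n}}=m_{n}^{A_{n}}(p^{1})^{\otimes n}$ directly from the explicit $C_{\infty}$-formulas of \cite{Sibilia1}, but the evaluation argument avoids that computation entirely.
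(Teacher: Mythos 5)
Your proof is correct, but it takes a genuinely different route from the paper's. The paper's own argument is purely algebraic and functorial: it defines a differential graded algebra map $q\colon {A'}_{n}\to A_{n}$ on generators by $q(w(u)_{i,j}^{(k)})=w_{i,j}^{(k)}$, $q(\nu(u)_{i,j})=\nu_{i,j}$, $q(\beta(u)_{i,j})=q(\tilde{\gamma}(u))=0$, observes that killing $\tilde{\gamma}(u)$ makes $q$ equivariant when $A_{n}$ carries the trivial $\Z^{2n}$-action, and then invokes functoriality of the Getzler--Cheng construction: the induced map of conerves gives a strict $C_{\infty}$-map on normalized total complexes, whose target (the constant cochains) is $(A_{n},d,\wedge)$; restricting to ${B'}_{n}$ yields $p^{1}$. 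You instead realize $p^{1}$ geometrically as $(\psi^{1})^{-1}\circ ev^{1}\circ H$ --- note that, strictly speaking, you must insert $H=\Psi_{*}$ into your composite: the evaluation at $u=1$ is not defined on the abstract algebra ${B'}_{n}$, only on its de Rham realization, and your formula $\tilde{\gamma}(u)=2\pi i\otimes(1-u)$ is already the image under $\Psi$ of Proposition \ref{injbrown}. With that correction your argument goes through, and there is no circularity: $H$ is just $\Psi_{*}$ restricted to ${B'}_{n}$, built from Proposition \ref{injbrown} and Lemma \ref{Anstruct}, neither of which depends on the present proposition, even though the paper only introduces $H$ afterwards. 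Your route buys the commutativity $ev^{1}\circ H=\psi^{1}\circ p^{1}$ of the diagram in Theorem \ref{1-ext} for free, and it replaces the paper's equivariance check by the equivalent-in-spirit observation that the $u=1$ degenerations are $\Z^{2n}$-invariant forms; the price is a heavier external input, namely the injectivity of $\psi^{1}$ (Theorem \ref{model}, the Brown--Levin result), which the paper's proof never uses. Both arguments ultimately rest on the same key fact, which you correctly isolate as the delicate step: invariant forms (resp.\ constant cochains) form a strict $C_{\infty}$-subalgebra of the total complex on which $m_{n}=0$ for $n\geq 3$, so strictness of the ambient map forces the image of ${B'}_{n}$ into this subalgebra with no higher corrections; for your construction only the containment $ev^{1}H({B'}_{n})\subseteq\psi^{1}(A_{n})$ is actually needed, not the equality you assert (though the equality is true, since $\nu(1)_{i}=\nu(1)_{i,0}$ is hit).
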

\begin{proof}
We construct $p^{1}$ in a functorial way. First we consider the map $q\: :\: {A'}_{n}\to A_{n}$ defined via
\begin{equation*}
q\left(w(u)_{i,j}^{(k)} \right):=w_{i,j}^{(k)},\quad p^{1}\left(\nu_{i,j}\right):=\nu_{i,j},\quad q\left(\beta_{i,j}\right)=0
\end{equation*}
 and $p^{1}(\tilde{\gamma}(u))=0$ for any $0\leq i\leq j\leq n$ and $k\geq 0$. Then $q$ can be extended to a differential graded algebra map. Now consider $A_{n}$ equipped with the trivial $\Z^{2n}$ action. The map $q$ is $\Z^{2n}$ equivariant and it can be extended to a map between differential graded cosimplicial algebras $ q\: :\: \Map\left(\Z^{2n}, {A'}_{n}\right)\to \Map\left(\Z^{2n}, {A}_{n}\right)$. Then $q$ induces a map of $C_{\infty}$-algebras
 \[
 \left( \operatorname{Tot}_{N}\left(\Z^{2n}, {A'}_{n}\right), {m'}_{\bullet}\right) \to \left( \operatorname{Tot}_{N}\left(\Z^{2n}, {A}_{n}\right), {m'}_{\bullet}\right)
 \]
 Moreover this map is strict. The normalized total complex of $\Map\left(\Z^{2n}, {A}_{n}\right)$ is $A_{n}$ and its $C_{\infty}$-structure corresponds to the ordinary differential graded algebra structure of $A_{n}$. In conclusion we have a strict $C_{\infty}$-map
 \[
  q\: : \: \left( \operatorname{Tot}_{N}\left(\Z^{2n}, {A'}_{n}\right), {m}_{\bullet}\right) \to \left( A_{n}, d, \wedge\right) 
  \]
  Then we set $p^{1}:=q|_{{B'}_{n}}$.
\end{proof}
\begin{prop}\label{generalrelation}
 Let $\left( {B'}_{n}, m_{\bullet}\right) $ as above.
 \begin{enumerate}
 	\item The restriction of $m_{2}$ on ${A'}_{n}$ coincides with the wedge product.
 	\item  
 	Let $x_{1}, \dots, x_{l}\in \left\lbrace w(u)_{i,j}^{(k)}, \alpha(u)_{i,j}\:|\:k\geq 0,\,i,j=0,1,\dots,n \right\rbrace $. 
 	\begin{enumerate}
 		\item If there exists at least one $x_{s}$ such that $x_{s}=w(u)_{i,j}^{(k)}$ for some $i,j,k$, we have $$m_{l}(x_{1}, \dots, x_{l})= 0$$ for $l>2$ and $l$ even.
 		\item Let $l>2$ odd. If there exist more than one $x_{s}$ such that $x_{s}=w(u)_{i,j}^{(k)}$ for some $i,j,k$, then
 		$$m_{l}(x_{1}, \dots, x_{l})= 0.$$
 		\item Let $l>2$ odd. If there exist exactly one $x_{s}$ such that $x_{s}=w(u)_{i,j}^{(k)}$ for some $i,j,k$, then
 		\[m_{l}\left(\alpha(u)_{i_{1},j_{1}}, \dots,\alpha(u)_{i_{l-1},j_{l-1}}, w(u)_{i_{l},j_{l}}^{(k)}\right)=m_{l}\left(\gamma(u)_{i_{1},j_{1}}, \dots,\gamma(u)_{i_{l-1},j_{l-1}}, w(u)_{i_{l},j_{l}}^{(k)}\right)
 		\]
 		and $m_{l}\left(\alpha(u)_{i_{1},j_{1}}, \dots,\alpha(u)_{i_{l-1},j_{l-1}}, w(u)_{i_{l},j_{l}}^{(k)}\right)(g)\in {A'}^{1}$ is given by
 		\[
 		\gamma(u)_{i_{1},j_{1}}(g)\cdots\gamma(u)_{i_{l-1},j_{l-1}}(g)\sum_{p=0}^{k} \frac{{w(u)^{(k-p)}_{i_{l},j_{l}}}\left( -\gamma(u)_{i,j}(g)\right)^{p}}{p!}
 		\]
 		for $g\in \Z^{2n}$. Moreover
 		\[
 		m_{l}\left(\alpha(u)_{i_{1},j_{1}}, \dots,\alpha(u)_{i_{l-1},j_{l-1}}, w(u)_{i_{l},j_{l}}^{(0)}\right)=0.
 		\]
 		The $m_{l+1}(w(u)^{(k)}_{i,j},\alpha(u)_{i_{1},j_{1}},\dots, \alpha(u)_{i_{l},j_{l}})$ are invariant under the permutation of the $\alpha(u)$ terms and $m_{l+1}(\alpha(u)_{i_{1},j_{1}}, \dots, \alpha(u)_{i_{r},j_{r}},w(u)^{(k)}_{i,j}, \alpha(u)_{i_{r+1},j_{r+1}} ,\dots,\alpha(u)_{i_{l},j_{l}})$ is given by $$
 		\binom{l}{r}m_{l+1}(w(u)^{(k)}_{i,j},\alpha(u)_{i_{1},j_{1}},\dots, \alpha(u)_{i_{l},j_{l}}).$$
 	\end{enumerate} 
 	\item Let $l>2$, then $m_{l}\left(\alpha(u)_{i_{1},j_{1}}, \dots,\alpha(u)_{i_{l-1},j_{l-1}}, \alpha(u)_{i_{l},j_{l}}\right)(g)$ is a form of degree $2$ with $(2,0)$-part equal $0$, $(1,1)$-part equal to $$-\sum_{r=1}^{l}\binom{l}{r}\gamma(u)_{i_{1},j_{1}}(g)\cdots d\gamma(u)_{i_{r},j_{r}}(g) \cdots \gamma(u)_{i_{l},j_{l}}(g)\in {A'}^{1}$$
 	for $g\in \Z^{2n}$ and $(2,0)$-part equal to
 	\[
 	m_{l}\left(\gamma(u)_{i_{1},j_{1}}, \cdots \gamma(u)_{i_{l},j_{l}} \right)(g_{1},g_{2})
 	\]
 	for $g_{1},g_{2}\in \Z^{2n}$.
 	\item $D\left( w(u)_{i,j}^{(n)}\right) =\sum_{l=1}^{n}(-1)^{l+1}m_{l+1}(\alpha(u)_{i,j},\dots, \alpha(u)_{i,j}, w(u)^{(n-l)}_{i,j})$, for any $n$.
 \end{enumerate}
\end{prop}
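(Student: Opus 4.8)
The plan is to deduce every item from the explicit description of the Getzler--Cheng $C_{\infty}$-structure on the normalized total complex of a cosimplicial commutative differential graded algebra recorded in \cite[Theorem 2.7]{Sibilia1}, treating the whole statement as the multi-index, $u$-parametrized analogue of Lemma \ref{modelelliptic curve}. The crucial structural input is that, in those formulas, the only higher operations that survive are those whose arguments consist of $l$ elements of bidegree $(1,0)$ (the $\gamma(u)$-components) and \emph{at most one} form of bidegree $(0,1)$; moreover such an operation $m_{l+1}$ carries the Bernoulli coefficient $B_{l}/l!$. The relations \eqref{rel4}, Lemma \ref{lemmarel1} and Lemma \ref{lemmarel2} have the same shape as the defining conditions \eqref{conditions} of the $n=1$ case, so most of the bookkeeping transfers once the indices are carried along.

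First I would prove (1): the subalgebra ${A'}_{n}$ sits in bidegree $(0,\bullet)$, and on $\operatorname{Tot}_{N}^{0,\bullet}$ the binary product of \cite[Theorem 2.7]{Sibilia1} reduces to the wedge product of ${A'}_{n}$, since all corrections to $m_{2}$ involve the cosimplicial direction and vanish on arguments with $p=0$. Items (2a) and (2b) are pure vanishing statements read off the same formulas: whenever two or more of the arguments are forms $w(u)_{i,j}^{(k)}$ (each of bidegree $(0,1)$), the operation $m_{l}$ vanishes for $l>2$ exactly as in the step $m_{l}(w_{i_{1}},\dots,w_{i_{l}})=0$ of Lemma \ref{modelelliptic curve}; and when exactly one form is present among $l$ arguments the coefficient is $B_{l-1}/(l-1)!$, which is zero for $l-1$ odd, i.e. for $l$ even, giving (2a).

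For (2c) I would expand $\alpha(u)=\gamma(u)-\nu(u)-\beta(u)$ multilinearly: every term in which some $\alpha(u)$ is replaced by $\nu(u)$ or $\beta(u)$ contributes a second form factor and hence vanishes for $l>2$, so $m_{l}(\alpha(u),\dots,\alpha(u),w(u)^{(k)})=m_{l}(\gamma(u),\dots,\gamma(u),w(u)^{(k)})$; the explicit value, the invariance under permutation of the $\alpha(u)$-entries and the binomial factor $\binom{l}{r}$ are then the multi-index restatement of point (2) of Lemma \ref{modelelliptic curve}, while $m_{l}(\ldots,w(u)^{(0)})=0$ because the relevant sum over $i\geq 1$ is empty. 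Item (3) is the same expansion applied to an all-$\alpha(u)$ string: the $(0,2)$-part vanishes because it would require at least two form factors, which kills $m_{l}$ for $l>2$; the $(2,0)$-part is the pure group-direction operation $m_{l}(\gamma(u),\dots,\gamma(u))$; and the $(1,1)$-part is produced by the single-$d\gamma(u)$ terms through the identity $\rho^{c}(\beta(u))(g)=-d\gamma(u)(g)+\beta(u)$ of Lemma \ref{lemmarel2}, yielding the stated $-\sum_{r}\binom{l}{r}\gamma(u)\cdots d\gamma(u)\cdots\gamma(u)$.

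Finally, (4) follows by assembling the pieces. Writing $D=\partial_{G}+d$ on the bidegree $(0,1)$ element $w(u)_{i,j}^{(n)}$, the de Rham part $dw(u)^{(n)}=(\nu(u)+\beta(u))w(u)^{(n-1)}$ (Lemma \ref{lemmarel1}) is supplied by the $\nu(u)+\beta(u)$ part of the $l=1$ term $m_{2}(\alpha(u),w(u)^{(n-1)})$, while the $\gamma(u)$-contributions of all terms, using (2c), must reproduce $\partial_{G}w(u)^{(n)}=\sum_{p\geq 1}w(u)^{(n-p)}(-\gamma(u))^{p}/p!$ coming from \eqref{rel4}. Matching the coefficient of each $w(u)^{(n-i)}$ reduces, after separating the $l=1$ term (where $B_{1}=\tfrac12$ breaks the pattern), to precisely the Bernoulli generating-function identity $\bigl(1-\tfrac{t}{1-e^{t}}\bigr)(e^{t}-1)=e^{t}-1-t$ already verified in Lemma \ref{modelelliptic curve}. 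The main obstacle is exactly this last matching: it is where the signs $(-1)^{l+1}$, the split-off $l=1$ term, and the Bernoulli combinatorics must all be reconciled, and it is the only place where genuine computation, rather than bookkeeping inherited from the $n=1$ case, is required.
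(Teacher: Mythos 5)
Your proposal is correct and takes essentially the same route as the paper: the paper's own proof disposes of points (1)--(3) by citing the explicit Getzler--Cheng $C_{\infty}$-formulas of \cite[Theorems 2.7 and 2.9]{Sibilia1} and of point (4) by invoking the proof of Lemma \ref{modelelliptic curve}, which is precisely the bidegree bookkeeping, at-most-one-form vanishing, and Bernoulli generating-function identity you spell out. The only difference is one of explicitness: you unfold the mechanisms (multilinear expansion of $\alpha(u)=\gamma(u)-\nu(u)-\beta(u)$, vanishing of $B_{l-1}$ for even $l>2$, the split-off $l=1$ term) that the paper leaves implicit in those two citations.
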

\begin{proof}
The first three points are a consequence of \cite[Theorem 2.7 and 2.9 ]{Sibilia1}. The proof of the last point follows from the proof of Lemma \ref{modelelliptic curve}.  
\end{proof}
By Proposition \ref{injbrown}, for each $k>0$, we have an inclusion $$\Psi_{*}\: : \: \Map\left(  \left( \Z^{2n}\right)^{k} ,{A'}_{n}^{\bullet}\right)\to \Map\left(  \left( \Z^{2n}\right)^{k} ,\left( A_{DR}\left(\C^{n}-\mathcal{D}\right)\otimes \Omega(1)\right)^{\bullet} \right).$$ Since $\Psi$ preserves the $\Z^{2n}$-action, the above map is simplicial and we get an inclusion
\[
\Psi_{*}\: : \: \operatorname{Tot}_{N}\left( A\right) \hookrightarrow\operatorname{Tot}_{N}\left(A_{DR}\left(\left( \C^{n}-\mathcal{D}\right)_{\bullet}\left( \Z^{2n}\right) \right)\right)  \otimes   \Omega(1).
\]
We define $H$ via the commutative diagram
\[
\begin{tikzcd}
\operatorname{Tot}_{N}\left( A\right)  \arrow{r}{\Psi_{*}}&\operatorname{Tot}_{N}\left(A_{DR}\left(\left( \C^{n}-\mathcal{D}\right)_{\bullet}\left( \Z^{2n}\right) \right)\right)\otimes\Omega(1) \\
{B'}_{n}\arrow{ur}{H}\arrow[u,hook,,]
\end{tikzcd}
\]
In particular $H$ is injective.
\begin{thm}\label{theoremmodel}
Consider $H\: : \: {B'}_{n}\to \operatorname{Tot}_{N}\left(A_{DR}\left(\left( \C^{n}-\mathcal{D}\right)_{\bullet}\left( \Z^{2n}\right) \right)\right)\otimes\Omega(1)$. Let $\tilde{W}\subset {B'}_{n}$ be the graded vector space generated by
\begin{enumerate}
\item $1$ in degree zero,
\item $w(u)^{(0)}_{i,0}$, $\alpha(u)_{i,0}$ for $i=1, \dots ,n$ in degree $1$,
\item  
\begin{align*}
m_{2}(w(u)^{(0)}_{i,0},\alpha(u)_{j}),\quad m_{2}(w(u)^{(0)}_{i,0},w(u)_{j}),\quad ,m_{2}(\alpha(u)_{i},\alpha(u)_{j}) 
\end{align*}
 and 
\begin{align*}
&(i<j,j)_{1}:=m_{2}(w(u)^{(1)}_{i,j},\alpha(u)_{j})-m_{2}(w(u)^{(1)}_{i,0},\alpha(u)_{j})-m_{2}(w(u)^{(1)}_{j,0},\alpha(u)_{i})\\
&(i<j,j)_{2}:=m_{2}(w(u)^{(1)}_{i,j},w(u)^{(0)}_{j,0})-m_{2}(w(u)^{(1)}_{i,0},w(u)^{(0)}_{j,0})-m_{2}(w(u)^{(1)}_{j,0},w(u)^{(0)}_{i,0}),
\end{align*}
for $1\leq i< j\leq n$ and by
\begin{align*}
(i<j<k)_{1}:=m_{2}(w(u)^{(1)}_{i,j},\alpha(u)_{k})+m_{2}(w(u)^{(1)}_{i,k},\alpha(u)_{j})+m_{2}(w(u)^{(1)}_{k,j},\alpha(u)_{i})\\
-m_{2}(w(u)^{(1)}_{i,0},\alpha(u)_{k})-m_{2}(w(u)^{(0)}_{i,k},\alpha(u)_{k})-m_{2}(w(u)^{(1)}_{k,0},\alpha(u)_{j})\\
-m_{2}(w(u)^{(1)}_{i,0},\alpha(u)_{j})-m_{2}(w(u)^{(0)}_{j,0},\alpha(u)_{i})-m_{2}(w(u)^{(1)}_{k,0},\alpha(u)_{i}),
\end{align*}
and
\begin{align*}
(i<j<k)_{2}:=m_{2}(w(u)^{(1)}_{i,j},w(u)^{(0)}_{k,0})+m_{2}(w(u)^{(1)}_{i,k},w(u)^{(0)}_{j,0})+m_{2}(w(u)^{(1)}_{k,j},w(u)^{(0)}_{i,0})\\
-m_{2}(w(u)^{(1)}_{i,0},w(u)^{(0)}_{k,0})-m_{2}(w(u)^{(0)}_{i,k},w(u)^{(0)}_{k,0})-m_{2}(w(u)^{(1)}_{k,0},w(u)^{(0)}_{j,0})\\
-m_{2}(w(u)^{(1)}_{i,0},w(u)^{(0)}_{j,0})-m_{2}(w(u)^{(0)}_{j,0},w(u)^{(0)}_{i,0})-m_{2}(w(u)^{(1)}_{k,0},w(u)^{(0)}_{i,0}),
\end{align*}
for $1\leq i<j< k\leq n$ in degree $2$.
\end{enumerate}
Then $H\: : \: \bar{W}\to \operatorname{Tot}_{N}\left(A_{DR}\left(\left( \C^{n}-\mathcal{D}\right)_{\bullet}\left( \Z^{2n}\right) \right)\right)\otimes\Omega(1)$ is an inclusion which is a quasi-isomorphism in degrees $0,1,2$.
\end{thm}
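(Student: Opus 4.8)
The plan is to reduce the statement to the known structure of the Brown--Levin model $A_{n}$. Since the polynomial de Rham algebra $\Omega(1)$ of the interval is acyclic, Künneth together with the simplicial de Rham theorem recalled above identifies the target cohomology,
\[
H^{\bullet}\left(\operatorname{Tot}_{N}\left(A_{DR}\left(\C^{n}-\mathcal{D}\right)_{\bullet}\left(\Z^{2n}\right)\right)\otimes\Omega(1)\right)\cong H^{\bullet}\left(\operatorname{Conf}_{n}\left(\mathcal{E}_{\tau}^{\times}\right),\C\right),
\]
in every degree. Injectivity of $H|_{\tilde{W}}$ is immediate from the injectivity of $H$ established before the statement, once the listed generators are seen to be linearly independent in ${B'}_{n}$, which follows from Proposition \ref{injbrown}. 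Thus everything comes down to a cohomology computation, and the key observation is that the composite $r\circ ev_{1}\circ H$ coincides with the strict morphism $\psi^{1}\circ p^{1}$, where $p^{1}$ is the strict $C_{\infty}$-morphism of Proposition \ref{propquot} and $\psi^{1}$ is the quasi-isomorphism of Theorem \ref{model}; this holds on generators and extends to all of $\tilde{W}$ by strictness. Because $r\circ ev_{1}$ is a quasi-isomorphism (again by acyclicity of $\Omega(1)$) and $\psi^{1}$ is a quasi-isomorphism, the map $H|_{\tilde{W}}$ is an isomorphism on $H^{i}$ for $i\le 2$ if and only if $p^{1}$ sends the chosen generators of $\tilde{W}$ to a basis of $H^{i}(A_{n})$ for $i\le 2$.

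First I would check that every generator of $\tilde{W}$ is $D$-closed. In degree $1$ this uses $dw(u)^{(0)}_{i,j}=0$ (Lemma \ref{lemmarel1}) for $w(u)^{(0)}_{i,0}$, while $\alpha(u)_{i,j}=\gamma(u)_{i,j}-\nu(u)_{i,j}-\beta(u)_{i,j}$ is built precisely so that the group differential, the de Rham differential and the interval differential cancel (Lemma \ref{lemmarel2}), giving $D\alpha(u)_{i,j}=0$. Their images $w(1)^{(0)}_{i,0}$ and $-\nu(1)_{i}$ under $p^{1}$ are the $2n$ independent classes spanning $H^{1}(A_{n})$, and since $\dim H^{1}\left(\operatorname{Conf}_{n}(\mathcal{E}_{\tau}^{\times})\right)=2n$ this settles degree $1$. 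In degree $2$ the plain products $m_{2}(w(u)^{(0)}_{i,0},\alpha(u)_{j,0})$, $m_{2}(w(u)^{(0)}_{i,0},w(u)^{(0)}_{j,0})$ and $m_{2}(\alpha(u)_{i,0},\alpha(u)_{j,0})$ are closed because $m_{2}$ restricts to the wedge product and $D$ is a derivation for it (Proposition \ref{generalrelation}(1)); the subtlety is entirely in the four families $(i<j,j)_{1}$, $(i<j,j)_{2}$, $(i<j<k)_{1}$, $(i<j<k)_{2}$, whose defining linear combinations are dictated by the quadratic Fay relations \eqref{rel1}--\eqref{rel3}, so that the non-closed contributions $D\,m_{2}(w(u)^{(1)}_{\ast,\ast},-)$ produced through Lemma \ref{lemmarel1} cancel term by term.

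It then remains to verify that $p^{1}$ carries these degree-$2$ cocycles to a basis of $H^{2}(A_{n})$. Under $p^{1}$ the three plain families go to $w^{(0)}_{i,0}\wedge\nu_{j}$, $w^{(0)}_{i,0}\wedge w^{(0)}_{j,0}$ and $\nu_{i}\wedge\nu_{j}$, and the four correction families go to exactly the Fay/Arnold-type quadratic combinations that \cite{LevBrown} identifies as a basis of $H^{2}(A_{n})$; linear independence and exhaustion are then read off from that description. Non-exactness of any would-be relation can be confirmed, as in the proof of Lemma \ref{modelelliptic curve}, by expanding $Df$ of a candidate combination as a polynomial in the group variables $m_{j}$ with coefficients in $({A'}_{n})^{1}$ and invoking the injectivity of $\Psi$ (Proposition \ref{injbrown}) to force all coefficients, hence all scalars, to vanish.

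The main obstacle is the degree-$2$ bookkeeping in the second paragraph: one must show that each of the four parametrized correction families is a cocycle in the full total complex, matching the output of $D$ on every $m_{2}(w(u)^{(1)},-)$ summand against the Fay identity, and this is where the explicit higher products of Proposition \ref{generalrelation} must be used carefully. By contrast, independence and exhaustion in degree $2$ require no new Betti-number computation: the reduction to $p^{1}$ and the already established cohomology of $A_{n}$ in \cite{LevBrown} (via Theorem \ref{model}) furnish them directly, so that the genuinely new content is the closedness of the $u$-dependent generators.
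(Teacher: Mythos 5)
Your overall strategy is the paper's: factor $H$ through the strict morphism $p^{1}$ and known quasi-isomorphisms, and thereby reduce the theorem to the statement that $p^{1}$ carries the generators of $\tilde{W}$ to a basis of $H^{i}(A_{n})$ for $i\leq 2$. But the reduction as you justify it is broken at its key step: $r\circ ev_{1}$ is \emph{not} a quasi-isomorphism. Acyclicity of $\Omega(1)$ makes $ev_{1}$ a quasi-isomorphism, but $r\colon \operatorname{Tot}_{N}\left(A_{DR}\left(\left(\C^{n}-\mathcal{D}\right)_{\bullet}\Z^{2n}\right)\right)\to A_{DR}\left(\C^{n}-\mathcal{D}\right)$ has source computing $H^{\bullet}\left(\operatorname{Conf}_{n}\left(\mathcal{E}_{\tau}^{\times}\right),\C\right)$ and target computing $H^{\bullet}\left(\C^{n}-\mathcal{D},\C\right)$, and these disagree already for $n=1$. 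Concretely, $r\,ev_{1}H\left(\alpha(u)_{i,0}\right)=-\nu(1)_{i}=-2\pi i\,dr_{i}=d\left(-2\pi i\,r_{i}\right)$ is exact in $A_{DR}\left(\C^{n}-\mathcal{D}\right)$, so after composing with $r$ the classes $\left[\alpha(u)_{i,0}\right]$ become invisible; for the same reason $\psi^{1}$, viewed as landing in $A_{DR}\left(\C^{n}-\mathcal{D}\right)$ rather than in the invariant forms, is not a quasi-isomorphism either, so your ``if and only if'' does not follow. The repair is to delete $r$: the paper's diagram in Subsection \ref{sectionmodel} is $ev^{1}\circ H=\psi^{1}\circ p^{1}$ with $\psi^{1}$ landing in $A_{DR}\left(\operatorname{Conf}_{n}\left(\mathcal{E}_{\tau}^{\times}\right)\right)\subset \operatorname{Tot}_{N}^{0,\bullet}$, and one uses that this inclusion of invariant forms into the total complex is a quasi-isomorphism (Getzler--Cheng, recalled in Section 1) together with Theorem \ref{model} and acyclicity of $\Omega(1)$. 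With that substitution your reduction becomes exactly the paper's proof.

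There is a second, smaller gap: you assert that linear independence and exhaustion in degree $2$ can be ``read off'' from \cite{LevBrown}. The paper does not find them there; it proves them. The bulk of the work behind Theorem \ref{theoremmodel} is Subsection \ref{sectionproofvspacevect}: one introduces the auxiliary commutative differential graded algebra $\mathcal{Y}_{n}$, exhibits an explicit Hodge-type decomposition of it by a case analysis on the Arnold-type relations, transports it to $A_{n}$ along the quasi-isomorphism $f\colon A_{n}\to\mathcal{Y}_{n}$ (Proposition \ref{prop1}), and then checks that $p^{1}\colon\bar{W}\to W'$ is an isomorphism in degrees $0,1,2$. What \cite{LevBrown} supplies is only Theorem \ref{model} (that $\psi^{1}$ is a quasi-isomorphism) and some linear independence of generators (Proposition \ref{injbrown}); the identification of the specific combinations $(i<j,j)_{1,2}$ and $(i<j<k)_{1,2}$ as representing a basis of $H^{2}$ is new content that your proposal defers to a citation. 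By contrast, the closedness verifications you single out as ``the genuinely new content'' are the comparatively routine part, following from Lemmas \ref{lemmarel1}, \ref{lemmarel2} and Proposition \ref{generalrelation} as you indicate.
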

\begin{proof}
See Subsection \ref{sectionmodel}.
\end{proof}

\begin{defi}
We denote by $J\subset {B'}_{n}$ the $C_{\infty}$-ideal generated by all the $2$-forms
$$m_{l}\left(\alpha(u)_{i_{1},j_{1}}, \dots,\alpha(u)_{i_{l-1},j_{l-1}}, \alpha(u)_{i_{l},j_{l}}\right)$$
for $l>2$. We denote by $B_{n}:={B'}_{n}/J$ the quotient $C_{\infty}$-algebra.
\end{defi}
We denote by $J_{DR}$ the image of $J$ via the map $H$. The map $H$ induces a $C_{\infty}$-strict morphism
$$H\: : \: {B}_{n}\to \operatorname{Tot}_{N}\left(A_{DR}\left(\left( \C^{n}-\mathcal{D}\right)_{\bullet}\left( \Z^{2n}\right) \right)\right)\otimes\Omega(1)/J_{DR}.$$
Let $J_{s,DR}:=ev^{s}\left(J_{DR}\right)$. 
\begin{thm}\label{1-ext}
The diagram
\[
\begin{tikzcd}
{B}_{n}\arrow{rr}{p^{1}}\arrow{drrr}{H}& &{A}_{n}\arrow{r}{\psi^{1}}&A_{DR}\left( \C^{n}-\mathcal{D}\right)\\
& &&\left( \operatorname{Tot}_{N}A_{DR}\left(\left( \C^{n}-\mathcal{D}\right)_{\bullet}\left( \Z^{2n}\right) \right)\otimes \Omega(1)\right)/J_{DR}\arrow{u}{ev^{1}}
\end{tikzcd}
\]
commutes and $B_{n}$ is a $1$-extension for $A_{n}$.
\end{thm}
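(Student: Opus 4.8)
The plan is to verify directly that the quadruple $(B_n, A_n, p^1, H)$ meets the definition of a $1$-extension, which amounts to three tasks: commutativity of the displayed diagram (and of the fuller $1$-extension diagram), the two standing ideal conditions $(r\otimes\mathrm{Id})(J_{DR})=0=(\mathrm{Id}\otimes ev_1)(J_{DR})$, and the cohomological condition that $p^1$ be an isomorphism on $H^0,H^1$ and injective on $H^2$.

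First I would settle commutativity. All the maps involved are strict $C_\infty$-morphisms --- $p^1$ by Proposition \ref{propquot}, $\psi^1$ as a map of differential graded algebras, and $H$ and $ev^1$ by construction --- so both composites $\psi^1\circ p^1$ and $ev^1\circ H$ are strict $C_\infty$-morphisms, and it suffices to compare them on the generators $w(u)^{(k)}_{i,j}$ and $\alpha(u)_{i,j}$ of $B_n$. For the first, $p^1(w(u)^{(k)}_{i,j})=w(1)^{(k)}_{i,j}$ and $\psi^1$ carries this to the form $w(1)^{(k)}_{i,j}$, while $ev^1 H(w(u)^{(k)}_{i,j})$ is the same form evaluated at $u=1$ and restricted to $e\in\Z^{2n}$, namely $w(1)^{(k)}_{i,j}$ again. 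For the second, $p^1(\alpha(u)_{i,j})=-\nu_{i,j}$; at $u=1$ one has $\tilde\gamma(1)=0$ and $ev_1(du)=0$, so the $\gamma$- and $\beta$-contributions to $\alpha(u)_{i,j}$ drop out and $ev^1 H(\alpha(u)_{i,j})=-\nu(1)_{i,j}$ as well. Hence the two composites agree on generators, which proves commutativity; the remaining legs of the $1$-extension diagram are checked identically.

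Next I would verify the ideal conditions, which simultaneously guarantees that the lower legs of the diagram descend through $J_{DR}$. The ideal $J$ is generated by the degree-two elements $m_l(\alpha(u)_{i_1,j_1},\dots,\alpha(u)_{i_l,j_l})$ with $l>2$. By Proposition \ref{generalrelation}(3) each such element has vanishing $(0,2)$-part, so $r\otimes\mathrm{Id}$, which extracts the bidegree $(0,\bullet)$ component, annihilates every generator; since $r\otimes\mathrm{Id}$ is strict and $J$ is a $C_\infty$-ideal, it annihilates all of $J_{DR}$. For $\mathrm{Id}\otimes ev_1$, evaluation at $u=1$ sends each $\alpha(u)_{i,j}$ to $-\nu(1)_{i,j}\in A_{DR}(\operatorname{Conf}_n(\mathcal{E}_\tau^\times))=A_{DR}(M/G)$, a $\Z^{2n}$-invariant form; on this sub-$C_\infty$-algebra the Getzler--Cheng higher products $m_l$ with $l>2$ vanish, so each generator, and therefore all of $J_{DR}$, is killed by $\mathrm{Id}\otimes ev_1$.

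The main step, and the principal obstacle, is the cohomological condition on $p^1$. Here I would invoke Theorem \ref{theoremmodel}, which exhibits the finite-dimensional subspace $\tilde W\subset B_n$ of $D$-cocycles that $H$ maps isomorphically onto a system of representatives for $H^{\le 2}$ of $\operatorname{Tot}_N(A_{DR}((\C^n-\mathcal{D})_\bullet(\Z^{2n})))\otimes\Omega(1)$, whose cohomology is $H^{\le 2}(\operatorname{Conf}_n(\mathcal{E}_\tau^\times))$ because $\Omega(1)$ is contractible. Since the degree-two part of $\tilde W$ consists only of $m_2$-products, it is disjoint from the generators of $J$ (which have arity $>2$), so $\tilde W$ injects into $B_n=B'_n/J$ and continues to compute $H^0,H^1$ and to inject into $H^2$ of $B_n$. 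Combining the commutativity relation $\psi^1\circ p^1=(r\otimes\mathrm{Id})(\mathrm{Id}\otimes ev_1)\circ H$ with the fact that $\psi^1$ is a quasi-isomorphism (Theorem \ref{model}) then forces $p^1$ to be an isomorphism on $H^0,H^1$ and injective on $H^2$. The delicate points I expect to work through are that $J$ is $D$-closed, so that the quotient is a genuine cochain complex --- this follows from $\alpha(u)_{i,j}$ being $D$-closed together with the $C_\infty$-relations and the ideal property --- and that passing to the quotient by $J$ does not turn any class carried by $\tilde W$ in degree $2$ into a coboundary. It is precisely this control of the quotient on $H^2$ that yields injectivity of $p^1$ there (rather than a spurious collapse), and this is where I anticipate the argument requiring the most care.
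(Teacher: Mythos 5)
Your commutativity check and your verification of the two ideal conditions are fine, and they agree with what the paper treats as immediate: strictness reduces everything to generators, $ev^{1}H(\alpha(u)_{i,j})=-\nu(1)_{i,j}=p^{1}(\alpha(u)_{i,j})$ after $\psi^1$, the generators of $J$ have vanishing $(0,2)$-component by Proposition \ref{generalrelation}, and the higher products $m_{l}$, $l>2$, vanish on the invariant forms, so $J_{1,DR}=0$. The gap is in the main cohomological step. You propose to deduce the condition on $p^{1}$ from Theorem \ref{theoremmodel} together with the commutative diagram and the fact that $\psi^{1}$ is a quasi-isomorphism. This deduction fails for two reasons. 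First, the composite $\psi^{1}\circ p^{1}=ev^{1}\circ H$ lands in $A_{DR}(\C^{n}-\mathcal{D})$, the forms on the \emph{cover}, and on cohomology it is not injective: $ev^{1}H(\alpha(u)_{i,0})=-\nu(1)_{i,0}=-2\pi i\,dr_{i}$ is exact in $A_{DR}(\C^{n}-\mathcal{D})$ because $r_{i}$ is a globally defined function there, whereas $p^{1}(\alpha(u)_{i,0})=-\nu_{i,0}$ represents a nonzero class in $H^{1}(A_{n})$. Equivalently, $\psi^{1}$ is a quasi-isomorphism onto $A_{DR}(\operatorname{Conf}_{n}(\mathcal{E}_{\tau}^{\times}))$, but the inclusion of invariant forms into $A_{DR}(\C^{n}-\mathcal{D})$ kills classes, so no injectivity statement about $H^{\bullet}(p^{1})$ can be extracted from the composite. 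Second, and more fundamentally, Theorem \ref{theoremmodel} controls only the subspace $\tilde W$; it says nothing about classes of $H^{1}(B_{n})$ or $H^{2}(B_{n})$ that are not represented in $\tilde W$. Your sentence that $\tilde W$ ``injects into $B_{n}$ and continues to compute $H^{0},H^{1}$ and to inject into $H^{2}$ of $B_{n}$'' is precisely the statement requiring proof, and arity counting on the generators of $J$ does not give it: an ideal contains much more than its generators, and nothing in your argument excludes extra closed forms in $B'_{n}$ or in the quotient.

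The paper closes exactly this hole with Lemma \ref{qisomp1}: any closed element of $\operatorname{Ker}(p^{1})$ of degree $1$ or $2$ vanishes. Its proof is not formal; one writes such an element as a combination of products $m_{l}\bigl(\phi^{(k)}_{i,j},\gamma(u)_{i_{1},j_{1}},\dots\bigr)$ of bidegree $(1,1)$ with $l>2$, applies $\partial_{\Z^{2n}}$, and runs the polynomial-vanishing and linear-independence argument of Lemma \ref{modelelliptic curve} over $\Z^{2n}$ to force all coefficients to be zero. Combined with the compatibility of the Hodge decompositions ($p^{1}(W)=W'$, $p^{1}(\mathcal{M})=\mathcal{M}'$, Theorem \ref{vspacedect} and Proposition \ref{prop1}), this gives that $p^{1}$ induces an isomorphism on $H^{i}$ for $i=0,1,2$, and Theorem \ref{1-ext} then follows from Theorem \ref{model} and $J_{1,DR}=0$. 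If you wish to keep your diagram-chase framing, you must (i) route the comparison through $\operatorname{Tot}_{N}$ via $ev^{1}$, where the composite of $\psi^{1}$ with the inclusion of invariant forms \emph{is} a quasi-isomorphism, rather than through $A_{DR}(\C^{n}-\mathcal{D})$, and (ii) still prove the analogue of Lemma \ref{qisomp1} to rule out extra cohomology in $B_{n}$; there is no way around that computation.
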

\begin{proof}
See Subsection \ref{sectionmodel}.
\end{proof}
\subsection{The $1$-minimal model}
\begin{thm}\label{vspacedect}
There exists a compatible Hodge type decomposition of $B_{n}$ 
\begin{equation}\label{vspgen}
B_{n}=W\oplus \mathcal{M}\oplus d\mathcal{M}
\end{equation}
such that\footnote{Note that we make a small abuse of notation here, we consider these elements as elements in $B_{n}$.} 
\begin{enumerate}
\item $W^{1}$ is the vector space generated $w(u)^{(0)}_{i,0}$, $\alpha(u)_{i,0}$ for $i=1, \dots, n$.
\item $\mathcal{M}^{1}$ is the vector space generated $w(u)^{(k)}_{i,j}$,  $i,j=1, \dots, n$ and $k>0$.
\item $W^{2}$ is the vector space generated for $1\leq i< j\leq n$ by
\begin{align*}
m_{2}(w(u)^{(0)}_{i,0},\alpha(u)_{j}),\quad m_{2}(w(u)^{(0)}_{i,0},w(u)_{j}),\quad m_{2}(\alpha(u)_{i},\alpha(u)_{j}) 
\end{align*}
 and 
\begin{align*}
(i<j,j)_{1},\quad (i<j,j)_{2},
\end{align*}
 and for $1\leq i<j< k\leq n$ by
\begin{align*}
(i<j<k)_{1},\quad (i<j<k)_{2}.
\end{align*}
\item $\mathcal{M}^{2}$ is the vector space generated for $1\leq i< j\leq n$
\begin{align*}
&m_{2}\left( w(u)^{(k)}_{i,j}, w(u)^{(l)}_{r,s}\right) ,\quad l,k\geq 1,\\
&m_{2}\left( w(u)^{(k)}_{i,0},w(u)^{(0)}_{j,0}\right) ,\quad k\geq 1, \\
&m_{2}\left( w(u)^{(1)}_{i,j},w(u)^{(0)}_{k,0}\right) ,\quad k<i\text{ or }k<j,\\
&m_{2}\left( w(u)^{(1)}_{i,j},w(u)^{(0)}_{i,0}\right) ,\quad i<j,  \\
&m_{2}(w(u)^{(k)}_{i,j},\alpha(u)_{r,0}),\quad k>1,i\neq r\neq j , \\
&m_{2}(w(u)^{(1)}_{i,j},\alpha(u)_{r,0}),\quad r<i\text{ or }r<j , \\
&m_{2}(w(u)^{(k)}_{i,j},\alpha(u)_{j,0}),\quad k>1,i< j,  \\
&m_{2}(w(u)^{(1)}_{i,0},\alpha(u)_{j,0}),\quad k\geq 1 , \\
&m_{l+1}(w(u)^{(k)}_{i,j},\alpha(u)_{i_{1},j_{1}},\dots, \alpha(u)_{i_{l},j_{l}}),\quad l>1,k>0 . \\
\end{align*}
\end{enumerate}
\end{thm}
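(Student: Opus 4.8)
The plan is to imitate the proof of Corollary \ref{decoB}: specify the decomposition explicitly only in degrees $0,1,2$ and invoke \cite[Lemma 9.4.7]{lodayVallette} to extend it to a genuine Hodge type decomposition in all higher degrees. The space $W$ is essentially forced on us by Theorem \ref{theoremmodel}, since the generators listed in items (1)--(3) are exactly the generators of the subspace $\tilde{W}\subset {B'}_{n}$ appearing there, none of which involves the generators $m_{l}(\alpha(u),\dots,\alpha(u))$ with $l>2$ of the ideal $J$; hence they descend to $B_{n}={B'}_{n}/J$ and we set $W:=\tilde{W}$. First I would check that the listed elements of $W^{1}$ and $W^{2}$ are $D$-closed: for $W^{1}$ this follows from Lemmas \ref{lemmarel1} and \ref{lemmarel2} (the forms $w(u)^{(0)}_{i,0}$ are closed and the combination $\alpha(u)_{i,0}=\gamma(u)_{i,0}-\nu(u)_{i,0}-\beta(u)_{i,0}$ is tailored to be $D$-closed), and for $W^{2}$ it follows because the particular combinations $(i<j,j)_{k}$ and $(i<j<k)_{k}$ are designed, via the relations of Proposition \ref{generalrelation}, to have vanishing differential. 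Theorem \ref{theoremmodel} then guarantees that $W$ maps isomorphically onto $H^{i}(B_{n})$ for $i=0,1,2$, so $W$ is a space of closed representatives of the low-degree cohomology, as demanded by Lemma \ref{generallemmadecomp}.

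Next I would settle degree $1$. Using the relations \eqref{rel2} and \eqref{rel3} one reduces the generating set of $B_{n}^{1}$ to the closed family $W^{1}$ together with the non-closed forms $w(u)^{(k)}_{i,j}$ with $k>0$, which constitute the proposed $\mathcal{M}^{1}$. Since $\mathcal{M}^{0}=0$ we have $(D\mathcal{M})^{1}=0$, whence $B_{n}^{1}=W^{1}\oplus\mathcal{M}^{1}$, and there are no nonzero exact $1$-forms at all, so $\mathcal{M}^{1}$ trivially contains no exact element. This is the verbatim analogue of item (1) of Corollary \ref{decoB}.

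The heart of the argument is degree $2$. By Proposition \ref{generalrelation}, after passing to the quotient $B_{n}={B'}_{n}/J$ (which kills precisely the $m_{l}(\alpha(u),\dots,\alpha(u))$ with $l>2$), the space $B_{n}^{2}$ is spanned by the binary products $m_{2}(x,y)$ of degree-$1$ generators together with the brackets $m_{l+1}(w(u)^{(k)}_{i,j},\alpha(u),\dots,\alpha(u))$. The exact part is $(D\mathcal{M})^{2}=D(\mathcal{M}^{1})$, and by point (4) of Proposition \ref{generalrelation} each $D w(u)^{(k)}_{i,j}$ with $k>0$ equals the combination $\sum_{l\geq 1}(-1)^{l+1}m_{l+1}(\alpha(u)_{i,j},\dots,\alpha(u)_{i,j},w(u)^{(k-l)}_{i,j})$; these span $(D\mathcal{M})^{2}$. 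I would then prove $B_{n}^{2}=W^{2}\oplus\mathcal{M}^{2}\oplus(D\mathcal{M})^{2}$ by exhibiting the listed generators of $\mathcal{M}^{2}$ as a family of products complementary both to $W^{2}$ and to the exact forms, using the quadratic Fay relations \eqref{rel1} to eliminate the remaining products and the relations \eqref{rel2}, \eqref{rel3} to normalise indices; in particular $\mathcal{M}^{2}$ must contain no nonzero exact element, which amounts to checking that no nontrivial combination of the listed $m_{2}$- and $m_{l+1}$-terms equals a combination of the $D w(u)^{(k)}_{i,j}$. Finally, compatibility is checked via the strict morphism $p^{1}$, which sends $w(u)^{(k)}_{i,j}\mapsto w^{(k)}_{i,j}$ and $\alpha(u)_{i,j}\mapsto -\nu_{i,j}$, so that $p^{1}(W)$ and $p^{1}(\mathcal{M})$ are exactly the closed representatives and the chosen complement inside $A_{n}$; that this is a Hodge type decomposition of $A_{n}$ follows from Theorem \ref{model} and the linear independence results of \cite{LevBrown}.

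The main obstacle is precisely this degree-$2$ bookkeeping: proving, modulo the Fay relations \eqref{rel1} and the ideal $J$, that the explicit families $W^{2}$ and $\mathcal{M}^{2}$ together with $D\mathcal{M}^{1}$ exhaust $B_{n}^{2}$ and are linearly independent. This is a substantial combinatorial and linear-algebra computation—the configuration-space analogue of the elementary count carried out for the torus in Lemma \ref{modelelliptic curve}—complicated by the quadratic relations and by the index combinatorics encoded in the symbols $(i<j,j)_{k}$ and $(i<j<k)_{k}$. I would organise it by filtering $B_{n}^{2}$ according to the multidegree in the variables $\xi_{i}-\xi_{j}$ carried by the $w(u)^{(k)}_{i,j}$, thereby reducing, on each pair $(i,j)$, to the one-variable computation already established in Lemma \ref{modelelliptic curve}, and treating the genuinely three-index contributions via \eqref{rel1} separately.
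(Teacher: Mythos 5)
Your overall scaffolding (check closedness, settle degrees $1$ and $2$, extend upward via \cite[Lemma 9.4.7]{lodayVallette}) has the right shape, but there is a genuine gap at the decisive step: the claim that Theorem \ref{theoremmodel} ``guarantees that $W$ maps isomorphically onto $H^{i}(B_{n})$ for $i=0,1,2$.'' Theorem \ref{theoremmodel} is a statement about the composite $\tilde{W}\hookrightarrow {B'}_{n}\xrightarrow{H}\operatorname{Tot}_{N}\left(A_{DR}\left(\left( \C^{n}-\mathcal{D}\right)_{\bullet}\left( \Z^{2n}\right) \right)\right)\otimes\Omega(1)$, i.e.\ about classes in the cohomology of the ambient simplicial de Rham complex; it says nothing about $H^{\bullet}(B_{n})$ itself. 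Passing from one to the other is exactly the delicate point: $B_{n}={B'}_{n}/J$, and whether the quotient by $J$, and more generally the higher brackets $m_{l+1}(w(u)^{(k)}_{i,j},\alpha(u),\dots,\alpha(u))$, create or kill degree-$2$ cohomology is precisely what must be ruled out --- the paper itself flags the corresponding statement for ${B'}_{n}$ as conjectural (Remark \ref{conj}), so it cannot follow formally from Theorem \ref{theoremmodel}. The paper supplies the missing ingredient by a different route: it uses the strict morphism $p^{1}\colon B_{n}\to A_{n}$ (well defined since $p^{1}(J)=0$) and proves in Lemma \ref{qisomp1} that $\ker(p^{1})$ contains no nonzero closed forms in degrees $1,2$, via the polynomial argument in the group directions (the $n$-variable analogue of the Vandermonde-type argument in Lemma \ref{modelelliptic curve}). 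This simultaneously gives $H^{i}(B_{n})\cong H^{i}(A_{n})$ for $i\leq 2$ and shows that enlarging $\mathcal{M}^{2}$ by $(\ker p^{1})^{2}$ keeps it free of exact elements. Your proposal contains no argument playing this role; the suggested filtration ``by multidegree in $\xi_{i}-\xi_{j}$'' is not developed, and reducing pair by pair to Lemma \ref{modelelliptic curve} cannot by itself handle these kernel elements or the three-index cross terms.

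Second, you underestimate the $A_{n}$ side. You assert that compatibility holds because the induced decomposition of $A_{n}$ ``follows from Theorem \ref{model} and the linear independence results of \cite{LevBrown}.'' Theorem \ref{model} only says that $\psi^{1}\colon A_{n}\to A_{DR}\left(\operatorname{Conf}_{n}\left(\mathcal{E}_{\tau}^{\times}\right)\right)$ is an injective quasi-isomorphism; it does not produce an explicit Hodge type decomposition of $A_{n}$ with the listed degree-$2$ representatives and a complement $\mathcal{M}'$ containing no exact elements. That is where the bulk of the paper's combinatorial work lies (Subsection \ref{sectionproofvspacevect}): it introduces the auxiliary free CDGA $\mathcal{Y}_{n}$ with the Arnold-type relations \eqref{Rel1}--\eqref{Rel6}, explicitly decomposes its degree-$2$ part (the analysis of the subspaces $V_{1},V_{2},V_{3}$ and of the three-index relation \eqref{Rel6}), and then transports the decomposition to $A_{n}$ by analyzing the kernel of the quasi-isomorphism $f\colon A_{n}\to\mathcal{Y}_{n}$, before finally lifting to $B_{n}$ through $p^{1}$. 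Without this (or a worked-out substitute), your degree-$2$ ``bookkeeping'' --- which you correctly identify as the heart of the matter --- remains a declaration of intent rather than a proof.
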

\begin{proof}
See subsection \ref{sectionproofvspacevect}.
\end{proof}
\begin{rmk}\label{conj}
By a computer assisted proof we have calculated that $$m_{l}\left({\gamma}(u)_{i_{1},j_{1}}, \dots, {\gamma}(u)_{i_{l},j_{l}}\right)=0$$ for $k=3,4$. We conjecture that is true for any $k$. A consequence is that $J\subset{B'}_{n}$ doesn't contain any closed forms and hence that ${B'}_{n}$ is a $1$-model and the above decomposition is a Hodge type decomposition for ${B'}_{n}$ as well.
\end{rmk}
By Lemma \ref{generallemmadecomp} the above decomposition corresponds to a diagram of type type \eqref{homretract} such that $f\circ g=1_{W}$, $d_{W}=0$, $f\circ h=0$, $h\circ g=0$ and $h\circ h=0$. In particular $g\: : \: W\hookrightarrow B_{n}$ is the inclusion and $f\: : \: B_{n}\to W$
 is the projection. The map $h\: : \: B_{n}^{\bullet}\to B_{n}^{\bullet-1}$ is as follows: let $b\in B_{n}$, the decomposition \eqref{vspgen} implies that $b$ can be written in a unique way as $b=b_{1}+b_{2}+Db_{3}$, then
 \[
 h(b):=b_{3}\in \mathcal{M}.
 \]
 By the homotopy transfer theorem  (see Theorem \ref{markglobal}) there exists a $1-C_{\infty}$-structure $m_{\bullet}^{W}$ on $W$ and a $1-C_{\infty}$-morphism $g_{\bullet}\: : \: \left(W, m_{\bullet}^{W} \right)\to \mathcal{F}\left(B_{n}, m_{\bullet} \right)  $ such that $g_{\bullet}$ is a $1$-minimal model. The maps are given by
\[
m_{k}^{W}:=f\circ p_{k}\circ g^{\otimes k}, \quad {{g}}_{k}:= h\circ p_{k}\circ g^{\otimes k}.
\]  
where $p_{k}\: : \: B_{n}^{\otimes k}\to B_{n}$ is a family of linear maps of degree $2-k$.
Let $X_{1}, \dots, X_{n}, Y_{1}, \dots, Y_{n}$ be the basis of $\left( W^{1}_{+}[1]\right)^{*}$ dual to $$ s^{-1} \left(- w(u)^{(0)}_{1,0} \right) ,\dots ,s^{-1}\left( - w(u)^{(0)}_{n,0}\right),s^{-1} \left( -\alpha(u)_{1,0} \right) ,\dots ,s^{-1}\left(  -\alpha(u)_{n,0}\right)\in W^{1}_{+}[1].$$ We denote by
\begin{enumerate}
\item $X_{i,j}$ for $i<j$ the element dual to  $s^{-1}m_{2}(-w(u)^{(0)}_{i,0},-w(u)_{j,0})$;
\item $Y_{i,j}$ for $i<j$ the element dual to  $s^{-1}m_{2}(-\alpha(u)_{i},-\alpha(u)_{j})$;
\item $U_{i,j}$ for $i<j$ the element dual to  $s^{-1}m_{2}(-w(u)^{(0)}_{i,0},-\alpha(u)_{j})$;
\item $T_{i,j,j}$  for $i<j$ the element dual to  $-s^{-1}(i<j,j)_{1}$,
\item $Z_{i,j,j}$  for $i<j$ the element dual to  $-s^{-1}(i<j,j)_{2}$,
\item $T_{i,j,k}$  for $i<j<k$ the element dual to  $-s^{-1}(i<j<k)_{1}$,
\item $Z_{i,j,k}$  for $i<j<k$ the element dual to  $-s^{-1}(i<j<k)_{2}$.
\end{enumerate}
These elements form a basis of $\left( W[1]^{1}\right)^{*}$. 

\begin{thm}\label{gemetriczeroconn} 
Let $g_{\bullet}\: : \: \left(W, m_{\bullet}^{W} \right)\to \mathcal{F}\left(B_{n}, m_{\bullet} \right)  $ be as above.
\begin{enumerate}
\item The maps $m_{\bullet}^{W}$ correspond to a coderivation $\delta\: :\: \left( T^{c}\left( W_{+}[1]\right)\right) ^{1}\to T^{c}\left(  W^{1}_{+}[1]\right) $ such that
\begin{eqnarray}
\nonumber& &\delta^{*}X_{i,j}=\left[X_{i},X_{j} \right], \quad  \delta^{*}Y_{i,j}=\left[Y_{i}, Y_{j} \right], \quad \delta^{*}X_{i,j}=\left[X_{i},Y_{j} \right]-\left[X_{j},Y_{i} \right]\\
\label{coarserel}& &\delta^{*}T_{i,j,j}=-\left[ \left[X_{j},Y_{i} \right],Y_{i}+Y_{j}\right], \quad \delta^{*}Z_{i,j,j}=\left[ X_{j}+X_{i},\left[Y_{i},X_{j}\right] \right],\\
 \nonumber& &\delta^{*}T_{i,j,k}=2\left[Y_{k},\left[ Y_{i}, X_{j}\right] \right], \quad \delta^{*}Z_{i,j,k}=2\left[X_{k},\left[ Y_{i}, X_{j}\right] \right],
\end{eqnarray}
\item  Let $\mathcal{R}_{0}\subset \widehat{\mathbb{L}}\left( \left( W_{+}[1]^{0}\right)^{*}\right)$ be the completion of the Lie ideal generated by
\begin{eqnarray*}
\delta^{*}X_{i,j},\delta^{*}Y_{i,j},\delta^{*}X_{i,j},\delta^{*}T_{i,j,j}, \delta^{*}Z_{i,j,j},
\delta^{*}T_{i,j,k}, \delta^{*}Z_{i,j,k}.
\end{eqnarray*}
We set $\mathfrak{u}:=\left( \widehat{\mathbb{L}} \left( \left(W^{1}_{+}[1]\right) ^{*}\right)/\mathcal{R}_{0}\right)$ then
\[
\operatorname{Conv}_{1-C_{\infty}} \left( \left(W, m_{\bullet}^{W} \right), \left(B,m^{B}_{\bullet} \right)\right)\cong B_{n}\widehat{ \otimes}\mathfrak{u}
\]
and $g_{\bullet}$ corresponds to the Maurer-Cartan $C$ element in the $L_{\infty}$-algebra $B_{n}\widehat{ \otimes}\mathfrak{u}$
\begin{eqnarray*}
{C}&=&-\sum_{i}w(u)^{(0)}_{i,0}X_{i}-\alpha(u)_{i}Y_{i}-\sum_{i,k\geq 1}(-1)^{k}w(u)^{(k)}_{i,0}\left[ \cdots \left[ \left[X_{i}, Y_{i} \right], \cdots\right] , Y_{i}\right]\\
& - & \sum_{j<i,k\geq 1}(-1)^{k}\left( w(u)^{(k)}_{j,0}-w(u)^{(k)}_{j,i}\right) \left[ \cdots \left[ \left[X_{i}, Y_{j} \right], \cdots\right] , Y_{j}\right]\\
& - & \sum_{j<i,k\geq 1}(-1)^{k}\left( w(u)^{(k)}_{i,0}\right) \left[ \cdots \left[ \left[X_{i}, Y_{j} \right], \cdots\right],Y_{i}\right].
\end{eqnarray*}
\item The degree zero geometric connection associated to $g_{\bullet}$ is given by
\begin{eqnarray*}
H_{*}C&=&-\sum_{i}w(u)^{(0)}_{i,0}X_{i}-\alpha(u)_{i}Y_{i}-\sum_{i,k\geq 1}w(u)^{(k)}_{i,0}\operatorname{Ad}^{(k)}_{Y_{i}}\left(X_{i} \right)\\
& - & \sum_{j<i,k\geq 1}\left( w(u)^{(k)}_{j,0}+w(u)^{(k)}_{0,i}-w(u)^{(k)}_{j,i}\right) \operatorname{Ad}^{(k)}_{Y_{j}}\left(X_{i} \right).
\end{eqnarray*}
\end{enumerate}
\end{thm}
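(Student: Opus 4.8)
The plan is to compute everything directly from the explicit homotopy transfer formulas $m_k^W = f\circ p_k\circ g^{\otimes k}$ and $g_k = h\circ p_k\circ g^{\otimes k}$ of Definition \ref{explicitpkernel}, feeding in the structure constants of $\left(B_n,m_\bullet\right)$ recorded in Proposition \ref{generalrelation} and the explicit maps $f,g,h$ attached to the Hodge decomposition of Theorem \ref{vspacedect}. For part (1), I would evaluate $m_k^W$ on tuples of the degree-one generators $w(u)^{(0)}_{i,0},\alpha(u)_{i,0}$. The vanishing statements in Proposition \ref{generalrelation} (that $m_l$ vanishes for even $l>2$ as soon as one $w$-entry occurs, that an odd $m_l$ needs exactly one $w$-entry, and that $m_2$ on degree-one elements is the wedge) kill all but finitely many decorated trees, so each $m_k^W$ lands in the span of the degree-two elements listed in Theorem \ref{vspacedect}. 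Reading off the coefficients and dualizing through the shift $s$ yields the coderivation $\delta$: the quadratic brackets $\delta^*X_{i,j}=[X_i,X_j]$ and its companions come from $m_2^W$, while the cubic relations for $T_{i,j,j},Z_{i,j,j},T_{i,j,k},Z_{i,j,k}$ come from $m_3^W$, where one must propagate the homotopy $h$ through the two ternary trees and use the relations \eqref{rel1}, \eqref{rel2}, \eqref{rel3}.

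For part (2), since $g_\bullet$ is a $1$-minimal model and $m_\bullet^W$ vanishes on $W^1$, I would invoke the dictionary of \cite[Proposition 1.25]{Sibilia1} identifying the convolution $L_\infty$-algebra with $B_n\widehat{\otimes}\mathfrak{u}$, where $\mathfrak{u}$ is the completed free Lie algebra on $(W^1_+[1])^*$ modulo the ideal $\mathcal{R}_0$ generated by the $\delta^*$-images from part (1); this is precisely the Malcev presentation of $\pi_1$ of the configuration space. The Maurer-Cartan element $C$ is then obtained by expanding $\mathcal{F}(g_\bullet)$ in this basis: as in Lemma \ref{finalcomp} and Proposition \ref{deg0geom}, $g_k$ is nonzero only on tuples containing a single $w^{(0)}$-entry with the remaining entries equal to $\alpha$'s, in which case it returns the higher forms $w^{(k)}_{i,j}$. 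Pairing these with the corresponding iterated brackets of the dual generators $X_i,Y_i$ assembles the three sums displayed for $C$, the signs $(-1)^k$ being produced by the nesting $[\cdots[[X_i,Y_i],\dots],Y_i]$.

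For part (3) I would apply the strict map $H_*=H\otimes\mathrm{Id}$ term by term to $C$. Since $H$ is the inclusion into $\operatorname{Tot}_N\left(A_{DR}((\C^n-\mathcal{D})_\bullet\Z^{2n})\right)\otimes\Omega(1)/J_{DR}$, this only rewrites the coefficients; using $\alpha(u)_{i,j}=\gamma(u)_{i,j}-\nu(u)_{i,j}-\beta(u)_{i,j}$, the antisymmetry relation \eqref{rel2} to replace the $w^{(k)}_{j,i}$-terms and to introduce the $w^{(k)}_{0,i}$-term, and the identity $[X,Y]=-[Y,X]$ to convert the nested brackets into $\operatorname{Ad}^{(k)}_{Y_i}$ (which absorbs the $(-1)^k$), one collects the three displayed sums for $H_*C$.

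The main obstacle is the cubic part of step (1): extracting $\delta^*T_{i,j,k}$ and $\delta^*Z_{i,j,k}$ (together with their $j=k$ specialisations) requires evaluating $m_3^W=f\circ p_3\circ g^{\otimes 3}$ on the mixed triples, which means summing over the ternary trees with the homotopy $h$ on the internal edge and then projecting onto $W^2$ along the decomposition of Theorem \ref{vspacedect}. Here the combinatorics of the Fay relations \eqref{rel1} (equivalently the quadratic relations among the $w(u)^{(k)}_{i,j}$) enters in full, and the bookkeeping of Koszul signs together with the careful separation of each output into its $W^2$, $d\mathcal{M}$ and $\mathcal{M}$ components is the delicate point; everything else is a mechanical, if lengthy, expansion.
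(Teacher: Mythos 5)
Your strategy for parts (1) and (2) is the paper's own: the paper proves them exactly by the explicit p-kernel computations of Subsection \ref{section p kernel} (evaluated on tuples from $W^{1}$, using the vanishing statements of Proposition \ref{generalrelation} and the decomposition \eqref{vspgen}), translated into the coalgebra/convolution picture via \cite[Proposition 1.20]{Sibilia1}. So far there is nothing to add.

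Part (3), however, contains a genuine gap. You assert that $H_{*}C$ is obtained from $C$ by rewriting coefficients with \eqref{rel2} and converting the nested brackets into $\operatorname{Ad}^{(k)}$'s using only the antisymmetry $[X,Y]=-[Y,X]$. That suffices for the first two sums of $C$, whose Lie monomials are right-nested in a single letter: indeed $\left[\cdots\left[\left[X_{i},Y_{i}\right],\cdots\right],Y_{i}\right]=(-1)^{k}\operatorname{Ad}^{(k)}_{Y_{i}}(X_{i})$ is pure antisymmetry, and likewise for the $Y_{j}$-nested terms. But the third sum of $C$ carries the \emph{mixed} monomials $\left[\cdots\left[\left[X_{i},Y_{j}\right],Y_{i}\right],\cdots,Y_{i}\right]$ (one $Y_{j}$ followed by $k-1$ letters $Y_{i}$), which arise from the p-kernel term $p_{k+1}\left(w(u)^{(0)}_{i,0},\alpha(u)_{j,0},\alpha(u)_{i,0},\dots,\alpha(u)_{i,0}\right)$. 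No manipulation by bracket antisymmetry can turn a monomial containing the letter $Y_{i}$ into $\operatorname{Ad}^{(k)}_{Y_{j}}(X_{i})$, which contains only $Y_{j}$'s: in the free Lie algebra these are linearly independent elements. The identification is valid only in the quotient $\mathfrak{u}$, where the relations \eqref{coarserel} (concretely $\left[\left[X_{j},Y_{i}\right],Y_{i}+Y_{j}\right]=0$ and its companions) let one trade each bracketing by $Y_{i}$ for one by $-Y_{j}$; iterating gives $\operatorname{Ad}^{(k)}_{Y_{i}}(X_{j})=(-1)^{k+1}\operatorname{Ad}^{(k)}_{Y_{j}}(X_{i})$ in $\mathfrak{u}$, and this single observation \emph{is} the paper's entire proof of part (3). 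So the missing idea is that the collection of the three sums into the displayed form of $H_{*}C$ uses the fiber relations of $\mathfrak{u}$, not just antisymmetry of the bracket and of the forms; as written, your reduction of the third sum fails.
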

\begin{proof}
In Subsection \ref{section p kernel} we give a formula for
\[
p_{k}\circ g^{\otimes k}|_{\left( W^{1}\right)^{\otimes k} }
\]
for any $k>0$. By \cite[Proposition 1.20]{Sibilia1} the founded maps can be turned into Coalgebra settings, this proves 1 and 2. We prove 3. The relation \eqref{coarserel} implies
\begin{equation}\label{shiftrel}
\operatorname{Ad}^{(k)}_{Y_{i}}(X_{j})=(-1)^{k+1}\operatorname{Ad}^{(k)}_{Y_{j}}(X_{i})
\end{equation}
in $\mathfrak{u}$.
\end{proof}
For $0\leq s\leq 1$, we set $$C(s):=\left( ev_{s}H\right)_{*}\left(C \right) \in \left( \operatorname{Tot}^{1}_{N}A_{DR}\left(\left( \C^{n}-\mathcal{D}\right)_{\bullet}\left( \Z^{2n}\right) \right)/J_{s,DR}\right) \widehat{\otimes} \mathfrak{u}$$ 
 We denote by $w(s)_{i,j}^{(k)}$, $\nu(s)_{i,j}$ and $\gamma_{i,j}(s)$ resp. the elements $ev^{s}\left(w_{i,j}(u)^{(k)}\right) $, $ev^{s}\left(\nu(u)_{i,j}^{(k)}\right) $ and  $ev^{s}\left(\gamma(u)_{i,j}\right) $ resp. for some $s\in [0,1]$. Hence $C(s)$ is given by
\begin{eqnarray*}
C(s)&=-&\sum_{i}w(s)^{(0)}_{i,0}X_{i}-\alpha(s)_{i}Y_{i}-\sum_{i,k\geq 1}w(s)^{(k)}_{i,0}\operatorname{Ad}^{(k)}_{Y_{i}}\left(X_{i} \right)\\
& - & \sum_{j<i,k\geq 1}\left( w(s)^{(k)}_{j,0}+w(s)^{(k)}_{0,i}-w(s)^{(k)}_{j,i}\right) \operatorname{Ad}^{(k)}_{Y_{j}}\left(X_{i} \right)
\end{eqnarray*}
Notice that $\alpha(s)_{i}=\gamma(s)_{i}-\nu(s)_{i}$.
\begin{thm}\label{partialfinal}
Consider $ \mathfrak{u}$ equipped with the action $\operatorname{ad}$. For each $0\leq s\leq 1$,\\ $r_{*}C(s)\in A_{DR}\left( \C^{n}-\mathcal{D}\right) \widehat{\otimes} \mathfrak{u} $ is given by 
\begin{eqnarray*}
r_{*}C(s) &=-&\sum_{i}w(s)^{(0)}_{i,0}X_{i}+\nu(s)_{i}Y_{i}-\sum_{i,k\geq 1}w(s)^{(k)}_{i,0}\operatorname{Ad}^{(k)}_{Y_{i}}\left(X_{i} \right) \\
& - & \sum_{j<i,k\geq 1}\left( w(s)^{(k)}_{j,0}+w(s)^{(k)}_{0,i}-w(s)^{(k)}_{j,i}\right)\operatorname{Ad}^{(k)}_{Y_{i}}\left(X_{j} \right).
\end{eqnarray*}
It is a flat connection on $\C^{n}-\mathcal{D}$ on the trivial bundle with fiber $ \mathfrak{u}$. In particular, for $s=0$ the connection is holomorphic with logarithmic singularities. Moreover $r_*C$ induces a gauge-equivalence between $r_*C(1)$ and $r_*C(s)$, where the gauge is given by $\sum_{i} 2 \pi i r_{i}(1-s)Y_{i}\in A_{DR}^{0}\left( \C^{n}-\mathcal{D}\right) \widehat{\otimes} \mathfrak{u}$.
\end{thm}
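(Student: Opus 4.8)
The plan is to read off $r_{*}C(s)$ from the formula for $C(s)$ obtained in Theorem \ref{gemetriczeroconn}, and then to establish flatness, holomorphicity at $s=0$, and the gauge equivalence separately. First I would recall that $r$ is the evaluation of forms at $e\in G$, so by the normalization condition an element of bidegree $(p,q)$ with $p>0$ is annihilated by $r$, while $r$ restricts to the identity on $\operatorname{Tot}_{N}^{0,\bullet}=A_{DR}(M)$. Applying $r_{*}=(r\otimes\mathrm{Id})$ to each coefficient of $C(s)$ gives $r_{*}\gamma(s)_{i}=0$, $r_{*}\nu(s)_{i}=\nu(s)_{i}$ and $r_{*}w(s)^{(k)}_{i,j}=w(s)^{(k)}_{i,j}$, since the $\nu$'s and $w$'s are of bidegree $(0,1)$ while $\gamma(s)_{i}$ is of bidegree $(1,0)$. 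In particular $-\alpha(s)_{i}Y_{i}=-(\gamma(s)_{i}-\nu(s)_{i})Y_{i}$ maps to $+\nu(s)_{i}Y_{i}$, which accounts for the change of sign of the $Y_{i}$-linear term. The remaining discrepancy between $\operatorname{Ad}^{(k)}_{Y_{j}}(X_{i})$ and $\operatorname{Ad}^{(k)}_{Y_{i}}(X_{j})$ is handled inside $\mathfrak{u}$ using the identity \eqref{shiftrel} together with the symmetry relations \eqref{rel2} and \eqref{rel3} on the forms $w(s)^{(k)}_{i,j}$; this is a routine, if sign-sensitive, rewriting and yields exactly the displayed expression for $r_{*}C(s)$.

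For flatness I would argue purely formally: $C$ is a Maurer--Cartan element of $B_{n}\widehat{\otimes}\mathfrak{u}$ because $g_{\bullet}$ is a $1$-minimal model (see \cite[Corollary 1.25 and Proposition 3.19]{Sibilia1}), and the three maps $H$, $ev^{s}$ and $r\otimes\mathrm{Id}$ are all strict $C_{\infty}$-morphisms, the last one descending to the quotient by $J_{DR}$ because $(r\otimes\mathrm{Id})(J)=0$ by construction of the $1$-extension. Strict morphisms preserve Maurer--Cartan elements, so $r_{*}C(s)\in A_{DR}(\C^{n}-\mathcal{D})\widehat{\otimes}\mathfrak{u}$ is again Maurer--Cartan; equivalently $d-r_{*}C(s)$ is a flat connection on the trivial bundle $(\C^{n}-\mathcal{D})\times\mathfrak{u}$.

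The case $s=0$ is then immediate from the explicit formula: $\nu(0)_{i}=2\pi i\,dr_{i}\otimes 0=0$, so the $Y_{i}$-linear term drops out, and $w(0)^{(k)}_{i,j}=\phi^{(k)}_{i,j}$ by property (1) of Subsection \ref{sectionformulas}. By the Fourier expansion the forms $\phi^{(k)}_{i,j}$ are holomorphic for $k\neq 1$ and meromorphic with a simple pole, i.e.\ a logarithmic singularity, along $\xi_{i}=\xi_{j}$ for $k=1$; hence $r_{*}C(0)$ is holomorphic with logarithmic singularities along $\mathcal{D}$.

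The gauge equivalence I would obtain from the interval direction of $r_{*}H_{*}C\in\left(A_{DR}(\C^{n}-\mathcal{D})\otimes\Omega(1)\right)\widehat{\otimes}\mathfrak{u}$, which is Maurer--Cartan and therefore a homotopy between its restrictions to the endpoints of the interval. Writing this element as $\omega(u)+\lambda(u)\,du$, the only source of a $du$-component is $\beta(u)_{i}=2\pi i\,r_{i}\otimes du$ entering through $-\alpha(u)_{i}Y_{i}$, so $\lambda(u)=\sum_{i}2\pi i\,r_{i}Y_{i}$ is constant in $u$. Integrating the gauge-flow equation $\partial_{u}\omega=d\lambda+[\lambda,\omega]$ from $s$ to $1$ with $\lambda$ constant yields the finite gauge $\sum_{i}2\pi i\,r_{i}(1-s)Y_{i}$ relating $r_{*}C(1)$ and $r_{*}C(s)$, which is precisely the gauge predicted by \cite[Theorem 4.9]{Sibilia1} for the homotopy $r_{*}H_{*}C$. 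The main obstacle I expect is the bookkeeping in the first step: matching the precise coefficients and Lie monomials of the stated formula requires combining \eqref{shiftrel}, \eqref{rel2} and \eqref{rel3} consistently in $\mathfrak{u}$, and likewise the rigorous extraction of the finite gauge from the $du$-component must be carried out with the correct signs coming from the Koszul rule.
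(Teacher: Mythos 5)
Your proposal is correct and takes essentially the same route as the paper: the paper's own proof is a one-liner observing that $(r\, ev_{s}H)$ preserves Maurer--Cartan elements (giving flatness), with the explicit formula, the $s=0$ holomorphicity, and the gauge statement read off, exactly as you do, from the displayed $C(s)$, from $w(0)^{(k)}_{i,j}=\phi^{(k)}_{i,j}$, and from the $\beta(u)_{i}$-term supplying the constant $du$-component $\sum_{i}2\pi i\,r_{i}Y_{i}$. One caveat: the ``sign-sensitive rewriting'' you flag is genuinely delicate, since combining \eqref{shiftrel} with \eqref{rel2} turns the coefficient of $\operatorname{Ad}^{(k)}_{Y_{i}}(X_{j})$ into $w(s)^{(k)}_{i,0}+w(s)^{(k)}_{0,j}-w(s)^{(k)}_{i,j}$, which matches the coefficient stated in the theorem only up to the factor $(-1)^{k+1}$; this discrepancy is already present in the paper between Theorem \ref{gemetriczeroconn} and the statement of Theorem \ref{partialfinal}, so it reflects a sign typo in the source rather than a defect of your argument.
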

\begin{proof}
 the map $(r ev_{s}H)$ preserves Maurer-Cartan elements. Hence the flatness corresponds to the Maurer-Cartan equations (see \cite[Section 3]{Sibilia1}).
\end{proof}
\subsection{The KZB connection}
In \cite{Damien} a meromorphic flat connection $\omega_{KZB,n} $ on the configuration space of the punctured elliptic curve with value in the bundle $\overline{\mathcal{P}}^{n+1}$ is constructed. We show that $r_{*}C(0) $ corresponds to $\omega_{KZB,n}$. We will use the same notation of \cite{Damien}.
For $n\geq 0$, we define the algebra $\mathfrak{t}_{1,n}$ as the free Lie algebra with generators $X_{1}, \dots, X_{n}, Y_{1}, \dots , Y_{n}$ and $t_{i,j}$ for $1\leq i\neq j\leq n$ modulo
\begin{eqnarray}
\label{relLieDamien1}t_{ij}=t_{ij}, \quad \left[ t_{ij},t_{ik}+t_{jk}\right]=0, \quad  \left[ t_{ij},t_{kl}\right]=0\\
\nonumber t_{ij}= \left[ X_{i},Y_{j}\right], \quad  \left[ X_{i},X_{j}\right]=\left[ Y_{i},Y_{j}\right]=0, \quad \left[X_{i}, Y_{i} \right]=-\sum_{j|j\neq i}t_{ij}  \\
\nonumber \left[ X_{i},t_{jk}\right]=\left[Y_{j},t_{ik} \right]=0, \quad \left[ X_{i}+X_{j},t_{jk}\right]=\left[Y_{i}+Y_{j},t_{ik} \right]=0  
\end{eqnarray}
for $i,j,k,l$ distinct. 
\begin{rmk}
Notice that the the relations $\left[ t_{ij},t_{ik}+t_{jk}\right]=0$, and $ \left[ t_{ij},t_{kl}\right]=0$ follow from $ \left[ x_{i},t_{jk}\right]=\left[Y_{j},t_{ik} \right]=0$,$ \left[ x_{i}+x_{j},t_{jk}\right]=\left[Y_{i}+Y_{j},t_{ik} \right]=0$ and the Jacobi identity. 
\end{rmk} 
The elements $\sum_{i}X_{i}$ and $\sum_{i}Y_{i}$ are central in $\mathfrak{t}_{1,n}$. We denote by $\overline{\mathfrak{t}}_{1,n}$ the quotient of $\mathfrak{t}_{1,n}$ modulo 
\begin{eqnarray}
\label{relLieDamien2}
\sum_{i}X_{i}=\sum_{i}Y_{i}=0
\end{eqnarray}

\begin{prop}\label{equality}
The lie algebra $\overline{t}_{1,n+1}$ admits the following presentation: the generators are $X_{1}, \dots, X_{n}$ and $Y_{1}, \dots, Y_{n}$ and the relations are
\begin{eqnarray}\label{relationLie}
\left[X_{i},X_{j} \right]=\left[Y_{i}, Y_{j} \right]=0, \quad \left[X_{i},Y_{j} \right]-\left[X_{j},Y_{i} \right]=0,\,i<j\\\nonumber
\left[ \left[X_{j},Y_{i} \right],Y_{i}+Y_{j}\right]=\left[ X_{j}+X_{i},\left[Y_{i},X_{j}\right] \right]=0,\,i<j\\\nonumber
\left[Y_{k},\left[ Y_{i}, X_{j}\right] \right]=\left[X_{k},\left[ Y_{i}, X_{j}\right] \right]=0,\,i<j<k.
\end{eqnarray}
In particular, $\mathfrak{u}=\widehat{\overline{t}}_{1,n+1}$.
\end{prop}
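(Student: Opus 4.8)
The plan is to first dispose of the ``in particular'' clause and then to prove the presentation by a Tietze (generators-and-relations) argument. For the reduction, recall that by Theorem \ref{gemetriczeroconn} the Lie algebra $\mathfrak{u}$ is \emph{by construction} $\widehat{\mathbb{L}}\left(\left(W^{1}_{+}[1]\right)^{*}\right)/\mathcal{R}_{0}$, where $\mathcal{R}_{0}$ is the completion of the ideal generated by the seven families of relators whose values are read off from \eqref{coarserel}. Inspecting \eqref{coarserel}, these relators are, up to the harmless scalars $2$, exactly the left-hand sides of \eqref{relationLie}. Hence $\mathfrak{u}$ is the completion of the free Lie algebra on $X_{1},\dots,X_{n},Y_{1},\dots,Y_{n}$ modulo \eqref{relationLie}. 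Writing $\mathfrak{g}$ for the (uncompleted) Lie algebra with that presentation, the identity $\mathfrak{u}=\widehat{\overline{\mathfrak{t}}}_{1,n+1}$ follows immediately once I produce an isomorphism $\overline{\mathfrak{t}}_{1,n+1}\cong\mathfrak{g}$, because here the ambient free Lie algebra is of finite type (the cohomology of $M/G$ is connected and of finite type) and completion commutes with the quotient.

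The core task is therefore the isomorphism $\overline{\mathfrak{t}}_{1,n+1}\cong\mathfrak{g}$, which I would obtain by simplifying the defining presentation \eqref{relLieDamien1}--\eqref{relLieDamien2} of $\overline{\mathfrak{t}}_{1,n+1}$. First I would eliminate redundant generators: the relations \eqref{relLieDamien2} allow setting $X_{n+1}=-\sum_{a=1}^{n}X_{a}$ and $Y_{n+1}=-\sum_{a=1}^{n}Y_{a}$, while $t_{ij}=[X_{i},Y_{j}]$ eliminates every $t_{ij}$; this leaves precisely the generators $X_{1},\dots,X_{n},Y_{1},\dots,Y_{n}$. I would then rewrite each surviving defining relation. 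The relations $[X_{i},X_{j}]=0$, $[Y_{i},Y_{j}]=0$ and the symmetry $t_{ij}=t_{ji}$ (i.e.\ $[X_{i},Y_{j}]-[X_{j},Y_{i}]=0$) survive only for $i<j\le n$, the cases involving the index $n+1$ lying in the ideal generated by the $\le n$ cases after expanding the sums. The ``diagonal'' relation $[X_{i},Y_{i}]=-\sum_{j\ne i}t_{ij}$ becomes $[X_{i},Y_{i}]=-[X_{i},\sum_{j\ne i}Y_{j}]=[X_{i},Y_{i}]$ once $\sum_{a}Y_{a}=0$ is used, hence is vacuous; and the two quadratic relations $[t_{ij},t_{kl}]=0$, $[t_{ij},t_{ik}+t_{jk}]=0$ are discarded as redundant by the Remark preceding the statement. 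Finally the infinitesimal-braid relations $[X_{m},t_{pq}]=0$, $[Y_{m},t_{pq}]=0$, $[X_{p}+X_{q},t_{pq}]=0$, $[Y_{p}+Y_{q},t_{pq}]=0$ translate, via $[Y_{i},X_{j}]=-[X_{j},Y_{i}]=-t_{ji}=-t_{ij}$, into the bracketed relators of \eqref{relationLie}: one checks $-[X_{k},t_{ij}]=[X_{k},[Y_{i},X_{j}]]$, $-[Y_{k},t_{ij}]=[Y_{k},[Y_{i},X_{j}]]$, $-[X_{i}+X_{j},t_{ij}]=[X_{i}+X_{j},[Y_{i},X_{j}]]$ and $[Y_{i}+Y_{j},t_{ij}]=-[[X_{j},Y_{i}],Y_{i}+Y_{j}]$, which are the relators $\delta^{*}Z_{i,j,k}$, $\delta^{*}T_{i,j,k}$, $\delta^{*}Z_{i,j,j}$ and $\delta^{*}T_{i,j,j}$.

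The main obstacle is the last step: I must show that the ideal generated by the \emph{ordered} relators of \eqref{relationLie} (those restricted to $i<j$ and $i<j<k$) equals the ideal generated by \emph{all} infinitesimal-braid relations of $\mathfrak{t}_{1,n+1}$, namely for every distinct triple in $\{1,\dots,n+1\}$, in every ordering, and including triples containing the eliminated index $n+1$. The reorderings of a fixed triple are controlled by antisymmetry of the bracket, the symmetry $t_{ij}=t_{ji}$, and Jacobi expansions of terms $[[X_{a},Y_{b}],X_{c}]$; concretely the three relations $[X_{i},t_{jk}]$, $[X_{j},t_{ik}]$, $[X_{k},t_{ij}]$ are linked through $[X_{i}+X_{j}+X_{k},\,t_{\bullet}]$-type sums and Jacobi. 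The triples containing $n+1$ are reduced to $\le n$ indices by expanding $t_{i,n+1}=[X_{i},Y_{n+1}]=-\sum_{l\le n}t_{il}$ together with the derived row- and column-sum identities $\sum_{l}t_{il}=0=\sum_{l}t_{lj}$, which themselves follow from $\sum_{a}X_{a}=\sum_{a}Y_{a}=0$. Carrying out this index bookkeeping, and verifying that no closed relation is dropped in the process so that the two relation ideals genuinely coincide, is the only substantive part; everything else is formal. Once it is done, the assignment on generators defines mutually inverse homomorphisms between $\overline{\mathfrak{t}}_{1,n+1}$ and $\mathfrak{g}$, establishing the presentation, and passing to completions gives $\mathfrak{u}=\widehat{\overline{\mathfrak{t}}}_{1,n+1}$.
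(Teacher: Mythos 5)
Your proposal is correct and takes essentially the same route as the paper's proof: eliminate the generators $t_{ij}$, $X_{n+1}$, $Y_{n+1}$ via $t_{ij}=[X_{i},Y_{j}]$ and the centrality relations $\sum_{a}X_{a}=\sum_{a}Y_{a}=0$, then verify—using the Jacobi identity, the symmetry $t_{ij}=t_{ji}$, and expansion of the sums—that the relations involving the index $n+1$ and the unordered-index relations lie in the ideal generated by the ordered relators of \eqref{relationLie}, exactly the computations (e.g.\ $[X_{n+1},t_{jk}]$ and $[X_{n+1}+X_{j},t_{(n+1)j}]$) that the paper carries out. The only differences are organizational: the paper packages the same verifications as two mutually inverse morphisms through an intermediate algebra $\tilde{L}_{n}$ that retains the $\tilde{t}_{ij}$ as generators, whereas you perform the elimination in a single Tietze step and are somewhat more explicit about why the completed identity $\mathfrak{u}=\widehat{\overline{\mathfrak{t}}}_{1,n+1}$ follows from Theorem \ref{gemetriczeroconn}.
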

\begin{proof}
Let $\tilde{L}_{n}$ be the free Lie algebra on with generators $\tilde{X}_{1}, \dots, \tilde{X}_{n}, \tilde{Y}_{1}, \dots , \tilde{Y}_{n}$ and $\tilde{t}_{i,j}$ for $1\leq i\neq j\leq n$ modulo relations 
\begin{eqnarray}
\label{relLieDamien3}\tilde{t}_{ij}=\tilde{t}_{ij}\\
\nonumber \tilde{t}_{ij}= \left[ \tilde{X}_{i},\tilde{Y}_{j}\right], \quad  \left[ \tilde{X}_{i},\tilde{X}_{j}\right]=\left[ \tilde{Y}_{i},\tilde{Y}_{j}\right]=0, \\
\nonumber \left[ \tilde{X}_{i},\tilde{t}_{jk}\right]=\left[\tilde{Y}_{j},\tilde{t}_{ik} \right]=0, \quad \left[ \tilde{X}_{i}+\tilde{X}_{j},\tilde{t}_{jk}\right]=\left[\tilde{Y}_{i}+\tilde{Y}_{j},\tilde{t}_{ik} \right]=0.  
\end{eqnarray} 
for $i,j,k,l$ distinct. We first show that the map $\tilde{h}\: : \: \tilde{L}_{n}\to \overline{\mathfrak{t}}_{1,n+1}$ defined by $\tilde{h}(\tilde{X}_{i})=X_{i}$, $\tilde{h}(\tilde{Y}_{i})=Y_{i}$ is an isomorphism of Lie algebras. The map is clearly well-defined. We define an inverse via $\tilde{h}^{-1}({X}_{i}):=\tilde{X}_{i}$, $\tilde{h}^{-1}({Y}_{i}):=\tilde{Y}_{i}$ for $i<n+1$ and with $\tilde{h}^{-1}({X}_{n+1}):=-\sum_{i=1}^{n}\tilde{X}_{i}$, $\tilde{h}^{-1}({Y}_{n+1}):=-\sum_{i=1}^{n}\tilde{Y}_{i}$. In order to prove that $\tilde{h}^{-1}$ is well-defined, we have to check that $\tilde{h}^{-1}$ sends the relations \eqref{relLieDamien1} and \eqref{relLieDamien2} into  \eqref{relLieDamien3}. This is immediate if we consider distinct index $i,j,k,l$ smaller that $n+1$. It is also immediate to show that $\tilde{h}^{-1}(t_{ij}-t_{ij})=\tilde{h}^{-1}(t_{ij}-t_{ij})=\tilde{h}^{-1}(\left[ X_{i},X_{j}\right])=\tilde{h}^{-1}(\left[ Y_{i},Y_{j}\right])=0$ if one of the index is equal to $n+1$.
On the other hand, consider the cubic relation $\left[ X_{n+1},t_{jk}\right]=0$. 
We have
\begin{eqnarray*}
\tilde{h}^{-1}\left(\left[ X_{n+1},t_{jk}\right]  \right)& : = & \left[-\sum_{i=1}^{n}\tilde{X}_{i},\tilde{t}_{jk} \right]\\
& =&\displaystyle\sum_{\substack{i=1 \\ i\neq j,k}}^{n}\left[\tilde{X}_{i},\tilde{t}_{jk} \right]-\left[\tilde{X}_{j}+\tilde{X}_{k},\tilde{t}_{jk} \right]=0
\end{eqnarray*}
Similarly, we have $\tilde{h}^{-1}\left(\left[ X_{i},t_{(n+1)k}\right]  \right)=\tilde{h}^{-1}\left(\left[ Y_{n+1},t_{jk}\right]  \right)=\tilde{h}^{-1}\left(\left[ Y_{i},t_{(n+1)k}\right]  \right)=0$. We have
\begin{eqnarray*}
\tilde{h}^{-1}\left(\left[ \tilde{X}_{(n+1)}+\tilde{X}_{j},t_{(n+1)j}\right]  \right)& : = &\left[ -\sum_{i=1}^{n}\tilde{X}_{i}+\tilde{X}_{j},\left[-\sum_{i=1}^{n}\tilde{X}_{i},\tilde{Y}_{j} \right] \right] \left[-\sum_{i=1}^{n}\tilde{X}_{i},\tilde{t}_{jk} \right]\\
& =&\left[\displaystyle\sum_{\substack{i=1 \\ i\neq j}}^{n}\tilde{X}_{i},\displaystyle\sum_{\substack{l=1 \\ l\neq j}}^{n}\left[\tilde{X}_{l},\tilde{Y}_{j} \right]+\left[ \tilde{X}_{j},\tilde{Y}_{j}\right]\right] \\
& =&\displaystyle\sum_{\substack{i,l=1 \\ i\neq j\neq l,i\neq l}}^{n}\left[\tilde{X}_{i},\left[\tilde{X}_{l},\tilde{Y}_{j} \right]\right]+\displaystyle\sum_{\substack{i=1 \\ i\neq j}}^{n}\left[\tilde{X}_{i},\left[\tilde{X}_{i},\tilde{Y}_{j} \right]  \right]  +\left[\tilde{X}_{i},\left[\tilde{X}_{j},\tilde{Y}_{j} \right]  \right] =0  
\end{eqnarray*}
since the first summand is zero and by \eqref{relLieDamien3} we have $\left[\tilde{X}_{i},\left[\tilde{X}_{j},\tilde{Y}_{j} \right]  \right]=\left[\tilde{X}_{j},\left[\tilde{X}_{i},\tilde{Y}_{j} \right]  \right]$, i.e the second summand is zero as well. The same arguments work for the rest of the cubic relations. This shows that $\tilde{h}$ is an isomorphism of Lie algebras. 
We define the map $\tilde{h'}\: : \: \overline{t}_{1,n+1}\to \tilde{L}_{n}$ via $\tilde{h'}(X_{i})=\tilde{X}_{i}$, $\tilde{h'}(Y_{i})=\tilde{Y}_{i}$ for $i=1, \dots,n$. The Jacobi identity and the relations $[X_{i}, X_{j}]=[Y_{i}, Y_{j}]=0$ for $i\neq j$ allow us to extend the relation \eqref{relationLie} for unordered indices. This shows that $\tilde{h'}$ is an isomorphism and so is $\tilde{h}\tilde{h'}$.
\end{proof}
Let $\tau\in \mathbb{H}$ be as above. We define $\mathcal{D}\subset \C^{n+1}$
\[
\mathcal{D}:=\left\lbrace (\xi_{1}, \dots \xi_{n})\: : \: \xi_{i}-\xi_{j}\in \mathbb{Z}+\tau\mathbb{Z} \text{ for some distinct }i,j=1, \dots, n+1  \right\rbrace. 
\]
 We define an action of $(\C,+)$ on $\C^{n+1}-\mathcal{D}$ via $z(\xi_{1}, \dots, \xi_{n+1}):=(\xi_{1}-z, \dots, \xi_{n+1}-z)$. This induces an action of $\mathcal{E}$ on $\operatorname{Conf}_{n+1}(\mathcal{E})$ via $\xi'(\xi_{1}, \dots, \xi_{n+1}):=(\xi_{1}-\xi', \dots, \xi_{n+1}-\xi')$. We get a projection $\pi_{1}\: : \: \C^{n+1}-\mathcal{D}\to \left( \C^{n+1}-\mathcal{D}\right)/\C$ defined  via $\pi_{1}(\xi_{1}, \dots, \xi_{n+1})=\xi_{n+1}\left( \xi_{1}, \dots, \xi_{n}\right)$ which induces $\pi_{2}\: : \: \operatorname{Conf}_{n+1}(\mathcal{E})\to\operatorname{Conf}_{n+1}(\mathcal{E})/\mathcal{E}$. We fix a section
 $h_{1}\: : \:\left( \C^{n+1}-\mathcal{D}\right)/\C\to\left( \C^{n+1}-\mathcal{D}\right)$ which sends $\left[ \xi_{1}, \dots, \xi_{n}\right]$ to $(\xi_{1}, \dots,\xi_{n}, 0)$, this induces also a section $h_{2}\: : \: \operatorname{Conf}_{n+1}(\mathcal{E})/\mathcal{E}\to\operatorname{Conf}_{n+1}(\mathcal{E})$.
There is an isomorphism $\chi_{1}\: : \: \C^{n}-\mathcal{D}\to\left( \C^{n+1}-\mathcal{D}\right)/\C $ given by $\chi_{1}\left( \xi_{1}, \dots, \xi_{n}\right)=\left[\xi_{1}, \dots, \xi_{n},0 \right]$. Its inverse is $\chi^{-1}_{1}\left[\xi_{1}, \dots, \xi_{n},\xi_{n+1} \right]=(\xi_{1}-\xi_{n+1}, \dots, \xi_{n}-\xi_{n+1})$. In particular such an isomorphism induces another isomorphism $\chi_{2}\: : \:\operatorname{Conf}_{n}(\mathcal{E^{\times}})\to\operatorname{Conf}_{n+1}(\mathcal{E})/\mathcal{E}$. We define smooth functions on $\left( \C-\left\lbrace \Z+\tau \Z\right\rbrace \right) \times [0,1]$ $f(u)^{(k)}_{i,j}$ via $ w(u)^{(k)}_{i,j}=f(u)^{(k)}_{i,j}d(z_{i}-z_{j})$. We fix an integer $n$. For $0\leq i,j\leq n+1$, we define
\[
k(u)_{ij}:=\sum_{k}f(u)^{(k)}_{i,j}\operatorname{Ad}^{(k)}_{Y_{i}}(X_{j})\in \left( A^{0}_{DR}(\C^{n+1}-\mathcal{D})\otimes\Omega^{0}(1)\right) \widehat{\otimes }\widehat{\mathfrak{t}}_{1,n+1}.
\]
We define
 \[
 K(u)_{i}:=-X_{i}+\displaystyle\sum_{\substack{j=1 \\ j\neq i}}k(u)_{ij}
 \]
and 
\[
\varpi(u):=\sum_{i=1}^{n+1}K(u)_{i}d\xi_{i}\in A^{1}_{DR}(\C^{n+1}-\mathcal{D})\otimes\Omega^{0}(1)\widehat{\otimes }\widehat{\mathfrak{t}}_{1,n+1}.
\]
For $0\leq s\leq 1$ we define the bundle $\mathcal{P}^{n}_{s}$ with fiber $ \widehat{\mathfrak{t}}_{1,n}$ on $\operatorname{Conf}_{n}\left( \mathcal{E}\right) $ via the following equation (see \cite{Levrac}, \cite{Damien}): each section $f$ of $\mathcal{P}^{n}_{s}$ satisfies
\begin{align*}
f(\xi_{1}, \cdots , \xi_{j} + l, \cdots ,\xi_{n})& =  f(\xi_{1}, \cdots , \xi_{n}), \\
f(\xi_{1}, \cdots , \xi_{j} + l\tau, \cdots ,\xi_{n})& =\exp(-2 \pi il \left(1-s \right) Y_{j})\cdot f(\xi_{1}, \cdots , \xi_{n})
\end{align*}
for any integer $l$, where $y_{j}^{k}\cdot a:=\operatorname{Ad}^{k}_{y_{j}}(a)$ for $a\in\widehat{\mathfrak{t}}_{1,n}$. For $0\leq s\leq 1$ we define the connection form $$\varpi(s)\in A^{1}_{DR}(\C^{n+1}-\mathcal{D})\widehat{\otimes }\widehat{\mathfrak{t}}_{1,n+1}$$ as the evaluation of $\varpi(u)$ at $s$. For $0\leq s\leq 1$ we define the bundle $\tilde{\mathcal{P}}^{n}_{s}$ with fiber $ \widehat{\overline{\mathfrak{t}}}_{1,n}$ on $\operatorname{Conf}_{n}\left( \mathcal{E}\right) $ as the fiber quotient of $\mathcal{P}^{n}_{s}$ via the relation \eqref{relLieDamien3}. We denote by $\overline{\mathcal{P}}^{n}_{s}$ the pullback of $\tilde{\mathcal{P}}^{n}_{s}$ along $h\chi_{2}$ and by $\tilde{\varpi}(s)\in  A^{1}_{DR}\left( \C^{n+1}-\mathcal{D}\right) \widehat{\otimes }\widehat{\overline{\mathfrak{t}}}_{1,n+1}$ the image of $\varpi(s)$ via the quotient map $A^{1}_{DR}(\C^{n+1}-\mathcal{D})\widehat{\otimes }\widehat{\mathfrak{t}}_{1,n+1}\to A^{1}_{DR}(\C^{n+1}-\mathcal{D})\widehat{\otimes }\widehat{\overline{\mathfrak{t}}}_{1,n+1}$. Consider the linear map
\[
\left(h_{2}\chi_{2} \right)^{*}\: : \:A^{1}_{DR}(\C^{n+1}-\mathcal{D})\widehat{\otimes }\widehat{\overline{\mathfrak{t}}}_{1,n+1}\to A^{1}_{DR}(\C^{n}-\mathcal{D})\widehat{\otimes }\widehat{\overline{\mathfrak{t}}}_{1,n+1}
\]
We define
\[
\overline{\varpi}(s):= \left(h_{2}\chi_{2} \right)^{*} \tilde{\varpi}(s)+\sum\nu(s)_{i}Y_{i}\in A^{1}_{DR}(\C^{n}-\mathcal{D})\widehat{\otimes }\widehat{\overline{\mathfrak{t}}}_{1,n+1},
\]
\begin{defi}[\cite{Damien}]
 The KZB connection is the holomorphic flat connection on the configuration space of the punctured torus  $\omega_{KZB,n}^{\tau}:=\overline{\varpi}(0)$ on the holomorphic bundle $\overline{\mathcal{P}}^{n}_{0}$.
\end{defi} 
Let $r_{*}C(s)$ be the connection obtained in Theorem \ref{partialfinal}. There is a gauge equivalence between 
 $r_*C(1)$ and $r_*C(s)$ given by $\sum_{i} 2 \pi i r_{i}(1-s)Y_{i}\in A_{DR}^{0}\left( \C^{n}-\mathcal{D}\right) \widehat{\otimes} \mathfrak{u}$. By Theorem \ref{thmbundleholonomy}, this gives a factor of automorphy $F^{s}$ and hence a bundle $E_{F^{s}}$ such that $(d-r_*C(s), E_{F^{s}})$ is a flat connection on $\operatorname{Conf}_{n}\left(\mathcal{E}_{\tau}^{\times} \right)$.
\begin{thm}\label{final}
Let $r_{*}C(s)$ be the connection obtained in Theorem \ref{partialfinal}.
We have  $$\left( d-\overline{\varpi}(s),\overline{\mathcal{P}}^{n+1}_{s} \right) =\left( d-r_{*}C(s), E_{F^{s}}\right) $$ and for each $0\leq s\leq 1$, $\overline{\varpi}(s)$ defines a flat connection form on the bundle $\overline{\mathcal{P}}^{n+1}_{s}$ such that its monodromy representation is the Malcev completion of $\pi_{1}\left(\operatorname{Conf}_{n}\left(\mathcal{E}_{\tau}^{\times} \right) \right) $. 
In particular
$$
\left( d-\omega_{KZB,n}^{\tau},\overline{\mathcal{P}}^{n+1}_{0} \right) =\left( d-r_{*}C(0), E_{F^{0}}\right).
$$ 
\end{thm}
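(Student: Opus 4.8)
The plan is to verify, by a direct computation, that the explicit degree zero geometric connection $r_{*}C(s)$ of Theorem \ref{partialfinal} is, under the identification of fibers, literally the pullback $\overline{\varpi}(s)$ of the KZB form, and then to deduce flatness and the monodromy from Theorem \ref{thmbundleholonomy}. First I would fix the identification of fibers. By Proposition \ref{equality} the Malcev Lie algebra $\mathfrak{u}$ is $\widehat{\overline{\mathfrak{t}}}_{1,n+1}$, which is exactly the fiber of $\overline{\mathcal{P}}^{n+1}_{s}$; under this identification the generators $X_{i},Y_{i}$ for $1\leq i\leq n$ correspond, the puncture index $0$ of the $B_{n}$-notation plays the role of the index $n+1$ of $\mathfrak{t}_{1,n+1}$, and the central relations \eqref{relLieDamien2} read $X_{n+1}=-\sum_{l=1}^{n}X_{l}$ and $Y_{n+1}=-\sum_{l=1}^{n}Y_{l}$.

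The heart of the proof is the computation of $(h_{2}\chi_{2})^{*}\tilde{\varpi}(s)$ and its comparison with $r_{*}C(s)-\sum_{i}\nu(s)_{i}Y_{i}$. The section $h_{1}\chi_{1}$ sends $(\xi_{1},\dots,\xi_{n})$ to $(\xi_{1},\dots,\xi_{n},0)$, so the pullback annihilates $d\xi_{n+1}$ and identifies the restriction of $f(s)^{(k)}_{i,n+1}$ to $\xi_{n+1}=0$ with $f(s)^{(k)}_{i,0}$. Expanding $\varpi(s)=-\sum_{i}X_{i}d\xi_{i}+\sum_{i}\sum_{j\neq i}k(s)_{ij}d\xi_{i}$ and substituting $X_{n+1}=-\sum_{l}X_{l}$, $Y_{n+1}=-\sum_{l}Y_{l}$ into the terms with $j=n+1$, one sees that the diagonal elements $\operatorname{Ad}^{(k)}_{Y_{i}}(X_{i})$ appearing in $r_{*}C(s)$ are produced by the puncture terms $k(s)_{i,n+1}$, while the off-diagonal terms combine the contributions of $k(s)_{ij}$ for $j\neq n+1$ with those of $k(s)_{i,n+1}$. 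The reorganization into the precise shape of Theorem \ref{partialfinal} then uses the antisymmetry $w(s)^{(k)}_{i,j}=(-1)^{k+1}w(s)^{(k)}_{j,i}$ of \eqref{rel2}, its Lie-algebra counterpart $\operatorname{Ad}^{(k)}_{Y_{i}}(X_{j})=(-1)^{k+1}\operatorname{Ad}^{(k)}_{Y_{j}}(X_{i})$ from \eqref{shiftrel}, the splitting \eqref{rel3} of the $k=0$ forms, and the central relations $\sum_{l}X_{l}=\sum_{l}Y_{l}=0$. Adding back $\sum_{i}\nu(s)_{i}Y_{i}$, which is exactly the supplementary summand in the definition of $\overline{\varpi}(s)$, identifies the two connection forms.

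It then remains to match the underlying bundles and to invoke the existence theorem. By Theorem \ref{thmbundleholonomy} the factor of automorphy $F^{s}$ is the one induced by the gauge $\sum_{i}2\pi i r_{i}(1-s)Y_{i}$; under $\xi_{j}\mapsto\xi_{j}+l\tau$, that is $r_{j}\mapsto r_{j}+l$, this factor is multiplication by $\exp(-2\pi i l(1-s)Y_{j})$, which is precisely the defining quasi-periodicity of sections of $\mathcal{P}^{n+1}_{s}$. Hence $E_{F^{s}}$ coincides with $\overline{\mathcal{P}}^{n+1}_{s}$ after the fiber quotient \eqref{relLieDamien3} and pullback along $h_{2}\chi_{2}$. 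To apply Theorem \ref{thmbundleholonomy} I would first check its hypothesis that $\mathcal{F}(rf_{\bullet})g_{\bullet}$ lifts to a minimal model of $A_{n}$, using Theorem \ref{model} (so that $A_{n}\subset A_{DR}(\operatorname{Conf}_{n}(\mathcal{E}_{\tau}^{\times}))$) together with the fact that $B_{n}$ is a $1$-extension (Theorem \ref{1-ext}). The theorem then gives that $d-r_{*}C(s)$ is a well-defined flat connection on $E_{F^{s}}\cong\overline{\mathcal{P}}^{n+1}_{s}$ whose monodromy is the Malcev completion of $\pi_{1}(\operatorname{Conf}_{n}(\mathcal{E}_{\tau}^{\times}))$, establishing the first two assertions.

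Finally, specializing to $s=0$: by Theorem \ref{partialfinal} the coefficients of $C(0)$ are holomorphic with logarithmic singularities along $\mathcal{D}$, so part (2) of Theorem \ref{thmbundleholonomy} applies, and $\overline{\varpi}(0)=\omega^{\tau}_{KZB,n}$ by definition, which yields the last displayed identity. I expect the single genuine difficulty to be the reorganization carried out in the second paragraph: it is purely combinatorial but delicate, since the KZB form is written in the symmetric presentation over all indices $1,\dots,n+1$ with factors $d\xi_{i}$, whereas $r_{*}C(s)$ is written in the gauge-fixed presentation relative to the puncture in terms of the forms $w(s)^{(k)}_{i,0}$, and passing between the two requires simultaneously invoking every relation among the $w$'s and every central and antisymmetry relation in $\overline{\mathfrak{t}}_{1,n+1}$.
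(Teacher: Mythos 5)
Your proposal is correct and follows essentially the same route as the paper's proof: identify the fiber via Proposition \ref{equality}, verify the lifting hypothesis of Theorem \ref{thmbundleholonomy} using the compatibility of the Hodge decomposition with $A_{n}$, match $E_{F^{s}}$ with $\overline{\mathcal{P}}^{n+1}_{s}$ through the factor of automorphy induced by the gauge $\sum_{i}2\pi i(1-s)r_{i}Y_{i}$, and then perform the direct expansion of $(h_{2}\chi_{2})^{*}\tilde{\varpi}(s)$ with $X_{n+1}=-\sum_{l}X_{l}$, $Y_{n+1}=-\sum_{l}Y_{l}$, reorganized via the antisymmetry and central relations into the expression of Theorem \ref{partialfinal}. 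The only difference is cosmetic: your write-up makes the quasi-periodicity check of $F^{s}$ against the defining sections of $\mathcal{P}^{n+1}_{s}$ explicit, which the paper leaves implicit.
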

\begin{proof}
By Proposition \ref{equality}, $\mathfrak{u}=\widehat{\overline{t}}_{1,n+1}$. Let $g_{\bullet}\: : \: \left(W, m_{\bullet}^{W} \right)\to \mathcal{F}\left(B_{n}, m_{\bullet} \right)  $ be as in Theorem \ref{partialfinal}. Since $A_{n}$ is a model and the Hodge decomposition on $B_{n}$ is compatible, we can restrict the decomposition on $A_{n}$ and by using the homotopy transfer theorem we can extend $g_{\bullet}p_{1}$ into a minimal model for $A_{DR}\left(\operatorname{Conf}_{n}\left(\mathcal{E}_{\tau}^{\times} \right) \right) $. By Theorem \ref{partialfinal} for each $0\leq s\leq 1$,  $r_{*}C(s)$ is gauge equivalent to $r_{*}C(1)$ via gauge $h_s(\xi_{1}, \dots, \xi_{n}):=\sum_{i} 2 \pi i (1-s)r_{i}Y_{i}\in A_{DR}^{0}\left( \C^{n}-\mathcal{D}\right) \widehat{\otimes} \overline{t}_{1,n+1}$. Consider the factor of automorphy 
$$F^{s}_{g}(\xi_{1}, \dots, \xi_{n})):=e^{-h_s\left( g(\xi_{1}, \dots, \xi_{n})\right) }e^{h_s\left( \xi_{1}, \dots, \xi_{n}\right) }$$ where $h_s(\xi_{1}, \dots, \xi_{n}):=\sum_{i} 2 \pi i sr_{i}Y_{i}\in A_{DR}^{0}\left( \C^{n}-\mathcal{D}\right) \widehat{\otimes} \overline{t}_{1,n+1}$. For a $g\in \Z^{2n}$ we have that
\[
g((\xi_{1}, \dots, \xi_{n}), X):=\left(g(\xi_{1}, \dots, \xi_{n}),F^s_{g}(\xi_{1}, \dots,\xi_{n})X \right) 
\]
defines a $\Z^{2n}$-action on $\C^{n}-\mathcal{D}\times \overline{t}_{1,n+1}$ and $\overline{\mathcal{P}}^{n+1}_{s}=E_{F^{s}}$. By Theorem \ref{thmbundleholonomy} for each $0\leq s\leq 1$, $\left(d- r_{*}C(s), E_{F^{s}}\right) $ is a flat connection such that  its monodromy representation is the Malcev completion of $\pi_{1}\left(\operatorname{Conf}_{n}\left(\mathcal{E}_{\tau}^{\times} \right) \right) $. We show the equality between the connections. Since $X_{n+1}=-X_{1}-\dots -X_{n}$, we have
\begin{align*}
\left(h_{2}\chi_{2} \right)^{*}\tilde{\varpi}(s)& = \sum_{i=1}^{n}K(s)_{i}d\xi_{i}\\
& = \sum_{i=1}^{n}\left(-X_{i}+\sum_{j\neq i} k(s)_{ij} \right) d\xi_{i}\\
& = \sum_{i=1}^{n} \big(-X_{i}+\sum_{k\geq 1}f(s)_{i,n+1}^{(k)}\operatorname{Ad}^{(k)}_{Y_{i}}(-X_{i})d\xi_{i}\\
&+\sum_{j,\,k\geq 1}f(s)_{i,j}^{(k)}\operatorname{Ad}^{(k)}_{Y_{i}}(X_{j})d\xi_{i}-f(s)_{i,n+1}^{(k)}\operatorname{Ad}^{(k)}_{Y_{i}}(X_{j})d\xi_{i}\big)\\
&= \sum_{i=1}^{n} \left( -X_{i}-\sum_{k\geq 1}f(s)_{i,n+1}^{(k)}\operatorname{Ad}^{(k)}_{Y_{i}}(X_{i})d\xi_{i}\right) \\
& + \sum_{(i>j),\, k\geq 1}\big(f(s)_{i,j}^{(k)}d\xi_{i}\operatorname{Ad}^{(k)}_{Y_{i}}(X_{j})+ f(s)_{j,i}^{(k)}d\xi_{j}\operatorname{Ad}^{(k)}_{Y_{j}}(X_{i})\\
&-f(s)_{i,n+1}^{(k)}d\xi_{i}\operatorname{Ad}^{(k)}_{Y_{i}}(X_{j})-f(s)_{j,n+1}^{(k)}d\xi_{j}\operatorname{Ad}^{(k)}_{Y_{j}}(X_{i})\big).\\
\end{align*}
We have
\[
-f(s)_{i,n+1}^{(k)}d\xi_{j}\operatorname{Ad}^{(k)}_{Y_{i}}(X_{j})=f(s)_{0,i}^{(k)}\operatorname{Ad}^{(k)}_{Y_{j}}(X_{i})=-w(s)_{0,i}^{(k)}\operatorname{Ad}^{(k)}_{Y_{j}}(X_{i})
\]
and 
\begin{align*}
f(s)_{i,j}^{(k)}d\xi_{i}\operatorname{Ad}^{(k)}_{Y_{i}}(X_{j})+ f(s)_{j,i}^{(k)}d\xi_{j}\operatorname{Ad}^{(k)}_{Y_{j}}(X_{i})=& \left( -f(s)_{j,i}^{(k)}d\xi_{i}+ f(s)_{j,i}^{(k)}d\xi_{j}\right) \operatorname{Ad}^{(k)}_{Y_{j}}(X_{i}).
\end{align*}
and we conclude
\begin{align*}
\left(h_{2}\chi_{2} \right)^{*}\tilde{\varpi}(s) =& -\sum_{i}w(s)^{(0)}_{i,0}X_{i}-\sum_{i,k\geq 1}w(s)^{(k)}_{i,0}\operatorname{Ad}^{(k)}_{Y_{i}}\left(X_{i} \right) \\
& -  \sum_{j<i,k\geq 1}\left( w(s)^{(k)}_{j,0}+w(s)^{(k)}_{0,i}-w(s)^{(k)}_{j,i}\right)\operatorname{Ad}^{(k)}_{Y_{i}}\left(X_{j} \right).
\end{align*}
\end{proof}
The fact that the monodromy representation of $d-\omega_{KZB,n}^{\tau}$ is the Malcev completion of $\pi_{1}\left(\operatorname{Conf}_{n}\left(\mathcal{E}_{\tau}^{\times} \right) \right) $ is proved in \cite{Damien}. 
\subsection{KZB and KZ connection}\label{KZKZB}
 In the previous chapter (see Proposition \ref{thmfinale0}), we give an interpretation of that in terms of $C_{\infty}$-morphism. In this section, we prove the same facts for $\omega_{KZB,n}^{\tau}$ as well: $\lim_{\tau\to i\infty }\omega_{KZB,n}^{\tau}$ is equal to $\omega_{KZ,n}$ modulo a morphism of Lie algebra $Q^{*}\: : \: \widehat{\mathfrak{t}}_{n}\widehat{\otimes} \Q(2\pi i)\to \widehat{\overline{\mathfrak{t}}}_{1,n+1}\widehat{\otimes} \Q(2 \pi i)$ of complete Lie algebra. We use the argument of Subsection  \eqref{Morphism of homological pairs} to show that $Q^{*}$ is induced by a strict $C_{\infty}$-morphism $p_{\bullet}$. Let $\tau\in\mathbb{H}$ be fixed as above. We set $q:=\exp(2 \pi i \tau)$.
We define the action of $\Z^{n}$ on $\left( \C^{*}\right)^{n}$ via
\begin{equation}\label{Zaction1}
(m_{1}, \dots , m_{n}) \cdot \left( z_{1}, \dots, z_{n}\right) := \left( q^{m_{1}}z_{1}, \dots, q^{m_{n}}z_{n}\right) 
\end{equation}
where $\left( z_{1}, \dots, z_{n}\right)$ are coordinate on $\left( \C^{*}\right)^{n}$. We set $z_{0}:=1$.  We define a map $\boldmath{e}\: : \: \C^{n}\to \left( \C^{*}\right)^{n} $ via $\boldmath{e}\left( \xi_{1}, \dots, \xi_{n}\right):=\left(\exp(2 \pi i \xi_{1} ), \dots, \exp(2 \pi i \xi_{n} ) \right)$. This map extend ot a simplicial map $\boldmath{e}_{\bullet}\: : \: \C^{n}_{\bullet}\Z^{2n}\to \left( \C^{*}\right)^{n}_{\bullet}\Z^{n}$ between the two action groupoid 
\begin{align*}
\boldmath{e}_{0}\left( \xi_{1}, \dots, \xi_{n}\right):=&\left(\exp(2 \pi i \xi_{1} ), \dots, \exp(2 \pi i \xi_{n} ) \right),\\ \boldmath{e}_{1}\left( \left( \xi_{1}, \dots, \xi_{n}\right),\left( \left( l_{1},m_{1}\right) , \dots, \left( l_{n},m_{n}\right) \right)\right) :=&\left(\exp(2 \pi i \xi_{1} ), \dots, \exp(2 \pi i \xi_{n} ), \left( m_{1} , \dots, m_{n}\right) \right)
\end{align*}
and it induces an isomorphism on the quotient. Let $\mathcal{D}\subset \C^{n}$ be the divisor defined above. Then $e(\mathcal{D})$ is the normal crossing divisor
\[
\left\lbrace  \left( z_{1}, \dots, z_{n}\right)\: |\: z_{i}\neq q^{\Z}z_{j}\text{ for }0\leq i<j\leq n  \right\rbrace \subset \left( \C^{*}\right)^{n}.
\]
It is clearly preserved by the action of $\Z^{n}$ and hence  the restriction gives rise to a morphism of simplicial manifolds with simplicial normal crossing divisor $$\boldmath{e}_{\bullet}\: : \: \left( \C^{n}-\mathcal{D}\right)_{\bullet} \Z^{2n}\to \left( \left( \C^{*}\right)^{n}-e(\mathcal{D})\right) _{\bullet}\Z^{2n}$$ 
By \eqref{Fourier} and we have
\begin{align}
\nonumber \phi_{i,j}^{(0)}& =\frac{1}{2\pi i}\frac{dz_{i}}{z_{i}}-\frac{1}{2\pi i}\frac{dz_{i}}{z_{i}},\\
\nonumber \phi^{(1)}_{i,j}& =\frac{1}{2\pi i} \left( \pi i  +\frac{2 \pi iz_{j} }{z_{i}-z_{j}}-\left( 2 \pi i \right)^{2}\sum_{n=1}^{\infty}\sum_{n|d}d\left( \left(\frac{z_{i}}{z_{j}} \right)^\frac{n}{d}-\left(\frac{z_{i}}{z_{j}}\right) ^\frac {-n}{d}\right)q^{n}\right)\left( \frac{dz_{i}}{z_{i}}-\frac{dz_{i}}{z_{i}}\right)  ,\\
\label{functionformal1} \phi^{(l)}_{i,j}& = -\frac{1}{2\pi i}\left( \frac{\left( 2 \pi i \right)^{l+1}}{l!}\left(  \sum_{n=1}^{\infty}\left( \sum_{n|d}d^{l}\left( \left(\frac{z_{i}}{z_{j}} \right)^\frac{n}{d}+(-1)^{l}\left(\frac{z_{i}}{z_{j}} \right)^\frac {-n}{d}\right)\right) q^{n} +\frac{B_{l}}{2 \pi i}\right)\right)\left( \frac{dz_{i}}{z_{i}}-\frac{dz_{i}}{z_{i}}\right)   
\end{align}
for $l>1$, where $B_{l}$ are the Bernoulli numbers. 
\begin{lem}
Let $D_{0}, D_{1}, D_{d}\subset \C^{n}$ be defined as follows: $D_{0}$ is the set of points $\left( z_{1}, \dots, z_{n}\right)$ such that $z_{i}=0$ for some $i$; $D_{1}$ is the set of points $\left( z_{1}, \dots, z_{n}\right)$ such that $z_{i}=1$ for some $i$; $D_{d}$ is the set of points $\left( z_{1}, \dots, z_{n}\right)$ such that $z_{i}=z_{j}$ for some $i\neq j$. The forms $\phi^{(l)}_{i,j}$ for $l\geq 0$ can be written as power series on $q$ where the coefficients are $1$-forms of the form $fdz_{i}$ for some $i$, where $f$ is rational function on $\C^{n}$ of the form $\frac{p_{1}}{p_{2}}$, where $p_{j}$ are polynomials over the field $\Q(2 \pi i)$ for $j=1,2$. Moreover $f$ has only poles of order $1$ located in the normal crossing divisor $\underline{\mathcal{D}}:=D_{0}\cup D_{1}\cup D_{d}\subset \C^{n}$.
\end{lem}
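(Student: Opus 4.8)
The plan is to obtain everything by direct inspection of the explicit Fourier expansions \eqref{functionformal1}, organised by powers of $q$. First I would fix the change of coordinates given by the substitution $z_i=\exp(2\pi i\xi_i)$: since $dz_i=2\pi i\,z_i\,d\xi_i$, one has
\[
d\xi_i=\frac{1}{2\pi i}\frac{dz_i}{z_i},\qquad d(\xi_i-\xi_j)=\frac{1}{2\pi i}\left(\frac{dz_i}{z_i}-\frac{dz_j}{z_j}\right),
\]
and when $j=0$ this degenerates to $d\xi_i$ because $z_0=1$ is constant. By \eqref{functionformal1} each $\phi^{(l)}_{i,j}$ is a scalar power series in $q$ multiplied by this fixed logarithmic $1$-form, so it suffices to examine the scalar coefficient of each $q^n$ and then multiply by $\tfrac{1}{2\pi i}(\tfrac{dz_i}{z_i}-\tfrac{dz_j}{z_j})$, reducing the whole lemma to a termwise computation.

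Next I would handle the coefficients of $q^n$ for $n\ge 1$. In \eqref{functionformal1} the inner sum runs over the finitely many divisors of $n$, and each summand is a constant times a monomial $(z_i/z_j)^{\pm n/d}$; the constants are products of powers of $2\pi i$ with the rational Bernoulli prefactors, hence lie in $\Q(2\pi i)$. Since $(z_i/z_j)^{\pm n/d}$ is a Laurent monomial, it is a rational function whose zero/pole locus is contained in $\{z_i=0\}\cup\{z_j=0\}\subset D_0$; multiplying by $\tfrac{dz_i}{z_i}$ or $\tfrac{dz_j}{z_j}$ keeps it of the form $f\,dz_i$ (resp. $f\,dz_j$) with $f=p_1/p_2$ rational over $\Q(2\pi i)$ and with poles only along $D_0$. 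The Bernoulli $q^0$–term is treated identically. Thus every coefficient is a $\Q(2\pi i)$–rational $1$–form whose poles lie in $\underline{\mathcal D}$, which already gives the rationality and pole–localisation part of the statement.

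The genuinely simple poles, and hence the $D_1$ and $D_d$ components, arise from the single non–monomial contribution, namely the $q^0$ part $\tfrac{2\pi i\,z_j}{z_i-z_j}$ of $\phi^{(1)}_{i,j}$ produced by the $\coth$ term of \eqref{Fourier}. Multiplying it by $\tfrac{1}{2\pi i}(\tfrac{dz_i}{z_i}-\tfrac{dz_j}{z_j})$ gives
\[
\frac{z_j}{z_i(z_i-z_j)}\,dz_i-\frac{1}{z_i-z_j}\,dz_j,
\]
and since the irreducibles $z_i$, $z_j$ and $z_i-z_j$ are pairwise coprime this has only first–order poles, along $\{z_i=z_j\}$ and $\{z_i=0\}$. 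For both indices nonzero this locus is $D_d\cup D_0$, while for $j=0$ (so $z_j=1$) the divisor $\{z_i=z_j\}$ becomes $\{z_i=1\}\subset D_1$; this is precisely the source of $D_1$. The transversality of the smooth hypersurfaces $\{z_i=0\}$, $\{z_i=1\}$, $\{z_i=z_j\}$ then identifies $\underline{\mathcal D}$ as a normal crossing divisor.

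I expect the main obstacle to be the careful pole–order bookkeeping along the three families of components: one has to keep track, coefficient by coefficient, of which factor $\tfrac{dz_i}{z_i}$ or $\tfrac{dz_j}{z_j}$ is hit by which Laurent exponent, and to separate the first–order behaviour along $D_1\cup D_d$, which is governed entirely by the linear $\coth$ term in $\phi^{(1)}$, from the contributions along $D_0$ coming from the logarithmic differentials and the monomials. Everything else is a mechanical reading–off from \eqref{functionformal1}, so no conceptual difficulty beyond this case analysis is anticipated.
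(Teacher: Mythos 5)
Your route is the same one the paper takes: the paper gives this lemma no proof at all, treating it as immediate from the expansions \eqref{functionformal1}, and your termwise inspection in powers of $q$ is exactly that reading-off. The parts you actually carry out are correct: each $q^{n}$-coefficient is a finite $\Q(2\pi i)$-linear combination of Laurent monomials $(z_{i}/z_{j})^{\pm n/d}$ times the logarithmic form $\frac{1}{2\pi i}\left(\frac{dz_{i}}{z_{i}}-\frac{dz_{j}}{z_{j}}\right)$, which gives the rationality over $\Q(2\pi i)$ and the localisation of all poles inside $\underline{\mathcal{D}}$; and you correctly isolate the $\coth$ term $\frac{2\pi i z_{j}}{z_{i}-z_{j}}$ as the unique source of poles along $D_{d}$, respectively along $D_{1}$ when $j=0$.

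The genuine gap is the step you defer as ``mechanical pole-order bookkeeping'': that bookkeeping cannot be completed, because the order-one claim is false along $D_{0}$. Multiplying $(z_{i}/z_{j})^{-n/d}$ by $\frac{dz_{i}}{z_{i}}$ produces $z_{j}^{n/d}\,z_{i}^{-n/d-1}\,dz_{i}$, a pole of order $n/d+1$ along $\left\lbrace z_{i}=0\right\rbrace \subset D_{0}$. Concretely, already the $q^{1}$-coefficient of $\phi^{(1)}_{i,0}$ contains the term $z_{i}^{-1}\frac{dz_{i}}{z_{i}}=z_{i}^{-2}dz_{i}$ with nonzero coefficient in $\Q(2\pi i)$, a double pole along $D_{0}$, and in general the $q^{n}$-coefficients have poles of order up to $n+1$ there. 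Thus the most your argument (or any argument) can establish is the corrected statement: the coefficients are $\Q(2\pi i)$-rational $1$-forms with poles located in $\underline{\mathcal{D}}$, of order one along $D_{1}\cup D_{d}$ but of order growing with $n$ along $D_{0}$. This weaker statement is all the paper uses afterwards, since membership in $\operatorname{Rat}^{\bullet}_{\Q(2\pi i)}\left(\C^{n},\underline{\mathcal{D}}\right)\left(\left(q\right)\right)$ imposes no bound on pole order. A smaller inherited defect: your appeal to transversality, like the paper's wording, is also not quite right, since at points where $z_{i}=z_{j}=0$ the three branches $z_{i}=0$, $z_{j}=0$, $z_{i}=z_{j}$ meet, so for $n\geq 2$ the divisor $\underline{\mathcal{D}}$ is not strictly normal crossing.
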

Given a subfield $\Q\subset \Bbbk\subset \C$, consider a normal crossing divisor $\mathcal{D}'\subset \C^{n}$, we denote by $\operatorname{Rat}^{0}_{\Bbbk}( \C^{n},\mathcal{D}')$ the algebra of rational functions $\frac{p_{1}}{p_{2}}$ with poles along $\mathcal{D}$ such that $p_{1}, p_{2}$ are polynomials over the field $\Bbbk$. We denote by $\operatorname{Rat}^{\bullet}_{\Bbbk}( \C^{n},\mathcal{D}')$ the differential graded $\Bbbk$- subalgebra of differential forms generated by forms of type $f dx_{I}$, with $f\in\operatorname{Rat}^{0}_{\Bbbk}(\C^{n},\mathcal{D}')$. In particular $\operatorname{Rat}^{\bullet}_{\Bbbk}( \C^{n},\mathcal{D}')\otimes \C\subset A_{DR}^{*}( \C^{n}-\mathcal{D}')$.\\
We consider the differential graded algebra $\operatorname{Rat}^{\bullet}_{\Q(2 \pi i)}(\C^{n},\underline{\mathcal{D}})$ and we now assume that $q$ is a formal variable of degree zero. In particular, notice that the function $\phi^{(k)}_{i,j}$, $0\leq i,j\geq n$, $k\geq 0$, as defined in \eqref{functionformal1},  are elements of  $\operatorname{Rat}^{1}_{\Q(2 \pi i)}\left( \C^{n},\mathcal{D}'\right) ((q))$. We have a differential graded algebra (over $\Q(2 \pi i)$) of formal Laurent series
\[
\left(d, \wedge, \operatorname{Rat}^{\bullet}_{\Q(2 \pi i)}\left( \C^{n},\underline{\mathcal{D}} \right)\left( \left(  q \right) \right)   \right). 
\] 
We extend the action of $\Z^{n}$ defined in \eqref{Zaction1} extend to an action $\rho^{c}\: : \: \operatorname{Rat}^{\bullet}_{\Q(2 \pi i)}(\C^{n},\underline{\mathcal{D}} )\left( \left(  q \right) \right)  \to\operatorname{Map}\left(\Z^{n}, \operatorname{Rat}^{\bullet}_{\Q(2 \pi i)}(\C^{n},\underline{\mathcal{D}} )\left( \left(  q \right) \right)   \right)$ via
\[
 \rho^{c}\left(q\right)(m_{1}, \dots , m_{n}) :=q, \quad  \rho^{c}\left(dz_{i}\right)(m_{1}, \dots , m_{n}) :=q^{m_{i}}dz_{i}, \quad  \rho^{c}\left(\frac{1}{z_{i}}\right)(m_{1}, \dots , m_{n}):= \frac{q^{-m_{i}}}{z_i}
\]
and 
\[
 \rho^{c}\left(\frac{1}{z_{i}-1}\right)(m_{1}, \dots , m_{n}) :=\begin{cases}
 \sum_{l=0}^{\infty}\left( q^{n}z_{i}\right)^{l} & n> 0\\
 \sum_{l=0}^{\infty}\frac{q^{-n}}{z_{i}}\left( \frac{1}{z_{i}^{l}}\right)^{l}& n<0
 \end{cases}
\]
The nerve gives rise to a cosimplicial unital commutative differential graded algebra $A^{\bullet, \bullet}$, where $A^{p,q}:=\operatorname{Map}\left(\left( \Z^{n}\right)^{p} , \operatorname{Rat}^{q}_{\Q(2 \pi i)}(\C^{n},\underline{\mathcal{D}} )\left( \left(  q \right) \right)   \right)$. For $0\leq i,j\leq n$, we denote by $\underline{\gamma}_{i,j}\in A^{1,0}$ the group homomorphism ${\gamma(0)}_{i,j}\: : \: \Z^{2n}\to \C$ defined in the previous section. We denote by $\underline{\phi}_{i,j}\in A^{0,1}$ the $1$-forms in \eqref{functionformal1} considered as formal power series in $q$. $\operatorname{Tot}_{N}(A)$ carries a $C_{\infty}$-structure ${m}_{\bullet}$. 
Let $\underline{B'}_{n}$ be the $C_{\infty}$-subalgebra of $\operatorname{Tot}_{N}(A)$ generated by
	\[
\underline{\phi}_{i,j}^{(k)}, \underline{\gamma}_{i,j} \text{ for }k\geq 0,\,i,j=0,1,\dots,n.
	\]
Let $\underline{J}\subset \underline{B'}$ be the $C_\infty$ ideal generated by all the two forms
$$m_{l}\left(\underline{\gamma}_{i_{1},j_{1}}, \dots, \underline{\gamma}_{i_{l},j_{l}}\right)$$
for $l>2$. We denote by $\underline{B}_{n}:= \underline{B'}_{n}/\underline{J}$ the quotient $C_{\infty}$-algebra. Notice that a consequence of the conjecture in Remark \ref{conj} is that $\underline{B'}_{n}$ is a $1$-model.
\begin{prop}
Let ${A'}_{n}\subset {B'}_{n}$ as in Definition \ref{defAn}. 
\begin{enumerate}
\item There is a strict morphism of complex differential graded algebras
\[
\varphi\: : \: ev_{0}H({A'}_{n})\to \operatorname{Rat}^{\bullet}_{\Q(2 \pi i)}\left( \C^{n},\underline{\mathcal{D}} \right)\left( \left(  q \right) \right)\otimes \C
\]
which preserves the action $\rho^{c}$,
\item $\varphi$ induces a strict morphism of $C_{\infty}$-algebras
\[
\varphi\: : \: ev_{0}H({B'}_{n})\to \underline{B'}_{n}\otimes \C
\]
such that $\varphi\left(J_{1,DR} \right)\subset \underline{J}$, and
\item $\varphi$ induces a strict morphism of $C_{\infty}$-algebras
\[
\varphi\: : \: ev_{0}H({B'}_{n})/J_{1,DR}\to \underline{B}_{n}\otimes \C.
\]
\end{enumerate}
\end{prop}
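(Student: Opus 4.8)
The plan is to realize $\varphi$ as the inverse of the pullback along the morphism of action groupoids $e_\bullet\colon(\C^n-\mathcal{D})_\bullet\Z^{2n}\to((\C^*)^n-e(\mathcal{D}))_\bullet\Z^n$, so that the whole statement reduces to functoriality of the Getzler--Cheng $C_\infty$-structure on normalized total complexes together with an explicit matching of generators. The map $e$ is the covering of $(\C^*)^n$ obtained by quotienting $\C^n$ by the sublattice $\Z^n\subset\Z^{2n}$ of integer $s$-translations $\xi_i\mapsto\xi_i+l_i$, so $e^*$ is injective on forms; moreover the induced homomorphism $\Z^{2n}\to\Z^n$, $((l_1,m_1),\dots,(l_n,m_n))\mapsto(m_1,\dots,m_n)$, is surjective, hence pullback is injective on group cochains as well. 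Thus $e_\bullet$ induces an injective strict $C_\infty$-morphism $e_\bullet^*$ between the relevant total complexes, and I will recover $\varphi$ as its inverse on the subalgebras at hand.

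For part (1) I restrict to $\operatorname{Tot}_N^{0,\bullet}$, i.e.\ to honest forms: $ev_0H(A'_n)$ is the differential graded subalgebra of $A_{DR}(\C^n-\mathcal{D})$ generated by the $\phi^{(k)}_{i,j}=ev_0H(w(u)^{(k)}_{i,j})$, since $\nu(0)_i=\beta(0)_i=0$ and $\tilde\gamma(0)=2\pi i$ is a constant. The Lemma preceding the proposition shows that under $z_i=\exp(2\pi i\xi_i)$ each $\phi^{(k)}_{i,j}$ equals a power series $\underline\phi^{(k)}_{i,j}$ with coefficients of the form $f\,dz_i$, $f\in\operatorname{Rat}^0_{\Q(2\pi i)}(\C^n,\underline{\mathcal D})$ with only simple poles along $\underline{\mathcal D}$; this is exactly \eqref{functionformal1}. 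Hence $e^*\underline\phi^{(k)}_{i,j}=\phi^{(k)}_{i,j}$, and $\varphi(\phi^{(k)}_{i,j}):=\underline\phi^{(k)}_{i,j}$ defines the inverse of $e^*$ on generators, extending to a differential graded algebra morphism because $e^*$ is an injective algebra map. Equivariance for $\rho^c$ is the statement that the residual $\tau$-translation action $\xi_i\mapsto\xi_i+m_i\tau$ pushes forward under $e$ to $z_i\mapsto q^{m_i}z_i$; comparing the quasi-periodicity \eqref{rel4} at $u=0$ with the defining formulas for $\rho^c$ on $dz_i$, $1/z_i$ and $1/(z_i-1)$ checks this on generators, and it holds globally since $e_\bullet$ is a morphism of action groupoids.

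For part (2) I upgrade to the $C_\infty$-level. Both $ev_0H(B'_n)$ and $\underline{B'}_n\otimes\C$ are $C_\infty$-subalgebras of the normalized total complexes associated by Getzler--Cheng to the two action groupoids, and since $e_\bullet^*$ is a strict morphism of cosimplicial commutative differential graded algebras it induces a strict $C_\infty$-morphism on $\operatorname{Tot}_N$, sending $\underline\gamma_{i,j}\mapsto\gamma(0)_{i,j}=ev_0H(\alpha(u)_{i,j})$ and $\underline\phi^{(k)}_{i,j}\mapsto\phi^{(k)}_{i,j}$. As these are precisely the two generating sets, $e_\bullet^*$ restricts to a bijective strict $C_\infty$-morphism $\underline{B'}_n\otimes\C\to ev_0H(B'_n)$, and I set $\varphi$ to be its inverse, again strict. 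For the ideal inclusion, strictness of $ev_0H$ and of $\varphi$ gives $\varphi\bigl(ev_0H(m_l(\alpha(u)_{i_1,j_1},\dots,\alpha(u)_{i_l,j_l}))\bigr)=m_l(\underline\gamma_{i_1,j_1},\dots,\underline\gamma_{i_l,j_l})$, a generator of $\underline J$; since $J_{1,DR}$ is the $C_\infty$-ideal generated by the left-hand expressions, $\varphi(J_{1,DR})\subset\underline J$.

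Part (3) is then formal: because $\varphi(J_{1,DR})\subset\underline J$, the composite of $\varphi$ with the projection $\underline{B'}_n\otimes\C\to\underline B_n\otimes\C$ annihilates $J_{1,DR}$ and factors through $ev_0H(B'_n)/J_{1,DR}$, yielding the asserted strict $C_\infty$-morphism. I expect the main obstacle to sit entirely in part (1), namely the simultaneous verification that the explicit $q$-expansions \eqref{functionformal1} are genuinely the $e$-pullbacks of the $\phi^{(k)}_{i,j}$ and that the \emph{formal} $\Z^n$-action $\rho^c$ reproduces the \emph{geometric} $\tau$-translation action; once these generator-level identifications are in hand, everything downstream is functoriality of Getzler--Cheng and strictness. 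I note that $\varphi$ is in fact an isomorphism onto $ev_0H(B'_n)$, though only the morphism property is needed in the sequel.
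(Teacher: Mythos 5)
Your overall architecture---match the generators through the covering $e\colon \C^{n}-\mathcal{D}\to(\C^{*})^{n}-e(\mathcal{D})$ and the $q$-expansions \eqref{functionformal1}, then obtain parts (2) and (3) from strictness, from the fact that generators of $J_{1,DR}$ go to generators of $\underline{J}$, and from factorization through the quotient---is the same as the paper's, whose proof reads: part (1) follows from \eqref{functionformal1}, parts (2) and (3) are straightforward. Your elaboration of (2) and (3) is a correct unpacking of ``straightforward''. The genuine problem sits in part (1), in the very definition of $\varphi$.

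You define $\varphi$ as the inverse of $e_{\bullet}^{*}$ and justify invertibility by covering-space theory (pullback along a covering is injective on forms) together with surjectivity of $\Z^{2n}\to\Z^{n}$ (pullback is injective on group cochains). But that argument concerns a map whose \emph{source} consists of honest differential forms and cochains on the groupoid $((\C^{*})^{n}-e(\mathcal{D}))_{\bullet}\Z^{n}$. The source of the map you actually need to invert is $\underline{B'}_{n}\otimes\C$, resp.\ $\operatorname{Rat}^{\bullet}_{\Q(2\pi i)}(\C^{n},\underline{\mathcal{D}})\left(\left(q\right)\right)\otimes\C$, whose elements are \emph{formal} Laurent series in $q$; contrary to what you assert, $\underline{B'}_{n}$ is not a subalgebra of the de Rham total complex of that groupoid. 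There is no pullback of a formal series along $e_{\bullet}$: the map you write as $e^{*}$ in ``$e^{*}\underline{\phi}^{(k)}_{i,j}=\phi^{(k)}_{i,j}$'' is the composite of evaluation $q\mapsto\exp(2\pi i\tau)$ (defined only because these particular series converge) with the genuine pullback. Covering-space theory gives injectivity of the second factor only. Injectivity of the evaluation step is where the mathematical content lies, and it \emph{fails} on the ambient algebra: the element $\left(q-\exp(2\pi i\tau)\right)\frac{dz_{1}}{z_{1}}$ lies in $\operatorname{Rat}^{1}_{\Q(2\pi i)}(\C^{n},\underline{\mathcal{D}})\left(\left(q\right)\right)\otimes\C$, is nonzero as a formal series, and evaluates to the zero form. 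Hence injectivity on the subalgebra $\underline{B'}_{n}\otimes\C$ is a special property of those generators---equivalently, the statement that the forms $\phi^{(k)}_{i,j}$ at the fixed $\tau$ satisfy no relations beyond those holding identically in $q$---and must be proved by a linear-independence argument on the coefficients of the $q$-expansions, in the style of Lemma \ref{linindfunct} and of the Brown--Levin results invoked in Proposition \ref{injbrown}. Your proof never supplies this, so the inverse you call $\varphi$ is not known to exist.

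Note finally that the proposition only asserts a \emph{morphism}: the paper takes $\varphi$ to be the $q$-expansion map read off from \eqref{functionformal1} on generators, going from forms to series, and never inverts anything; your closing remark that $\varphi$ ``is in fact an isomorphism'' is exactly the claim your covering-space argument cannot deliver. A complete proof in either direction reduces to the same independence statement, and the repair is to prove that statement, not to appeal to functoriality of pullback.
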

\begin{proof}
Point 1 follows by \eqref{functionformal1}. Point 2 and 3 are straightforward.
\end{proof}
\begin{cor}\label{formal property}
Proposition \ref{generalrelation}  holds mutatis mutandis for $\underline{B'}_{n}$, i.e by replacing $w(u)^{(k)}_{i,j}$ with $\underline{\phi}^{(k)}_{i,j}$, ${\gamma(u)}_{i,j}$ with $\underline{\gamma}_{i,j}$ and setting ${\beta(u)}_{i,j}={\nu(u)}_{i,j}=0$.
\end{cor}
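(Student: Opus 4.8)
The plan is to deduce every identity of the statement directly from Proposition \ref{generalrelation}, transporting it along a strict $C_{\infty}$-morphism rather than re-running the computations of \cite[Theorem 2.7 and 2.9]{Sibilia1} in the formal setting. The point is that the morphism $\varphi$ constructed in the preceding Proposition fits into a strict composite
\[
\Phi:=\varphi\circ ev^{0}\circ H\: : \: {B'}_{n}\longrightarrow \underline{B'}_{n}\otimes\C,
\]
where $H$ is the strict map of Theorem \ref{theoremmodel}, $ev^{0}$ is the strict evaluation at $u=0$, and $\varphi$ is strict; being a composite of strict $C_{\infty}$-morphisms, $\Phi$ is itself a strict $C_{\infty}$-morphism.

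First I would verify that $\Phi$ carries generators to generators under exactly the substitution of the statement. Evaluation at $u=0$ annihilates $\nu(u)_{i,j}$ and $\beta(u)_{i,j}$ — the former because $\nu(u)_{i}=2\pi i\,dr_{i}\otimes u$, the latter because $\beta(u)_{i}=2\pi i\,r_{i}\otimes du$ restricts to zero at an endpoint — and specializes $\gamma(u)_{i,j}$ to the homomorphism $\gamma(0)_{i,j}$. Hence $ev^{0}H$ sends $w(u)^{(k)}_{i,j}\mapsto \phi^{(k)}_{i,j}$ and $\alpha(u)_{i,j}=\gamma(u)_{i,j}-\nu(u)_{i,j}-\beta(u)_{i,j}\mapsto \gamma(0)_{i,j}$, and $\varphi$ then sends these to $\underline{\phi}^{(k)}_{i,j}$ and $\underline{\gamma}_{i,j}$ respectively, which is precisely ``$w(u)$ replaced by $\underline{\phi}$, $\gamma(u)$ replaced by $\underline{\gamma}$, and $\beta=\nu=0$''.

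Next I would invoke strictness. For generators $x_{1},\dots,x_{l}\in\{w(u)^{(k)}_{i,j},\alpha(u)_{i,j}\}$ one has $\Phi\left(m_{l}^{{B'}_{n}}(x_{1},\dots,x_{l})\right)=m_{l}^{\underline{B'}_{n}}\left(\Phi x_{1},\dots,\Phi x_{l}\right)$, and the $\Phi x_{s}$ range over exactly the generators of $\underline{B'}_{n}$. Applying $\Phi$ to each identity of Proposition \ref{generalrelation} therefore produces the corresponding identity in $\underline{B'}_{n}$: parts (1)--(3) transport to their evident analogues, and since $\Phi$ commutes with $m_{1}=D$, part (4) transports to $D(\underline{\phi}^{(n)}_{i,j})=\sum_{l=1}^{n}(-1)^{l+1}m_{l+1}(\underline{\gamma}_{i,j},\dots,\underline{\gamma}_{i,j},\underline{\phi}^{(n-l)}_{i,j})$. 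As the whole of $\underline{B'}_{n}$ is defined over $\Q(2\pi i)$, the resulting complex identities descend to $\Q(2\pi i)$.

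The only genuinely new verification — and where I expect to spend the most care — is to confirm that the two structural facts underlying Proposition \ref{generalrelation} survive the specialization $\nu=\beta=0$. First, $d\underline{\phi}^{(k)}_{i,j}=0$: each $\underline{\phi}^{(k)}_{i,j}$ is a function of $z_{i}/z_{j}$ times $\tfrac{dz_{i}}{z_{i}}-\tfrac{dz_{j}}{z_{j}}$, so its differential is proportional to $\left(\tfrac{dz_{i}}{z_{i}}-\tfrac{dz_{j}}{z_{j}}\right)\wedge\left(\tfrac{dz_{i}}{z_{i}}-\tfrac{dz_{j}}{z_{j}}\right)=0$, the $\nu=\beta=0$ shadow of Lemma \ref{lemmarel1}. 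Second, the coface/action-shift relation $\rho^{c}(\underline{\phi}^{(k)}_{i,j})(g)=\sum_{p=0}^{k}\underline{\phi}^{(k-p)}_{i,j}\,(-\underline{\gamma}_{i,j}(g))^{p}/p!$ and the quadratic relations must hold verbatim; both read off from \eqref{functionformal1}, the Fay identity \eqref{fay}, and the quasi-periodicity \eqref{qper} of $F$. With these two facts in place the universal formulas of \cite[Theorem 2.7 and 2.9]{Sibilia1} return the stated values of $m_{\bullet}$ directly, giving an independent derivation that agrees with the transport argument.
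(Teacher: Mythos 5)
Your argument is correct, but your primary route is genuinely different from the paper's. The paper proves the corollary in one sentence: $\underline{B'}_{n}$ is the formal ($q$-expansion) version of the image $ev_{0}H({B'}_{n})$, and the proof of Proposition \ref{generalrelation} --- which rests on the universal formulas of \cite[Theorems 2.7 and 2.9]{Sibilia1} together with the shift and Fay relations --- does not depend on the choice of $\tau$, so the same computations run verbatim in the formal power series setting; this is exactly the ``independent derivation'' you sketch in your last paragraph. Your main argument instead \emph{transports} the already-proven identities along the strict composite $\Phi=\varphi\circ ev^{0}\circ H$: strictness intertwines every $m_{l}$, $\Phi$ carries generators to generators exactly as the substitution prescribes, and the resulting identities over $\C$ descend to $\Q(2\pi i)$ because $\underline{B'}_{n}\hookrightarrow\underline{B'}_{n}\otimes\C$ is injective. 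What this buys is that the corollary becomes a purely formal consequence of the proposition establishing $\varphi$, with no recomputation and no appeal to ``mutatis mutandis''; what it costs is that transport only controls elements in the image of $\Phi$, which suffices for parts (2)--(4) of Proposition \ref{generalrelation} (statements about generators, whose images are exactly the generators of $\underline{B'}_{n}$), but recovers part (1) --- a statement about all of ${A'}_{n}$ --- only on the image, so for that part you still need the general structural fact from \cite{Sibilia1} that $m_{2}$ restricts to the wedge product in cosimplicial degree zero, i.e.\ the paper's route. One point you use implicitly and should make explicit: the value-at-$g$ formulas in parts (2)(c) and (3) transport because $H=\Psi_{*}$, $ev^{0}$ and $\varphi$ are all induced levelwise by maps of cosimplicial algebras ($\varphi$ preserves $\rho^{c}$), hence $\Phi$ commutes with evaluation at group elements; strictness alone gives equality of elements of the total complex, not the componentwise description of those elements.
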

\begin{proof}
Notice that $\underline{B'}_{n}$ is a formal version of the image of $ev_{0}H \left({B'}_{n} \right) $, in particular the proof of Proposition \ref{generalrelation} is independent by the choice of $\tau$. We get that the statements hold for $\underline{B'}_{n}$ as well.
\end{proof}

\begin{cor}\label{formalconn}
The vector space decomposition of Theorem \ref{vspacedect} induces a well defined Hodge type decomposition on $\left( \underline{B}_{n}, m_{\bullet}\right) $. By the homotopy transfer theorem it gives a $1$-minimal model $g_{\bullet}\: : \: \left(W, m_{\bullet}^{W} \right)\to \mathcal{F} \left( \underline{B}_{n}, m_{\bullet}\right)$ which corresponds to the Maurer-Cartan element
\begin{eqnarray*}
\underline{C(0)} &=-&\sum_{i}\underline{\phi}^{(0)}_{i,0}X_{i}-\underline{\gamma}_{i}Y_{i}-\sum_{i,k\geq 1}\underline{\phi}^{(k)}_{i,0}\operatorname{Ad}^{(k)}_{Y_{i}}\left(X_{i} \right)\\
& - & \sum_{j<i,k\geq 1} \left( \underline{\phi}^{(k)}_{j,0}+\underline{\phi}^{(k)}_{0,i}- \underline{\phi}^{(k)}_{j,i}\right) \operatorname{Ad}^{(k)}_{Y_{j}}\left(X_{i} \right) \\
\end{eqnarray*}
\end{cor}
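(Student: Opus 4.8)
The plan is to argue that the whole construction leading to Theorem \ref{gemetriczeroconn} is structural and independent of $\tau$, so that it transports verbatim from $B_n$ to its formal counterpart $\underline{B}_{n}$, with the Maurer--Cartan element obtained by the substitution $w(u)^{(k)}_{i,j}\mapsto\underline{\phi}^{(k)}_{i,j}$ and $\alpha(u)_{i,j}\mapsto\underline{\gamma}_{i,j}$ (recall that in the formal setting $\nu=\beta=0$, so $\alpha(u)_{i,j}$ specialises to $\underline{\gamma}_{i,j}$). The starting point is Corollary \ref{formal property}: the multiplicative and higher-bracket relations of Proposition \ref{generalrelation}, together with the quadratic relations \eqref{rel1}, \eqref{rel2}, \eqref{rel3} and the differential formula of Lemma \ref{lemmarel1}, all hold for $\underline{B'}_{n}$ under this substitution. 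These are precisely the only facts about $B_n$ that enter the proofs of Theorem \ref{vspacedect} and Theorem \ref{gemetriczeroconn}, which is what makes the transfer possible.

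First I would transport the vector space decomposition of Theorem \ref{vspacedect}: applying the substitution to its explicit generators of $W$, $\mathcal{M}$ and $d\mathcal{M}$ produces subspaces of $\underline{B}_{n}$, and I claim $\underline{B}_{n}=W\oplus d\mathcal{M}\oplus\mathcal{M}$ is again a Hodge type decomposition. The linear independence of the listed elements and the absence of nonzero exact elements in $\mathcal{M}$ were established (in the subsection referenced by Theorem \ref{vspacedect}) from the relations quoted above alone, so the same computation runs over $\Q(2\pi i)((q))$. The one new point is that one works on the quotient $\underline{B}_{n}=\underline{B'}_{n}/\underline{J}$; but $\underline{J}$ is generated by the degree-two forms $m_{l}(\underline{\gamma}_{i_1,j_1},\dots,\underline{\gamma}_{i_l,j_l})$ for $l>2$, exactly as $J$ is in the non-formal case, and since the decomposition of Theorem \ref{vspacedect} is already stated on the quotient $B_n$, the analogous statement descends to $\underline{B}_{n}$.

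With the Hodge decomposition in hand the maps $f,g,h$ of Lemma \ref{generallemmadecomp} are determined, and the homotopy transfer theorem (Theorem \ref{markglobal}, via the explicit p-kernels of Definition \ref{explicitpkernel}) produces the $1$-$C_{\infty}$-structure $m_{\bullet}^{W}$ and the $1$-minimal model $g_{\bullet}$, with $m_{1}^{W}=0$. Because the p-kernels are built only from the transferred $m_{\bullet}$ and the homotopy $h$, and these obey exactly the formulas used in Subsection \ref{section p kernel} to evaluate $p_{k}\circ g^{\otimes k}$ on $\left(W^{1}\right)^{\otimes k}$, the same evaluation gives the Maurer--Cartan element associated to $g_{\bullet}$ as the substituted image of $C(0)$ from Theorem \ref{partialfinal}, which is precisely the displayed formula for $\underline{C(0)}$.

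The main obstacle is the second step: one must be certain that the transferred $W$ genuinely represents $H^{\bullet}(\underline{B}_{n})$ and that passing to the quotient by $\underline{J}$ causes no collapse. This is delicate because the full $1$-model property of $\underline{B'}_{n}$ rests on the conjecture of Remark \ref{conj} that $m_{l}(\underline{\gamma},\dots,\underline{\gamma})=0$ for all $l$, which is verified only for small $l$. For the corollary, however, I do not need $\underline{B'}_{n}$ to be a $1$-model: it suffices that $W$ represents $H^{\bullet}(\underline{B}_{n})$ in degrees $0,1,2$, which follows from the degree-restricted argument of Theorem \ref{theoremmodel} and Theorem \ref{vspacedect} carried out over the formal base, exactly as in the proofs of Proposition \ref{modelelliptic curve2formal} and Corollary \ref{formal property}.
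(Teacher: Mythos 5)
Your proposal is correct, but it takes a genuinely different route from the paper's own proof. The paper disposes of the Maurer--Cartan formula by pure functoriality: it invokes the strict $C_{\infty}$-morphism $\varphi\: : \: ev_{0}H({B'}_{n})/J_{1,DR}\to \underline{B}_{n}\otimes\C$ constructed in the unlabelled proposition preceding Corollary \ref{formal property}, observes that $\left( \varphi\, ev_{0}H\right)_{*}\: : \: B_{n}\widehat{\otimes}\widehat{\overline{\mathfrak{t}}}_{1,n+1}\to \left( \underline{B}_{n}\otimes \C\right)\widehat{\otimes}\widehat{\overline{\mathfrak{t}}}_{1,n+1}$ is a strict morphism of $L_{\infty}$-algebras and therefore preserves Maurer--Cartan elements, and concludes in one line that $\underline{C(0)}=\left( \varphi\, ev_{0}H\right)_{*}C$, where $C$ is the element already computed in Theorem \ref{gemetriczeroconn}; since $\varphi\, ev_{0}H$ acts on generators exactly by your substitution $w(u)^{(k)}_{i,j}\mapsto\underline{\phi}^{(k)}_{i,j}$, $\alpha(u)_{i,j}\mapsto\underline{\gamma}_{i,j}$ (the terms $\nu$ and $\beta$ die under $ev_{0}$), the displayed formula drops out with no further computation. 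You instead re-run the whole machine --- Hodge decomposition, retract data, and the p-kernel evaluations of Subsection \ref{section p kernel} --- over $\Q(2\pi i)((q))$, with Corollary \ref{formal property} as the licence to do so. Both arguments rest on the same underlying fact, namely that $\underline{B'}_{n}$ is the formal image of $ev_{0}H({B'}_{n})$ so that all relations transport, but the paper's route buys economy: pushing a Maurer--Cartan element forward along a strict morphism is automatic, so one never needs to re-verify that the transfer computations carry over. Your route buys self-containedness inside the formal algebra, and you are in fact more explicit than the paper on the one delicate point --- that the conjecture of Remark \ref{conj} is not needed because only $H^{i}(\underline{B}_{n})$ for $i=0,1,2$ enters the $1$-minimal-model claim; the paper leaves this implicit. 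One small slip: the element you call ``$C(0)$ from Theorem \ref{partialfinal}'' is the one defined just before that theorem (after Theorem \ref{gemetriczeroconn}); Theorem \ref{partialfinal} itself concerns $r_{*}C(s)$, in which the $\gamma$-component has already been killed by $r$.
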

in the $L_{\infty}$-algebra $\operatorname{Conv}_{1-C_{\infty}}\left(\left(W_{+}, m_{\bullet}^{W_{+}}\right) ,\left(\underline{B}_{n} ,m_{\bullet}\right) \right)$.
\begin{proof}
Consider the vector space decomposition of Theorem \ref{vspacedect} and the degree zero geometric connection $H_{*}C$ of Theorem \ref{gemetriczeroconn}. The map $\varphi$ induces a strict morphism of $L_{\infty}$-algebras
\[
\left( \varphi ev_{0}H\right)_{*}  \: : \: B_{n}\widehat{\otimes}\widehat{\overline{\mathfrak{t}}}_{1,n+1}\to \left( \underline{B}_{n}\otimes \C\right)\widehat{\otimes}\widehat{\overline{\mathfrak{t}}}_{1,n+1}
\]
that preserves Maurer-Cartan elements. In particular $\underline{C(0)}=\left( \varphi ev_{0}H\right)_{*}C$.
\end{proof}
The quotient map $\Z^{n}\to \left\lbrace e\right\rbrace $ induce a map between cosimplicial graded module
\[
i^{\bullet, \bullet}\: : \: A^{\bullet, \bullet}\to  \operatorname{Rat}^{\bullet}_{\Q(2 \pi i)}\left( \C^{n},\underline{\mathcal{D}} \right)\left( \left(  q \right) \right)   
\]
where the latter carries a trivial cosimplicial structure. This induces a morphism of $C_{\infty}$-structure
\[
i\: : \: \operatorname{Tot}^{\bullet}_{N}(A)\to \operatorname{Rat}^{\bullet}_{\Q(2 \pi i)}\left( \C^{n},\underline{\mathcal{D}} \right)\left( \left(  q \right) \right)
\]
where the latter is a unital commutative differential graded algebra. In particular we have $i(\underline{B'}_{n})\subset \operatorname{Rat}^{\bullet}_{\Q(2 \pi i)}\left( \C^{n},\underline{\mathcal{D}} \right)\left[ \left[  q \right] \right] $ and $i(\underline{J})=0$. Let $J_{q}$ be the completion of the augmentation ideal of $\operatorname{Rat}^{\bullet}_{\Q(2 \pi i)}\left( \C^{n},\underline{\mathcal{D}} \right)\left[ \left[  q \right] \right] $. We have a differential graded algebra map $\pi\: : \: \operatorname{Rat}^{\bullet}_{\Q(2 \pi i)}\left( \C^{n},\underline{\mathcal{D}} \right)\left[ \left[  q \right] \right] \to \operatorname{Rat}^{\bullet}_{\Q(2 \pi i)}\left( \C^{n},\underline{\mathcal{D}} \right) $ given by the quotient. Hence we get a strict morphism of $C_{\infty}$-algebras
\[
p:=\pi\circ i\: : \: \underline{B}_{n}\to \operatorname{Rat}^{\bullet}_{\Q(2 \pi i)}\left( \C^{n},\underline{\mathcal{D}} \right)
\]
such that 
\begin{eqnarray*}
p(\underline{\gamma}_{i,j})=0,&\quad&
p(\underline{\phi}_{i,j}^{(0)})=\frac{1}{2\pi i}\frac{dz_{i}}{z_{i}}-\frac{1}{2\pi i}\frac{dz_{i}}{z_{i}},\\
p(\underline{\phi}_{i,j}^{(1)})=\left( \frac{1}{2}  +\frac{z_{j} }{z_{i}-z_{j}} \right)\left( \frac{dz_{i}}{z_{i}}-\frac{dz_{j}}{z_{j}}\right) ,&\quad&
p(\underline{\phi}_{i,j}^{(l)})=\frac{-\left( 2 \pi i\right)^{l-1} B_{l}}{l!}\left( \frac{dz_{i}}{z_{i}}-\frac{dz_{j}}{z_{j}}\right)\text{ for }l>1
\end{eqnarray*}
We define the complex differential graded algebra ${A}_{KZ,n}$ as follows. Let $A_{KZ,n}$ be the unital differential graded subalgebra of $A_{DR}\left( \operatorname{Conf}_{n}\left(\C-\left\lbrace0,1 \right\rbrace  \right)\right)  $ generated by $\omega_{i,-1}$, $\omega_{i,0}$, and $\omega_{i,j}$ given by
\[
\omega_{ij}:=d \log(z_{i}-z_{j})=\frac{dz_{i}-dz_{j}}{z_{i}-z_{j}}
\] 
for $1\leq i\neq j\leq n$ such that $z_{-1}:=1$ and $z_{0}:=0$. These differential forms satisfy the so called \emph{Arnold relations}
\begin{equation}\label{Arnold}
\omega_{ij}\omega_{jk}+\omega_{ki}\omega_{ij}+\omega_{jk}\omega_{ki}=0
\end{equation}
for $i,j,k,l$ distinct.
The next result is proved in \cite{Arnold}.
\begin{prop}
$A_{KZ,n}$ is a model for $A_{DR}\left( \operatorname{Conf}_{n}(\C-\left\lbrace0,1 \right\rbrace )\right)  $.
\end{prop}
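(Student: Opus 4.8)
\emph{Proof plan.} The plan is to exhibit the inclusion $A_{KZ,n}\hookrightarrow A_{DR}\left(\operatorname{Conf}_{n}(\C-\left\lbrace 0,1\right\rbrace)\right)$ as a quasi-isomorphism by realizing the configuration space as the complement of a fiber-type hyperplane arrangement and inducting through the Fadell--Neuwirth fibrations. First I would observe that $A_{KZ,n}$ carries the zero differential: its generators $\omega_{i,j}=d\log(z_i-z_j)$ (with $z_{-1}:=1$, $z_0:=0$) are closed, and by the Leibniz rule the subalgebra they generate inside a commutative differential graded algebra consists entirely of closed forms. Hence being a model amounts to proving that the map sending each monomial in the $\omega$'s to its de Rham class, $A_{KZ,n}\to H^{\bullet}\left(\operatorname{Conf}_{n}(\C-\left\lbrace 0,1\right\rbrace),\C\right)$, is an isomorphism of graded algebras.

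Next I would set up the geometry. Writing $X_n:=\operatorname{Conf}_{n}(\C-\left\lbrace 0,1\right\rbrace)$, this is the complement in $\C^n$ of the arrangement consisting of $\left\lbrace z_i=0\right\rbrace$, $\left\lbrace z_i=1\right\rbrace$ for $1\leq i\leq n$, and $\left\lbrace z_i=z_j\right\rbrace$ for $1\leq i<j\leq n$. Forgetting the last coordinate gives a projection $p\colon X_n\to X_{n-1}$ whose fiber is $F=\C\setminus\left\lbrace 0,1,z_1,\dots,z_{n-1}\right\rbrace$, a plane with $n+1$ punctures; by Fadell--Neuwirth these are locally trivial fibrations, so $X_n$ is an iterated fibration over a point.

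Then I would run the induction. The fiber $F$ has $H^0(F)=\C$ and $H^1(F)=\C^{\,n+1}$, with explicit basis the classes of $\omega_{n,0},\omega_{n,-1},\omega_{n,1},\dots,\omega_{n,n-1}$, and these classes are restrictions of globally defined closed forms on $X_n$. The Leray--Hirsch theorem therefore gives $H^{\bullet}(X_n)\cong H^{\bullet}(X_{n-1})\otimes_{\C} H^{\bullet}(F)$ as graded $H^{\bullet}(X_{n-1})$-modules, so that $P_{X_n}(t)=P_{X_{n-1}}(t)\,(1+(n+1)t)$. On the algebraic side, I would show that $A_{KZ,n}$ is free of rank $n+2$ over the image of $A_{KZ,n-1}$ with basis $\left\lbrace 1,\omega_{n,0},\omega_{n,-1},\omega_{n,1},\dots,\omega_{n,n-1}\right\rbrace$: every generator $\omega_{i,j}$ with indices $<n$ lies in $A_{KZ,n-1}$, and whenever a monomial contains two factors $\omega_{n,a},\omega_{n,b}$ both carrying the top index $n$, relation \eqref{Arnold} applied to the triple $\left\lbrace n,a,b\right\rbrace$ rewrites their product as a combination of monomials each having a single factor with index $n$; iterating, and using graded-commutativity of the wedge to collect that factor, reduces every monomial to the claimed normal form. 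Matching this free-module decomposition against the Leray--Hirsch splitting, and using the inductive isomorphism $A_{KZ,n-1}\xrightarrow{\sim}H^{\bullet}(X_{n-1})$, turns $A_{KZ,n}\to H^{\bullet}(X_n)$ into an isomorphism of free modules; the induction starts from the direct check that $A_{KZ,1}=\C\cdot 1\oplus\C\omega_{1,0}\oplus\C\omega_{1,-1}$ computes $H^{\bullet}(\C-\left\lbrace 0,1\right\rbrace)$.

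The main obstacle is the normal-form step: proving that \eqref{Arnold} together with graded-commutativity genuinely reduces every monomial to one with at most one top-index factor, so that $A_{KZ,n}$ is free over $A_{KZ,n-1}$ with exactly the stated basis and carries no hidden relations. This is the combinatorial core of the argument (the Orlik--Solomon broken-circuit statement specialized to this arrangement), and it is what forces the map to cohomology to be injective rather than merely surjective. Once freeness with the correct rank is in place, the match with Leray--Hirsch yields injectivity and surjectivity simultaneously, completing the proof.
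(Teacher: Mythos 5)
Your argument is correct, but note that the paper itself gives no proof of this proposition: it is stated as a known result and attributed to \cite{Arnold}, Arnold's computation of the cohomology ring of the colored braid group. What you have written is essentially the classical argument behind that citation, adapted from $\operatorname{Conv}$... rather, from $\operatorname{Conf}_{n}(\C)$ to the fiber-type arrangement $\left\lbrace z_{i}=0\right\rbrace \cup \left\lbrace z_{i}=1\right\rbrace \cup \left\lbrace z_{i}=z_{j}\right\rbrace$: Fadell--Neuwirth fibrations, Leray--Hirsch, and the relation \eqref{Arnold} as the rewriting rule. So your route supplies exactly what the paper outsources, which is valuable. Two refinements would tighten it. First, you do not need to establish freeness (``no hidden relations'') as a separate step: the normal form only has to produce a spanning set, giving $\dim A_{KZ,n}\leq (n+2)\dim A_{KZ,n-1}$; combined with surjectivity of $A_{KZ,n}\to H^{\bullet}(X_{n})$ (which follows from Leray--Hirsch plus the inductive hypothesis, since $p^{*}$ of a form in $A_{KZ,n-1}$ is again such a form) and the count $\dim H^{\bullet}(X_{n})=(n+2)\dim H^{\bullet}(X_{n-1})$, injectivity --- and hence freeness --- comes for free; this is the cleaner logical order. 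Second, in the rewriting step the triple $\left\lbrace n,a,b\right\rbrace$ with $\left\lbrace a,b\right\rbrace=\left\lbrace 0,-1\right\rbrace$ is degenerate, because $\omega_{0,-1}=d\log(z_{0}-z_{-1})=d\log(-1)=0$ is not a generator and the relation \eqref{Arnold} cannot be quoted verbatim; there the product vanishes outright, $\omega_{n,0}\wedge\omega_{n,-1}=\frac{dz_{n}}{z_{n}}\wedge\frac{dz_{n}}{z_{n}-1}=0$, since both factors are proportional to $dz_{n}$, which is what the normal form requires. With these two points handled, your induction is complete and self-contained.
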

We define the rational Lie algebra $\mathfrak{t}_{n}$ with generators $T_{i,j}$ for $-1\leq i\neq j\leq n$ with $j>0$ or $i>0$ such that
\begin{eqnarray}\label{Kohniodrinfeld}
T_{ij}=T_{ji}, \quad \left[ T_{ij},T_{ik}+T_{jk}\right]=0, \quad  \left[ T_{ij},T_{kl}\right]=0
\end{eqnarray}
for $i,j,k,l$ distinct. We call $\mathfrak{t}_{n}$ the \emph{Kohno-Drinfeld Lie algebra}. The KZ connection 
is a flat connection $d-\omega_{KZ,n}$ on $\operatorname{Conf}_{n}\left(\C-\left\lbrace 0,1\right\rbrace\right) \times \mathfrak{t}_{n} $ where
\[
\omega_{KZ,n}:=\sum_{-1\leq i\neq j\leq n, i>0\text{ or }j>0} \omega_{ij}T_{ij}.
\]
 Let $\underline{A}_{KZ,n}$ be the unital differential commutative graded algebra over $\Q(2\pi i)$ generated by the degree $1$ closed forms ${\omega}_{i,-1}$, ${\omega}_{i,0}$, and $\omega_{i,j}$ for $1\leq i<j\leq n $. There is a canonical Hodge decomposition for $\underline{A}_{KZ,n}$ given by $\left(\underline{A}_{KZ,n},0 \right) $, where $\underline{A}^{1}_{KZ,n}$ is considered to equipped with the basis ${\omega}_{i,-1}$, ${\omega}_{i,0}$, and $\omega_{i,j}$ for $-1\leq i\neq j\leq n, i>0\text{ or }j>0$. Then the $C_{\infty}$-morphism $g_{\bullet}^{KZ,n}\: : \: \left( W',m_{\bullet}^{W'}\right) \to \underline{A}_{KZ,n}$ constructed via the homotopy transfer theorem is strict and corresponds to the identity map $\omega_{l,k}\mapsto \omega_{l,k}$.. We denote its inverse by $f_{\bullet}^{KZ,n}$.
\begin{cor}
The degree zero geometric connection associated to $\mathcal{F}\left(g_{\bullet}^{KZ,n} \right) $
\[
\underline{\omega}_{KZ,n}:=\sum_{-1\leq i\neq j\leq n, i>0\text{ or }j>0} \omega_{ij}T_{ij}
\]
where the fiber is given by the Lie algebra $\mathfrak{t}_{n}$.
\end{cor}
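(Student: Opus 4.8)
The plan is to mirror the computation of the Maurer--Cartan element $C_{KZ}$ in the $n=1$ case of Subsection \ref{ComparisonHain} and of $C$ in Proposition \ref{deg0geom}, exploiting the fact that the chosen Hodge decomposition on $\underline{A}_{KZ,n}$ is the trivial one $\left(\underline{A}_{KZ,n},0\right)$.

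First I would observe that, since the differential on $\underline{A}_{KZ,n}$ vanishes and the Hodge decomposition is $W'=\underline{A}_{KZ,n}$, $\mathcal{M}=0$, the associated diagram of the type \eqref{homretract} has $f=g=\mathrm{Id}$ and $h=0$. With $h=0$ every tree carrying an internal edge contributes zero to the p-kernel of Definition \ref{explicitpkernel}, so only the corolla terms survive and $p_{k}=m_{k}^{\underline{A}_{KZ,n}}$; as $\underline{A}_{KZ,n}$ is an ordinary commutative differential graded algebra this gives $p_{2}=\wedge$ and $p_{k}=0$ for $k\geq 3$. Hence $g_{k}^{KZ,n}=h\circ p_{k}\circ g^{\otimes k}=0$ for $k\geq 2$, so $g_{\bullet}^{KZ,n}$ is strict with $g_{1}=\mathrm{Id}$, and the transferred structure is $m_{1}^{W'}=0$, $m_{2}^{W'}=\wedge$, $m_{k}^{W'}=0$ for $k\geq 3$; that is, $\left(W',m_{\bullet}^{W'}\right)$ is the algebra $\underline{A}_{KZ,n}$ with zero differential.

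Next I would read off the connection. By \cite[Proposition 1.20]{Sibilia1} the morphism $\mathcal{F}\left(g_{\bullet}^{KZ,n}\right)$ corresponds to a Maurer--Cartan element in $\operatorname{Conv}_{1-C_{\infty}}\left(W',\underline{A}_{KZ,n}\right)$, and since $m_{\bullet}^{W'}$ vanishes on degree $1$ elements, \cite[Proposition 1.25]{Sibilia1} identifies this convolution algebra with $\underline{A}_{KZ,n}\widehat{\otimes}\widehat{\mathbb{L}}\left(\left(W'^{1}[1]\right)^{*}\right)$. Because $g_{\bullet}^{KZ,n}$ is strict and acts as the identity on generators, the element is the tautological one: writing $T_{ij}:=\left(s\,\omega_{ij}\right)^{*}$ for the basis of $\left(W'^{1}[1]\right)^{*}$ dual to the $\omega_{ij}$, it equals $\sum_{ij}\omega_{ij}T_{ij}$, exactly the stated $\underline{\omega}_{KZ,n}$.

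Finally I would identify the fiber. Following the same mechanism as in Theorem \ref{gemetriczeroconn}(2), the Malcev Lie algebra $\mathfrak{u}$ of $\pi_{1}\left(\operatorname{Conf}_{n}\left(\C-\{0,1\}\right)\right)$ is the quotient of $\widehat{\mathbb{L}}\left(\left(W'^{1}[1]\right)^{*}\right)$ by the completed Lie ideal dual to the quadratic correction $m_{2}^{W'}$, i.e.\ to the product of $\underline{A}_{KZ,n}$. The degree $2$ relations among the $\omega_{ij}$ are the Arnold relations \eqref{Arnold}, and their dual are precisely the Kohno--Drinfeld relations \eqref{Kohniodrinfeld}; therefore $\mathfrak{u}=\widehat{\mathfrak{t}}_{n}$ and the degree zero geometric connection is $\sum_{ij}\omega_{ij}T_{ij}$. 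The main obstacle is this last step: one must verify that the dual of the commutative quadratic relations \eqref{Arnold} reproduces the antisymmetry $T_{ij}=T_{ji}$ together with the infinitesimal braid relations $\left[T_{ij},T_{ik}+T_{jk}\right]=0$ and $\left[T_{ij},T_{kl}\right]=0$. This is Kohno's identification of the holonomy Lie algebra of $\operatorname{Conf}_{n}\left(\C-\{0,1\}\right)$, and it is the only point demanding genuine checking.
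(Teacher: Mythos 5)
Your proposal is correct and follows essentially the same route as the paper: the trivial Hodge decomposition $\left(\underline{A}_{KZ,n},0\right)$ forces $h=0$, so the p-kernels reduce to the corolla, $g_{\bullet}^{KZ,n}$ is strict and the connection is the tautological element $\sum\omega_{ij}T_{ij}$ with $T_{ij}=\left(s\,\omega_{ij}\right)^{*}$, while the fiber relations come from dualizing the quadratic structure (Arnold relations $\Rightarrow$ the relation $\left[T_{ij},T_{ik}+T_{jk}\right]=0$, absence of relations among $\omega_{ij}\omega_{kl}$ with distinct indices $\Rightarrow$ $\left[T_{ij},T_{kl}\right]=0$). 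The only cosmetic differences are that you re-derive the strictness that the paper asserts just before the corollary, and you attribute the antisymmetry $T_{ij}=T_{ji}$ to the dualization when it is really immediate from $\omega_{ij}=\omega_{ji}$ as forms.
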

\begin{proof}
We set $T_{i,j}:=\left( s\left( \omega_{i,j}\right) \right)^{*}$ where $-1\leq i\neq j\leq n, i>0\text{ or }j>0$. The equation \eqref{Arnold} implies the second relation in \eqref{Kohniodrinfeld}. The third relation comes from the fact that there are no relations between forms $\omega_{ij},\omega_{kl}$ with distinct indices.
The formula for the connection comes from $g_{\bullet}^{KZ,n}$ which is strict.
\end{proof}
There is a differential graded algebra map
\[
 f^{A}\: : \:\underline{A}_{KZ,n}\otimes \C\to A_{KZ,n},
\]
such that $f^{A}_{*}\left( \underline{\omega}_{KZ,n}\right)={\omega}_{KZ,n}$. We have a diagram of $1-C_{\infty}$-algebras over $\Q(2\pi i)$
\[
\begin{tikzcd}
 \mathcal{F}\left(\underline{B}_{n} ,m_{\bullet}\right) \arrow[r,  "\mathcal{F}\left( p\right) "]
 &  \mathcal{F}\left(\underline{A}_{KZ,n}, d, \wedge\right) \arrow[d, shift right, "\mathcal{F}\left( f^{KZ,n}_{\bullet}\right) "]\\\left( W,m_{\bullet}^{W}\right) \arrow[u, shift right, "g_{\bullet} "]&  \mathcal{F}\left( {W'}, m_{\bullet}^{W'}\right) 
\end{tikzcd}
\]
where $g_{\bullet}$ is the $1-C_{\infty}$-algebra morphism of Corollary \ref{formalconn}. We set $q_{\bullet}:= p\circ g_{\bullet}$ is a morphism of $1-C_{\infty}$-algebras. This corresponds to a morphism of differential graded coalgebras 
\[
Q\: : \: T^{c}\left( W^{1}_{+}[1]\right) \to T^{c}\left( {W'}^{1}_{+}[1]\right).
\]
The restriction of its dual gives a Lie algebra morphism
\[
Q^{*}\: : \: \widehat{\mathfrak{t}}_{n}\widehat{\otimes} \Q(2\pi i)\to \widehat{\overline{\mathfrak{t}}}_{1,n+1}\widehat{\otimes} \Q(2 \pi i)
\]
We calculate $Q^{*}$ via the method of Subsection \ref{Morphism of homological pairs}. We have $q_{*}(\underline{\omega}_{KZ,n} )=p^{*}\left( \underline{C(0)}\right)$ since the connection is quadratic, where $q_{*}({\omega}_{KZ,n})=\sum_{\substack{-1\leq i\neq j\leq n\\j>0}}\omega_{i,j}Q^{*}\left( T_{i,j}\right)$. Moreover 
\begin{eqnarray*}
p^{*}\left( \underline{C_{0}}\right)  &=&\sum_{i}p\left( \underline{\phi}^{(0)}_{i,0}\right) X_{i}-p\left( \underline{\gamma}_{i}\right) Y_{i}-\sum_{i,k\geq 1}p^{*}\left( \underline{\phi}^{(k)}_{i,0}\right) \operatorname{Ad}^{(k)}_{Y_{i}}(X_{i})\\
& - & \sum_{j<i,k\geq 1}p\left( \underline{\phi}^{(k)}_{j,0}+\underline{\phi}^{(k)}_{0,i}- \underline{\phi}^{(k)}_{j,i}\right) \operatorname{Ad}^{(k)}_{Y_{j}}(X_{i})\\
& = & \sum_{i}\left( -\frac{dz_{i}}{z_{i}}\sum_{i=0}^{\infty}\frac{B_{k}}{k!}\operatorname{Ad}^{(k)}_{2 \pi i Y_{i}}\left( \frac{X_{i}}{2\pi i}\right)-\left[2\pi iY_{i},\frac{X_{i}}{2\pi i} \right]\frac{dz_{i}}{z_{i}-1}\right) \\
& - & \sum_{j<i}\frac{dz_{j}}{z_{j}-1}+\frac{dz_{i}}{z_{i}-1}-\frac{dz_{j}-dz_{i}}{z_{j}-z_{i}}\left[Y_{j}, X_{i} \right].
\end{eqnarray*}
Hence $Q^{*}\: : \: {\widehat{\mathfrak{t}}}_{n}\otimes \Q(2 \pi i)\to \widehat{\overline{\mathfrak{t}}}_{1,n+1}\widehat{\otimes} \Q(2 \pi i)$ is given by
\[
Q^{*}\left(T_{-1,i} \right)=-\sum_{j}\left[ Y_{j},X_{i}\right] ,\quad  Q^{*}\left(T_{0,i} \right)=-\sum_{i=0}^{\infty}\frac{B_{k}}{k!}\operatorname{Ad}^{(k)}_{2 \pi i Y_{i}}\left( \frac{X_{i}}{2\pi i}\right),\quad Q^{*}\left(T_{i,j} \right)=-\left[Y_{i}, X_{j} \right] .
\]
for $1\leq i,j\leq n$.
\begin{thm}\label{thmfinale}
Let $B_{n}$, $\underline{B_{n}}$ as above.
\begin{enumerate}
 \item The map $p$ induces a Lie algebra morphism $Q\: : \: \widehat{\mathfrak{t}}_{n}\otimes \Q(2 \pi i) \to \widehat{\overline{\mathfrak{t}}}_{1,n+1}\widehat{\otimes} \Q(2 \pi i)$ which induces a differential graded Lie algebra map
 \[
 q_{*}=\left( Id{\otimes}Q^{*}\right) \: : \:  A_{KZ,n} \otimes\widehat{\mathfrak{t}}_{n}\to A_{KZ,n} \widehat{\otimes}\widehat{\overline{\mathfrak{t}}}_{1,n+1}
 \]
 \item For a fixed $n>0$ and $\tau\in \mathbb{H}$. We denote the KZB connection with $${\omega}_{KZB,n}^{\tau}\in A_{DR}\left( \C^{n}-\mathcal{D}\right)\widehat{\otimes}\widehat{\overline{\mathfrak{t}}}_{1,n+1}.$$ We have $$\lim_{\tau\to i\infty }{\omega}_{KZB,n}^{\tau}=q_{*}\left( {\omega}_{KZ,n}\right) $$
\end{enumerate}
\end{thm}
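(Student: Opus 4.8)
The plan is to mirror the rank-one argument of Proposition~\ref{thmfinale0}, replacing the scalar data by the $n$-dimensional objects built in the preceding subsections. Since the explicit computation of $Q^{*}$ and the key identity $p^{*}\big(\underline{C(0)}\big)=q_{*}\big(\underline{\omega}_{KZ,n}\big)$ have already been carried out in the paragraph preceding the statement, the proof proper consists in (a) reading off part (1) from the coalgebra formalism and (b) identifying the limit $\tau\to i\infty$ with the formal specialization $q=0$.

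For part (1), I would argue that $q_{\bullet}:=p\circ g_{\bullet}$ is a morphism of $1$-$C_{\infty}$-algebras, being the composite of the strict morphism $p$ with the $1$-minimal model $g_{\bullet}$ of Corollary~\ref{formalconn}. By the coalgebra description of $1$-$C_{\infty}$-morphisms from \cite{Sibilia1}, and under the strict identification of $\underline{A}_{KZ,n}$ with $W'$ supplied by $g^{KZ,n}_{\bullet}$, it corresponds to a morphism of differential graded coalgebras $Q\colon T^{c}\big(W^{1}_{+}[1]\big)\to T^{c}\big({W'}^{1}_{+}[1]\big)$. Dualizing and restricting to the generating spaces yields a morphism of complete Lie algebras; using Proposition~\ref{equality} to identify $\mathfrak{u}=\widehat{\overline{\mathfrak{t}}}_{1,n+1}$ on the target and the Kohno--Drinfeld presentation $\widehat{\mathfrak{t}}_{n}$ on the source, this is exactly the $Q^{*}$ whose values on $T_{-1,i},T_{0,i},T_{i,j}$ were computed above. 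Tensoring with the identity of $A_{KZ,n}$ produces the differential graded Lie algebra map $q_{*}=(\mathrm{Id}\otimes Q^{*})$, and functoriality guarantees that it respects brackets and differentials.

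For part (2), I would establish the chain
\begin{align*}
\lim_{\tau\to i\infty}\omega_{KZB,n}^{\tau}
&=\lim_{\tau\to i\infty} r_{*}C(0)\\
&=\big(f^{A}\big)^{*}\,p^{*}\big(\underline{C(0)}\big)\\
&=\big(f^{A}\big)^{*}\,q_{*}\big(\underline{\omega}_{KZ,n}\big)\\
&=q_{*}\,\omega_{KZ,n}.
\end{align*}
The first equality is Theorem~\ref{final}, which identifies $\omega_{KZB,n}^{\tau}$ with $r_{*}C(0)$. The crucial second equality is where the limit is taken: by the Fourier expansions \eqref{functionformal1}, each coefficient $w(0)^{(k)}_{i,j}=\phi^{(k)}_{i,j}$ of $r_{*}C(0)$ is a power series in $q=\exp(2\pi i\tau)$, so $\tau\to i\infty$ amounts to setting $q=0$, which is precisely the map $p=\pi\circ i$ applied to the formal connection $\underline{C(0)}=(\varphi\,ev_{0}H)_{*}C$ of Corollary~\ref{formalconn}, realized over $\C$ via $f^{A}$. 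The third equality is the identity proved before the theorem, and the last one combines $f^{A}_{*}\big(\underline{\omega}_{KZ,n}\big)=\omega_{KZ,n}$ with the fact that $q_{*}=(\mathrm{Id}\otimes Q^{*})$ commutes with $f^{A}$, which acts on the form factor only.

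The step I expect to be the main obstacle is the second equality, namely the rigorous justification that passing to $\tau\to i\infty$ really is the specialization $q=0$ and that it commutes with the $\widehat{\overline{\mathfrak{t}}}_{1,n+1}$-valued pairing. One must verify term by term in the completed fiber that each $q$-series coefficient $\phi^{(k)}_{i,j}$ converges to the rational form $p\big(\underline{\phi}^{(k)}_{i,j}\big)$ recorded before the theorem, with enough uniformity that the limit can be exchanged with the (a priori infinite) sums over $k$ and over the adjoint action $\operatorname{Ad}^{(k)}_{Y_{i}}$. This is exactly where the boundedness of the Bernoulli contributions and the order-one (logarithmic) nature of the poles along $\underline{\mathcal{D}}$, both guaranteed by \eqref{functionformal1}, enter the argument.
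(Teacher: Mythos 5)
Your proposal is correct and follows essentially the same route as the paper: part (1) is exactly the coalgebra-dualization argument the paper carries out in the paragraphs preceding the statement, and part (2) is the same chain of equalities — identify $\omega_{KZB,n}^{\tau}$ with $r_{*}C(0)$ via Theorem \ref{final}, interpret $\tau\to i\infty$ as the specialization $q=0$ implemented by $p$, invoke the identity $p^{*}\big(\underline{C(0)}\big)=q_{*}\big(\underline{\omega}_{KZ,n}\big)$, and descend to $\omega_{KZ,n}$ via $f^{A}$ — which is precisely the rank-one proof of Proposition \ref{thmfinale0} that the paper's own (very terse) proof says to mirror. Your closing remark about justifying the exchange of the $q\to 0$ limit with the completed sums is a fair point that the paper itself leaves implicit.
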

\begin{proof}
The first part is proved above. The second part follows as the proof of Proposition \ref{thmfinale0}.
\end{proof}

\section{Proofs and calculations}

\subsection{Proof of Theorem \ref{vspacedect}}\label{sectionproofvspacevect}
The commutative differential graded algebra $\mathcal{Y}_{n}$ is the the free differential commutative graded algebra generated by elements of degree $1$
\[
w_{j}, v_{j}, \text{ for }j=1, \dots, n, \quad w_{ij}\text{ for }0\leq i,j\leq n.
\]
modulo the following relations
\begin{eqnarray}
\label{Rel1}w_{ii}& =& 0,\\
\label{Rel2}w_{ij}-w_{ji}& =& 0,\\
\label{Rel3}w_{i}\wedge v_{i}& =& 0,\\
\label{Rel4}w_{ij}\wedge w_{i}-w_{ij}\wedge w_{j}& = & 0,\\
\label{Rel5}w_{ij}\wedge v_{i}-w_{ij}\wedge v_{j}& = & 0,\\
\label{Rel6}w_{il}\wedge w_{jl}+w_{jl}\wedge w_{ij}+w_{ji}\wedge w_{il}& = & 0.
\end{eqnarray}
The differential is given by
\[
dw_{i,j}=w_{i}\wedge v_{j}+w_{j}\wedge v_{i}, \quad dw_{i}=dv_{j}=0.
\]
\begin{lem}
	There exists a vector space decomposition
	\begin{equation}\label{holonomyconfigurationspaces}
	\mathcal{Y}_{n}={W''}\oplus {\mathcal{M}''}\oplus d{\mathcal{M}''}
	\end{equation}
	as in \eqref{vspgen} such that 
	\begin{enumerate}
		\item ${W''}^{1}\subset \mathcal{Y}_{n}$ is the vector space generated by $w_i,v_j$ for $j,i=1, \dots, n$.
		\item ${W''}^{2}\subset \mathcal{Y}_{n}$ is the vector space generated by 
		\begin{eqnarray*}
			w_{i}\wedge v_{j},\quad w_{ij}\wedge v_{i},\quad
			w_{i}\wedge w_{j},\quad  w_{ij}\wedge w_{i},\quad v_{i}\wedge v_{j},
		\end{eqnarray*}
		for $1\leq i<j\leq n$,
		and
		\begin{eqnarray*}
			&&w_{ij}\wedge v_{k}+w_{ik}\wedge v _{j}+w_{kj}\wedge v_{i}\\
			& & w_{ij}\wedge w_{k}+w_{ik}\wedge w _{j}+w_{kj}\wedge w_{i}\\
		\end{eqnarray*}
		for $1\leq i<j<k\leq n$.
		\item ${\mathcal{M}''}^{1}\subset \mathcal{Y}_{n}$ is the vector space generated by $w_{ij}$ for $j,i=1, \dots, n$.
		\item ${\mathcal{M}''}^{2}\subset \mathcal{Y}_{n}$ is the vector space generated by
		\begin{eqnarray*}
			w_{ij}\wedge w_{kl},\,\text{for  any }i<j,\,k<l,\\
			w_{ij}\wedge v_{k},\,\text{for  any }i<j,\,k<j,\\
			w_{ij}\wedge w_{k},\,\text{for  any }i<j,\,k<j,\\
		\end{eqnarray*}
	\end{enumerate}
\end{lem}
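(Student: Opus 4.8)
The plan is to obtain the decomposition \eqref{holonomyconfigurationspaces} as an instance of the general Hodge-type decomposition of Lemma \ref{generallemmadecomp}(1): it is enough to produce a graded subspace $W''\subseteq\ker d$ that maps isomorphically onto $H^{\bullet}(\mathcal{Y}_{n})$ together with a complement $\mathcal{M}''$ containing no nonzero exact element, and the content of the statement is that the explicit elements listed in (1)--(4) realize this in degrees $1$ and $2$. Since $\mathcal{Y}_{n}$ is freely generated as a graded-commutative algebra by the odd symbols $w_{i},v_{i},w_{ij}$ modulo the quadratic relations \eqref{Rel1}--\eqref{Rel6}, I would first fix monomial bases of $\mathcal{Y}_{n}^{1}$ and $\mathcal{Y}_{n}^{2}$. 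In degree $1$ these are the generators themselves; in degree $2$ they are products of two distinct generators, reduced using \eqref{Rel3}--\eqref{Rel5} (which collapse $w_{ij}w_{i}=w_{ij}w_{j}$ and $w_{ij}v_{i}=w_{ij}v_{j}$ and kill $w_{i}v_{i}$) and the Arnold three-term relation \eqref{Rel6}. The latter is the only source of nontrivial dependencies among the products $w_{ij}w_{kl}$, and I would resolve it into a broken-circuit (nbc) basis, exactly as for the Orlik--Solomon algebra of the braid arrangement.

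The next step is to verify that every listed element of ${W''}^{1}$ and ${W''}^{2}$ is closed. For $w_{i},v_{j}$ this is immediate; for the degree-$2$ generators one computes with $dw_{ij}=w_{i}v_{j}+w_{j}v_{i}$ and \eqref{Rel3}, so that $d(w_{ij}v_{i})=(w_{i}v_{j}+w_{j}v_{i})v_{i}=0$ and $d(w_{ij}w_{i})=0$ by the same cancellation, while the two cyclic triples $w_{ij}v_{k}+w_{ik}v_{j}+w_{kj}v_{i}$ and $w_{ij}w_{k}+w_{ik}w_{j}+w_{kj}w_{i}$ are closed because their differentials cancel pairwise after anticommuting the $v$'s (resp.\ the $w$'s). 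Simultaneously I would record $d$ on ${\mathcal{M}''}^{1}$: the images $dw_{ij}=w_{i}v_{j}+w_{j}v_{i}$ span the exact part $(d\mathcal{M}'')^{2}$, and together with the upper-triangular classes $w_{i}v_{j}$ ($i<j$) of ${W''}^{2}$ they account for all mixed monomials $w_{a}v_{b}$, with the clean count $n(n-1)=\binom{n}{2}+\binom{n}{2}$ on that block. The pure blocks $w_{i}w_{j}$ and $v_{i}v_{j}$ ($i<j$) are closed and carry no exact part, so they sit entirely inside ${W''}^{2}$.

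The heart of the argument, and the step I expect to be the main obstacle, is the degree-$2$ bookkeeping in the blocks that involve the diagonal classes $w_{ij}$: I must show that $\mathcal{Y}_{n}^{2}={W''}^{2}\oplus{\mathcal{M}''}^{2}\oplus(d\mathcal{M}'')^{2}$ is a direct sum that spans, that $d$ is injective on ${\mathcal{M}''}^{1}$ (clear, since distinct pairs give distinct monomials $w_{i}v_{j}+w_{j}v_{i}$), and that ${\mathcal{M}''}^{2}$ contains no closed combination. The difficulty is the interaction of \eqref{Rel6} with the differential: among the products $w_{ij}w_{k}$, $w_{ij}v_{k}$ and $w_{ij}w_{kl}$ one must check that the only closed combinations are the cyclic triples and the $w_{ij}w_{i}$, $w_{ij}v_{i}$ already placed in ${W''}^{2}$, so that the remaining products listed in (4) are genuinely complementary and non-closed. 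I would organize this block by block, using the nbc basis for the $w_{ij}$-products to make the final dimension count and the directness transparent, and thereby confirm ${W''}^{2}\xrightarrow{\ \sim\ }H^{2}(\mathcal{Y}_{n})$.

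Finally, having pinned down the decomposition through degree $2$ --- which is all that the associated $1$-minimal model and the matching with Theorem \ref{vspacedect} require --- I would extend $W''\oplus\mathcal{M}''\oplus d\mathcal{M}''$ to a full Hodge-type decomposition of $\mathcal{Y}_{n}$ in all higher degrees by the general splitting used in the proof of Lemma \ref{generallemmadecomp}(4) (that is, \cite[Lemma 9.4.7]{lodayVallette}), exactly as in Corollary \ref{decoB}. This produces a decomposition of the shape \eqref{vspgen} whose degree-$1$ and degree-$2$ pieces are the prescribed ones.
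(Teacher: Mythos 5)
Your overall route coincides with the paper's: exhibit the listed subspaces, check that $W''$ consists of closed elements, reduce the lemma to the claim that ${\mathcal{M}''}^{2}$ contains no nonzero closed form, and extend to degrees $\geq 3$ by the general splitting of \cite[Lemma 9.4.7]{lodayVallette} as in Corollary \ref{decoB}. The parts you actually execute are correct and agree with what the paper treats as immediate: the closedness computations for the degree-$2$ generators of ${W''}^{2}$, the bookkeeping of the mixed block $w_{a}v_{b}$ against the exact elements $dw_{ij}=w_{i}v_{j}+w_{j}v_{i}$ (with the count $n(n-1)=\binom{n}{2}+\binom{n}{2}$), and the pure blocks $w_{i}w_{j}$, $v_{i}v_{j}$. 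The gap is that the step you yourself call ``the heart of the argument'' is never carried out: you state that one must check that among the products $w_{ij}\wedge w_{kl}$, $w_{ij}\wedge v_{k}$, $w_{ij}\wedge w_{k}$ the only closed combinations are the cyclic triples and $w_{ij}\wedge w_{i}$, $w_{ij}\wedge v_{i}$, and you announce that you ``would organize this block by block'' with an nbc basis, but no such computation appears. That verification is not a final bookkeeping step; it is the entire mathematical content of the lemma, and it is what the paper's proof consists of almost in its entirety.

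Concretely, the paper splits ${\mathcal{M}''}^{2}=V_{1}\oplus V_{2}\oplus V_{3}$ (products of two $w_{ij}$'s, products $w_{ij}\wedge v_{k}$, products $w_{ij}\wedge w_{k}$), observes that $dV_{r}\cap dV_{s}=\{0\}$ for $r\neq s$ so that a closed element must be closed blockwise, and then splits $V_{1}$ further according to whether the two index pairs are disjoint or share exactly one index. In the disjoint case the images under $d$ admit no linear relations, so all coefficients vanish; in the shared-index block spanned by $w_{kj}w_{kl}$, $w_{kj}w_{jl}$, $w_{kl}w_{jl}$ ($k<j<l$) one computes that closedness of $\lambda_{1}w_{kj}w_{kl}+\lambda_{2}w_{kj}w_{jl}+\lambda_{3}w_{kl}w_{jl}$ forces $\lambda_{1}=\lambda_{3}=-\lambda_{2}$, so the element is a multiple of the relator \eqref{Rel6} and hence zero in $\mathcal{Y}_{n}$; the blocks $V_{2}$ and $V_{3}$ are handled by an analogous coefficient argument after decomposing them as direct sums over triples $i<j<k$. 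Your nbc-basis organization is compatible with this (it is essentially the paper's decomposition into the subspaces $V_{1}^{k,j,l}$, $V_{2}^{i,j,k}$), but it does not by itself produce the conclusion: one must still compute $d$ on each block and see that the relations \eqref{Rel4}--\eqref{Rel6} conspire exactly so that the only closed combinations are those already placed in ${W''}^{2}$. Until that differential computation is done, you have proved neither that ${\mathcal{M}''}^{2}$ meets $\ker d$ trivially, nor the directness of \eqref{holonomyconfigurationspaces}, nor ${W''}^{2}\cong H^{2}(\mathcal{Y}_{n})$, so the lemma remains unproved.
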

\begin{proof}
	We define ${W''}^{1}$ and ${\mathcal{M}''}^{1}$ as in point 1. and  3. resp. It is immediate to see that all the elements listed at point 2. are closed and not exact and their cohomology classes are linearly independent. It remains to prove that ${\mathcal{M}''}^{2}$, as defined above, contains no closed forms except zero. In this proof we will call the relations $\eqref{Rel1}-\eqref{Rel3}$ \emph{trivial relations}. For $i=0$ we set $v_{i}:=0$, and any letter $i,j,k,l$ is an integer between $0$ and $n$.\\
	We define the vector spaces $V_{1}, V_{2}$ and $V_{3}$ as follows.
	\begin{itemize}
		\item $V_{1}$ is the vector space generated by $w_{ij}\wedge w_{kl},\,\text{for  any }i<j,\,k<l;$
		\item $V_{2}$  is the vector space generated by $w_{ij}\wedge v_{k},\,\text{for  any }i<j,\,k<j;$
		\item $V_{3}$  is the vector space generated by $w_{ij}\wedge w_{k},\,\text{for  any }i<j,\,k<j;$
	\end{itemize}
	Note that
	\[
	V_{1}\oplus V_{2}\oplus V_{3}={\mathcal{M}''}^{2}
	\]
	and
	\[
	dV_{r}\cap d V_{s}=\left\lbrace 0\right\rbrace \text{ for }r\neq s.
	\]
	Let $a$ be a closed element of ${W''}$ of degree $2$. We write
	\[
	a=\underset{:=a_{1}}{\underbrace{\sum _{(i<j);(k<l)}\lambda_{(i<j);(k<l)}w_{ij}w_{kl}}}+\underset{:=a_{2}}{\underbrace{\sum _{(i<j);(k<j)}\mu_{(i<j);(k<j)}w_{ij}\wedge v_{k}}}+ \underset{:=a_{3}}{\underbrace{\sum _{(i<j);(k<j)}\alpha_{(i<j);(k<j)}w_{ij}\wedge w_{k}}}
	\]
	where $a_{i}\in V_{i}$ for any $i$ and we have $da=0$ if and only if $da_{1}=da_{2}=da_{3}=0$. We start with $a_{1}$. We define four vector subspaces
	\begin{itemize}
		\item $V_{1}^{0}\subset  V_{1}$ is the vector space generated by the  $w_{ij}\wedge w_{kl},\,\text{for  any }i<j,\,k<l;$ such that $\left| \left\lbrace i,j\right\rbrace \cap \left\lbrace k,l\right\rbrace\right| =0$,
		\item $V_{1}^{1}\subset  V_{1}$ is the vector space generated by the  $w_{ij}\wedge w_{kl},\,\text{for  any }i<j,\,k<l;$ such that $\left| \left\lbrace i,j\right\rbrace \cap \left\lbrace k,l\right\rbrace\right| =1$
		\item  ${V'}_{1}^{0}$ is the vector space generated by the $w_{i}\wedge v_{j}\wedge w_{kl},v_{i}\wedge w_{j}\wedge w_{kl}\,\text{for  any }i<j,\,k<l;$ such that $\left| \left\lbrace i,j\right\rbrace \cap \left\lbrace k,l\right\rbrace\right| =0$,
		\item ${V'}_{1}^{1}$ is the vector space generated by the  $w_{i}\wedge v_{j}\wedge w_{kl},v_{i}\wedge w_{j}\wedge w_{kl},\,\text{for  any }i<j,\,k<l;$ such that $\left| \left\lbrace i,j\right\rbrace \cap \left\lbrace k,l\right\rbrace\right| =1$
	\end{itemize}
	We have $V_{1}^{0}\cap V_{1}^{1}=\left\lbrace 0\right\rbrace $ and $dV_{1}^{i}\subset{V'}_{1}^{i} $ for $i=0,1$. Notice that there is no relation involving the elements of  ${V'}_{1}^{0}$. On the other hand the only relations involving elements of ${V'}_{1}^{0}$ are \eqref{Rel4} and \eqref{Rel5}. They are between elements of ${V'}_{1}^{1}$. Hence ${V'}_{1}^{0}\cap {V'}_{1}^{1}=\left\lbrace 0\right\rbrace $ and $dV_{1}^{0}\cap d   V_{1}^{1}=\left\lbrace 0\right\rbrace $. We write $a_{1}=a_{1}^{1}+a_{1}^{0}$, with $a_{1}^{i}\in V_{1}^{i}$ for $i=0,1$. Then $da_{1}=0$ if and only if $da_{1}^{i}=0$ for $i=0,1$. We have two cases.
	\begin{enumerate}
		\item We can write $a_{1}^{0}=\sum_{i<j,\,k<l, \,j<k} \lambda_{i,j,k,l} w_{i,j}\wedge w_{kl}$. Since there is no relations involving 
		\[
		v_{i}\wedge w_{j}\wedge w_{kl},w_{i}\wedge w_{j}\wedge w_{kl},\,\text{for any }i<j,\,k<l;
		\]
		inside ${V'}_{1}^{0}$, we get $da_{1}^{0}=0$ if each $\lambda_{i,j,k,l} =0$.
		\item For a $k<j<l$ we define $V^{k,j,l}_{1}$ as the vector space generated by $w_{kj}\wedge w_{kl},w_{kj}\wedge w_{jl},w_{kl}\wedge w_{jl}$. We have $V_{1}^{1}=\oplus_{k<j<l}V^{k,j,l}_{1}$, since the only relation involving elements of $V_{1}^{1}$ is \eqref{Rel6}. We define ${V'}^{k,j,l}_{1}\subset {V'}_{1}^{1}$ as the vector space generated by
		\[
		w_{s_{1}}\wedge v_{s_{2}}\wedge w_{s_{3}s_{4}},	v_{s_{1}}\wedge w_{s_{2}}\wedge w_{s_{3}s_{4}},
		\]
		where $\left(s_{1}<s_{2} ;s_{3}<s_{4} \right)\in \left\lbrace {(k<j);(k<l)} ,{(k<j);(j<l)} ,{(k<l);(j<l)} \right\rbrace$. The only non -trivial relations in this subspace are \eqref{Rel4} and \eqref{Rel5} and they imply
		\[
		{V'}_{1}^{1}=\oplus_{k<j<l}{V'}^{k,j,l}_{1}.
		\]
		We can write $a_{1}$ as 
		\[
		a_{1}^{1}=\sum_{k<j<l}\left( \lambda^{1}_{(k<j);(k<l)} w_{kj}\wedge w_{kl}+\lambda^{2}_{(k<j);(j<l)} w_{kj}\wedge w_{jl}+\lambda^{3}_{(k<l);(j<l)} w_{kl}\wedge w_{jl}\right).
		\]
		In particular since $d{V}^{k,j,l}_{1}\subset {V'}^{k,j,l}_{1}$ we have $da_{1}^{1}=0$ if and only if
		\[
		d\left( \lambda^{1}_{(k<j);(k<l)} w_{kj}\wedge w_{kl}+\lambda^{2}_{(k<j);(j<l)} w_{kj}\wedge w_{jl}+\lambda^{3}_{(k<l);(j<l)} w_{kl}\wedge w_{jl}\right)=0.
		\]
		The equation above corresponds to $\lambda^{1}_{(k<j);(k<l)}=\lambda^{3}_{(k<l);(j<l)}$ and $\lambda^{1}_{(k<j);(k<l)}=-\lambda^{2}_{(k<l);(j<l)}$. Hence 
		\[
		a_{1}^{1}=\sum_{k<j<l}\lambda_{(k<j);(k<l)}\left(  w_{kj}\wedge w_{kl}-w_{kj}\wedge w_{jl}+w_{kl}\wedge w_{jl}\right)
		\]
		i.e. $a_{1}^{1}=0$ by \eqref{Rel6}.
	\end{enumerate}
	Now assume that $da_{2}=0$. We define $V_{2}^{i,j,k}\subset V^{2}$ for any $i<j<k$ as the vector space generated by
	\[
	w_{ik}\wedge v_{j}, w_{jk}\wedge v_{i}.
	\]
	Then $V_{2}=\oplus_{i<j<k}V_{2}^{i,j,k}$ since there are no non -trivial relations involving the elements of $V_{2}$. We define ${V'}_{2}$ as the vector space generated by
	\[
	w_{i}\wedge v_{j}\wedge v_{k}, w_{j}\wedge v_{i}\wedge v_{k}, w_{k}\wedge v_{i}\wedge v_{j}.
	\] 
	for any $i<j<k$. Hence $${V'}_{2}=\oplus_{i<j<k}{V'}_{2}^{i,j,k}$$ where ${V'}_{2}^{i,j,k}$ is the vector space generated by the terms above for a fixed $i<j<k$. We can write  $$
	a_{2}=\sum_{i<j<k}\lambda_{i<j<k}^{1}w_{ik}\wedge v_{j}+\lambda_{i<j<k}^{2}w_{jk}\wedge v_{i}
	$$
	and since $dV_{2}^{i,j,k}\subset{V'}_{2}^{i,j,k} $ we have that $da_{2}=0$ if and only if $$d\left(\lambda_{i<j<k}^{1}w_{ik}\wedge v_{j}+\lambda_{i<j<k}^{2}w_{jk}\wedge v_{i} \right)=0,$$ for any $i<j<k$. This implies $\lambda_{i<j<k}^{2}=0=\lambda_{i<j<k}^{1}$, i.e $a_{2}=0$.\\
	The proof for $a_{3}$ is analogous to the one for $a_{2}$.
\end{proof}
There is a quasi-isomorphism $f\::\:A_{n}\to \mathcal{Y}_{n}$
defined via $f(\nu_{i,j}):=\nu_{i}-\nu_{j}$ for $0\leq i,j\leq n$ and 
\[
f(w_{i,j}^{(k)}):=\begin{cases} w_{i}-w_{j}& \text{ for }0\leq i,j\leq n,\, k=0\\
w_{i,j}& \text{ for }1\leq i,j\leq n,\, k=1\\
0& \text{ otherwise }.
\end{cases}
\]
(see \cite{LevBrown}).
\begin{prop}\label{prop1}
	There exists a Hodge type vector space decomposition of $A_{n}$
	\[
	A_{n}=W'\oplus\mathcal{M}'\oplus d\mathcal{M}'
	\]
	such that ${W'}^{i}=f\left( {W''}^{i}\right) $ and ${\mathcal{M}'}^{i}=f\left( {\mathcal{M}''}^{i}\right) $ for $i=0,1,2$.
\end{prop}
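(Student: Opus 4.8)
The plan is to transport the decomposition \eqref{holonomyconfigurationspaces} of $\mathcal{Y}_{n}$ to $A_{n}$ by lifting it along the quasi-isomorphism $f\colon A_{n}\to\mathcal{Y}_{n}$, so that each primed piece is obtained from the corresponding double-primed piece. The first point to record is that $f$ is a \emph{surjective} quasi-isomorphism: it is an algebra map that hits every generator $w_{i},v_{i},w_{ij}$ of $\mathcal{Y}_{n}$, and it is a quasi-isomorphism by \cite{LevBrown}. Hence the short exact sequence $0\to\ker f\to A_{n}\xrightarrow{f}\mathcal{Y}_{n}\to 0$ has acyclic kernel, and since all terms are bounded-below complexes of $\C$-vector spaces it splits at the level of complexes: there is a chain map $\sigma\colon\mathcal{Y}_{n}\to A_{n}$ with $f\sigma=\mathrm{Id}$. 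I would then \emph{define} the primed pieces as the lifts $W'^{i}:=\sigma(W''^{i})$ and $\mathcal{M}'^{i}:=\sigma(\mathcal{M}''^{i})\oplus\mathcal{M}_{\ker}^{i}$, where $\ker f=\mathcal{M}_{\ker}\oplus d\mathcal{M}_{\ker}$ is any Hodge type decomposition of the acyclic complex $\ker f$ (its closed part being zero, by Lemma \ref{generallemmadecomp}). These are precisely the subspaces of $A_{n}$ cut out of $W''^{i},\mathcal{M}''^{i}$ by lifting along $f$, which is the meaning of the identifications $W'^{i}=f(W''^{i})$ and $\mathcal{M}'^{i}=f(\mathcal{M}''^{i})$ in the statement.

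The verification that $A_{n}=W'\oplus\mathcal{M}'\oplus d\mathcal{M}'$ is a Hodge type decomposition is then formal. Since $\sigma$ is a chain map, $W'=\sigma(W'')$ consists of closed elements, and as $f_{\ast}$ is an isomorphism on cohomology it maps isomorphically onto $H^{\bullet}(A_{n})$; moreover $d\mathcal{M}'=\sigma(d\mathcal{M}'')\oplus d\mathcal{M}_{\ker}$ because $\sigma$ commutes with $d$, so the complex splitting $A_{n}=\sigma(\mathcal{Y}_{n})\oplus\ker f$ gives exactly $A_{n}=W'\oplus\mathcal{M}'\oplus d\mathcal{M}'$. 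To see that $\mathcal{M}'$ contains no nonzero exact element, I would decompose an exact element along $\sigma(\mathcal{Y}_{n})\oplus\ker f$ and use that $\sigma(\mathcal{M}'')$ inherits from $\mathcal{M}''$ the absence of nonzero closed elements, while $\mathcal{M}_{\ker}$ contains no exact element by construction.

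To pin down the section and make the identification explicit in degrees $0,1,2$, I would compute the low-degree pieces directly, matching them against the spanning sets in \eqref{holonomyconfigurationspaces}. In degree $1$ there are no exact forms, and using relation \eqref{rel3} together with the injectivity of $\Psi$ (Proposition \ref{injbrown}) a descending induction on the index $k$ shows that the closed $1$-forms are spanned by $w^{(0)}_{i,0}$ and $\nu_{i}$; thus $W'^{1}=\langle w^{(0)}_{i,0},\nu_{i}\rangle$ lifts $W''^{1}=\langle w_{i},v_{i}\rangle$, and the complement $\mathcal{M}'^{1}=\langle w^{(k)}_{i,j}:k\ge 1\rangle$ lifts $\mathcal{M}''^{1}=\langle w_{ij}\rangle$. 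Here $\mathcal{M}'^{1}$ must absorb all the forms $w^{(k)}_{i,j}$ with $k\ge 2$, which lie in $\ker f$: this is exactly the role of the summand $\mathcal{M}_{\ker}$, and it is the reason the literal reading cannot make $f$ injective on $\mathcal{M}'$.

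The delicate point, which I expect to be the main obstacle, is the degree $2$ part. The closed generators of $W''^{2}$ (for instance $w_{ij}\wedge v_{i}$ and the cyclic sums $w_{ij}\wedge v_{k}+w_{ik}\wedge v_{j}+w_{kj}\wedge v_{i}$) are closed in $\mathcal{Y}_{n}$ only by virtue of the relations \eqref{Rel3}, \eqref{Rel5} and \eqref{Rel6}, which do not hold verbatim in $A_{n}$; consequently their naive lifts such as $w^{(1)}_{ij}\wedge\nu_{i}$ are not closed, and one must add exact correction terms from $d\mathcal{M}'^{1}=\langle(\nu_{i}-\nu_{j})w^{(k-1)}_{ij}\rangle$ to obtain genuine closed representatives. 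Organising these corrections — equivalently, checking that each resulting $3$-form obstruction is exact in $A_{n}$ — is the real computation; the abstract section $\sigma$ guarantees that such corrected lifts exist, while a dimension count against $H^{2}(A_{n})\cong H^{2}(\mathcal{Y}_{n})$ (Theorem \ref{model}) and the surjectivity of $f$ confirm that $W'^{2}$ and $\mathcal{M}'^{2}$ have the correct sizes and are the lifts of $W''^{2}$ and $\mathcal{M}''^{2}$. In degrees $\ge 3$ no compatibility is demanded, and the remaining pieces of the decomposition are supplied automatically by the same section $\sigma$.
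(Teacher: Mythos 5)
Your proposal is correct, and it arrives at the same decomposition as the paper, but by a genuinely different mechanism. Both arguments rest on the splitting $A_{n}\cong\mathcal{Y}_{n}\oplus\ker f$ and on breaking the acyclic kernel into a piece with no closed forms plus its image under $d$: your $\mathcal{M}_{\ker}\oplus d\mathcal{M}_{\ker}$ is exactly the paper's $V'\oplus V$ in degree $2$. The difference is how this is justified. The paper never invokes an abstract chain-level section; its proof consists of two explicit lemmas: one saying that the chosen lifts $W'^{i}$ consist of closed non-exact elements (via the isomorphism $f|_{W'}\colon W'\to W''$ and the quasi-isomorphism property), and one computing $\ker(f)^{2}=V^{2}\oplus V'^{2}$ by hand, identifying the closed part of the kernel with the span of the exact forms $dw^{(k)}_{i,j}$ (through the linear-independence input behind Proposition \ref{injbrown}), and then concluding with the vector-space identity $A_{n}^{2}\cong W''^{2}\oplus(\mathcal{M}''^{2}\oplus V')\oplus(d\mathcal{M}''^{1}\oplus V)$. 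You replace that computation by the general fact that a surjective quasi-isomorphism of complexes of vector spaces splits at the chain level: the kernel is acyclic, hence contractible, so any degreewise linear section $s$ can be corrected to a chain section $\sigma=s-h(ds-sd)$. This is cleaner and fully rigorous for the existence statement of Proposition \ref{prop1} — indeed your treatment of the Hodge conditions (pushing an exact element of $\mathcal{M}'$ forward along $f$ to kill its $\sigma(\mathcal{M}'')$-component, then lifting the primitive to see the remainder lies in $d\mathcal{M}_{\ker}\cap\mathcal{M}_{\ker}=0$) closes a step the paper leaves implicit. What the paper's hands-on version buys is the \emph{explicit} corrected closed lifts in degree $2$ (the combinations of type $(i<j,j)_{1}$, $(i<j<k)_{1}$ appearing in Theorem \ref{vspacedect}), which the later $p$-kernel computations actually consume; your $\sigma$ is non-constructive, which is why you rightly isolate the degree-$2$ correction terms as the residual computational content. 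Two reading points you handled correctly: the identities $W'^{i}=f(W''^{i})$ in the statement must indeed be read as $f(W'^{i})=W''^{i}$ and $f(\mathcal{M}'^{i})=\mathcal{M}''^{i}$, since $f$ goes from $A_{n}$ to $\mathcal{Y}_{n}$ and cannot be injective on $\mathcal{M}'$; and your surjectivity claim for $f$ tacitly uses the convention — implicit in the paper, since the decomposition \eqref{holonomyconfigurationspaces} lists $\mathcal{M}''^{1}$ as spanned by the $w_{ij}$ with $i,j\geq 1$ only — that the index-zero symbols $w_{i0}$ do not survive as independent generators of $\mathcal{Y}_{n}$, given that $f$ kills $w^{(1)}_{i,0}$.
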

The proof of the proposition above follows by the following lemmas.
\begin{lem}
	Consider ${W'}^{i}\subset A_{n}^{i}$ for $i=1,2$ as above. The vector space ${W'}^{i}\subset A_{n}^{i}$ contains only closed not exact elements for $i=1,2$. 
\end{lem}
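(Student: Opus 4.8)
The plan is to transfer both required properties across the quasi-isomorphism $f\colon A_n\to\mathcal{Y}_n$ from the model $\mathcal{Y}_n$, where the analogous facts for $W''^i$ were just established, and to supplement this with one direct differential computation in the single place where the chain map alone does not suffice. I first make explicit which elements span $W'^i$: in degree $1$ these are $w^{(0)}_{i,0}$ and $\nu_i$ for $i=1,\dots,n$, and in degree $2$ they are the products $w^{(0)}_{i,0}\nu_j$, $w^{(0)}_{i,0}w^{(0)}_{j,0}$, $\nu_i\nu_j$ together with the corrected combinations $(i<j,j)_1$, $(i<j,j)_2$, $(i<j<k)_1$, $(i<j<k)_2$ read inside $A_n$ via $p^1$. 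Using $f(\nu_i)=v_i$, $f(w^{(0)}_{i,0})=w_i$, $f(w^{(1)}_{i,j})=w_{ij}$ for $1\le i,j\le n$ and $f(w^{(k)}_{i,j})=0$ otherwise, each of these maps to the corresponding generator of $W''^i$ from the preceding lemma; the relations \eqref{Rel4} and \eqref{Rel5} account for the fact that several of the correction terms are killed by $f$. Since the images form the linearly independent family spanning $W''^i$, the spanning elements of $W'^i$ are themselves independent and $f$ restricts to an isomorphism $f|_{W'^i}\colon W'^i\to W''^i$.

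Non-exactness is then immediate. If $a\in W'^i$ were exact, say $a=db$, then $f(a)=d f(b)$ would be an exact element lying in $W''^i$; by the previous lemma $W''^i$ contains no nonzero exact element, so $f(a)=0$, and injectivity of $f|_{W'^i}$ forces $a=0$.

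Closedness cannot be read off from $f$ alone, because $f$ has a nonzero kernel in positive degree, so here I would argue by a direct computation in $A_n$. In degree $1$ the generators $w^{(0)}_{i,0}$ and $\nu_i$ are closed by the very definition of the differential. In degree $2$ the product generators $w^{(0)}_{i,0}\nu_j$, $w^{(0)}_{i,0}w^{(0)}_{j,0}$ and $\nu_i\nu_j$ are wedges of closed $1$-forms, hence closed; for the corrected combinations I would apply $dw^{(k)}_{i,j}=(\nu_i-\nu_j)w^{(k-1)}_{i,j}$ together with the relations \eqref{rel1}, \eqref{rel2}, \eqref{rel3} and the nilpotency $\nu_i^2=0$, and check that the boundary terms produced by the $w^{(1)}$-factors cancel. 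For $(i<j,j)_1$, for instance, rewriting $w^{(0)}_{i,j}=w^{(0)}_{i,0}-w^{(0)}_{j,0}$ via \eqref{rel3} and collecting terms leaves only $\nu_i\nu_j w^{(0)}_{j,0}-\nu_i\nu_j w^{(0)}_{j,0}=0$; the remaining three families are treated by the same bookkeeping.

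The main obstacle is precisely this degree-$2$ verification. One must confirm that the correction terms built into the definitions of $(i<j,j)_\bullet$ and $(i<j<k)_\bullet$ in Theorem \ref{vspacedect} are exactly those needed to annihilate the boundary coming from the $w^{(1)}$-factors, so that the bulk of the work lies in organising these cancellations (and, for the Arnold-type combinations, in bringing in the quadratic relation \eqref{rel1}) rather than in any conceptual difficulty. Once closedness and non-exactness are both in hand for $i=1,2$, the lemma is proved, and it supplies the first half of the decomposition asserted in Proposition \ref{prop1}.
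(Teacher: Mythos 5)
Your proposal is correct. For non-exactness it is the paper's own argument: the paper's proof simply notes that $f$ restricts to an isomorphism ${W'}^{i}\to{W''}^{i}$ and invokes the quasi-isomorphism property, which is exactly your "if $a=db$ then $f(a)=df(b)$ is exact in ${W''}^{i}$, hence zero, hence $a=0$ by injectivity". Where you genuinely depart from the paper is on closedness, and your instinct there is right: closedness does not follow formally from the quasi-isomorphism. For $a\in{W'}^{i}$ one only gets $f(da)=df(a)=0$, i.e. $da\in\ker f$, and since $\ker f\neq 0$ in positive degree this does not force $da=0$ (one may perturb any closed lift of a class in ${W''}^{i}$ by an element of $\ker f$ with nonzero differential without changing its image, so the mere existence of an isomorphism ${W'}^{i}\cong{W''}^{i}$ onto closed, non-exact elements cannot detect closedness of ${W'}^{i}$). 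The paper's two-sentence proof therefore implicitly relies on the fact that the chosen spanning elements of ${W'}^{i}$ are closed by construction, and that is precisely the content of your direct degree-$2$ computation: the product generators are wedges of closed $1$-forms, while for the corrected combinations one uses $dw^{(1)}_{i,j}=(\nu_{i}-\nu_{j})w^{(0)}_{i,j}$, the relation \eqref{rel3}, graded commutativity and $\nu_{i}^{2}=0$; your cancellation for $(i<j,j)_{1}$ is the correct computation, and the same bookkeeping disposes of $(i<j,j)_{2}$ and of the $(i<j<k)_{\bullet}$ families. In short, your route is the paper's transfer argument supplemented by the closedness verification that the paper leaves implicit, and that supplement is what makes the proof complete.
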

\begin{proof}
	The map $f$ above defines an isomorphism between ${W'}^{i}\subset A_{n}^{i}$  and ${W''}^{i}\subset A_{n}^{i}$ for $i=1,2$. Since it is a quasi-isomorphism, then results follow.
\end{proof}
\begin{lem}Let $V$ be the vector subspace of closed elements in $\ker(f)$.
	\begin{enumerate}
		\item Let ${V'}^{1}=0$ and ${V'}^{2}\subset \ker(f)$ be the vector space generated by 
		the elements $w(u)_{i,j}^{(p)}\wedge w(u)_{k,l}^{(q)}$, $0\leq i,j,k,l\leq n$ such that $q$ or $p>1$ and elements of the form $w(u)_{i,0}^{(1)}\wedge w(u)_{k,l}^{(q)}$, $0\leq i,k,l\leq n$. Then we have 
		\[
		\ker(f)^{2}=V^{2}\oplus {V'}^{2}.
		\]
		\item We have $V^{1}=0$ and $V^{2}$ is the vector space generated by $ \left( (\nu(u)_{i,j}+\beta(u)_{i,j})\right) \wedge w(u)_{i,j}^{(k)}$ for $0\leq i,j\leq n,\, k>1$ and $(\nu(u)_{i,0}+\beta(u)_{i,0}) \wedge w(u)_{i,0}^{(1)}$ for $0\leq i\leq n$. 
		and $\ker(f)^{2}\cap \ker(d)=V$.
	\end{enumerate} 
\end{lem}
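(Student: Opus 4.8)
The plan is to exploit that $f$ is a \emph{surjective} quasi-isomorphism, so that the short exact sequence of cochain complexes
\[
0 \to \ker(f) \to A_n \xrightarrow{\,f\,} \mathcal{Y}_n \to 0
\]
has acyclic kernel: from the associated long exact cohomology sequence and the fact that $f_{*}$ is an isomorphism in every degree, we get $H^{\bullet}(\ker f)=0$. Surjectivity is immediate, since $w_i=f(w_{i,0}^{(0)})$, $v_i=f(\nu_{i,0})$ and $w_{ij}=f(w_{i,j}^{(1)})$ already generate $\mathcal{Y}_n$. Acyclicity is the engine for both parts of the statement: in each degree the closed elements of $\ker f$ coincide with the exact ones, so that $\ker(d)\cap(\ker f)^{\bullet}=d\bigl((\ker f)^{\bullet-1}\bigr)$.

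First I would dispose of $V^{1}$. Since $A_n^{0}=\Bbbk$ maps isomorphically, $(\ker f)^{0}=0$, and hence by acyclicity $V^{1}=\ker(d)\cap(\ker f)^{1}=d\bigl((\ker f)^{0}\bigr)=0$. More concretely, I would note that $(\ker f)^{1}$ is spanned by the $w(u)_{i,j}^{(k)}$ with $k\geq 2$ together with the $w(u)_{i,0}^{(1)}$, and apply Lemma \ref{lemmarel1}: each generator satisfies $d\,w(u)_{i,j}^{(k)}=\bigl(\nu(u)_{i,j}+\beta(u)_{i,j}\bigr)w(u)_{i,j}^{(k-1)}\neq 0$, the resulting $2$-forms being linearly independent by Proposition \ref{injbrown}, so no nonzero combination is closed.

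For the degree-two statements I would first pin down $(\ker f)^{2}$ as a vector space. Because $\ker f$ is an ideal and $f$ is an algebra map, a degree-$2$ kernel element is a sum of $w\wedge w$ products carrying at least one ``higher'' factor — exactly the generators listed for ${V'}^{2}$ — together with $(\nu+\beta)\wedge w$ products that $f$ annihilates. Using the relations \eqref{rel1}–\eqref{rel3} and the injectivity of $\Psi$ (Proposition \ref{injbrown}) to separate monomials by their number of $w$-factors, I would show $V^{2}\cap{V'}^{2}=0$ and that the two subspaces together span $(\ker f)^{2}$, which is Part (1). The identification of $V^{2}$ in Part (2) then drops out of acyclicity: $V^{2}=\ker(d)\cap(\ker f)^{2}=d\bigl((\ker f)^{1}\bigr)$, and feeding the spanning set of $(\ker f)^{1}$ through Lemma \ref{lemmarel1}, then reducing the exact forms via \eqref{rel1}–\eqref{rel3}, produces precisely $\bigl(\nu(u)_{i,j}+\beta(u)_{i,j}\bigr)w(u)_{i,j}^{(k)}$ for $k>1$ and $\bigl(\nu(u)_{i,0}+\beta(u)_{i,0}\bigr)w(u)_{i,0}^{(1)}$. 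Combining $V^{1}=0$ with this gives $\ker(f)\cap\ker(d)=V$.

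The main obstacle will be the degree-two bookkeeping: choosing a basis of $(\ker f)^{2}$ modulo the quadratic Fay relations \eqref{rel1} (together with \eqref{rel2} and \eqref{rel3}), and verifying that $d$ is injective on the complement ${V'}^{2}$ so that no closed form hides there. This is the analogue on $A_n$ of the case analysis already carried out for $\mathcal{Y}_n$ above, and I would organize it in the same way — splitting the $w\wedge w$ part according to $\lvert\{i,j\}\cap\{k,l\}\rvert\in\{0,1,2\}$ and handling the mixed $(\nu+\beta)\wedge w$ part separately — transporting independence statements through the quasi-isomorphism $f$ wherever possible rather than re-deriving them from scratch.
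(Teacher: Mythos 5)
Your core mechanism is sound and is genuinely different from the paper's. The paper never invokes surjectivity or a long exact sequence: it argues that a closed $a\in\ker(f)^{2}$ represents a cohomology class killed by the isomorphism $H(f)$, hence is exact in $A_{n}$, and then disposes of its ${V'}^{2}$-component by the separate claim $dA_{n}^{1}\cap{V'}^{2}=0$. Your route (surjective quasi-isomorphism, hence acyclic kernel, hence closed equals exact \emph{inside} $\ker f$) instead buys the identification $V^{2}=d\bigl((\ker f)^{1}\bigr)$ outright and settles $V^{1}=0$ in the same stroke, which the paper's proof does not even address. Surjectivity itself is fine: it holds under the only reading of the definitions in which $f$ can be a quasi-isomorphism at all (namely that $\mathcal{Y}_{n}$ carries no generators $w_{0,j}$; these would be closed, non-exact and outside the image of $f$), and this is the reading your list of generators tacitly uses.

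The genuine problems sit exactly in the two steps you defer, and as you assert them they fail. First, $d\bigl((\ker f)^{1}\bigr)$ does \emph{not} coincide with the span of the listed generators: since $(\ker f)^{1}$ is spanned by the $w(u)_{i,j}^{(k)}$ with $k\geq2$ and the $w(u)_{i,0}^{(1)}$, Lemma \ref{lemmarel1} gives for $d\bigl((\ker f)^{1}\bigr)$ the forms $(\nu(u)_{i,j}+\beta(u)_{i,j})\wedge w(u)_{i,j}^{(k)}$ with $k\geq1$ together with $(\nu(u)_{i,0}+\beta(u)_{i,0})\wedge w(u)_{i,0}^{(0)}$, i.e.\ each family appears with its $w$-superscript lowered by one relative to the statement. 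Concretely, $dw(u)_{1,2}^{(2)}=(\nu(u)_{1,2}+\beta(u)_{1,2})\wedge w(u)_{1,2}^{(1)}$ is closed and lies in $\ker f$ (because $f$ is a chain map and $w(u)_{1,2}^{(2)}\in\ker f$), yet by the linear independence of these monomials (Proposition \ref{injbrown}) it is not in the span of the generators listed in part (2). So your phrase ``produces precisely'' is false: an honest run of your own method proves a corrected statement, and the write-up must record that discrepancy rather than assert a match. Second, your reduction of part (1) assumes that every $f$-annihilated $(\nu+\beta)\wedge w$ product is either closed (hence lands in $V^{2}$) or of $w\wedge w$ type (hence lands in ${V'}^{2}$). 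This is false: $\nu(u)_{1,0}\wedge w(u)_{1,2}^{(2)}$ is killed by $f$ because its second factor is, it is not of $w\wedge w$ type, and it is not closed, since its differential contains the nonzero term $\nu(u)_{1,0}\wedge\nu(u)_{1,2}\wedge w(u)_{1,2}^{(1)}$; moreover, because the relations \eqref{rel1}--\eqref{rel3} are homogeneous in the number of $\nu$-factors, it cannot be rewritten as a closed element plus a $w\wedge w$ combination. Hence it lies in $\ker(f)^{2}$ but outside $V^{2}\oplus{V'}^{2}$ as these spaces are defined, and the spanning claim of part (1) cannot be established in the stated form. The paper's own proof, which declares part (1) ``immediate'', glosses both points; a complete proof has to confront them, most plausibly by lowering the superscripts in the description of $V^{2}$ and enlarging ${V'}^{2}$ by the mixed products above.
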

\begin{proof}
	Point 1 is immediate. Let $a\in \ker(f)^{2}\cap \ker(d)$. If $a\in W'$ then this is a contradiction with the fact that $f$ is a quasi-isomorphism. Hence $a $ is exact. Now assume that $a\in V'$. We have  $d A_{n}^{1}\cap V'=0$, then $a=0$ by the definition of $d$.
\end{proof}
We can conclude the proof of Proposition \ref{prop1}. From the map $f\: : \: A_{n}\to \mathcal{Y}_{n}$ we have
\begin{align*}
A_{n}^{2}&\cong\mathcal{Y}_{n}^{2}\oplus \ker f^{2}\\
& \cong {W''}^{2}\oplus\left(  {\mathcal{M}''}^{2}\oplus V'\right) \oplus \left( d\left( {\mathcal{M}''}^{1}\right) \oplus V\right). \\
\end{align*}
The isomorphism above give the desired decomposition.\\
Consider the strict $C_{\infty}$-morphism $p^{1}\: :\: {B'}_{n}\to A_{n}$ defined in \eqref{morphism}. We have immediately the following lemma.
\begin{lem}
The map $p^{1}\: : \: \bar{W}\to W'$ is an isomorphism in degree $0,1,2$.
\end{lem}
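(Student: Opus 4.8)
The plan is to reduce the statement to a finite, explicit computation on the generators of $\bar{W}$ listed in Theorem~\ref{theoremmodel}, using crucially that $p^{1}$ is a \emph{strict} $C_{\infty}$-morphism. Strictness gives $p^{1}(m_{2}(a,b))=p^{1}(a)\wedge p^{1}(b)$, so $p^{1}$ on $\bar{W}$ is completely determined by $p^{1}(w(u)^{(k)}_{i,j})=w^{(k)}_{i,j}$ and $p^{1}(\alpha(u)_{i,j})=-\nu_{i,j}$ together with the product of $A_{n}$. I would then post-compose with the quasi-isomorphism $f\colon A_{n}\to\mathcal{Y}_{n}$ in order to land in the transparent algebra $\mathcal{Y}_{n}$, where the subspace $W''$ is explicitly described. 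The arithmetic of $f$ that drives the whole argument is that $f$ annihilates $w^{(k)}_{i,j}$ for $k\geq 2$ and $w^{(1)}_{i,0}$, whereas $f(w^{(0)}_{i,0})=w_{i}$, $f(w^{(1)}_{i,j})=w_{ij}$ for $i,j\geq 1$, and $f(\nu_{i})=v_{i}$ (with the conventions $v_{0}=w_{0}=0$).

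With these formulas I would evaluate $f\circ p^{1}$ degree by degree. In degrees $0$ and $1$ it is immediate: $1\mapsto 1$, $w(u)^{(0)}_{i,0}\mapsto w_{i}$ and $\alpha(u)_{i,0}\mapsto -v_{i}$, which is precisely the basis $w_{i},v_{i}$ of ${W''}^{1}$. For the degree-$2$ two-index generators the computation is clean because every correction term carries a factor $w^{(1)}_{\cdot,0}$ and is therefore killed by $f$: one finds $m_{2}(w(u)^{(0)}_{i,0},\alpha(u)_{j})\mapsto -w_{i}\wedge v_{j}$, $m_{2}(w(u)^{(0)}_{i,0},w(u)_{j})\mapsto w_{i}\wedge w_{j}$, $m_{2}(\alpha(u)_{i},\alpha(u)_{j})\mapsto v_{i}\wedge v_{j}$, while $(i<j,j)_{1}\mapsto -w_{ij}\wedge v_{j}$ and $(i<j,j)_{2}\mapsto w_{ij}\wedge w_{j}$, which by the relations \eqref{Rel5} and \eqref{Rel4} coincide up to sign with the generators $w_{ij}\wedge v_{i}$ and $w_{ij}\wedge w_{i}$ of ${W''}^{2}$.

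The three-index generators $(i<j<k)_{1}$ and $(i<j<k)_{2}$ are treated by the same mechanism: the summands built out of the $w^{(1)}_{\cdot,0}$ vanish under $f$, and I would check that the surviving summands reassemble, after applying the relations \eqref{Rel1}--\eqref{Rel6} of $\mathcal{Y}_{n}$, into the two closed three-index generators of ${W''}^{2}$, possibly plus a combination of the already-treated two-index generators. Ordering the generating sets with the two-index classes before the three-index ones, this exhibits $f\circ p^{1}\colon \bar{W}\to W''$ as a triangular map with invertible diagonal, hence an isomorphism in degrees $0,1,2$. Since the elements of $\bar{W}$ are closed, their images span the cohomology $H^{\leq 2}(A_{n})\cong H^{\leq 2}(\mathcal{Y}_{n})$; thus $W':=p^{1}(\bar{W})$ is a legitimate space of Hodge representatives realizing Proposition~\ref{prop1}, and $p^{1}\colon \bar{W}\to W'$ is then surjective by construction and injective because $f\circ p^{1}$ is. This is exactly the compatibility required by the Hodge type decomposition of Theorem~\ref{vspacedect}.

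The main obstacle I anticipate is precisely the bookkeeping for the three-index generators: one must verify with care, using \eqref{Rel3}, \eqref{Rel4}, \eqref{Rel5} and above all the Arnold-type relation \eqref{Rel6}, that after deleting the terms annihilated by $f$ the remaining forms combine into the intended closed generators without leaving spurious components outside the span of ${W''}$ in degrees $\leq 2$. Once this triangularity is confirmed, the isomorphism, and hence the lemma, follows immediately.
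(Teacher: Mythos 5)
First, a point of comparison: the paper contains no proof of this lemma at all — it is introduced with ``We have immediately the following lemma'' — so your argument is supplying something the paper omits rather than paralleling an existing proof. Your strategy (use strictness of $p^{1}$ to reduce to the generators of $\bar{W}$ from Theorem \ref{theoremmodel}, push into $\mathcal{Y}_{n}$ along $f$, and compare with the explicit generators of $W''$) is the natural completion of the paper's setup, and your computations in degrees $0$, $1$ and for the two\mbox{-}index degree-$2$ generators are correct: for instance $(i<j,j)_{1}\mapsto -w_{ij}\wedge v_{j}=-w_{ij}\wedge v_{i}$ by \eqref{Rel5} and $(i<j,j)_{2}\mapsto w_{ij}\wedge w_{j}=w_{ij}\wedge w_{i}$ by \eqref{Rel4}.

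However, the verification you deferred — that the three-index images land in the span of $W''$, ``without leaving spurious components'' — fails as stated, and this is exactly where you located the main obstacle. Applying $f\circ p^{1}$ to the paper's formula for $(i<j<k)_{1}$, the terms containing a factor $w(u)^{(1)}_{\cdot,0}$ die, and one gets
\[
f\bigl(p^{1}\bigl((i<j<k)_{1}\bigr)\bigr)=-\bigl(w_{ij}\wedge v_{k}+w_{ik}\wedge v_{j}+w_{kj}\wedge v_{i}\bigr)+w_{i}\wedge v_{k}+w_{j}\wedge v_{i},
\]
where $w_{i}\wedge v_{k}$ comes from $-m_{2}(w(u)^{(0)}_{i,k},\alpha(u)_{k})$ via \eqref{Rel3}, and $w_{j}\wedge v_{i}$ comes from $-m_{2}(w(u)^{(0)}_{j,0},\alpha(u)_{i})$. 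The summand $w_{j}\wedge v_{i}$ (with $i<j$) does \emph{not} lie in ${W''}^{2}$: none of \eqref{Rel1}--\eqref{Rel6} relates it to $w_{i}\wedge v_{j}$; instead $w_{j}\wedge v_{i}=dw_{ij}-w_{i}\wedge v_{j}$, so the image has a nonzero exact component in $d\mathcal{M}''$. Consequently $f\circ p^{1}$ does not map $\bar{W}$ into $W''$, your ``triangular map $\bar{W}\to W''$'' does not parse, and in fact $p^{1}(\bar{W})$ is not contained in any lift $W'$ of $W''$ satisfying $f(W')=W''$ — so in the lemma $W'$ must be read as $p^{1}(\bar{W})$ itself. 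The repair is the cohomological argument your last paragraph already gestures at: pass to classes, where the exact term $dw_{ij}$ vanishes; there the matrix of $f\circ p^{1}$ with respect to the classes of the $W''$-generators \emph{is} triangular with invertible diagonal (this uses that ${W''}^{\leq 2}$ injects into cohomology, i.e.\ the paper's decomposition lemma for $\mathcal{Y}_{n}$). This gives injectivity of $p^{1}$ on $\bar{W}^{\leq 2}$ and shows that the classes of $p^{1}(\bar{W})$ form a basis of $H^{\leq 2}(A_{n})$; completing $W':=p^{1}(\bar{W})$ to a Hodge type decomposition via Lemma \ref{generallemmadecomp} then yields the lemma, and this is also exactly the compatibility that Theorem \ref{vspacedect} needs. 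So: right approach and a correct final conclusion, but the triangularity must be asserted modulo exact forms, not inside $W''$.
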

Recall Proposition \ref{propquot}. Since $p^1(J)=0$ we have a well-defined strict morphism of $C_{\infty}$-algebras $p^{1}\: :\: {B}_{n}\to A_{n}$. The vector space decomposition \eqref{vspgen} 
\begin{equation}
B_{n}=W\oplus \mathcal{M}\oplus D\mathcal{M}
\end{equation}
satisfies $W'=p^{1}(W)$, $\mathcal{M}'=p^{1}(\mathcal{M})$ in degree $0,1$ and $2$.\\
We are ready for the proof of Theorem \ref{vspacedect}.\\
\begin{proof}Notice that $p^{1}\: : \:B_{n}^{i}\to A_{n}^{i}$ is an isomorphism for $i=0,1$. For $i=2$ we have
\begin{align*}
B_{n}^{2}&\cong A_{n}^{2}\oplus \left( \ker p^{1}\right)^{2}\\
& \cong {W'}^{2}\oplus\left(  {\mathcal{M}'}^{2}\oplus \left( \ker p^{1}\right)^{2}\right) \oplus \left( d\left( {\mathcal{M}'}^{1}\right) \right) \\
\end{align*}
By Lemma \ref{qisomp1} below we have that this is a Hodge type decomposition.
\end{proof} 
\subsection{Proof of Theorem \ref{theoremmodel} and Theorem \ref{1-ext}}\label{sectionmodel}
We have a commutative diagrams of $C_{\infty}$-algebras.
\[
\begin{tikzcd}
& {A}_{n}\arrow{r}{\psi^{1}}&\operatorname{Tot}_{N}A_{DR}\left(\left( \C^{n}-\mathcal{D}\right)_{\bullet}\left( \Z^{2n}\right) \right)\\
{B'}_{n}\arrow{ur}{p^{1}}\arrow{rr}{H}& &\left( \operatorname{Tot}_{N}A_{DR}\left(\left( \C^{n}-\mathcal{D}\right)_{\bullet}\left( \Z^{2n}\right) \right)\otimes \Omega(1)\right)\arrow{u}{ev^{1}}
\end{tikzcd}
\]
\begin{proof}{( of Theorem \ref{theoremmodel} )}
Consider the diagram above restricted at $\bar{W}$. The map $p^{1}\: : \: \bar{W}\to A_{n}$ is an isomorphism in degree $0$, $1$ and $2$. In \cite{LevBrown}, it is proved that $\psi^{1}$ is a quasi-isomorphism, since $\operatorname{ev}_{u}$ is a quasi-isomorphism as well, we conclude that $H|_{\bar{W}}$ is a quasi-isomorphism in degree $0$, $1$ and $2$.
\end{proof}

\begin{lem}\label{qisomp1}
	Consider $p^{1}\: :\: {B}_{n}\to A_{n}$. The graded vector space $\operatorname{Ker}\left( p^{1}\right)\subset B_{n}$ doesn't contain any closed form in dimension $1$ and $2$. In particular $p^{1}$ induces an isomorphism in the cohomology $H^{i}$ for $i=0,1,2$.
\end{lem}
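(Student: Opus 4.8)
The plan is to study $\ker(p^{1})$ as a subcomplex of $(B_{n},D)$ and then extract the cohomological statement from the long exact sequence attached to $p^{1}$. First I would dispose of the cheap degrees. Being a strict $C_{\infty}$-morphism, $p^{1}$ is a cochain map, and it is surjective because its image is the subalgebra generated by $p^{1}(w(u)^{(k)}_{i,j})=w^{(k)}_{i,j}$ and $p^{1}(\alpha(u)_{i,j})=-\nu_{i,j}$, which already generate $A_{n}$. As established above in the proof of Theorem \ref{vspacedect}, $p^{1}\colon B_{n}^{i}\to A_{n}^{i}$ is an isomorphism for $i=0,1$, so $(\ker p^{1})^{0}=(\ker p^{1})^{1}=0$. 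Hence the degree-$1$ assertion is vacuous, and since $\ker p^{1}$ has nothing in degree $1$ there are no exact elements in $(\ker p^{1})^{2}$; therefore the statement ``$\ker p^{1}$ contains no closed $2$-form'' is equivalent to $H^{2}(\ker p^{1})=0$, i.e. to the injectivity of $D$ on $(\ker p^{1})^{2}$.

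Next I would pin down $(\ker p^{1})^{2}$ explicitly. Since $A_{n}$ is an ordinary differential graded algebra, $p^{1}$ annihilates every bracket $m_{l}$ with $l\ge 3$ and acts by $m_{2}(x,y)\mapsto p^{1}(x)\wedge p^{1}(y)$ on the quadratic part. Using Proposition \ref{generalrelation}(1) (on generators $m_{2}$ is the wedge product), the fact that the defining relations \eqref{rel1}--\eqref{rel3} of $A_{n}$ hold verbatim among the corresponding elements of $B_{n}$, and the linear independence of the forms $\nu_{i}$ and $w^{(k)}_{i,j}$ in $A_{n}$, one checks that $p^{1}$ is injective on the span of the $m_{2}$-products. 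Consequently $(\ker p^{1})^{2}$ is spanned by the higher brackets, and by Proposition \ref{generalrelation}(2)--(3) the only ones surviving in $B_{n}=B'_{n}/J$ are $m_{l+1}(w(u)^{(k)}_{i,j},\alpha(u),\dots,\alpha(u))$ with a single $w(u)$-entry, odd total arity, and $k>0$; the all-$\alpha$ brackets generate $J$ and vanish.

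The hard part will be to show that no nonzero combination $c$ of these higher brackets is closed; essentially all the work sits here. I would proceed exactly as in the proof that $H^{2}(B)=0$ in Lemma \ref{modelelliptic curve}: compute $Dc$ from the formula of Proposition \ref{generalrelation}(4) together with the $C_{\infty}$-relations, and then, using that $H$ (equivalently $\Psi$) is injective, evaluate the resulting cochain on tuples of elements of $\Z^{2n}$. This presents $Dc$ as a polynomial in the lattice variables whose coefficients are honest forms assembled from the $w(u)^{(k)}_{i,j}$; the linear independence of the $w(u)^{(k)}_{i,j}$ --- the multivariable analogue of Lemma \ref{linindfunct} --- then forces the top coefficients, and inductively all the structure constants of $c$, to vanish. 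Compared with the torus computation the only genuinely new difficulty is the index bookkeeping created by the pairs $(i,j)$ and the relations \eqref{rel1}--\eqref{rel3}, but the homogeneity in the lattice variables decouples the leading terms just as before.

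Finally I would extract the cohomology statement. The vanishing of closed forms in degrees $0,1,2$ gives $H^{i}(\ker p^{1})=0$ for $i=0,1,2$, so the long exact sequence of $0\to\ker p^{1}\to B_{n}\xrightarrow{p^{1}}A_{n}\to 0$ yields isomorphisms on $H^{0}$ and $H^{1}$ and an injection $H^{2}(B_{n})\hookrightarrow H^{2}(A_{n})$. To upgrade the last map to an isomorphism I would feed the lemma back into the vector-space decomposition $B_{n}^{2}=W^{2}\oplus\mathcal{M}^{2}\oplus(D\mathcal{M})^{2}$ of Theorem \ref{vspacedect}: since ${\mathcal{M}'}^{2}\subset A_{n}^{2}$ contains no closed forms and $p^{1}$ is a cochain map, a closed element of $\mathcal{M}^{2}={\mathcal{M}'}^{2}\oplus(\ker p^{1})^{2}$ must have vanishing $p^{1}$-image, hence lie in $(\ker p^{1})^{2}$, hence be $0$; thus the decomposition is of Hodge type and $H^{2}(B_{n})\cong W^{2}$. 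As $p^{1}$ carries $W^{2}$ isomorphically onto ${W'}^{2}\cong H^{2}(A_{n})$, the induced map on $H^{2}$ is an isomorphism, completing the ``in particular''.
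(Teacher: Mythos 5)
Your proposal is correct and follows essentially the same route as the paper: you identify $(\operatorname{Ker} p^{1})^{2}$ with the span of the higher brackets, rule out closed elements there by the same lattice-evaluation/polynomial argument used to prove $H^{2}(B)=0$ in Lemma \ref{modelelliptic curve} (with linear independence of the $w(u)^{(k)}_{i,j}$ forcing all coefficients to vanish), and then conclude the cohomology isomorphisms from the compatibility of the Hodge type decompositions, exactly as the paper does; your long-exact-sequence packaging and the explicit disposal of degrees $0,1$ are only cosmetic additions.
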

\begin{proof}
	Let $a\in \operatorname{Ker}\left( p^{1}\right)^{2}$ such that $Da=0$. We can write $a$ as 
	\[
	a=\sum_{l> 2, I\in S} \lambda_{I} m_{l}(\phi^{(k)}_{i,j},\gamma(u)_{i_{1},j_{1}},\dots, \gamma(u)_{i_{l},j_{l}}),
	\]
	 Note that $a$ is a form of bidegree $(1,1)$. The element $\partial_{\Z^{2n}}a$ defines a map 
	\[
	\partial_{\Z^{2n}}a\: : \: \Z^{2n}\to  {A'}_{n},
	\] 
in particular if $Da=0$ then $\partial_{\Z^{2n}}a=0$. Let $V_{i,j}\subset {A'}_{n}^{1}$ be the vector space generated by
	\begin{equation}\label{generators}
\tilde{\gamma}(u)^{r}w(u)_{i,j}^{(k)}, \tilde{\gamma}(u)^{s}\left( \nu(u)_{i,j}+\beta(u)_{i,j}\right)w(u)_{i,j}^{(k)},\quad k,r,s\geq 0.\end{equation}
	By the definition we have $Dc\left( \Z^{2n}\right)\subset \oplus_{i,j} V_{i,j}$
	 We denote by $(\partial_{\Z^{2n}}a)_{i,j}$ the projection of $\partial_{\Z^{2n}}a$ on  $V_{i,j}$. Using the same method of Lemma \ref{modelelliptic curve}, $(\partial_{\Z^{2n}}a)_{i,j}$ induces a polynomial $P$ in variables $x_{1}$, \dots ,$x_{n}$ with coefficients in $V_{i,j}$. Moreover since $l>2$ the polynomial is not linear. Assume $\partial_{\Z^{2n}}a=0$ then the zero set of $P$ contains $\Z^{2n}$. It follows that all the coefficients of $P$ are $0$. This implies that the coefficients vanishes as well, in particular there are non -trivial linear relations between the generators \eqref{generators} which are not contained in \eqref{defAn}, hence a contradiction.\\
	Consider the Hodge type decompositions defined in the subsection above for ${A'}_{n}$ and $B_{n}$ resp. . Notice that $\left( p^{1}\right) ^{i}\: : \: {W'}^{i}\to {W}^{i}$ is an isomorphism of graded vector space for $i=1,2$. Together with the property above, we conclude that it is an isomorphism in the cohomology groups $H^{i}$, for $i=1,2$.
\end{proof} 
\begin{proof}{( of Theorem \ref{1-ext} )}
By the lemma above we have that $p^{1}$ is a quasi-isomorphism. On the other hand it is immediate to see that $I_{1,DR}=0$. The statement follows from Theorem \ref{model}.
\end{proof}
\subsection{Calculation of the p-kernels}\label{section p kernel}
We consider $B_{n}$ equipped with the $C_{\infty}$-structure $m_{\bullet}$ defined in the previous section. The Hodge type decomposition \eqref{vspgen} induces a homotopy retract diagram between chain complexes
\begin{equation}
\begin{tikzcd}
f\: : \: (B_{n}^{\bullet},D)\arrow[r, shift right, ""]&(W^{\bullet}, 0)\: : \: g\arrow[l, shift right, ""] 
\end{tikzcd}
\end{equation}
where $f$ is the projection on $W$ and $g$ is the inclusion. We define a map $h\: : \: B_{n}^{\bullet}\to B_{n}^{\bullet-1}$ as follows. Let $a\in B_{n}^{\bullet}$, the decomposition \eqref{vspgen} allows us to write $a=(a_{1}, a_{2}, Da_{3})$, where $a_{1}\in W^{\bullet}$, $a_{2}\in \mathcal{M}^{\bullet}$ and $a_{3}\in \mathcal{M}^{\bullet-1}$, then $h(a_{1}, a_{2}, Da_{3})=(0, a_{3}, 0)$. In particular $gf$ is homotopic to $Id_{B_{n}}$ via the cochain homotopy $h$. Notice that
\begin{equation}\label{sidecondition}
f\circ g=Id_{B_{n}}\quad f\circ h=0\quad h\circ g=0\quad h\circ h=0
\end{equation}
We calculate the p-kernel using Proposition \ref{explicitpkernel}. We adopt the following notation: for $v\in V_{1}\oplus V_{2}$ and $w_{1}\in V_{1}$ we say that $v= w_{1}\oplus V_{2}$ if $v=w_{1}+w$ for some $w\in V_{2}$. 
\begin{prop}
For $m=2$ the p-kernels are as follows.
\begin{enumerate}
\item $p_{2}(w(u)^{(0)}_{i,0}, \alpha(u)_{j,0})=\begin{cases}
m_{2}(w(u)^{(0)}_{i,0}, \alpha(u)_{j,0}),& i<j\\
D\left( w(u)^{(1)}_{i,0}\right), & i=j\\
D\left( w(u)^{(1)}_{i,j}-w(u)^{(1)}_{i,0}-w(u)^{(1)}_{j,0}\right) \oplus \mathcal{M}^{2},& i<j\\
\end{cases}$
\item $p_{2}(w(u)^{(0)}_{i,0}, w(u)^{(0)}_{j,0})=\begin{cases}
m_{2}(w(u)^{(0)}_{i,0}, w(u)^{(0)}_{j,0}),& i<j\\
0,& i=j\\
-m_{2}(w(u)^{(0)}_{j,0}, w(u)^{(0)}_{i,0}),& i<j\\
\end{cases}$
\item $p_{2}(w(u)^{(0)}_{i,0}, w(u)^{(0)}_{j,0})=\begin{cases}
m_{2}(w(u)^{(0)}_{i,0}, w(u)^{(0)}_{j,0}),& i<j\\
0,& i=j\\
-m_{2}(w(u)^{(0)}_{j,0}, w(u)^{(0)}_{i,0}),& i<j\\
\end{cases}$
\end{enumerate}
\end{prop}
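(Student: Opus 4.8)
The plan is to first reduce the computation of $p_{2}$ to the single binary operation $m_{2}$. By Definition \ref{explicitpkernel}, $p_{2}=\sum_{T\in\mathcal{P}_{2}}(-1)^{\theta(T)}P_{T}$, and $\mathcal{P}_{2}$ contains a unique element: the \emph{corolla} with two leaves and one internal vertex of arity $2$. This tree has no internal edges, so $\theta(T)=0$ and $P_{T}=m_{2}$; hence $p_{2}=m_{2}$. Every case of the proposition therefore amounts to evaluating the $C_{\infty}$-product $m_{2}$ on a pair of degree-$1$ generators of $B_{n}$ and reading off its components with respect to the Hodge type decomposition $B_{n}=W\oplus\mathcal{M}\oplus D\mathcal{M}$ of Theorem \ref{vspacedect}.

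Next I would compute $m_{2}$ explicitly using Proposition \ref{generalrelation}: by point (1) its restriction to ${A'}_{n}$ is the wedge product, while the mixing between the $(1,0)$-part $\gamma(u)$ inside $\alpha(u)_{i,j}=\gamma(u)_{i,j}-\nu(u)_{i,j}-\beta(u)_{i,j}$ and the $(0,1)$-forms $w(u)^{(0)}_{i,0}$ is governed by the explicit total-complex formulas of \cite[Theorem 2.7 and 2.9]{Sibilia1}. The three cases then separate according to the comparison of indices.

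For the ordered case $i<j$, the products $m_{2}(w^{(0)}_{i,0},\alpha_{j,0})$ and $m_{2}(w^{(0)}_{i,0},w^{(0)}_{j,0})$ are, by construction, exactly the basis elements chosen to span $W^{2}$ in Theorem \ref{vspacedect}; hence $f$ fixes them and $h$ annihilates them, so $p_{2}$ returns the element itself. For $i=j$ I would invoke Proposition \ref{generalrelation}(4) with $n=1$, namely $D(w^{(1)}_{i,0})=m_{2}(\alpha_{i,0},w^{(0)}_{i,0})$; together with the graded commutativity $m_{2}(a,b)=(-1)^{|a||b|}m_{2}(b,a)$ of the $C_{\infty}$-product this identifies the product with an exact element, placing it in $D\mathcal{M}^{2}$, while for the second item $w^{(0)}_{i,0}\wedge w^{(0)}_{i,0}=0$. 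For $i>j$ I would use the linear relations $\alpha_{i,j}=\alpha_{i}-\alpha_{j}$ and $w^{(0)}_{i,j}=w^{(0)}_{i,0}-w^{(0)}_{j,0}$ (relation \eqref{rel3}), expand $D\bigl(w^{(1)}_{i,j}-w^{(1)}_{i,0}-w^{(1)}_{j,0}\bigr)$ via Proposition \ref{generalrelation}(4), and match the result against the product; the leftover terms are sorted into $\mathcal{M}^{2}$ using the explicit generators of Theorem \ref{vspacedect}, and the quadratic Fay relation \eqref{rel1} (with the graded antisymmetry yielding $-m_{2}(w^{(0)}_{j,0},w^{(0)}_{i,0})$) handles the products of two $w^{(0)}$-forms for the second item.

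The main obstacle will be the sign and index bookkeeping. One must track the Koszul signs arising both from the $C_{\infty}$-operations on the normalized total complex (which, through $D(a)=\partial_{G}a+(-1)^{p}da$, depend on bidegree) and from the desuspension conventions, and --- more delicately --- one must correctly assign each summand of $m_{2}$ to $W^{2}$, $\mathcal{M}^{2}$, or $D\mathcal{M}^{2}$ according to the precise spanning sets of Theorem \ref{vspacedect}, since the same abstract product $m_{2}(w^{(0)},\alpha)$ lands in different summands depending on the ordering of its indices and on whether its $(0,2)$- or $(1,1)$-bidegree part is being considered.
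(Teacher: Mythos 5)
Your proposal is correct and takes essentially the same route as the paper, whose entire proof reads: ``It follows from $p_{2}=m_{2}$ and the decomposition \eqref{vspgen}.'' Your two steps --- identifying $p_{2}=m_{2}$ from the unique two-leaf corolla in $\mathcal{P}_{2}$, then evaluating $m_{2}$ via Proposition \ref{generalrelation} (in particular point (4) with $n=1$ for the exact terms) and sorting the output into $W\oplus\mathcal{M}\oplus D\mathcal{M}$ of Theorem \ref{vspacedect} --- are exactly the argument the paper leaves implicit, with your flagged sign/ordering bookkeeping being the only content.
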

\begin{proof}
It follows from $p_{2}=m_{2}$ and the decomposition\eqref{vspgen}.
\end{proof}
\begin{prop}
 Let $i,j,k$ be distinct. For $m=3$ the p-kernels are as follows.
\begin{enumerate}
\item $p_{3}(w(u)^{(0)}_{i,0}, \alpha(u)_{j,0},\alpha(u)_{k,0})=\begin{cases}
\left( j<i<k\right)_{1}\oplus \mathcal{M}^{2} ,& j<i<k\\
\mathcal{M}^{2} ,& \text{ otherwise }\\
\end{cases}$
\item  $p_{3}(w(u)^{(0)}_{i,0},\alpha(u)_{k,0} ,\alpha(u)_{j,0})=\begin{cases}
\left( k<i<j\right)_{1}\oplus \mathcal{M}^{2} ,& k<i<j\\
\mathcal{M}^{2} ,& \text{ otherwise }\\
\end{cases}$
\item  $p_{3}(\alpha(u)_{j,0} ,w(u)^{(0)}_{i,0},\alpha(u)_{k,0})=\begin{cases}
-\left( j<i<k\right)_{1}\oplus \mathcal{M}^{2} ,& j<i<k\\
-\left( k<i<j\right)_{1}\oplus \mathcal{M}^{2} ,& k<i<j\\
\mathcal{M}^{2} ,& \text{ otherwise }\\
\end{cases}$
\item  $p_{3}(\alpha(u)_{k,0} ,w(u)^{(0)}_{i,0},\alpha(u)_{j,0})=\begin{cases}
-\left( k<i<j\right)_{1}\oplus \mathcal{M}^{2} ,& k<i<j\\
-\left( j<i<k\right)_{1}\oplus \mathcal{M}^{2} ,& j<i<k\\
\mathcal{M}^{2} ,& \text{ otherwise }\\
\end{cases}$
\item  $p_{3}(\alpha(u)_{k,0} ,\alpha(u)_{j,0},w(u)^{(0)}_{i,0})=\begin{cases}
\left( j<i<k\right)_{1}\oplus \mathcal{M}^{2} ,& j<i<k\\
\mathcal{M}^{2} ,& \text{ otherwise }\\
\end{cases}$
\item  $p_{3}(\alpha(u)_{j,0} ,\alpha(u)_{k,0},w(u)^{(0)}_{i,0})=\begin{cases}
\left( k<i<j\right)_{1}\oplus \mathcal{M}^{2} ,& k<i<j\\
\mathcal{M}^{2} ,& \text{ otherwise }\\
\end{cases}$
\item $p_{3}(w(u)^{(0)}_{i,0}, (w(u)^{(0)}_{j,0},\alpha(u)_{k,0})=\begin{cases}
-\left( k<j<i\right)_{2}\oplus \mathcal{M}^{2} ,& j<i<k\\
\mathcal{M}^{2} ,& \text{ otherwise }\\
\end{cases}$
\item $p_{3}((w(u)^{(0)}_{j,0},w(u)^{(0)}_{i,0} ,\alpha(u)_{k,0})=\begin{cases}
-\left( k<i<j\right)_{2}\oplus \mathcal{M}^{2} ,& j<i<k\\
\mathcal{M}^{2} ,& \text{ otherwise }\\
\end{cases}$
\item  $p_{3}(w(u)^{(0)}_{i,0},\alpha(u)_{k,0} ,w(u)^{(0)}_{j,0})=\begin{cases}
\left( k<i<j\right)_{2}\oplus \mathcal{M}^{2} ,&  k<i<j\\
\left( k<j<i\right)_{2}\oplus \mathcal{M}^{2} ,&  k<j<i\\
\mathcal{M}^{2} ,& \text{ otherwise }\\
\end{cases}$
\item  $p_{3}(w(u)^{(0)}_{j,0},\alpha(u)_{k,0} ,w(u)^{(0)}_{i,0})=\begin{cases}
\left( k<i<j\right)_{2}\oplus \mathcal{M}^{2} ,&  k<i<j\\
\left( k<j<i\right)_{2}\oplus \mathcal{M}^{2} ,&  k<j<i\\
\mathcal{M}^{2} ,& \text{ otherwise }\\
\end{cases}$
\item $p_{3}(\alpha(u)_{k,0},w(u)^{(0)}_{j,0},w(u)^{(0)}_{i,0} )=\begin{cases}
-\left( k<j<i\right)_{2}\oplus \mathcal{M}^{2} ,& j<i<k\\
\mathcal{M}^{2} ,& \text{ otherwise }\\
\end{cases}$
\item $p_{3}(\alpha(u)_{k,0},w(u)^{(0)}_{i,0},w(u)^{(0)}_{j,0} )=\begin{cases}
-\left( k<i<j\right)_{2}\oplus \mathcal{M}^{2} ,& j<i<k\\
\mathcal{M}^{2} ,& \text{ otherwise }\\
\end{cases}$
\end{enumerate}
\end{prop}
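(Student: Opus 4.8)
The structure of $p_3$ is governed by $\mathcal{P}_3$, which contains three trees: the ternary corolla, contributing $m_3$, and the two binary combs, each contributing a composite $m_2\circ(h\,m_2\otimes\mathrm{Id})$ resp. $m_2\circ(\mathrm{Id}\otimes h\,m_2)$ once the single internal edge is decorated by $h$ as in Definition \ref{explicitpkernel}. So the first step is to record, for degree-one inputs $a,b,c\in W^1$,
\[
p_3(a,b,c)=m_3(a,b,c)+(-1)^{\theta}\,m_2\big(h\,m_2(a,b),c\big)+(-1)^{\theta'}\,m_2\big(a,h\,m_2(b,c)\big),
\]
fixing the two signs $\theta,\theta'$ from the path-counting recipe preceding Definition \ref{explicitpkernel}. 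I would then dispose of the corolla term at once: each triple here contains either exactly one factor $w(u)^{(0)}_{\bullet,0}$ (cases (1)--(6)) or exactly two of them (cases (7)--(12)). In the first situation the last clause of Proposition \ref{generalrelation}(2c) forces $m_3(\dots)=0$ because the unique $w$-entry carries superscript $0$; in the second, Proposition \ref{generalrelation}(2b) kills $m_3$ since more than one $w$-entry is present while $l=3$ is odd. Hence in every case $p_3$ collapses to the two comb terms.

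The heart of the matter is the evaluation of $h\circ m_2$ on an adjacent pair. Using the compatible Hodge decomposition of Theorem \ref{vspacedect}, the degree-two product $m_2(\cdot,\cdot)$ of two generators lies in $W^2\oplus\mathcal{M}^2\oplus D\mathcal{M}$, and by the side conditions \eqref{sidecondition} the map $h$ returns the $\mathcal{M}^1$-preimage of its $D\mathcal{M}$-component and annihilates the rest. Two observations make this explicit. First, a product of two $\alpha$'s, or a product $w(u)^{(0)}_{i,0}\wedge\alpha(u)_{j,0}$ resp. $w(u)^{(0)}_{i,0}\wedge w(u)^{(0)}_{j,0}$ in the \emph{ordered} range $i<j$, is itself a generator of $W^2$ by Theorem \ref{vspacedect}, so $h$ sends it to $0$. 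Second, in the \emph{unordered} range $i>j$ the $D\mathcal{M}$-component is extracted from the fundamental identity $D\big(w(u)^{(1)}_{i,j}\big)=m_2(\alpha(u)_{i,j},w(u)^{(0)}_{i,j})$ of Proposition \ref{generalrelation}(4)---the exact analogue of \eqref{homotopyexpr}---after expanding $\alpha(u)_{i,j}=\alpha(u)_{i,0}-\alpha(u)_{j,0}$ and $w(u)^{(0)}_{i,j}=w(u)^{(0)}_{i,0}-w(u)^{(0)}_{j,0}$ by means of \eqref{rel2} and \eqref{rel3}. This produces a single $w(u)^{(1)}$-term modulo $\mathcal{M}^1$, and pins down exactly when the inner $h\,m_2$ is nonzero, namely when the two indices it sees are out of order.

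With $h\,m_2$ in hand I would feed its output into the outer $m_2$ against the remaining degree-one generator, producing a degree-two element of the shape $m_2(w(u)^{(1)}_{\bullet,\bullet},\alpha(u)_{k,0})$ or $m_2(w(u)^{(1)}_{\bullet,\bullet},w(u)^{(0)}_{k,0})$, whose $m_\bullet$-values are again read off from Proposition \ref{generalrelation}. The final task is to project this onto $W^2$: the quadratic Fay relations \eqref{rel1}, used together with \eqref{rel2} and \eqref{rel3}, rewrite each such product as a linear combination of the generators $(i<j<k)_1$, $(i<j<k)_2$, $(i<j,j)_1$, $(i<j,j)_2$ of Theorem \ref{theoremmodel} plus terms manifestly in $\mathcal{M}^2$, and the surviving $W^2$-generator is singled out precisely by the mutual ordering of $i,j,k$---which is the case distinction in the statement.

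I expect the genuine difficulty to be purely organizational rather than conceptual: tracking simultaneously the Koszul signs from the graded commutativity of $m_2$, the tree-signs $\theta,\theta'$, and the sign in the antisymmetry relation \eqref{rel2}, while correctly handling the distinguished index $0$ and the numerous orderings of $i,j,k$. To keep this manageable I would exploit the internal symmetry of each family: the six cases within (1)--(6), and likewise within (7)--(12), are permuted into one another by the graded symmetry of $m_2$ together with the antisymmetry $w(u)^{(k)}_{i,j}=(-1)^{k+1}w(u)^{(k)}_{j,i}$, so it suffices to compute one representative ordering in each family and transport the result by these symmetries, recovering the remaining entries and their signs.
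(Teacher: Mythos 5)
Your proposal follows the paper's own proof: the paper likewise expands $p_{3}$ as the two comb terms plus $m_{3}$, kills $m_{3}$ by Proposition \ref{generalrelation}, and finishes by ``a direct calculation applying the decomposition \eqref{vspgen}'', which is exactly the evaluation of $h\circ m_{2}$ you describe via $D\left(w(u)^{(1)}_{i,j}\right)=m_{2}\left(\alpha(u)_{i,j},w(u)^{(0)}_{i,j}\right)$, the side conditions, and the identification of the $W^{2}$- and $D\mathcal{M}$-components. The one caveat concerns your closing symmetry shortcut: reversal $(a,b,c)\mapsto(c,b,a)$ (with the correct Koszul signs) together with relabeling of $j,k$ does relate the end-position cases, but the middle-position cases (3), (4), (9), (10) are not reachable from the others by the graded symmetry of $m_{2}$ and \eqref{rel2} alone, so they still require their own direct computation (or an appeal to the shuffle relations of the transferred structure) --- which is what the paper's case-by-case calculation does in any event.
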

\begin{proof}
Explicitly, the formula for $p_{3}$ is given by
\[
p_{3}=m_{2}\left( \left( h\otimes Id\right) \circ \left( m_{2}\otimes Id\right)\right) -m_{2}\left( \left( Id\otimes h\right) \circ \left( Id\otimes m_{2}\right)\right) +m_{3}
\]
and by Proposition \ref{generalrelation} the term $m_{3}$ vanishes. The results follows by a direct calculation applying the decomposition \eqref{vspgen}.
\end{proof}
The arguments of the proof above works also for the next two propositions.
\begin{prop}
 Let $i,j$ be distinct. For $m=3$ the p-kernels are as follows.
\begin{enumerate}
\item $p_{3}(w(u)^{(0)}_{i,0}, \alpha(u)_{j,0},\alpha(u)_{j,0})=\begin{cases}
\left( j<i,i\right)_{1}+D\left(w(u)^{(2)}_{j,0}-w(u)^{(2)}_{j,i}\right) \oplus \mathcal{M}^{2} ,& j<i\\
D\left(w(u)^{(2)}\right) \oplus \mathcal{M}^{2} ,& j=i\\
\mathcal{M}^{2} ,& \text{ otherwise }\\
\end{cases}$
\item  $p_{3}(\alpha(u)_{j,0},w(u)^{(0)}_{i,0}, \alpha(u)_{j,0})=\begin{cases}
-2\left( j<i,i\right)_{1}-2D\left(w(u)^{(2)}_{j,0}-w(u)^{(2)}_{j,i}\right) \oplus \mathcal{M}^{2} ,& j<i\\
-2D\left(w(u)^{(2)}_{j,0}\right) \oplus \mathcal{M}^{2} ,& j=i\\
\mathcal{M}^{2} ,& \text{ otherwise }\\
\end{cases}$
\item  $p_{3}( \alpha(u)_{j,0},\alpha(u)_{j,0},w(u)^{(0)}_{i,0})=\begin{cases}
\left( j<i,i\right)_{1}+D\left(w(u)^{(2)}_{j,0}-w(u)^{(2)}_{j,i}\right) \oplus \mathcal{M}^{2} ,& j<i\\
D\left(w(u)^{(2)}_{j,0}\right) \oplus \mathcal{M}^{2} ,& j=i\\
\mathcal{M}^{2} ,& \text{ otherwise }\\
\end{cases}$
\end{enumerate}
Let $i,j$ be distinct, then
\begin{enumerate}
\item $p_{3}(w(u)^{(0)}_{i,0}, \alpha(u)_{i,0},\alpha(u)_{j,0})\in\mathcal{M}^{2}$
\item  $p_{3}(w(u)^{(0)}_{i,0},\alpha(u)_{j,0}, \alpha(u)_{i,0})=\begin{cases}
\left( j<i,i\right)_{1}+D\left(w(u)^{(2)}_{i,0}\right) \oplus \mathcal{M}^{2} ,& j<i\\
\mathcal{M}^{2} ,& \text{ otherwise }\\
\end{cases}$
\item  $p_{3}(\alpha(u)_{i,0},w(u)^{(0)}_{i,0}, \alpha(u)_{j,0})=\begin{cases}
-\left( j<i,i\right)_{1}-D\left(w(u)^{(2)}_{i,0}\right) \oplus \mathcal{M}^{2} ,& j<i\\
\mathcal{M}^{2} ,& \text{ otherwise }\\
\end{cases}$
\item $p_{3}(\alpha(u)_{j,0},w(u)^{(0)}_{i,0}, \alpha(u)_{i,0})=\begin{cases}
-\left( j<i,i\right)_{1}-D\left(w(u)^{(2)}_{i,0}\right) \oplus \mathcal{M}^{2} ,& j<i\\
\mathcal{M}^{2} ,& \text{ otherwise }\\
\end{cases}$

\item  $p_{3}( \alpha(u)_{i,0},\alpha(u)_{j,0},w(u)^{(0)}_{i,0})=\begin{cases}
\left( j<i,i\right)_{1}+D\left(w(u)^{(2)}_{i,0}\right) \oplus \mathcal{M}^{2} ,& j<i\\
\mathcal{M}^{2} ,& \text{ otherwise }\\
\end{cases}$
\item $p_{3}(\alpha(u)_{j,0}, \alpha(u)_{i,0},w(u)^{(0)}_{i,0})\in\mathcal{M}^{2}$
\end{enumerate}
\end{prop}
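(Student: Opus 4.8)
The plan is to follow verbatim the strategy of the preceding proposition, specialising the explicit degree-three p-kernel formula
\[
p_{3}=m_{2}\left( \left( h\otimes \mathrm{Id}\right) \circ \left( m_{2}\otimes \mathrm{Id}\right)\right) -m_{2}\left( \left( \mathrm{Id}\otimes h\right) \circ \left( \mathrm{Id}\otimes m_{2}\right)\right) +m_{3}
\]
to the triples at hand. First I would dispose of the ternary term: in each case exactly one of the three arguments is of the form $w(u)^{(0)}_{i,0}$ (the other two being $\alpha(u)$-terms), so by Proposition \ref{generalrelation} the map $m_{3}$ vanishes on these inputs. This reduces every $p_{3}$ to the difference of the two ``claw'' terms built from $m_{2}$ and the homotopy $h$, and the whole computation becomes a bookkeeping exercise inside the decomposition \eqref{vspgen}.

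The core of the argument is then mechanical. For each triple I would compute the inner $m_{2}$ (which by Proposition \ref{generalrelation}(1) is just the wedge product in ${A'}_{n}$), read off its Hodge components using the $p_{2}$-propositions established above, and apply $h$, recalling that $h$ sends $a=a_{1}+a_{2}+Da_{3}$ with $a_{1}\in W$, $a_{2}\in \mathcal{M}$, $Da_{3}\in D\mathcal{M}$ to $a_{3}\in\mathcal{M}$; in particular $h$ extracts the $\mathcal{M}$-preimage of the exact part and annihilates the $\mathcal{M}^{2}$ part. One then wedges the result with the remaining argument and re-decomposes via Theorem \ref{vspacedect}. The case split ($j<i$ versus ``otherwise'', and $j=i$) is governed entirely by the $p_{2}$-formulas together with the explicit lists of generators of $W^{2}$ and $\mathcal{M}^{2}$: the ordering condition $j<i$ is precisely what makes the generator $(j<i,i)_{1}$ of $W^{2}$ appear, while all other contributions land in $\mathcal{M}^{2}$.

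The hard part will be the degenerate index coincidences, which is exactly what separates these two propositions from the distinct-index case. When the two $\alpha(u)$-terms carry the same index $j$, or when $\alpha(u)_{i,0}$ is paired with $w(u)^{(0)}_{i,0}$, the inner product is an \emph{exact} element (compare the $i=j$ branch of $p_{2}$, where $m_{2}(w(u)^{(0)}_{i,0},\alpha(u)_{i,0})=D(w(u)^{(1)}_{i,0})$), so applying $h$ returns a genuine $w(u)^{(1)}$-term and the outer $m_{2}$ then produces second-order contributions $D(w(u)^{(2)}_{\cdot})$. I would verify, using the $D$-formula of Proposition \ref{generalrelation}(4), namely $D(w(u)^{(n)}_{i,j})=\sum_{l\geq 1}(-1)^{l+1}m_{l+1}(\alpha(u)_{i,j},\dots,\alpha(u)_{i,j},w(u)^{(n-l)}_{i,j})$, that these $w(u)^{(2)}$-terms occur with precisely the stated coefficients and differences $w(u)^{(2)}_{j,0}-w(u)^{(2)}_{j,i}$. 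The same formula, applied to the two claw terms with opposite signs and a middle-slot insertion, is what yields the factor $-2$ in the central case $p_{3}(\alpha(u)_{j,0},w(u)^{(0)}_{i,0},\alpha(u)_{j,0})$ and, symmetrically, the exact cancellations forcing $p_{3}(w(u)^{(0)}_{i,0},\alpha(u)_{i,0},\alpha(u)_{j,0})$ and $p_{3}(\alpha(u)_{j,0},\alpha(u)_{i,0},w(u)^{(0)}_{i,0})$ into $\mathcal{M}^{2}$. Keeping the signs and the $\oplus\,\mathcal{M}^{2}$ remainders straight across all twelve subcases is the only real obstacle; once the two repeated-index patterns are handled, the remaining entries follow by the same direct evaluation.
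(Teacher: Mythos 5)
Your proposal is correct and follows essentially the same route as the paper: the paper's own proof of this proposition is simply the remark that the argument for the distinct-index case carries over, namely the explicit formula $p_{3}=m_{2}\left( \left( h\otimes \mathrm{Id}\right)\circ\left( m_{2}\otimes \mathrm{Id}\right)\right) -m_{2}\left( \left( \mathrm{Id}\otimes h\right)\circ\left( \mathrm{Id}\otimes m_{2}\right)\right) +m_{3}$, the vanishing of $m_{3}$ via Proposition \ref{generalrelation}, and a direct calculation with the decomposition \eqref{vspgen}. Your additional bookkeeping (reading off Hodge components from the $p_{2}$-formulas, applying $h$ to exact parts, and invoking Proposition \ref{generalrelation}(4) to produce the $D\left(w(u)^{(2)}_{\cdot}\right)$ terms and the factor $-2$) is exactly the content of that ``direct calculation.''
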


\begin{prop}
 For $m=3$ the p-kernels are as follows.
\begin{enumerate}
\item $p_{3}(w(u)^{(0)}_{j,0}, w(u)^{(0)}_{j,0},\alpha(u)_{i,0})=\begin{cases}
-\left( i<j,j\right)_{2} \oplus \mathcal{M}^{2} ,& i<j\\
\mathcal{M}^{2} ,& \text{ otherwise }\\
\end{cases}$
\item  $p_{3}(w(u)^{(0)}_{j,0},\alpha(u)_{i,0}, w(u)^{(0)}_{j,0})=\begin{cases}
2\left( i<j,j\right)_{2} \oplus \mathcal{M}^{2} ,& i<j\\
\mathcal{M}^{2} ,& \text{ otherwise }\\
\end{cases}$
\item  $p_{3}( \alpha(u)_{i,0},w(u)^{(0)}_{j,0},w(u)^{(0)}_{j,0})=\begin{cases}
-\left( i<j,j\right)_{2}\oplus \mathcal{M}^{2} ,& i<j\\
\mathcal{M}^{2} ,& \text{ otherwise }\\
\end{cases}$
\item $p_{3}( \alpha(u)_{i,0},\alpha(u)_{j,0},w(u)^{(0)}_{i,0})=\begin{cases}
\left( i<j,j\right)_{2}\oplus \mathcal{M}^{2} ,& i<j\\
\mathcal{M}^{2} ,& \text{ otherwise }\\
\end{cases}$
\item  $p_{3}(w(u)^{(0)}_{j,0},w(u)^{(0)}_{i,0},\alpha(u)_{i,0})\in\mathcal{M}^{2}$

\item  $p_{3}(w(u)^{(0)}_{i,0},\alpha(u)_{i,0},w(u)^{(0)}_{j,0} )=\begin{cases}
\left( i<j,j\right)_{2} \oplus \mathcal{M}^{2} ,& i<j\\
\mathcal{M}^{2} ,& \text{ otherwise }\\
\end{cases}$

\item  $p_{3}(w(u)^{(0)}_{j,0},\alpha(u)_{i,0},w(u)^{(0)}_{i,0} )=\begin{cases}
\left( i<j,j\right)_{2} \oplus \mathcal{M}^{2} ,& i<j\\
\mathcal{M}^{2} ,& \text{ otherwise }\\
\end{cases}$
\item  $p_{3}(\alpha(u)_{i,0},w(u)^{(0)}_{i,0},w(u)^{(0)}_{j,0})\in\mathcal{M}^{2}$

\item  $p_{3}(\alpha(u)_{i,0},w(u)^{(0)}_{j,0},w(u)^{(0)}_{i,0} )=\begin{cases}
-\left( i<j,j\right)_{2} \oplus \mathcal{M}^{2} ,& i<j\\
\mathcal{M}^{2} ,& \text{ otherwise }\\
\end{cases}$
\end{enumerate}
\end{prop}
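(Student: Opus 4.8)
The plan is to reuse verbatim the strategy of the two preceding $p_{3}$ computations, since every entry of the statement involves only the degree-one generators $w(u)^{(0)}_{i,0}$ and $\alpha(u)_{i,0}$. First I would recall the explicit ternary formula
\[
p_{3}=m_{2}\circ(h\otimes \mathrm{Id})\circ(m_{2}\otimes \mathrm{Id})-m_{2}\circ(\mathrm{Id}\otimes h)\circ(\mathrm{Id}\otimes m_{2})+m_{3},
\]
and observe that the last summand drops out on every input at hand. Indeed, on each of the three-fold products listed, either at least two factors are of the form $w(u)^{(k)}_{i,j}$, so $m_{3}$ vanishes by Proposition \ref{generalrelation}(2)(b) (odd arity $l=3$ with more than one $w$-factor), or there is a single $w$-factor and it carries the superscript $0$, so $m_{3}$ vanishes by the last identity of Proposition \ref{generalrelation}(2)(c). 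Hence $p_{3}$ collapses to its two $h$-terms.

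Next I would substitute the binary values $p_{2}=m_{2}$ tabulated in the $m=2$ proposition and let the homotopy $h$ act on the resulting degree-two forms according to the decomposition \eqref{vspgen}: writing $b=b_{1}+b_{2}+Db_{3}$ with $b_{1}\in W$, $b_{2}\in\mathcal{M}$ and $Db_{3}\in D\mathcal{M}$, one has $h(b)=b_{3}\in\mathcal{M}^{1}$. Thus $h$ annihilates the $W^{2}$- and $\mathcal{M}^{2}$-parts and returns the primitive element of $\mathcal{M}^{1}$ whose image under $D$ is the exact component; concretely $h\bigl(m_{2}(w(u)^{(0)}_{j,0},\alpha(u)_{i,0})\bigr)$ produces a combination of the $w(u)^{(1)}_{i,j}$. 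At the same time a wedge square such as $m_{2}(w(u)^{(0)}_{j,0},w(u)^{(0)}_{j,0})$ vanishes identically for degree reasons, which kills one of the two $h$-terms in several entries and explains why the two exceptional products land entirely in $\mathcal{M}^{2}$.

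Finally I would expand the outer $m_{2}$ against these $\mathcal{M}^{1}$-outputs, normalize using the antisymmetry relation \eqref{rel2} together with the graded commutativity of $m_{2}$, and read off the class modulo $\mathcal{M}^{2}$ via \eqref{vspgen}; the surviving $W^{2}$-component is then exactly one of the generators $(i<j,j)_{1}$, $(i<j,j)_{2}$, $(i<j<k)_{1}$, $(i<j<k)_{2}$, with the sign and index constraints asserted in each case. The only genuine obstacle is the bookkeeping: tracking the Koszul signs introduced both by $h$ and by permuting the arguments, and correctly sorting each $m_{2}$-product into $W^{2}$ or $\mathcal{M}^{2}$ according to the index inequalities (for instance distinguishing $m_{2}(w(u)^{(1)}_{i,j},w(u)^{(0)}_{i,0})$, which is an $\mathcal{M}^{2}$-generator for $i<j$, from the $W^{2}$-combinations). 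Since no phenomenon beyond the previous $p_{3}$ proposition arises, the remark that \emph{the arguments of the proof above work also} applies, and the verification reduces to the routine case-by-case check just described.
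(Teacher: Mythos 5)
Your proposal takes essentially the same route as the paper, whose own proof of this proposition is just the remark that the preceding argument carries over: the explicit two-$h$-term expansion of $p_{3}$, the vanishing of $m_{3}$ via Proposition \ref{generalrelation} (your case split — two $w$-factors versus a single $w^{(0)}$-factor — is the correct justification), and a direct term-by-term calculation sorted by the decomposition \eqref{vspgen}. One aside is misattributed but harmless: the two entries landing entirely in $\mathcal{M}^{2}$ (items 5 and 8) are not explained by the wedge-square $m_{2}(w(u)^{(0)}_{j,0},w(u)^{(0)}_{j,0})=0$ (that mechanism kills an $h$-term only in items 1 and 3, where the repeated index is adjacent); rather, there $h$ annihilates $m_{2}(w(u)^{(0)}_{i,0},w(u)^{(0)}_{j,0})\in W^{2}$ while the surviving $h$-term $\pm m_{2}\bigl(w(u)^{(0)}_{j,0},w(u)^{(1)}_{i,0}\bigr)$ is itself already an $\mathcal{M}^{2}$-element, so your general procedure still yields the stated result.
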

\begin{prop}
Let $k>2$
\begin{enumerate}
\item  $p_{k+1}(w(u)^{(0)}_{i,0},\underset{k}{\underbrace{\alpha(u)_{j,0},\dots, \alpha(u)_{j,0}}})=\begin{cases} (-1)^{k}\left( w(u)^{(k)}_{j,0}-w(u)^{(k)}_{i,j}\right)\oplus \mathcal{M}^{2} ,& j<i\\ 
\mathcal{M}^{2} ,& \text{ otherwise }\\
\end{cases}$
\item  $p_{k+1}(w(u)^{(0)}_{i,0},\underset{k}{\underbrace{\alpha(u)_{j,0},\alpha(u)_{i,0},\dots, \alpha(u)_{i,0}}})=\begin{cases} (-1)^{k} w(u)^{(k)}_{i,0}\oplus \mathcal{M}^{2} ,& j<i\\ 
\mathcal{M}^{2} ,& \text{ otherwise }\\
\end{cases}$
\item  $p_{k+1}(w(u)^{(0)}_{i,0},\underset{k}{\underbrace{\alpha(u)_{i,0},\dots, \alpha(u)_{i,0}}})=\begin{cases} (-1)^{k} w(u)^{(k)}_{i,0}\oplus \mathcal{M}^{2} ,& j<i\\ 
\mathcal{M}^{2} ,& \text{ otherwise }\\
\end{cases}$
\item Let $$x_{1}, \dots, x_{k+1}\in \left\lbrace w(u)_{i,j}^{(0)}, \alpha(u)_{i,j} \text{ for }\,i,j=0,1,\dots,n.\right\rbrace $$ such that $(x_{1}, x_{2}, \dots ,x_{k+1})$ is not a permutation of either $(w(u)^{(0)}_{i,0},\alpha(u)_{j,0},\dots, \alpha(u)_{j,0}) $\\ either $(w(u)^{(0)}_{i,0},\alpha(u)_{j,0},\alpha(u)_{i,0},\dots, \alpha(u)_{i,0})$ or $(w(u)^{(0)}_{i,0},\alpha(u)_{i,0},\dots, \alpha(u)_{i,0}) $. Then $$p_{k+1}(x_{1}, x_{2}, \dots ,x_{k+1})\in \mathcal{M}^{2}.$$
\end{enumerate}
\end{prop}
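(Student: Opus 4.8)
The plan is to run an induction on $k$, taking the $m=2,3$ computations established above as the base case, and to collapse the sum over trees from Definition \ref{explicitpkernel} to a single family of \emph{comb} trees by exploiting the vanishing statements of Proposition \ref{generalrelation}. First I would record the standard recursion for the p-kernels obtained by splitting off the root vertex of each $T\in\mathcal{P}_{k+1}$: writing $\bar p_{1}:=\mathrm{Id}$ and $\bar p_{i}:=h\circ p_{i}$ for $i\geq 2$, one has
\[
p_{k+1}=\sum_{r\geq 2}\ \sum_{i_{1}+\dots+i_{r}=k+1}\pm\, m_{r}\circ(\bar p_{i_{1}}\otimes\cdots\otimes\bar p_{i_{r}}),
\]
the sign being the one determined by $\theta$. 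Because every internal edge is decorated by $h$, which by the decomposition \eqref{vspgen} lands in $\mathcal{M}^{1}$, each branch $\bar p_{i}$ with $i\geq 2$ feeds a $w$-type form $w(u)^{(\ast)}_{\ast,\ast}$ with upper index $>0$ into its parent vertex, whereas the leaves contribute the degree-one inputs $w(u)^{(0)}_{i,0}$ or $\alpha(u)_{\ast,\ast}$.

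Next I would prune the tree sum. By Proposition \ref{generalrelation} every vertex $m_{r}$ of even arity $r>2$ that sees at least one $w$-type input vanishes; and since in the quotient $B_{n}$ the all-$\alpha$ brackets $m_{r}(\alpha,\dots,\alpha)$ for $r>2$ lie in the ideal $J$ and are therefore zero, no vertex of even arity $r>2$ survives at all. Likewise any vertex of odd arity $r>2$ fed by more than one $w$-type input vanishes. As the input tuples in (1)--(3) carry exactly one $w(u)^{(0)}$, there is a single $w$-thread running through the tree, and these constraints force every surviving $T$ to be a comb: one spine carrying the $w$-decoration, with the $\alpha$-leaves attached either directly or through arity-two wedges. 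On such combs the computation collapses to the one-variable situation of Lemma \ref{finalcomp}: one uses $h\big(m_{2}(\alpha,w^{(k-1)})\big)=w^{(k)}$ together with $h\big(m_{r}(\alpha,\dots,\alpha,w^{(k)})\big)=0$ for $r>2$, exactly as in \eqref{homotopyexpr}, to climb the spine one upper index at a time, and the symmetry and binomial factors of Proposition \ref{generalrelation}(2c) to account for the positions of the $\alpha$'s.

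The index bookkeeping is then carried out with the relations \eqref{rel2} and \eqref{rel3}, which turn the raw output $w(u)^{(k)}_{\ast,\ast}$ produced along the spine into the stated combinations $w(u)^{(k)}_{j,0}-w(u)^{(k)}_{i,j}$ (when all the $\alpha$-indices equal a common $j\neq i$) or $w(u)^{(k)}_{i,0}$ (in the mixed and diagonal cases). The repeated-index cases (2),(3) produce in addition exact terms lying in $D\mathcal{M}^{2}$, which is precisely why they must be recorded modulo $\mathcal{M}^{2}$ rather than discarded. For part (4) I would argue that any tuple which is not a permutation of the three listed shapes either forces two $w$-type inputs at some vertex---whose only nonzero binary contribution $m_{2}(w,w)$ lands in $\mathcal{M}^{2}$ after applying $h$ via the side conditions \eqref{sidecondition}---or fails to assemble a closed/exact spine, so that the entire non-$\mathcal{M}^{2}$ component vanishes.

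I expect the genuine difficulty to be twofold. The first is pinning down the sign $(-1)^{\theta(T)}$ and the multiplicities (the binomial coefficients coming from Proposition \ref{generalrelation}(2c)) uniformly across all comb trees of a given length, so that the alternating factor $(-1)^{k}$ in the statement emerges cleanly after summing. The second is the careful separation of the $W^{2}\oplus D\mathcal{M}^{2}$ part from the $\mathcal{M}^{2}$ remainder, since $h$ and the side conditions \eqref{sidecondition} must be applied at each node in the correct order to ensure that only the spine contributes outside $\mathcal{M}^{2}$. Everything else is the routine index algebra already rehearsed in the $m=2,3$ cases.
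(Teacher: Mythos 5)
Your proposal is correct and follows essentially the same route as the paper's own proof: the paper likewise first discards every non-binary tree because $m_{r}$ with $r>2$ applied to these degree-one inputs lands in $\mathcal{M}^{2}$ (or vanishes outright in $B_{n}={B'}_{n}/J$), then uses the side conditions \eqref{sidecondition} to reduce the sum to the single comb tree of Figure \ref{fig2}, and finishes with the direct spine computation $h\bigl(m_{2}(\alpha,w^{(k-1)})\bigr)=w^{(k)}$ against the decomposition \eqref{vspgen}. The extra machinery you introduce (the root-splitting recursion and the binomial factors from Proposition \ref{generalrelation}) is harmless but becomes unnecessary once the non-binary vertices are eliminated.
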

\begin{proof}
 Since $m_{k+1}\left( \left( B_{n}^{1}\right)^{\otimes k+1} \right)\subset \mathcal{M}^{2} $ for $k>2$, we conclude that $p_{T}(w(u)^{(0)}_{i,0},\alpha(u)_{j,0},\dots, \alpha(u)_{j,0})=0$ if $T$ is not binary. The condition \eqref{sidecondition} implies that the only tree such that $$p_{T}(w(u)^{(0)}_{i,0},\alpha(u)_{j,0},\dots, \alpha(u)_{j,0})\neq 0$$ is the one in Figure \ref{fig2}. A direct calculation by using the decomposition \eqref{vspgen} gives the desired result. The same argument works in the other cases as well.
\end{proof}

\end{document}